\documentclass[12pt]{article}
\usepackage{fullpage}
\usepackage[margin=1.5cm]{geometry}
\usepackage{csquotes}
\usepackage[T1]{fontenc}    % Accents cods dans la fonte.
\usepackage{multirow}
\usepackage{tabularx}
\usepackage{enumitem}
\usepackage{braket}
\usepackage{amsmath}
\usepackage{physics}
\usepackage{hyperref}
\usepackage{titling}
\usepackage{graphicx}
\usepackage{amsmath,amsfonts,amssymb}
\setlength\arraycolsep{1pt}
\usepackage{physics}
\usepackage{amsthm}
\usepackage{tikz}
\usepackage{pgfplots}
\usepackage{mathrsfs}
\usepackage{musicography}
\usepackage{esint}
\usepackage{hieroglf}
\usepackage{bbm}

\numberwithin{equation}{section}

\newtheorem{theorem}{Theorem}[section]
\newtheorem{corollary}[theorem]{Corollary}
\newtheorem{lemma}[theorem]{Lemma}
\newtheorem{proposition}[theorem]{Proposition}

 { \theoremstyle{definition}
\newtheorem{definition}[theorem]{Definition}
\newtheorem{remark}[theorem]{Remark} }

\makeatletter
\def\widebreve{\mathpalette\wide@breve}
\def\wide@breve#1#2{\sbox\z@{$#1#2$}%
     \mathop{\vbox{\m@th\ialign{##\crcr
\kern0.08em\brevefill#1{0.8\wd\z@}\crcr\noalign{\nointerlineskip}%
                    $\hss#1#2\hss$\crcr}}}\limits}
\def\brevefill#1#2{$\m@th\sbox\tw@{$#1($}%
  \hss\resizebox{#2}{\wd\tw@}{\rotatebox[origin=c]{90}{\upshape(}}\hss$}
\makeatletter

\makeatletter
% Based on https://tex.stackexchange.com/questions/7032/good-way-to-make-textcircled-numbers
%
\usepackage{pict2e,picture}
\pgfkeys{/csteps/inner ysep/.initial=4pt,
    /csteps/inner xsep/.initial=4pt,
    /csteps/inner color/.initial=red,
    /csteps/outer color/.initial=blue,
}
\newsavebox\csteps@CBox
\newlength\csteps@XLength \newlength\csteps@YLength \newlength\csteps@YDepth \newlength\csteps@tmplen
\def\csteps@CircledParam#1#2{\sbox\csteps@CBox{#2}%
    \csteps@XLength=\wd\csteps@CBox\advance\csteps@XLength by\pgfkeysvalueof{/csteps/inner xsep}\relax
    \csteps@tmplen=\pgfkeysvalueof{/csteps/inner ysep}\relax
    \csteps@YDepth=\dp\csteps@CBox\advance\csteps@YDepth by 0.5\csteps@tmplen\relax
    \csteps@YLength=\ht\csteps@CBox\advance\csteps@YLength by\dp\csteps@CBox\advance\csteps@YLength by\pgfkeysvalueof{/csteps/inner ysep}\relax
    \typeout{DBG:#2\space X\space\the\csteps@XLength\space Y:\the\csteps@YLength\space D:\the\csteps@YDepth}%
    \raisebox{-#1\csteps@YDepth}{%
    \ifdim\csteps@XLength>\csteps@YLength
    \makebox[\csteps@XLength]{% X bigger than Y
        \makebox(0,\csteps@YLength){%
            \color{\pgfkeysvalueof{/csteps/outer color}}\put(0,0){\oval(\csteps@XLength,\csteps@YLength)}%
        }%
    \makebox(0,\csteps@YLength){%
        \put(-.5\wd\csteps@CBox,0){\textcolor{\pgfkeysvalueof{/csteps/inner color}}{#2}}%
    }}%
    \else
    \makebox[\csteps@YLength]{%
        \makebox(0,\csteps@YLength){%
            \color{\pgfkeysvalueof{/csteps/outer color}}\put(0,0){\circle{\csteps@YLength}}%
        }%
    \makebox(0,\csteps@YLength){%
        \put(-.5\wd\csteps@CBox,0){\textcolor{\pgfkeysvalueof{/csteps/inner color}}{#2}}%
     }}%
    \fi
    }%
}
\def\Circled#1{\csteps@CircledParam{1}{#1}}
\def\CircledTop#1{\csteps@CircledParam{0}{#1}}
\makeatother

\tikzset{/csteps/inner ysep=10pt}
\tikzset{/csteps/inner xsep=10pt}

%\newcommand\bigzero{\makebox(0,0){\text{\huge0}}}
%\newcommand*{\bord}{\multicolumn{1}{c|}{}}

%\makeatletter
%\def\@maketitle{%
%  \newpage
%  \null
%  \vskip 2em%
%  \begin{center}%
%  \let \footnote \thanks
%    {\Large\bfseries \@title \par}%
 %   \vskip 1.5em%
 %   {\normalsize
 %     \lineskip .5em%
 %     \begin{tabular}[t]{c}%
 %       \@author
  %    \end{tabular}\par}%
 %   \vskip 1em%
 %   {\normalsize \@date}%
%  \end{center}%
%  \par
 % \vskip 1.5em}
%\makeatother

%!!!!!!!!!!!!!!!!!!!!!!!!!!!!!!!!!!!!!!!!!!!!!
% fractions
%  -fraction be bonne marge
\newcommand{\f}[2]{{\ensuremath{%
    \mathchoice%
    {\dfrac{#1}{#2}}
    {\dfrac{#1}{#2}}
    {\frac{#1}{#2}}
    {\frac{#1}{#2}}
}}}

%racourci lettres greques

\def\a{\alpha}

\def\be{\beta}
\def\ga{\gamma}
\def\Ga{\Gamma}

\def\de{\delta}

\def\De{\Delta}
\def\eps{\epsilon}

\def\la{\lambda}
\def\La{\Lambda}

\def\sg{\sigma}
\def\vsg{\varsigma}
\def\Sg{\Sigma}

\def\Ups{\Upsilon}

\def\th{\theta}
\def\Th{\Theta}

\def\Om{\Omega}
\def\om{\omega}
\def\vp{\varphi}

      %lettresclaligraphies
\newcommand{\mc}[1]{\ensuremath{\mathcal{#1}}}
\newcommand{\mf}[1]{\ensuremath{\mathfrak{#1}}}
\newcommand{\msc}[1]{\ensuremath{\mathscr{#1}}}

\newcommand{\bs}[1]{\ensuremath{\boldsymbol{#1}}}

\def \i{ \mathrm i}

%les barres, et autres thucs fancy
\newcommand{\ov}[1]{\ensuremath{\overline{#1}}}
\newcommand{\wt}[1]{\ensuremath{\widetilde{#1}}}
\newcommand{\wh}[1]{\ensuremath{\widehat{#1}}}

%integrale et somme joile
\newcommand{\Int}[2]{\ensuremath{\int\limits_{#1}^{#2}}}
\newcommand{\Oint}[2]{\ensuremath{\oint\limits_{#1}^{#2}}}
\newcommand{\Fint}[2]{\ensuremath{\fint\limits_{#1}^{#2}}}

\newcommand{\sul}[2]{\ensuremath{\sum\limits_{#1}^{#2}}}
\newcommand{\pl}[2]{\ensuremath{\prod\limits_{#1}^{#2}}}

%Corps des Rels

\newcommand{\R}{\ensuremath{\mathbb{R}}}
\newcommand{\Cx}{\ensuremath{\mathbb{C}}}

%Drives
\newcommand{\Dp}[1]{\ensuremath{\partial_{#1}}}

%fonction exp avec un e
\newcommand{\ex}[1]{\ensuremath{\e{e}^{#1}}}

%Mise en caractere special pour les operateurs

\renewcommand{\op}[1]{ \boldsymbol{ \texttt{#1} } }

%qqracourcisrendantleslettresdroitesenmodemath
\renewcommand{\dd}{\mathrm{d}}
\newcommand{\e}[1]{\ensuremath{\mathrm{#1}}}

%Les intervalles
\newcommand{\intff}[2]{\ensuremath{ [  #1 \,; #2 ] }}
\newcommand{\intfo}[2]{\ensuremath{ [  #1 \,; #2 [ }}
\newcommand{\intof}[2]{\ensuremath{ ]  #1 \,; #2 ] }}
\newcommand{\intoo}[2]{\ensuremath{ ]  #1 \,; #2 [ }}

\newcommand{\intn}[2]{\ensuremath{[\![ \, #1 \,;\, #2 \,]\!]}}

% signe equivalent de taille appropriée

\newcommand\beq{\begin{equation}}
\newcommand\enq{\end{equation}}
\newcommand\bem{\begin{multline}}
\newcommand\enm{\end{multline}}

\def\beqa{\begin{eqnarray}}
\def\eeqa{\end{eqnarray}}
\def\ba{\begin{array}}
\def\ea{\end{array}}
\def\det{\operatorname{det}}

\begin{document}
\title{Asymptotic expansion of the partition function for $\beta$-ensembles with complex potentials}

\author{Alice Guionnet\thanks{ Unité de Mathématiques Pures et Appliquées, Electronic address: \texttt{alice.guionnet@ens-lyon.fr}}, Karol Kozlowski\thanks{Laboratoire de Physique, Electronic address: \texttt{karol.kozlowski@ens-lyon.fr}}  and Alex Little\thanks{Unité de Mathématiques Pures et Appliquées,Electronic address: \texttt{alexander.little@ens-lyon.fr}} \\
Univ Lyon, ENS de Lyon, Univ Claude Bernard Lyon 1, \\ CNRS, F-69342 Lyon, France}

\maketitle

\abstract{In this work we establish under certain hypotheses the $N \to +\infty$ asymptotic expansion  of integrals of the form
$$\mathcal{Z}_{N,\Gamma}[V] \, = \,  \Int{\Gamma^N}{} \pl{  a < b}{N}(z_a - z_b)^\beta \, \pl{k=1}{N} \ex{ - N \beta V(z_k) } \, \dd\mathbf{z}$$
where $V \in \Cx[X]$,  $\beta \in 2 \mathbb{N}^*$ is an even integer and $\Gamma \subset \mathbb{C}$ is an unbounded contour such that the integral converges.
For even degree, real valued $V$s and when $\Ga = \R$, it is well known that the large-$N$ expansion is characterised by an equilibrium measure
corresponding to the minimiser of an appropriate energy functional. This method bears a structural resemblance with the Laplace method.
By contrast, in the complex valued setting we are considering, the analysis structurally resembles the classical steepest-descent method,  and involves finding a
critical point \textit{and} a steepest descent curve, the latter being a deformation of the original integration contour.  More precisely,  one minimises a curve-dependent energy functional with respect to measures on the curve and then maximises the energy over an appropriate space of curves.  Our analysis deals with the one-cut regime of the associated equilibrium measure.  We establish the existence of an all order asymptotic
expansion for $\ln \mathcal{Z}_{N,\Gamma}[V]$ and explicitly identify the first few terms.}

\newpage

\tableofcontents

\newpage

\section{Introduction}\label{introduction}

In this work we are interested in the large $N$ behaviour of integrals of the form
\begin{equation}\label{Cpartitionfunction}
\mathcal{Z}_{N,\Gamma}[V] \, = \,  \Int{\Gamma^N}{} \pl{  a < b}{N}(z_a - z_b)^\beta \, \pl{k=1}{N} \ex{ - N \beta V(z_k) } \, \dd\mathbf{z}
\end{equation}
where $V\in \Cx[X]$ has degree $\kappa \geq 2$ and $\beta \in 2\mathbb{N}^*$ is a positive  even integer.  The restriction on $\beta$ produces an analytic integrand allowing us to deform the integration curve in \eqref{Cpartitionfunction} by Cauchy's theorem; this plays a crucial role in our analysis.
Note that the case of odd $\beta$ is trivial as it would produce an antisymmetric integrand, making the integral \eqref{Cpartitionfunction} $0$.
By deforming the original contour and rescaling the integration variables we can, without loss of generality,  focus on the setting where
\beq
V(z) = \frac{z^\kappa}{\kappa}+\e{O}(z^{\kappa-1}) \,
\label{definition pot polynomial interet}
\enq
and $\Gamma$ is a simple contour smoothly homeomorphic to $\mathbb{R}$ and going to infinity along the directions $\infty \cdot  \ex{\frac{2\pi \i \alpha}{\kappa}}$ and
$\infty \cdot \ex{\frac{2\pi \i \alpha^\prime}{\kappa}}$ for distinct integers $\alpha, \alpha^\prime \in  \intn{0}{ \kappa-1 }$. These correspond to the incoming and outgoing directions of $\Ga$.

In the course of studying \eqref{Cpartitionfunction} we shall need to consider its real-valued cousin expressed in terms of a smooth homeomorphism  $\ga : \mc{J} \subseteq \R \rightarrow \ga( \mc{J} )$
with $ \ga( \mc{J} )$ a simple curve, as
\beq
 \op{Z}_{N, \ga}[V] \; \overset{\mathrm{def}}{=} \;  \Int{ \mc{J}^N }{} \pl{ a < b }{N} |\gamma(x_a) - \gamma(x_b)|^\beta \pl{a=1}{N}\Big\{ |\gamma^\prime(x_k)|  \cdot \ex{-N \beta \varphi(\gamma(x_k))}  \Big\}
 \, \dd\mathbf{x} \quad \e{with} \quad \vp \, = \, \Re(V) \;.
\label{Rpartitionfunction}
\enq
Note that henceforth the support interval $\mc{J}$ of the homeomorphism is considered as part of the data carried by the label "$\ga$". We observe that when $\mc{J}=\R$ and $\ga$ is the smooth homeomorphism realising a bijection with the curve $\Ga$,
it holds by the triangle inequality that $ \big| \mathcal{Z}_{N,\Gamma}[V] \big| \, \leq \,  \op{Z}_{N, \ga}[V]$.
Let us also remark that the multiple integral \eqref{Rpartitionfunction} has the natural statistical mechanical interpretation as the partition function of a Dyson gas confined to the curve $\ga(\mc{J})$ at inverse temperature
$\beta = 1/T$.  By analogy with the classical steepest descent method \cite{deBruijn}, and this will be made precise in the course of this paper,  one might expect that \eqref{Cpartitionfunction} will be well-approximated by \eqref{Rpartitionfunction}
provided that the contour $\Ga=\ga(\R)$ is such that the oscillations of the complex integral are minimised. In fact, since one can optimise with respect to the original
contour, we always have the trivial bound
\begin{align}\label{trianglebound}
 |\mathcal{Z}_{N,\Gamma}[V]|  \, \leq  \, \inf_{\widetilde{\Gamma}=\wt{\ga}(\R) \in \mathcal{T}}    \op{Z}_{N, \wt{\ga}}[V]
\end{align}
where $\mathcal{T}$ denotes an appropriate set of contours that preserve the value of \eqref{Cpartitionfunction}.
Experience with the steepest descent method leads one to expect that the \textit{rhs} of \eqref{trianglebound} is optimal to leading order.

The two partition functions introduced above are associated with two "models", the "real model" and the "complex model" respectively.
Given $\mathcal{J} \subset \mathbb{R}$ a closed interval and $\gamma : \mathcal{J} \longrightarrow \mathbb{C}$ a smooth simple curve,
the "real model" corresponds to the probability measure on $\mathcal{J}^N$ with density
\beq
\dd\mathbb{P}_{N,\ga}(\mathbf{x})  \, \overset{\mathrm{def}}{=}  \,
\frac{1}{ \op{Z}_{N, \ga}[V] }\pl{ a < b }{N} |\gamma(x_a) - \gamma(x_b)|^\beta \pl{a=1}{N}\Big\{ |\gamma^\prime(x_k)|  \cdot \ex{-N \beta \varphi(\gamma(x_k))}  \Big\}
 \, \dd\mathbf{x}, \qquad  \mathbf{x}\in \mathcal{J}^N \;.
\label{realmodel}
\enq
Further, given  a smooth simple curve $\Sigma \subset \mathbb{C}$, we define the "complex model" as the complex-valued "measure" on $\Sigma^N$ having mass 1 with density
in respect to the natural measure $\dd \mathbf{z}$ on $\Sigma^N$ as
\beq
\dd \Pi_{N, \Sg}(\mathbf{z})   \overset{\mathrm{def}}{=} \frac{1}{ \mathcal{Z}_{N,\Sigma}[V] }\pl{ a < b }{ N} (z_a - z_b)^\beta  \cdot \pl{a=1}{N}\ex{-N \beta  V(z_a)} \, \dd\mathbf{z}, \qquad  \mathbf{z}\in \Sigma^N
\;.  \label{complexmodel}
\enq
The normalisation factor $\mathcal{Z}_{N,\Sigma}[V]$ is just given by \eqref{Cpartitionfunction} but with the caveat that now $\Sigma$ might be compact.

We introduce these more general situations because, at several stages of our analysis, we shall need to focus on "effective" situations where
integration variables, \textit{aka} "particles", are restricted to certain compact subsets of the original curve $\Ga$.
We shall not dwell for the moment upon the precise conditions that must be imposed on $\gamma$ so that these models are well-defined.

The asymptotic analysis of \eqref{Rpartitionfunction} and hence of \eqref{Cpartitionfunction} is closely related to a certain problem in potential theory.
To see this, observe that for $\mathbf{z} \in \Gamma^N$ having pairwise distinct coordinates we have
\begin{align*}
\sul{ i < j }{ N} \ln \frac{1}{|z_i - z_j|} + (N-1) \sum_{j=1}^N \varphi(z_j)  \, =\,
\frac{N^2}{2} \Int{\substack{\Gamma^2 \\ z \neq w}}{} \left( \ln \frac{1}{|z-w|} + \varphi(z) + \varphi(w) \right)\, \dd L_N^{(\mathbf{z})}(z) \otimes \dd L_N^{(\mathbf{z})}(w)  \, .
\end{align*}
Here, we have  introduced the empirical measure
\beq
L_N^{(\mathbf{z})} = \frac{1}{N}\sum_{k=1}^N \delta_{z_k} \in \mathcal{P}(\mathbb{C})
\label{definition empirical measure}
\enq
associated to $\mathbf{z}= (z_1, \dots, z_N) \in \mathbb{C}^N$.

This observation motivates the introduction of the following "energy" functional on the set  $\mathcal{P}(\Gamma)$ of Borel probability measures on $\Gamma$.
\begin{align}
\begin{split}
&\op{I}_{\Ga} : \mathcal{P}(\Gamma) \longrightarrow (-\infty,+\infty] \\
&\op{I}_{\Ga} [\mu] \overset{\mathrm{def}}{=} \Int{\Ga^2}{} \left(    \ln \frac{1}{|z - w|} +  \varphi(z)+ \varphi(w)\right) \, \dd\mu(z) \otimes \dd\mu(w)
\end{split}
\end{align}
We shall see later on that $\op{I}_{\Ga}$ has a unique minimiser $\mu_{\Ga}$.

The real density $\dd \mathbb{P}_{N,\ga}(\mathbf{x}) / \dd \mathbf{x} $ is morally of the form
$$ \exp\bigg\{ - \frac{\beta N^2}{2} \op{I}_{\Ga}[L_N^{(\gamma(\mathbf{x}))}] \, + \, N L_N^{(\mathbf{x})}( \ln |\ga^{\prime}|) \bigg\} \, , $$
in which $\gamma(\mathbf{x}) \, = \, \big( \gamma(x_1) , \dots, \gamma(x_N) \big)$ and
\beq
L_N^{(\mathbf{z})}(f) = \frac{1}{N}\sum_{k=1}^N f(z_k)
\label{definition linear statistics}
\enq
is the \textnormal{linear statistic} associated to the measurable function $f$. Still,  this is only formal since, strictly speaking, $\op{I}_{\Ga}[L_N^{(\mathbf{z})}] = +\infty$
due to the blow up issuing from the diagonal.
Nevertheless, configurations of particles such that $L_N^{(\mathbf{z})}$ is close to this minimiser dominate the asymptotics, so that
$$ \frac{1}{N^2} \ln \op{Z}_{N,\ga }[V]  \to - \frac{\beta}{2} \inf_{\mu \in \mathcal{P}(\Gamma)} \op{I}_{\Ga}[\mu]= - \frac{\beta}{2}  \op{I}_{\Ga}[\mu_{\Ga}] \quad \e{as} \quad  N \to +\infty \, . $$
In physical terms, the partition function is dominated by the minimal energy configuration of the particles.  This is because the long-range forces $\ln\frac{1}{|z-w|}$
and the scaling in $N$ of the potential ensure that the "energy" is order $N^2$ whereas the "entropy" and the contour curvature
contribute only at order $N$, and so the former is the dominant part of the free energy.

\vspace{2mm}

For our complex partition function we expect, after optimising our bound \eqref{trianglebound}, that we have
\begin{align}\label{leadingorder}
&\limsup_{N \to +\infty}\frac{\ln |\mathcal{Z}_{N,\Gamma}[V]|}{N^2}  \leq - \frac{\beta}{2}  \sup_{\widetilde{\Gamma}\in \mathcal{T}} \inf_{\mu \in \mathcal{P}(\widetilde{\Gamma})} \op{I}_{ \wt{\Ga} }[\mu]
\end{align}
We shall show in this work that, in fact, the above upper bound is an equality.
 The \textit{rhs} of  \eqref{leadingorder} leads to the so-called \textit{max-min energy problem} which consists in searching for a curve such that the supremum in \eqref{leadingorder} is attained.
 It turns out that, at least for polynomial potentials, the supremum in \eqref{leadingorder} is indeed attained and this curve satisfies a so-called $S$-property;
a crucial result of Kuijilaars and Silva \cite{lirias1643667}. We should remark that a construction of the optimal curve for our problem
without reference to a variational problem was already given in the off-critical regime by Bertola \cite{Bertola:2011aa}.
We shall denote such a contour solution, under suitable growth at infinity, as $\Ga_{\e{eq}}$.

Before stating our main theorem, it is necessary to  introduce the notion of the complex energy functional.
\begin{definition}[Complex energy functional]\label{complexenergy}
Let $\Gamma \subset \mathbb{C}$ be a smooth, simple, unbounded contour and let $\mu \in \mathcal{P}(\Gamma)$ be a probability measure on $\Gamma$ with continuous density
with respect to arc length on $\Gamma$. Then introduce the "$g$-function" as
\begin{align*}
&g_{\Ga}[\mu] : \mathbb{C}\setminus \Gamma \longrightarrow \mathbb{C}
&g_{\Ga}[\mu](z) = - \Int{\Gamma}{} \ln (z-w) \cdot  \dd\mu(w) \;.
\end{align*}
For $w \in \Ga$, the map $z \mapsto \ln (z-w)$ is defined as having a cut along $\Ga\intff{\infty \ex{ \f{ 2\i\pi \a }{ \kappa }  }}{ w }$, the portion of
$\Ga$ located between its incoming direction and $w$. As a consequence, $g_{\Ga}[\mu]$ has cuts along $\Ga\intff{\infty \ex{ \f{ 2\i\pi \a }{ \kappa }  }}{ \mf{e}_{\mu} }$,
in which $\mf{e}_{\mu}$ is the extremal point of $\e{supp} \mu$ along the direction of $\Ga$.
$g_{\Ga}[\mu]$ has derivative $g_{\Ga}[\mu]^\prime(z) = \int_{\Gamma}^{} \frac{1}{w-z} \, \dd\mu(w)$.
Then the \textnormal{complex energy} associated to the measure $\mu \in \mathcal{P}(\Gamma)$ is
\begin{align}
\mc{I}_{\Gamma}[\mu] \overset{\mathrm{def}}{=} \int_\Gamma \left( \frac{1}{2} \big( g_{\Ga}[\mu] \big)_+(z)+\frac{1}{2} \big( g_{\Ga}[\mu] \big)_-(z)+2V(z) \right) \, \dd\mu(z)
\label{definition fnelle energie complexe}
\end{align}
where $\big( g_{\Ga}[\mu] \big)_{\pm}$ are the left and right boundary values of $g_{\Ga}[\mu] $ on the contour $\Gamma$.
\end{definition}
We remark that $\Re \big( \mc{I}_{\Gamma}[\mu] \big) \, =\,  \op{I}_{\Ga}[\mu]$ where  $\varphi = \Re V$, hence it is appropriate to view $\mc{I}_{\Gamma}$
as the natural complexification of  $\op{I}_{\Ga}[\mu]$. Note also that
$\mc{I}_{\Gamma}[\mu]$ is unaffected by modifying the contour $\Gamma$ outside of the support of $\mu$.

The central hypothesis of this paper is the following.
\begin{enumerate}[label=(V\arabic*)]
\item \label{potentialh2} The potential $V$ is a polynomial with leading term $\frac{z^\kappa}{\kappa}$ and is one-cut regular according to Definition \ref{regular}; that is, $V$ is off-critical and the support of the equilibrium measure $\mu_{\e{\Ga}_{\e{eq}}}$ associated to $V$ on the $S$-curve is connected.
\end{enumerate}
A recent work \cite{Bertola2022} has shown that the off-critical regime for the max-min energy problem is an open set within the space of polynomial potentials  and it is furthermore expected that the set of critical polynomial potentials has zero Lebesgue measure. The latter has been proven for logarithmic potential problems on the real line \cite{Mclaughlin}. Thus off-criticality is believed to characterise the typical case.

Although,  at the outset,  one may start with a partition function $\mc{Z}_{N,\Ga}[V]$ that involves contour $\Gamma$ of limited regularity, \textit{e.g.}
obtained by concatenating a finite number of $\mc{C}^1$ arcs, it is always possible to invoke Cauchy's theorem and deform $\Ga$ to a new curve that is $\mc{C}^\infty$
and behaves in a "nice" way at infinity. More precisely,  "nice" curves satisfy the following properties.

\begin{enumerate}[label=($\Gamma$\arabic*)]
\item  \label{hypothesis2}$\Gamma$ is the homeomorphic image of $\mathbb{R}$ under an infinitely differentiable injective function $\gamma$:
\begin{align*}
&\gamma : \mathbb{R} \longrightarrow \mathbb{C} \\
&\gamma(\mathbb{R}) = \Gamma.
\end{align*}
\item \label{hypothesis3} $|\gamma(x)| \to +\infty$ as $x \to \pm \infty$ and furthermore there exists integers $\alpha ,  \alpha^\prime \in \intn{0}{\kappa - 1}$,
$\alpha \neq \alpha^\prime$,  and $K > 0$ such that
\begin{align*}
\arg \gamma(x) &= \frac{2\pi \alpha}{\kappa} & & \forall x \in \intof{-\infty}{- K},\\
\arg \gamma(x) &= \frac{2\pi \alpha^\prime}{\kappa} & & \forall x \in \intfo{K}{+\infty} \, .
\end{align*}
\end{enumerate}

\begin{remark}

Note that \ref{hypothesis3} implies that
\begin{align}\label{Vinf}
\lim_{\substack{|z| \to \infty  \\  z \in \Gamma}}\frac{ V(z)}{|z|^\kappa}  = \frac{1}{\kappa}.
\end{align}
This means that the "nice" curves coincide at $\infty$ with directions of the fastest growth of the confining potential.

Moreover, \ref{hypothesis2} and \ref{hypothesis3} imply that there exists a constant $0 < C_\gamma < 1$ such that
\begin{align}\label{hypothesis4}
0 < C_\gamma \leq  \frac{|\gamma(x)-\gamma(y)|}{|x-y|} \leq C_\gamma^{-1} & & \forall x,y \in \mathbb{R}.
\end{align}
Note that (\ref{hypothesis4}) holds automatically if $x$ and $y$ are confined to a compact set so long as $\gamma$ is $C^1$ and injective.
\end{remark}

It will be proven in Appendix \ref{curvecontruction}, that the solution of the max-min energy problem
in the off-critical regime, shown to exist by \cite{lirias1643667}, can be taken to satisfy \ref{hypothesis2}-\ref{hypothesis3}.

A central tool of the present paper is the method of \textit{Dyson-Schwinger equations}, also known as loop equations or Ward identities.
These equations relate various moments of linear statistics \eqref{definition linear statistics}, with respect to either the real \eqref{realmodel} or complex measure \eqref{complexmodel}. The Dyson-Schwinger equations
involve the so-called \textit{master operator}, which, if it can be inverted on an appropriate functional space,
allows one to solve these equations asymptotically in the large-$N$ regime and hence compute the asymptotic expansion of the moments of any sufficiently regular linear statistic.
We shall establish that,  under our hypotheses,  the master operators arising in the situations of interest to us are invertible.  From the large-$N$ expansion of linear statistics one may then deduce a similar expansion for the partition function \eqref{Cpartitionfunction}.
Indeed,  if one has a family of potentials $\{ V_t \}_{t \in \intff{0}{1}}$ depending smoothly on a parameter $t \in \intff{0}{1}$ and so that $V_{1}=V$, with $V$ the polynomial potential one starts with,
and $V_0$ a quadratic potential, then one has the identity
\begin{align*}
\frac{\partial}{\partial t} \ln \mathcal{Z}_{N, \Sigma_t}[V_t] \, = \,  - \beta N^2 \mc{E}_{N, \Sg_t} \bigg[ L_N^{(\mathbf{z})}\left( \frac{\partial V_t}{\partial t}\right) \bigg] \, + \,
(\text{boundary terms})
\end{align*}
where $\{ \Sigma_t \}_{t \in [0,1]}$ is a family of compact arcs in $\mathbb{C}$, smoothly varying in $t \in \intff{0}{1}$
and $ \mc{E}_{N, \Sg_t}$ refers to the expectation with respect to the complex model \eqref{complexmodel}. We remind the reader that, in case of quadratic potentials, the partition function is given in closed form by the Selberg-Mehta formula  (see Ch. 17 of \cite{mehta}).

Let us now state our main theorem.
\begin{theorem}[Expansion of the complex partition function]\label{maintheorem}
Let $V$ satisfy hypothesis \textnormal{\ref{potentialh2}} and let $\Gamma \subset \mathbb{C}$ be an "admissible" contour in the sense of Definition \ref{admissible};
that is, a simple (unbounded) contour consisting of a finite number of $C^1$ arcs and "stretching out to infinity" in the directions $\ex{ \frac{ 2\pi \i \alpha }{ \kappa }}$ and
$\ex{\frac{2\pi \i \alpha^\prime}{\kappa}}$. Then there exists an admissible contour $\Gamma_{\e{eq}} \subset \mathbb{C}$,  a deformation of $\Gamma$,
satisfying hypotheses \textnormal{\ref{hypothesis2}-\ref{hypothesis3}} and solving the \textnormal{max-min energy problem} (see (\ref{maxmin})). Let $\mu_{ \Ga_{\e{eq}} } \in \mathcal{P}(\Ga_{\e{eq}})$
be the associated equilibrium measure for $\Ga_{\e{eq}}$ in the external field $\varphi = \Re V$ and let $ \tfrac{ \dd \mu_{ \Ga_{\e{eq}} } }{ \dd z }$ be its density with respect to the natural measure $\dd z$
on $\Ga_{\e{eq}}$, so that $\dd \mu_{ \Ga_{\e{eq}} }(z) =  \tfrac{ \dd \mu_{ \Ga_{\e{eq}} } }{ \dd z } \, \dd z$.

Then, there is a sequence of complex numbers $( F_k(\beta,V) )_{k = -2}^\infty$ such that for every $K \in \mathbb{N}$ we have
\begin{equation}\boxed{\begin{split}
\ln \mathcal{Z}_{N,\Gamma}[V] &= \frac{\beta}{2} N \ln N + \frac{3+\frac{\beta}{2}+\frac{2}{\beta}}{12} \ln N + \sum_{k=-2}^K \frac{F_k(\beta,V)}{N^k} + \e{O}(N^{-K-1}), & & N \to +\infty.
\end{split}}
\end{equation}
In particular
\begin{align*}
F_{-2}(\beta,V) &= - \frac{\beta}{2}  \mc{I}_{  \Ga_{\e{eq}} }[\mu_{ \Ga_{\e{eq}} }], \\
F_{-1}(\beta,V) &= \left(\frac{\beta}{2} - 1\right) \left[ \Int{   \Ga_{\e{eq}} }{} \ln \Big\{ \frac{\dd \mu_{ \Ga_{\e{eq}} } }{ \dd z }(z) \Big\}  \, \cdot  \, \dd \mu_{ \Ga_{\e{eq}} } (z)
    +   \ln \frac{\beta}{2} \right] + \frac{\beta}{2} \ln  \frac{2\pi}{\e{e}}- \ln \Gamma\left(  \frac{\beta}{2}\right).
\end{align*}

\end{theorem}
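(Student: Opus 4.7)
My plan is to reduce $\mathcal{Z}_{N,\Gamma}[V]$ to a partition function supported near the $S$-curve, interpolate from a Gaussian reference whose partition function is computed exactly by Selberg--Mehta, and extract the asymptotic series via the Dyson--Schwinger hierarchy. As a first step I would invoke Cauchy's theorem to deform $\Gamma$ to the admissible contour $\Ga_{\e{eq}}$ produced by the max-min problem under hypothesis \ref{potentialh2}. Since the equilibrium measure $\mu_{\Ga_{\e{eq}}}$ is supported on a single connected arc $S \subset \Ga_{\e{eq}}$, a large-deviation/concentration argument for the real auxiliary model $\op{Z}_{N,\ga_{\e{eq}}}[V]$ combined with the triangle bound $|\mathcal{Z}_{N,\Ga_{\e{eq}}}[V]| \leq \op{Z}_{N,\ga_{\e{eq}}}[V]$ restricts the effective integration to an arbitrarily small neighbourhood of $S^N$ up to exponentially small corrections, so that from now on one may work on a compact arc $\Sigma \supset S$.

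Next I would choose a Gaussian reference potential $V_0$ whose partition function on $\mathbb{R}^N$ is given by the Selberg--Mehta formula (producing the universal $\tfrac{\beta}{2}N\ln N$ and $\tfrac{3+\beta/2+2/\beta}{12}\ln N$ terms, together with the $-\ln\Gamma(\beta/2)$ contribution to $F_{-1}$), and construct a smooth one-parameter family $(V_t,\Sigma_t)_{t\in[0,1]}$ of off-critical one-cut data interpolating $(V_0,\Sigma_0)$ and $(V,\Sigma)$. The identity recalled in the introduction then reads
\begin{equation*}
\ln \mathcal{Z}_{N,\Sigma}[V] - \ln \mathcal{Z}_{N,\Sigma_0}[V_0] = -\beta N^2 \int_0^1 \mc{E}_{N,\Sigma_t}\!\left[L_N^{(\mathbf{z})}(\partial_t V_t)\right] \dd t + (\text{boundary terms}),
\end{equation*}
so the whole task reduces to producing an all-order $1/N$ expansion of $\mc{E}_{N,\Sigma_t}[L_N^{(\mathbf{z})}(f)]$ uniformly in $t$ and for a class of sufficiently regular test functions $f$.

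The production of that expansion is the Dyson--Schwinger bootstrap. Integration by parts in $\mc{Z}_{N,\Sigma_t}[V_t]$ against a test function yields a hierarchy of loop equations whose linear part is driven by the master operator; under one-cut regularity this operator is invertible on a suitable space of holomorphic functions across $\Sigma_t$, which combined with the a-priori fluctuation estimates from Step~1 permits an induction on the order and produces an expansion
\begin{equation*}
\mc{E}_{N,\Sigma_t}\!\left[L_N^{(\mathbf{z})}(f)\right] = \mu_t(f) + \sum_{k=1}^{K} N^{-k}\,\mu_k^{(t)}(f) + \e{O}(N^{-K-1}),
\end{equation*}
with coefficients depending smoothly on $t$. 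Integrating in $t$ and combining with the exactly computed Gaussian reference produces the full series. The leading term gives $F_{-2}=-\tfrac{\beta}{2}\mc{I}_{\Ga_{\e{eq}}}[\mu_{\Ga_{\e{eq}}}]$, while at the next order the sub-leading Laplace-type correction produced by the Jacobian $|\ga^\prime|$ in the real model supplies the entropic piece $(\tfrac{\beta}{2}-1)\int \ln\!\big(\tfrac{\dd\mu_{\Ga_{\e{eq}}}}{\dd z}\big)\dd\mu_{\Ga_{\e{eq}}}$, the local Gaussian fluctuations about the equilibrium configuration add the $\tfrac{\beta}{2}\ln(2\pi/\e{e})+(\tfrac{\beta}{2}-1)\ln(\beta/2)$ constants, and Selberg--Mehta contributes the $-\ln\Gamma(\beta/2)$, together yielding the stated formula for $F_{-1}$.

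The main obstacle is the construction of the complex master operator and the proof of its invertibility in the genuinely complex, non-self-adjoint setting: one must transfer concentration from the real auxiliary model to the complex measure $\dd\Pi_{N,\Sigma_t}$ where convexity is unavailable, and close the Dyson--Schwinger hierarchy uniformly in $t$ while carefully tracking the boundary contributions generated by the smooth deformation of $\Sigma_t$ and the multivaluedness of $g_{\Gamma}[\mu]$ near the endpoints of the cut.
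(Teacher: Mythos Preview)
Your outline matches the paper's strategy step for step: deform to the $S$-curve, localize via large deviations, interpolate from a Selberg--Mehta reference, differentiate in $t$, run the Dyson--Schwinger hierarchy for the complex model, and integrate back. But there is a real gap in your localization step, and your closing paragraph names the obstacle without indicating any mechanism to overcome it.

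The triangle bound $|\mathcal{Z}_{N,\Gamma_{\e{eq}}}[V]| \leq \op{Z}_{N,\gamma_{\e{eq}}}[V]$ together with the LDP for the real model only shows that the complex tails are $O(\ex{-cN}\op{Z}_{N,\gamma})$. To conclude $\mathcal{Z}_{N,\Gamma_{\e{eq}}}[V] = \mathcal{Z}_{N,\Sigma}[V]\bigl(1+O(\ex{-cN})\bigr)$ you must compare those tails to $\mathcal{Z}_{N,\Sigma}[V]$ itself, which requires the \emph{reverse} inequality $|\mathcal{Z}_{N,\Sigma_t}[V_t]| \geq c\,\op{Z}_{N,\gamma_t^\epsilon}[V_t]$ uniformly in $t$. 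Without it the complex model may not even be well-defined (nothing prevents $\mathcal{Z}_{N,\Sigma_t}$ from vanishing), and there are no a~priori bounds on complex ``expectations'' with which to seed the Dyson--Schwinger bootstrap. This lower bound is the paper's central technical contribution and is not a routine consequence of concentration: one writes the ratio as a real-model expectation of a pure phase, uses the $S$-curve property to show that the \emph{linear} part of the exponent in the centred empirical measure vanishes identically (this is exactly $\Im(\Phi_{\e{eff},+}+\Phi_{\e{eff},-})=0$ on the steepest-descent arc), leaving a smooth \emph{quadratic} form; then one proves a CLT for the real model on $\Sigma_t$, passes to a Gaussian limit, and evaluates that limit as $\det_{H'}(\mathsf{1}-\i\beta\mathsf{B}_t\mathsf{A}_t)^{-1/2}$ times an explicit exponential, where $\mathsf{B}_t\geq 0$ forces $|\det|\geq 1$ and uniform Hilbert--Schmidt estimates bound everything away from zero in $t$.

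A smaller correction concerns your account of $F_{-1}$. In the paper the constants $(\tfrac{\beta}{2}-1)\ln\tfrac{\beta}{2}+\tfrac{\beta}{2}\ln\tfrac{2\pi}{\e{e}}-\ln\Gamma(\tfrac{\beta}{2})$ and the entropic integral for $\mu_{\Gamma_0}$ all come from the Selberg--Mehta expansion at $t=0$; the $t$-integration contributes only because the order-$N^2$ and order-$N$ pieces of $\partial_t\ln\mathcal{Z}_{N,\Sigma_t}[V_t]$ are shown to be exact $t$-derivatives (of $\mc{I}_{\Sigma_t}[\mu_{\Sigma_t}]$ and of the complex entropy respectively), so integrating merely replaces $\mu_{\Gamma_0}$ by $\mu_{\Gamma_{\e{eq}}}$. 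These terms are not produced by the Jacobian $|\gamma'|$ in the real model nor by ``local Gaussian fluctuations'' as you describe.
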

\begin{remark} The above expansion is a "complexification" of the known expansion for the partition for $\beta$-ensembles on the real line (see Section 7.1.1 of \cite{BG24}),
so that $ \mc{I}_{  \Ga_{\e{eq}} }$ is a complexification of the usual logarithmic energy in an external field and
$$\mathrm{Ent}[\mu_{\Ga_{\e{eq}}} ] = - \Int{   \Ga_{\e{eq}} }{} \ln \Big\{ \frac{\dd \mu_{ \Ga_{\e{eq}} } }{ \dd z }(z) \Big\}  \, \cdot  \, \dd \mu_{ \Ga_{\e{eq}} } (z) $$
is a complexification of the entropy. This is natural since in the case of $V$ having real coefficients and $\Gamma = \mathbb{R}$ we expect to reproduce known results for the real case.

One should note that the density $\tfrac{ \dd\mu_{ \Ga_{\e{eq}} }(z) }{\dd z}$ is complex valued since it is a density with respect to the complex differential $\dd z = \ga_{\e{eq}}^{\prime}(t) \dd t$ on $\Ga_{\e{eq}}$,  so that overall $\dd\mu_{ \Ga_{\e{eq}} }(z) = \tfrac{ \dd\mu_{ \Ga_{\e{eq}} }(z) }{\dd z} \dd z$ is real valued.
\end{remark}

The motivations for studying the complex valued partition functions (\ref{Cpartitionfunction}) are multifarious. In the case $\beta = 2$ and $\Gamma = \mathbb{R}$, such integrals are related to counting surfaces of a given
genus \cite{countingsurfaces}, of great importance because of the relevance to quantum gravity. Here one is interested in  how $F_k(\beta,V)$ depend on the coefficients of $V$, more precisely, how $F_k(\beta,V)$, when Taylor
expanded in the coefficients of $V$, behaves deep into the expansion. This behaviour is governed by the structure of singularities in the complex plane,
and so one is forced to consider a polynomial potential $V$ with complex coefficients. This has motivated recent work studying integrals of the form (\ref{Cpartitionfunction}),
for $\beta = 2$, in the cubic \cite{10.1063/5.0086911,8177967}  and quartic \cite{https://doi.org/10.1002/cpa.22164} cases by means of the Riemann-Hilbert approach to orthogonal polynomials. Riemann-Hilbert techniques are,
however, difficult to extend beyond  $\beta = 2$ and Riemann--Hilbert formulations are only known to exist for $\be\in \{1,2,4\}$. The partition function (\ref{Cpartitionfunction}) also appears in the theory of non-Hermitian orthogonal polynomials on curves in $\mathbb{C}$; namely it appears as the denominator in a Heine-type formula. The asymptotics of such non-Hermitian orthogonal polynomials were considered in the quartic case by \cite{Bertola2011AsymptoticsOO}, again by Riemann-Hilbert techniques.

Finally, correlation functions in truly interacting, \textit{i.e.} non-equivalent to free fermions, quantum integrable models are conjectured to be given by series of multiple integrals, \textit{viz}. series of the form
\beq
\sul{N \geq 0}{} \pl{s=1}{a_N} \Big\{ \Int{ \Ga }{}  \dd \la_s  \Big\} \cdot \mc{F}_N\big( \bs{\la} \mid \bs{x} \big)
\label{ecriture series integrales multiples}
\enq
where $a_N$ is typically an affine  function of $N$,  $\mc{F}_N$ an explicit integrand and $\Ga$ a simple smooth curve in $\Cx$.
$\bs{x}$ is a set of parameters on which the given correlator may depend. The building blocks of \eqref{ecriture series integrales multiples}
are dependent on the model under consideration and on the correlator of interest.
In certain models such as the XXZ spin-1/2 chain or the Sine-Gordon quantum field theory, such series yield
one parameter deformations of Painlevé $\tau$-functions: see \textit{e.g.} \cite{KozKitMailSlaTerXXZsgZsgZAsymptotics}
for the generalisation of the PV transcendent or \cite{BabujianKarowskiExactFFSineGordonBootsstrapII} which generalises the PIII
transcendent of \cite{BernardLeclairDiffEqnForPIII}. The main mathematical issue with these representations is that their convergence theory is not well developed.
The only case where it has been possible to establish convergence
pertains to the space-like separated vacuum two-point functions in the integrable Sinh-Gordon quantum field theory \cite{KozConvergenceFFSeriesSinhGordon2ptFcts}.
There, one has $a_N=N$ and $\Ga=\R$, which allowed \cite{KozConvergenceFFSeriesSinhGordon2ptFcts} to exploit techniques developed for the large-$N$ analysis of real valued $\be$-ensembles and
real-valued generalisations thereof so as to obtain an upper bound in $N$ for the $N$-fold integrals making up the series and so deduce convergence.
While constituting a definite progress,  more tools are needed to treat the case of more general models for which the contour  $\Ga$ does not coincide with $\R$ and $\mc{F}_N$ is genuinely complex valued. The techniques of large-$N$ analysis for $\be$-ensembles in the complex valued setting \eqref{Cpartitionfunction}
developed in the present work may be thus seen as paving the way for developing a convergence theory for series of multiple integrals describing correlation functions in quantum integrable models.

This work draws on many existing techniques in the literature,  namely the large deviation theory of $\beta$-ensembles \cite{Borot:2013aa,BG24} and Coulomb gases with more general mean-field interactions \cite{10.1093/imrn/rnu260}.  In particular the case where $V$ has real coefficients, even degree and positive leading coefficient, and $\Gamma = \mathbb{R}$,  such integrals (\ref{Cpartitionfunction}) have been studied in the literature for off-critical potentials, both in the one-cut \cite{Borot:2013aa} and in the multi-cut regime \cite{BG24}.  Extending to critical potentials is an open problem and only partial results are known; in particular Bekerman, Leblé and Serfaty \cite{10.1214/18-EJP209} have shown that, on a certain subspace for which the master operator is invertible, linear statistics asymptotically obey a CLT.  If the method of the aforementioned papers is regarded as an infinite dimensional version of the Laplace method, then the theory developed in the present paper can be regarded as an infinite dimensional version of the method of steepest descent, where the $S$-curve plays the role of steepest descent curve.

To motivate what follows,  let us outline the technique of proof. The method of the present paper follows closely the method of Borot-Guionnet \cite{Borot:2013aa} with an important caveat, which is that establishing a lower bound for (\ref{Cpartitionfunction}) is straightforward in the real case but is a major technical challenge in the complex case and is the central problem of the present paper. Such a lower bound is needed to establish \textit{a priori} bounds on moments of linear statistics in the complex model, which is needed to apply the method of Dyson-Schwinger equations.
More precisely, we show that there exists a constant $C_\beta > 0$ such that $|\mathcal{Z}_{N,\Gamma}[V]| \geq  C_\beta \op{Z}_{N,\gamma}[V]$ for $N$ sufficiently large
and with $\Ga=\ga(\R)$. This bound cannot be expected to hold for any contour but holds only when $\Gamma$ is an $S$-curve in the external field $\varphi = \Re(V)$. In this sense, the $S$-curve is the analogue of a steepest descent curve, since it "minimises the oscillations." We prove such a bound by separating the integrand of (\ref{Cpartitionfunction}) into a modulus and a phase and showing that the phase may be written as the complex exponential of a smooth quadratic form in the empirical measure of the particles. In turn, this will be shown to converge to a quadratic form in Gaussian variables, and then by a functional-analytic argument we show that the expectation of the exponential of this quadratic form is bounded from below by something uniformly positive, thus establishing the bound.

The structure of the paper is as follows. In Section \ref{equilibriumsection} we review the theory of equilibrium measures on curves in $\mathbb{C}$. This theory is well-known however we review it to explain certain innovations we make with regard to the interpolation between $V$ and $V_0$. For real $\beta$-ensembles, in the one-cut regular case, this interpolation is simple: by choosing $V_0$ appropriately one can take the intermediate potentials to be $V_t = t V + (1-t) V_0$. Then, the equilibrium measure for $V_t$ is the convex combination (see Equation 7.5 of \cite{BG24})
of the equilibrium measures of $V$ and $V_0$. However if one applies a similar idea in the complex case, the corresponding intermediate equilibrium measures
are in principle much more complicated and it is not at all clear that they are one-cut regular.
To circumvent this difficulty a more elaborate construction was needed. This construction is also an innovation of the present paper.

In Section \ref{concentrationsection} we establish \textit{a priori} concentration bounds of linear statistics for the real model supported on a compact curve (\ref{realmodel}). In Section \ref{maxeigsection} we use the theory of large deviations to show that we may replace the unbounded contour $\Gamma$ with a compact curve $\Sigma \subset \Gamma$ up to an event of exponentially small probability. In Section \ref{realDSsection} we prove the desired lower bound on the complex partition function. More precisely, we prove such a bound holds uniformly for an interpolating family of potentials $V_t$ and their associated $S$-curves. This is done by applying the method of Dyson-Schwinger equations to the real model (\ref{realmodel}) on a compact interval. As already stated, we first separate the complex integrand into a modulus and a phase, and hence the ratio of the partition functions may be written as an expectation with respect to the real model of a complex exponential. We will find the exponent is a smooth quadratic form in the empirical measure of the particles. We then use the method of Dyson-Schwinger equations to prove a central limit theorem for linear statistics in the real model. An essential step in the analysis is to invert the master operator of the real model, the invertibility of which is demonstrated in Appendix \ref{singularsection}. This leads to an expectation with respect to a Gaussian process which can be represented in terms of a Fredholm determinant. This Fredholm determinant representation is proven in Appendix \ref{fredholmsection}. With this lower bound established, Section \ref{expansionsection} returns to established techniques which allow one to characterise the expansion of the partition function.

\newpage

\section{Equilibrium measures on contours}\label{equilibriumsection}

In this section, we  review the theory of logarithmic potentials on curves. Let us assume we have a simple (\textit{i.e.} injective) $C^\infty$ curve
$\gamma : \mathcal{J} \longrightarrow \mathbb{C}$, where $\mathcal{J}$ is either a compact interval or $\mathcal{J} = \mathbb{R}$, and we denote $\Gamma = \gamma(\mathcal{J})$.
In the case of $\mathcal{J} = \mathbb{R}$ we assume $\Gamma$ satisfies \ref{hypothesis2} and \ref{hypothesis3}. Let $\mathcal{P}(\Gamma)$ be the set of Borel probability measures supported on $\Gamma$.
We define the energy functional $\op{I}_{\Ga} :  \mathcal{P}(\Gamma) \longrightarrow \intof{-\infty}{+\infty} $ as
\beq
\op{I}_{\Ga} [\mu] \overset{\mathrm{def}}{=} \Int{\Gamma^2}{} \Big\{ \ln\frac{1}{|z-w|}+\varphi(z) + \varphi(w)  \Big\}  \dd \mu(z)  \dd \mu(w)
\nonumber
\enq
where $\varphi \overset{\mathrm{def}}{=} \Re(V)$ and $V(z) = \frac{z^\kappa}{\kappa}+\e{O}(z^{\kappa-1})$ is the polynomial potential.
\begin{definition}[Bounded Lipschitz topology]
Let us equip $\mathcal{P}(\Gamma)$ with the bounded Lipschitz topology.  More precisely, for a bounded Lipschitz function $f : \Gamma \longrightarrow \mathbb{R}$, let
$$\| f\|_{\mathrm{BL}(\Gamma)} = \sup_{x \in \Gamma} |f(x)| + \sup_{\substack{x,y \in \Gamma \\ x \neq y}} \frac{|f(x)-f(y)|}{|x-y|}$$
be the bounded Lipschitz norm. Then the "dual" norm on (signed) Borel measures of finite total variation is
$$\| \mu \| = \sup_{\| f \|_{\mathrm{BL}(\Gamma)} \leq 1}\left|  \int_\Gamma f \, \dd \mu \right| \, . $$
This induces a metric on $\mathcal{P}(\Gamma)$, $d_{\mathrm{BL}}(\mu,\nu) = \| \mu - \nu \|$.
\end{definition}
It is well-known that this norm metrises weak convergence, so that $\| \mu_n \| \to 0$ if and only if $\mu_n \to 0$ weakly and that $(\mathcal{P}(\Gamma), d_{\mathrm{BL}})$ is a Polish space, \textit{i.e.}
it is complete and separable.
\begin{remark} Given a function $f : \Gamma \longrightarrow \mathbb{R}$ we can naturally associate a function $F = f \circ \gamma : \mathcal{J} \longrightarrow \mathbb{R}$.
By \textnormal{\ref{hypothesis2}} and \textnormal{\eqref{hypothesis4}} we have
$$C_\gamma \| f \|_{\mathrm{BL}(\Gamma)} \leq \| F \|_{\mathrm{BL}(\mathcal{J})} \leq C_\gamma^{-1} \| f \|_{\mathrm{BL}(\Gamma)}\, .$$
Hence the space of bounded Lipschitz functions on $\Gamma$ and the space of bounded Lipschitz functions on $\mathcal{J}$ are equivalent as metric spaces, with the pushforward under $\gamma$
providing a homeomorphism between the two. For this reason we shall often speak of the weak topology on $\mathcal{J}$ and on $\Gamma$ interchangeably.
\end{remark}
Moving forward, is well known \cite{AGZ,BAG97,BAZ98} that the energy functional $\op{I}_{\Ga}$ has the following properties.
\begin{lemma}\label{goodrate}
One has that
\begin{enumerate}[label=(\arabic*)]
\item $\op{I}_{\Ga}$ is a "good rate function," i.e. is lower semicontinuous and has compact level sets (with respect to the weak topology).
\item There exists a unique probability measure $\mu_{\Ga} \in \mathcal{P}(\Gamma)$, the so-called equilibrium measure, which minimises the energy, \textit{i.e.}
$\op{I}_{\Ga}[\mu_{\Ga}] = \underset{\nu \in \mathcal{P}(\Gamma)}{\inf}\op{I}_{\Ga}[\nu]$.
\item $\mu_{\Ga}$ is compactly supported.
\end{enumerate}
\end{lemma}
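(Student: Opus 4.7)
My plan is to verify the three items in the order (1), (3), (2), since (2) will rely on both the lower semicontinuity from (1) and the compactness of support from (3), and the natural "direct method" argument for existence pieces together nicely this way.

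For lower semicontinuity in (1), the standard route is to truncate the logarithmic kernel from above. Write the integrand in $\op{I}_\Ga[\mu]$ as
\[
K(z,w) \;=\; \ln\frac{1}{|z-w|} + \varphi(z) + \varphi(w),
\]
and set $K_M(z,w) = \min(K(z,w), M)$. Since $\varphi$ is continuous and $\ln(1/|z-w|)$ is lower semicontinuous on $\Ga \times \Ga$, $K$ is lower semicontinuous and bounded below (on bounded sets $\varphi$ is bounded, and on unbounded sets $\varphi(z)+\varphi(w) \to +\infty$ faster than $\ln|z-w|\to -\infty$ because $\kappa\geq 2$). Thus $K_M$ is bounded and continuous apart from a set where it equals $M$; each $\mu\mapsto \iint K_M\,d\mu\otimes d\mu$ is weakly lower semicontinuous (continuous except where truncation bites), and the supremum over $M$ yields $\op{I}_\Ga[\mu]$ by monotone convergence. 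The supremum of lower semicontinuous functions is lower semicontinuous, which yields (1)'s first half. To get compactness of the level sets, I will combine this with a tightness estimate: using the lower bound
\[
K(z,w) \;\geq\; \tfrac{1}{2}\varphi(z) + \tfrac{1}{2}\varphi(w) - C
\]
valid on $\Ga\times \Ga$ (since $\ln(1/|z-w|) + \tfrac12\varphi(z)+\tfrac12\varphi(w)$ is bounded below by virtue of (\ref{Vinf})), any measure with $\op{I}_\Ga[\mu]\leq L$ satisfies $\int \varphi\,d\mu \leq L+C$, and hence by Markov's inequality $\mu(\{|z|\geq R\})\to 0$ uniformly on the level set as $R\to+\infty$. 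Prokhorov then gives relative compactness, and lower semicontinuity promotes this to compactness.

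For (3), I will argue directly that any minimiser must have compact support. Indeed, if $\mu$ is a candidate minimiser with $\op{I}_\Ga[\mu]<+\infty$, the same lower bound shows $\int \varphi\,d\mu <+\infty$. Then an Euler--Lagrange argument, or simply a comparison with the truncated measure $\mu_R = \mathbf{1}_{|z|\leq R}\mu / \mu(\{|z|\leq R\})$, shows that replacing $\mu$ by $\mu_R$ for $R$ large reduces the energy: the log-energy contribution of mass at $|z|\geq R$ is roughly of order $\ln R$ per pair, whereas the potential contribution is of order $R^\kappa/\kappa$ per particle, which is far larger. This immediately forces any minimiser, once shown to exist, to be compactly supported; the same reasoning applies to any minimising sequence.

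Existence in (2) then follows by the direct method: take a minimising sequence $(\mu_n)$, i.e. $\op{I}_\Ga[\mu_n] \to \inf_{\nu\in \mc P(\Ga)} \op{I}_\Ga[\nu]$. Since $\op{I}_\Ga[\mu_n]$ is bounded above, the $\mu_n$ lie in a level set, hence by (1) they admit a weakly convergent subsequence $\mu_{n_k}\to \mu_\Ga$; lower semicontinuity gives $\op{I}_\Ga[\mu_\Ga]\leq \liminf \op{I}_\Ga[\mu_{n_k}]$, so $\mu_\Ga$ is a minimiser. For uniqueness, I will exploit the strict convexity of the logarithmic energy on signed measures of zero total mass. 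Specifically, for any compactly supported signed Borel measure $\sigma$ on $\Cx$ with $\sigma(\Cx)=0$ and sufficient regularity, one has
\[
-\iint \ln|z-w|\,d\sigma(z)\,d\sigma(w) \;=\; \frac{1}{2\pi}\int_0^\infty |\widehat{\sigma}(r\theta)|^2 \,\frac{dr}{r}\,d\theta \;\geq\; 0,
\]
with equality if and only if $\sigma=0$. If $\mu_1,\mu_2$ were two minimisers, then $\mu_t = (1-t)\mu_1 + t\mu_2$ would also minimise (the linear part in $\varphi$ is affine, and the log-energy is convex), and differentiating $\op{I}_\Ga[\mu_t]$ twice in $t$ gives $-2\iint \ln|z-w|\,d\sigma\,d\sigma$ with $\sigma=\mu_2-\mu_1$. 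Non-negativity forces equality everywhere on $[0,1]$, and strict positive-definiteness of the logarithmic kernel on zero-mass measures yields $\sigma=0$. The main obstacle in this program is the justification of the convexity / positive-definiteness inequality in the precise class where we need it (compactly supported zero-mass measures with finite logarithmic energy), but this is a classical fact of potential theory which we may invoke from \cite{AGZ,BAG97,BAZ98} as cited.
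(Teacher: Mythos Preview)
Your proposal is correct and follows essentially the same route as the paper's proof, which largely defers to the standard arguments in \cite{AGZ} (Theorem 2.6.1 for item (1) and Lemma 2.6.2 for item (3)) together with the planar Fourier positivity identity for the logarithmic kernel (the paper's Lemma \ref{logarithmicdistance}) for uniqueness. You have simply unpacked those references explicitly: the truncation-of-kernel argument for lower semicontinuity, tightness from the growth of $\varphi$ via \eqref{Vinf}, the direct method for existence, and strict convexity via the Fourier representation of the logarithmic energy on zero-mass signed measures for uniqueness.
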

\begin{proof}
The first statement is proven by the same method as the proof of Theorem 2.6.1 in \cite{AGZ} thanks to  \eqref{Vinf}.
The existence of a minimiser follows from the first statement. That the minimiser is unique follows from Lemma \ref{logarithmicdistance}.
The final statement follows from the proof of Lemma 2.6.2 in \cite{AGZ}.
\end{proof}
The above proof differs only from standard proofs in that one must use a slightly different argument to show the positive definiteness of the "logarithmic distance",
which is here defined for measures in the complex plane rather than on the real line. This is contained in the following Lemma, which also implies convexity of the
energy functional $\op{I}_{\Ga}$.
\begin{lemma}\label{logarithmicdistance} Let $\mu$ be a signed, real valued measure of finite total variation in $\mathbb{C}$ and zero net mass, $\mu[\mathbb{C}] = 0$. Then
\begin{align}\label{logarithmicIP}
\int_{\mathbb{C}^2} \ln\frac{1}{|z-w|} \, \dd\mu(z)  \dd\mu(w) = \frac{1}{2\pi} \int_{\mathbb{R}^2} \frac{|\widehat{\mu}(p,q)|^2}{p^2 + q^2} \, \dd p \, \dd q \geq 0
\end{align}
where
$$\widehat{\mu}(p,q) = \int_{\mathbb{C}} \ex{-\i p x - \i q y} \, \dd\mu(x+\i y)$$
is the planar Fourier transform of the measure.
\end{lemma}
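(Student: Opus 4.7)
My strategy is to reduce the double logarithmic integral to a one-parameter family of positive-definite Gaussian kernels via Frullani's identity, and then diagonalise the resulting convolution operators by the planar Fourier transform, using the zero-mass hypothesis to control the divergence at $t \to 0^+$. The starting point is the Frullani identity
\[
-\ln|z| \;=\; \tfrac{1}{2}\Int{0}{+\infty} \big( \ex{-t|z|^2} - \ex{-t} \big) \,  \tfrac{\dd t}{t}.
\]
Inserting this into the double integral, the constant term $\ex{-t}$ is multiplied by $\mu(\Cx)^2 = 0$ and disappears, leaving
\[
\Int{\Cx^2}{} \ln\tfrac{1}{|z-w|}\, \dd\mu(z)\dd\mu(w) \;=\; \tfrac{1}{2}\Int{0}{+\infty} \tfrac{\dd t}{t}\Int{\Cx^2}{} \ex{-t|z-w|^2} \dd\mu(z)\dd\mu(w).
\]
Each inner integral is manifestly $\geq 0$ because the Gaussian is a positive-definite kernel by Bochner's theorem, which already delivers the nonnegativity claim of the lemma.

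To obtain the Fourier representation, I would apply the Plancherel/Parseval identity for the translation-invariant kernel. Using that the planar Fourier transform of $z \mapsto \ex{-t|z|^2}$ equals $\tfrac{\pi}{t}\ex{-|\xi|^2/(4t)}$ (with the convention $\widehat{f}(\xi) = \int f(x)\ex{-\i \xi\cdot x}\dd x$), one gets
\[
\Int{\Cx^2}{} \ex{-t|z-w|^2}\dd\mu(z)\dd\mu(w) \;=\;  \tfrac{1}{4\pi t}\Int{\R^2}{} \ex{-|\xi|^2/(4t)} |\widehat{\mu}(\xi)|^2 \, \dd\xi.
\]
Swapping the $t$- and $\xi$-integrations by Tonelli (the integrand is nonnegative) and using the substitution $s = |\xi|^2/(4t)$ to evaluate $\int_0^{+\infty} \ex{-|\xi|^2/(4t)} t^{-2} \dd t = 4/|\xi|^2$, the remaining constants multiply to $\tfrac12 \cdot \tfrac{1}{4\pi}\cdot 4 = \tfrac{1}{2\pi}$, producing exactly the identity claimed in \eqref{logarithmicIP}.

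The main technical issue is justifying the interchange at the Frullani step, because the $t$-integral of $\ex{-t}\, \dd t/t$ alone is not absolutely convergent at $t \to 0^+$; the cancellation only works after the double integration against $\dd\mu\otimes\dd\mu$ has been carried out. The cleanest route is to establish the identity first on a dense class of well-behaved test measures: approximate $\mu$ by smooth, compactly supported, zero-mass signed densities $\mu_\eps$ (obtained by convolving with a mollifier and subtracting a small correction preserving $\mu_\eps(\Cx)=0$), for which $\widehat{\mu}_\eps$ is Schwartz, both sides are finite, and every manipulation is manifestly legitimate. One then passes to the limit: on the Fourier side by Fatou's lemma applied to the nonnegative integrand $|\widehat{\mu}_\eps(\xi)|^2/|\xi|^2$, and on the logarithmic side by truncating $\ln\tfrac{1}{|z-w|}$ away from the diagonal and invoking the finite total variation assumption. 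The resulting equality holds in $[0,+\infty]$, and both sides are simultaneously finite or infinite.
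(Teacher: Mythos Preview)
Your proof is correct and follows essentially the same route as the paper: the paper derives the Frullani representation by writing $\ln|z-w| = \tfrac12\int_1^{|z-w|^2}\tfrac{\dd s}{s}$ and then $\tfrac{1}{s}=\int_0^\infty \ex{-st}\,\dd t$, arriving at the same one-parameter Gaussian family, and then applies the Hubbard--Stratonovich transformation (i.e.\ the planar Fourier transform of the Gaussian) exactly as you do. Your treatment of the interchange-of-integration issue via mollification is in fact more careful than the paper's, which proceeds formally at that step.
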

\begin{proof} This formula can be found as Lemma 2.4 of \cite{19133cf7-3d88-3b93-9dd8-2d2fe21932d0} but we include a proof for completeness.

\begin{align*}
\Int{ \mathbb{C}^2 }{} \ln{|z-w|}\, \dd \mu(z)  \dd \mu(w) &=  \frac{1}{2}  \Int{\mathbb{C}^2}{} \Bigg( \Int{1}{|z-w|^2} \frac{\dd s}{ s } \Bigg) \, \dd \mu(z)  \dd \mu(w) \\
&=\frac{1}{2}  \Int{ \mathbb{C}^2 }{} \Bigg( \Int{ 1 }{ |z-w|^2 } \bigg( \Int{0}{+\infty} \ex{-s t }\, \dd t \bigg) \dd s \Bigg) \, \dd \mu(z)  \dd \mu(w) \\
&= -\frac{1}{2} \Int{0}{+\infty}  \Bigg( \Int{ \mathbb{C}^2 }{} \ex{ - \frac{|z-w|^2}{t} }  \, \dd \mu(z)  \dd \mu(w) \Bigg) \, \frac{ \dd t }{ t } \;,
\end{align*}
where we have used that $\mu$ has net mass zero. Finally, if we write $z = x+\i y$ and $w = u+\i v$ we may use the Hubbard-Stratonovich transformation
$$\ex{ -\frac{|z-w|^2}{t} }  \, = \,  \frac{t}{4\pi} \Int{ \mathbb{R}^2 }{} \ex{ - \frac{t}{4}(p^2 + q^2) -\i (x-u) p - \i (y-v) q } \, \dd p\, \dd q \; . $$
Substituting this into our formula and performing the integral over $t$ yields the result.
\end{proof}

\begin{remark}
Note that if $\mu$ is supported on the real line then the Fourier transform $\widehat{\mu}(p,q) = \widehat{\mu}(p)$ is independent of $q$.
Then the integral $\int_\mathbb{R} \frac{1}{p^2 + q^2 }\, dq = \frac{\pi}{|p|}$ may be performed and we recover the usual formula which appears for measures on $\mathbb{R}$.
\end{remark}
\begin{definition}\label{logdist}
Let $\mu,\nu \in \mathcal{P}(\Gamma)$, then
\begin{align*}
&\mf{D} :  \mathcal{P}(\Gamma) \times  \mathcal{P}(\Gamma) \longrightarrow \intff{0}{+\infty} \\
&\mf{D}(\mu,\nu) = \sqrt{ \Int{\mathbb{C}^2}{} \ln\frac{1}{|z-w|} \, \dd(\mu-\nu)(z)  \dd(\mu-\nu)(w)}
\end{align*}
defines a distance (which may take the value $+\infty$) on $ \mathcal{P}(\Gamma)$. By using $\gamma$ to pull back the measures onto $\mathcal{J}$, the support of $\ga$,
\begin{align*}
&\mf{D}_{\ga} :  \mathcal{P}(\mathcal{J}) \times  \mathcal{P}(\mathcal{J}) \longrightarrow \intff{0}{+\infty} \\
&\mf{D}_{\ga}(\mu,\nu) = \sqrt{ \Int{ \mathcal{J}^2 }{} \ln\frac{1}{|\gamma(x)-\gamma(y)|} \, \dd(\mu-\nu)(x)  \dd(\mu-\nu)(y) }
\end{align*}
similarly defines a distance.
\end{definition}
From Lemma \ref{logarithmicdistance}, it is easy to deduce the following corollary.
\begin{corollary}[Logarithmic Cauchy-Schwarz]\label{logCS}
Let $\mu$, $\nu$ be two signed Borel measures on $\Gamma$,  each of zero total mass $\mu[\Gamma] = \nu[\Gamma] = 0$ and of finite total variation.  Then
$$\Int{ \Gamma^2 }{} \ln\frac{1}{|x-y|}\, \dd\mu(x) \, \dd\nu(y)  \, \leq \, \sqrt{  \Int{ \Gamma^2 }{ } \ln\frac{1}{|x-y|}\, \dd\mu(x) \, \dd\mu(y) }
\cdot \sqrt{ \Int{ \Gamma^2 }{} \ln\frac{1}{|x-y|}\, \dd\nu(x) \, \dd\nu(y) } \; .$$
\end{corollary}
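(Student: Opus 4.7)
The plan is to deduce the inequality from the Fourier-side identity of Lemma \ref{logarithmicdistance} by a standard polarisation argument. Set
$$Q(\sigma) \;=\; \int_{\Gamma^2} \ln\frac{1}{|x-y|}\, \dd\sigma(x)\, \dd\sigma(y), \qquad B(\mu,\nu) \;=\; \int_{\Gamma^2} \ln\frac{1}{|x-y|}\, \dd\mu(x)\, \dd\nu(y),$$
for signed, real-valued, zero net mass measures of finite total variation. The goal is to show $B(\mu,\nu) \leq \sqrt{Q(\mu)}\sqrt{Q(\nu)}$; in fact the argument gives the slightly stronger bound $|B(\mu,\nu)| \leq \sqrt{Q(\mu)}\sqrt{Q(\nu)}$.

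First, I would dispose of the trivial case where $Q(\mu) = +\infty$ or $Q(\nu) = +\infty$, in which case the right-hand side of the claim is $+\infty$ and nothing is to prove. So assume both $Q(\mu)$ and $Q(\nu)$ are finite. Next, one notes that $\mu + t\nu$ is again signed, real, of zero net mass and finite total variation for every $t \in \mathbb{R}$, so Lemma \ref{logarithmicdistance} applies and yields
$$0 \;\leq\; Q(\mu + t\nu) \;=\; Q(\mu) \,+\, 2\,t\,B(\mu,\nu) \,+\, t^2\,Q(\nu).$$
The bilinear cross term $B(\mu,\nu)$ is well defined as an absolutely convergent integral, since polarising the Fourier identity of Lemma \ref{logarithmicdistance} gives
$$B(\mu,\nu) \;=\; \frac{1}{2\pi}\int_{\mathbb{R}^2} \frac{\widehat{\mu}(p,q)\,\overline{\widehat{\nu}(p,q)}}{p^2+q^2}\, \dd p\, \dd q,$$
and Cauchy--Schwarz in the weighted Hilbert space $L^{2}\!\left(\mathbb{R}^2, \tfrac{\dd p\, \dd q}{2\pi(p^2+q^2)}\right)$ immediately gives $|B(\mu,\nu)| \leq \sqrt{Q(\mu)\,Q(\nu)} < +\infty$. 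This already proves the claim, and in fact one can either stop here, or complete the polarisation argument: once all three quantities $Q(\mu), Q(\nu), B(\mu,\nu)$ are finite, the fact that the quadratic trinomial $Q(\mu) + 2tB(\mu,\nu) + t^2Q(\nu)$ in $t$ is non-negative on $\mathbb{R}$ forces its discriminant to be non-positive, i.e. $B(\mu,\nu)^2 \leq Q(\mu)Q(\nu)$.

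The only subtle point is the justification of polarisation of the Fourier identity, which requires that the double integrals $\int \widehat{\mu}\,\overline{\widehat{\nu}}/(p^2+q^2)\,\dd p\,\dd q$ and $\int \ln|x-y|^{-1}\dd\mu(x)\dd\nu(y)$ agree; but this follows by expanding $\mu + t\nu$ bilinearly in the proof of Lemma \ref{logarithmicdistance} and identifying coefficients of $t$, using that all three resulting integrals (in $t^0$, $t^1$, $t^2$) converge absolutely under the standing finiteness assumption. No genuine obstacle arises.
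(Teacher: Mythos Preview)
Your proof is correct and is precisely the argument the paper has in mind: the paper does not give a proof of this corollary at all, merely stating that it ``is easy to deduce'' from Lemma~\ref{logarithmicdistance}. Your polarisation of the Fourier identity followed by Cauchy--Schwarz in $L^2\big(\mathbb{R}^2,\tfrac{\dd p\,\dd q}{2\pi(p^2+q^2)}\big)$ is exactly the intended one-line deduction, and you have taken care to justify the finiteness issues that the paper leaves implicit.
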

In the case that $\mathcal{J}$ is the compact interval $\intff{a}{b}$ we can establish a relationship between $\mf{D}_{\ga}$ and the topology of weak convergence.
\begin{proposition}[Relationship of $\mf{D}_{\ga}$ to weak topology]\label{weaktopology}
Let $\mc{J}=\intff{a}{b}$ with $a < b$ and let $f \in \mc{W}^{\infty}_{1}(\mc{J})$, where $\mc{W}^{\infty}_{p}(\mc{J})$ and its related norm are recalled in Definition \ref{Ckspaces}.
Let $\mu, \nu \in \mathcal{P}(\mc{J})$.  Then there is a constant $\mf{C}_{\mc{J}} > 0$ such that
\begin{align}
\left| \Int{\mc{J}}{} f \, \dd (\mu- \nu) \right| \leq \mf{C}_{\mc{J}} \| f \| _{\mc{W}^{\infty}_{1}(\mc{J})} \mf{D}_{\ga}(\mu,\nu) \;.
\end{align}
Above, $\mf{D}_{ \ga }$ is as introduced through Definition \ref{logdist}.
If $\gamma$ depends on some auxiliary parameter $t \in \intff{0}{1}$ in such a way that the constants in \eqref{hypothesis4} are uniform in $t$ and the
$\norm{\ga}_{\mc{W}_{p}^{\infty}(\mc{J})}$ are also uniformly bounded in $t$,
then
$\mf{C}_{\mc{J}} > 0$ may be taken independent of $t$.
\end{proposition}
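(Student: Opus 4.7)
The strategy is to push the problem off of $\mc{J}$ and into the ambient plane $\Cx \simeq \R^{2}$, where Lemma \ref{logarithmicdistance} provides a clean Fourier representation of the logarithmic energy. Let $\sigma := \mu - \nu$, which is a signed Borel measure of zero total mass supported on $\mc{J}$. Set $\sigma_{\Gamma} := \ga_{\ast}\sigma$, its push-forward onto the compact arc $\ga(\mc{J}) \subset \Cx$, and $g := f \circ \ga^{-1}$. The change of variables then gives
\begin{equation*}
\Int{\mc{J}}{} f \, \dd(\mu - \nu) \; = \; \Int{\Cx}{} g \, \dd \sigma_{\Gamma}  \qquad \text{and} \qquad  \mf{D}_{\ga}(\mu,\nu)^{2} \; = \; \Int{\Cx^{2}}{} \ln\frac{1}{|z-w|} \, \dd \sigma_{\Gamma}(z)\, \dd \sigma_{\Gamma}(w) \; ,
\end{equation*}
so the task is reduced to bounding $\int_{\Cx} g\, \dd \sigma_{\Gamma}$ in terms of the square root of the \textit{rhs} above.

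\textbf{Fourier--Cauchy-Schwarz step.} I would extend $g$ to a Lipschitz, compactly supported function $G: \R^{2} \longrightarrow \R$ satisfying $\| \nabla G \|_{L^{2}(\R^{2})} \leq C\, \| f \|_{\mc{W}^{\infty}_{1}(\mc{J})}$ for a constant $C$ depending only on $\mc{J}$ and on $\ga$; the construction is dealt with below. With such an extension in hand, Parseval (with the convention of Lemma \ref{logarithmicdistance}) delivers
\begin{equation*}
\Int{\Cx}{} g \, \dd \sigma_{\Gamma} \; = \; \Int{\R^{2}}{} G \, \dd \sigma_{\Gamma} \; = \; \frac{1}{(2\pi)^{2}} \Int{\R^{2}}{} \wh{G}(p,q) \, \overline{\wh{\sigma_{\Gamma}}(p,q)}\, \dd p\, \dd q \; ,
\end{equation*}
and then the Cauchy-Schwarz split of the integrand as $\wh{G}\,(p^{2}+q^{2})^{1/2}$ against $\overline{\wh{\sigma_{\Gamma}}}\,(p^{2}+q^{2})^{-1/2}$, combined with Plancherel applied to the first factor (giving $(2\pi)^{2}\| \nabla G \|_{L^{2}}^{2}$) and Lemma \ref{logarithmicdistance} applied to the second (giving $2\pi\, \mf{D}_{\ga}(\mu,\nu)^{2}$), yields $|\int g\, \dd \sigma_{\Gamma}| \leq (2\pi)^{-1/2} \| \nabla G \|_{L^{2}(\R^{2})}\, \mf{D}_{\ga}(\mu,\nu)$, which closes the argument.

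\textbf{Controlled extension and uniformity in $t$.} The genuine technical step is the construction of $G$ with the right quantitative bound, and the one place where care is required for the uniformity claim. Since $\ga(\mc{J})$ is a compact $C^{\infty}$ arc, I pick a smooth unit normal $n(x)$ along $\ga$ and set $\Phi(x,t) := \ga(x) + t\, n(x)$; for $\epsilon > 0$ sufficiently small the map $\Phi: \mc{J} \times \intoo{-\epsilon}{\epsilon} \to N$ is a $C^{1}$-diffeomorphism onto a tubular neighbourhood $N$ of $\ga(\mc{J})$. I then define $G(\Phi(x,t)) := f(x)\, \chi(t)$ for a bump $\chi \in C^{\infty}_{c}(\intoo{-\epsilon}{\epsilon})$ with $\chi(0) = 1$, and extend $G$ by zero outside $N$. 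A change of variables then yields $\| \nabla G \|_{L^{2}(\R^{2})}^{2} \leq \mf{C}_{\mc{J}}^{2}\, \| f \|_{\mc{W}^{\infty}_{1}(\mc{J})}^{2}$, with $\mf{C}_{\mc{J}}$ depending on $|\mc{J}|$, $\epsilon$, and the Jacobian of $\Phi$. For the uniformity in $t$, the lower bound in \eqref{hypothesis4} applied to $\ga = \ga_{t}$ forces a lower bound on the injectivity radius of $\ga_{t}$, so $\epsilon$ can be chosen independent of $t$, while the uniform control of $\| \ga_{t} \|_{\mc{W}^{\infty}_{p}(\mc{J})}$ for some $p \geq 2$ guarantees uniform smoothness of $n$ and hence uniform bounds on the Jacobian of $\Phi$, so that $\mf{C}_{\mc{J}}$ is indeed independent of $t$. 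The main obstacle of the whole proof is precisely this extension step: producing $G$ with the right $H^{1}$ norm and with constants that are quantitatively stable as the curve $\ga$ is deformed.
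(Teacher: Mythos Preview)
Your approach is correct in spirit and genuinely different from the paper's. The paper proceeds by inverting the real master operator $\Xi_{\gamma}$ (Proposition~\ref{realmasterinverse}): for smooth $f$ it writes $f - \mc{K}_{\gamma}[f] = \int_{\mc{J}} \ln\frac{1}{|\gamma(\cdot)-\gamma(y)|}\, \dd\sigma(y)$ for an explicit zero-mass measure $\dd\sigma$ built from $\Xi_{\gamma}^{-1}[f]$, and then applies the logarithmic Cauchy--Schwarz (Corollary~\ref{logCS}) directly in physical space, finishing by density. Your route passes instead to Fourier space, pairing the Plancherel identity for $\|\nabla G\|_{L^{2}}$ with the Fourier representation of the logarithmic energy from Lemma~\ref{logarithmicdistance}. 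The paper's argument leans on the Fredholm machinery of Appendix~\ref{singularsection} (which is developed anyway for the Dyson--Schwinger analysis); yours is self-contained and more elementary, at the price of needing a controlled $H^{1}$ extension off a curve.

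There is, however, a real gap in your extension step as written. The function $G(\Phi(x,t)) = f(x)\chi(t)$, extended by zero outside the tube $N = \Phi\big(\mc{J}\times(-\epsilon,\epsilon)\big)$, has jump discontinuities across the two end caps $\Phi(\{a\}\times(-\epsilon,\epsilon))$ and $\Phi(\{b\}\times(-\epsilon,\epsilon))$ whenever $f(a)\neq 0$ or $f(b)\neq 0$, since $\chi(0)=1$. Such a $G$ is not in $H^{1}(\R^{2})$, so the factor $\int |\wh{G}|^{2}(p^{2}+q^{2})$ diverges and the Cauchy--Schwarz step collapses. Two easy repairs: either (i) extend $\gamma$ and $f$ $C^{1}$-smoothly to a slightly larger interval $[a-\delta,b+\delta]$ with $f$ vanishing at the new endpoints (costing only a constant multiple of $\|f\|_{\mc{W}^{\infty}_{1}}$), and run your tube construction there; or (ii) drop the tube entirely and use the McShane extension theorem: $g = f\circ\gamma^{-1}$ has Lipschitz constant at most $C_{\gamma}^{-1}\|f\|_{\mc{W}^{\infty}_{1}(\mc{J})}$ by~\eqref{hypothesis4}, hence extends to all of $\R^{2}$ with the same constant, and multiplying by a fixed smooth cutoff equal to $1$ on $\gamma(\mc{J})$ and supported in a ball containing all $\gamma_{t}(\mc{J})$ yields a compactly supported Lipschitz $G$ with $\|\nabla G\|_{L^{2}(\R^{2})}\le C\|f\|_{\mc{W}^{\infty}_{1}(\mc{J})}$. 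With fix~(ii) the uniformity in $t$ is immediate from the stated hypotheses, without any appeal to an injectivity radius.
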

\begin{proof}
Let us first consider the case when $f \in \mc{C}^\infty(\mc{J})$. By virtue of Proposition \ref{realmasterinverse}, \textit{c.f.} \eqref{ecriture explicite inverse real master operator},
there exists $g \in \mc{C}^{\infty}(\mc{J})$ and an $f$-dependent constant $\mc{K}_{\ga}[f]$  such that
$$f(x) - \mc{K}_{\ga}[f] \,= \, \Int{ \mc{J} }{} \ln \frac{1}{|\gamma(x) - \gamma(y)| }\frac{\dd}{\dd y}\left( \sqrt{(y-a)(b-y)} g(y) \right) \, \dd y =: \Xi_{\gamma}[g](x)$$
where $\Xi_{\gamma}$ is the real master operator defined in \eqref{definition real master operator}, and subordinate to the choice $\mc{J}^{\prime}=\mc{J}$.
Note that the above formula follows from an integration by parts in the definition \eqref{definition real master operator}.
Then
$$
\dd \sigma(x) \, = \, \mathbbm{1}_{\mc{J}}(x) \frac{ \dd }{ \dd x } \left( \sqrt{(x-a)(b-x)} g(x) \right) \dd x$$
defines a signed Borel measure of zero net mass on $\mc{J}$.  Substituting this in and applying Corollary \ref{logCS} we have
\begin{align*}
\left| \Int{\mc{J}}{} f(s) \, \dd (\mu- \nu)(s) \right| & \leq \mf{D}_{\ga }(\mu,\nu) \sqrt{\Int{\mc{J}}{} f(x) \frac{\dd}{\dd x}\left(\sqrt{(x-a)(b-x)}g(x)\right)\, \dd x}
\leq \, C_{ \mc{J} } \| f \|_{\mc{W}^{\infty}_{1}(\mc{J})} \mf{D}_{\ga}(\mu,\nu)
\end{align*}
by Proposition \ref{realmasterinverse}.  Furthermore, this Proposition also ensures that $\mf{C}_{\mc{J}} > 0$ may be taken to be independent of $t \in \intff{0}{1}$ if $\gamma $
depends on $t$ as stated in this proposition's hypotheses. The bound then extends to all $\mc{W}^{\infty}_{1}(\mc{J})$ functions by density.
\end{proof}
Moving forward, from the fact that $\mu_{\Ga}$ minimises $\op{I}_{\Ga}$, $\mu_{\Ga}$ must satisfy a first order condition.  This is the content of "Frostman's lemma" which we shall now state.  Define the \textit{electrostatic potential} associated to a probability measure $\mu$ as
\begin{align}
U[\mu](z) \overset{\mathrm{def}}{=} \Int{\mathbb{C}}{} \ln \frac{1}{|z-w|} \, \dd \mu(w).
\end{align}
$U[\mu]$ takes values in $\mathbb{R} \cup \{ + \infty \}$ and is lower semi-continuous, since one can write
$$U[\mu](z) \, =\,  \sup_{L > 0} \Int{ \mathbb{C} }{ } \min\left\{L, \ln \frac{1}{|z-w|}\right\} \, \dd \mu(w).$$
\begin{definition}[Zero capacity set]
We say a compact set $K \subset \mathbb{C}$ has \textit{zero capacity} if for every $\nu \in \mathcal{P}(K)$,
$$ \Int{K^2}{} \ln\frac{1}{|z-w|}\, \dd\nu(z) \, \dd\nu(w) = +\infty \, . $$
An arbitrary Borel set $A  \subset \mathbb{C}$ is said to to have zero capacity if for every compact set $K \subset A$, $K$ has zero capacity (see Section I.1 of \cite{saff}).
We say a statement holds quasi-everywhere (q.e.) if it fails on at most a set of zero capacity.
\end{definition}
\begin{lemma}[Frostman-Euler-Lagrange]\label{frostmanlemma} There exists a constant $C_{\Ga} \in \mathbb{R}$, termed the "equilibrium constant," such that
\begin{align}
U[\mu_{\Ga}](z) + \varphi(z) & \geq  C_{\Ga} & & \forall  z \in \Gamma  \setminus \mathrm{supp}\, \mu_{\Ga} \, , \label{Frostman1} \vspace{3mm}\\
U[\mu_{\Ga}](z) + \varphi(z) &= C_{\Ga} & & \mathrm{q.e.} \, z \in \mathrm{supp}\, \mu_{\Ga}  \, . \label{Frostman2}
\end{align}
Conversely (\ref{Frostman1}) and (\ref{Frostman2}) are also sufficient conditions for the equilibrium measure, and since the equilibrium measure is unique, they uniquely define the equilibrium measure. More precisely, if one has a probability measure on $\Gamma$ for which there exists a constant $C_{\Ga}$ such that (\ref{Frostman1}) and (\ref{Frostman2}) hold, then such a measure is identical with the equilibrium measure.
\end{lemma}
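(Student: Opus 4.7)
The plan is to follow the standard variational approach of logarithmic potential theory, specialised to our curve setting. First, exploit the minimality of $\mu_{\Ga}$ by considering convex perturbations $\mu_t = (1-t)\mu_{\Ga} + t\nu$ for $t \in \intff{0}{1}$ and $\nu \in \mathcal{P}(\Gamma)$ with $\op{I}_{\Ga}[\nu] < \infty$. Expanding by bilinearity, and using Corollary \ref{logCS} to justify absolute convergence of all cross-terms,
\begin{equation*}
\op{I}_{\Ga}[\mu_t] \; = \; \op{I}_{\Ga}[\mu_{\Ga}] \, + \, 2 t \Int{\Gamma}{} \big( U[\mu_{\Ga}] + \varphi \big) \dd(\nu - \mu_{\Ga}) \, + \, t^2 \mf{D}(\nu, \mu_{\Ga})^2.
\end{equation*}
Dividing the inequality $\op{I}_{\Ga}[\mu_t] \geq \op{I}_{\Ga}[\mu_{\Ga}]$ by $t > 0$ and sending $t \to 0^+$ yields the first-order condition $\int (U[\mu_{\Ga}] + \varphi) \dd\nu \geq \int (U[\mu_{\Ga}] + \varphi) \dd\mu_{\Ga}$, and I would define $C_{\Ga}$ as this right-hand side.

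Second, I would pass from the integral inequality to a pointwise statement by arguing by contradiction. Suppose the set $A := \{ z \in \Ga : U[\mu_{\Ga}](z) + \varphi(z) < C_{\Ga}\}$ has positive capacity. Since $U[\mu_{\Ga}]$ is lower semi-continuous and $\varphi$ is continuous, $A = \bigcup_n \{U[\mu_{\Ga}] + \varphi \leq C_{\Ga} - 1/n\}$, a countable union of closed sets, so some level set $A_n$ has positive capacity. By definition, there is a compactly supported probability measure $\nu$ on $A_n$ with finite logarithmic energy; since $\varphi$ is bounded on its compact support, $\op{I}_{\Ga}[\nu] < \infty$, yet $\int (U[\mu_{\Ga}] + \varphi)\dd\nu \leq C_{\Ga} - 1/n$, contradicting the first-order condition. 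Hence $U[\mu_{\Ga}] + \varphi \geq C_{\Ga}$ quasi-everywhere on $\Ga$. On $\Ga \setminus \e{supp}\,\mu_{\Ga}$, the potential $U[\mu_{\Ga}]$ is continuous (being harmonic in a complex neighbourhood of such points), and since any open subarc of $\Ga$ has positive capacity, the open set $\{U[\mu_{\Ga}] + \varphi < C_{\Ga}\} \cap (\Gamma \setminus \e{supp}\,\mu_{\Ga})$ must be empty, giving \eqref{Frostman1} pointwise.

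Third, to obtain the equality \eqref{Frostman2}, I observe that $\mu_{\Ga}$, having finite energy, does not charge sets of zero capacity, so $U[\mu_{\Ga}] + \varphi \geq C_{\Ga}$ holds $\mu_{\Ga}$-a.e.  Combined with $\int (U[\mu_{\Ga}] + \varphi) \dd\mu_{\Ga} = C_{\Ga}$, the non-negative integrand $U[\mu_{\Ga}] + \varphi - C_{\Ga}$ vanishes $\mu_{\Ga}$-almost everywhere. By lower semi-continuity the strict upper-level set $\{ U[\mu_{\Ga}] + \varphi > C_{\Ga}\}$ is open, and its intersection with $\e{supp}\,\mu_{\Ga}$ is then a relatively open subset of the support carrying zero $\mu_{\Ga}$-mass, hence empty. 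Together with the q.e.\ lower bound this yields $U[\mu_{\Ga}] + \varphi = C_{\Ga}$ q.e.\ on $\e{supp}\,\mu_{\Ga}$.

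Finally, for the converse, suppose $\wt{\mu} \in \mathcal{P}(\Ga)$ and $\wt{C}$ satisfy \eqref{Frostman1}-\eqref{Frostman2}. The positive definiteness provided by Lemma \ref{logarithmicdistance} applied to the signed measure $\mu - \wt{\mu}$ of zero mass reads
\begin{equation*}
0 \; \leq \; \mf{D}(\mu, \wt{\mu})^2 \; = \; \Int{\Ga}{} U[\mu] \dd\mu \, - \, 2 \Int{\Ga}{} U[\wt{\mu}] \dd\mu \, + \, \Int{\Ga}{} U[\wt{\mu}] \dd\wt{\mu},
\end{equation*}
for any $\mu \in \mathcal{P}(\Ga)$ with $\op{I}_{\Ga}[\mu] < \infty$. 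Adding $2\int \varphi \dd\mu$ to both sides and using the hypothesised inequality $U[\wt{\mu}] + \varphi \geq \wt{C}$ q.e., one obtains $\op{I}_{\Ga}[\mu] \geq 2\wt{C} - \int U[\wt{\mu}] \dd\wt{\mu} = \op{I}_{\Ga}[\wt{\mu}]$, the last equality coming from integrating $U[\wt{\mu}] + \varphi = \wt{C}$ against $\wt{\mu}$ (noting again that $\wt{\mu}$ has finite energy and so does not charge null-capacity sets). Thus $\wt{\mu}$ minimises $\op{I}_{\Ga}$, and uniqueness in Lemma \ref{goodrate} forces $\wt{\mu} = \mu_{\Ga}$ and $\wt{C} = C_{\Ga}$. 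The main delicate points are the capacity-theoretic upgrade from an integral to a quasi-everywhere inequality, and the strengthening of $\mu_{\Ga}$-a.e.\ equality to q.e.\ equality on the support via lower semi-continuity.
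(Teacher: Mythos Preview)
Your proof is correct and follows the standard variational argument of logarithmic potential theory. The paper, however, does not give any proof at all: it simply cites Theorem I.1.3 in Saff--Totik, \emph{Logarithmic Potentials with External Fields}. So there is nothing to compare at the level of argument; what you have written is essentially a self-contained version of the proof that the cited reference provides.

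One small point worth flagging: in the converse direction you invoke that ``$\wt{\mu}$ has finite energy and so does not charge null-capacity sets,'' but you do not derive this from the hypotheses \eqref{Frostman1}--\eqref{Frostman2} as stated. This is the standard extra assumption in the Saff--Totik converse (their Theorem I.3.3), and it is implicit in all applications the paper makes of this lemma (notably Corollary \ref{deformation}, where the candidate measure is already known to be an equilibrium measure and hence of finite energy). So the gap is harmless in context, but strictly speaking the lemma as worded in the paper is slightly informal on this point, and your proof inherits that.
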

\begin{proof}
See Theorem I.1.3 in \cite{saff}.
\end{proof}
\begin{remark}
If $\mu_{\Ga}$ has continuous density with respect to arc length measure then $U[\mu_{\Ga}]$ is continuous and the equality (\ref{Frostman2}) holds everywhere in the support.
\end{remark}
\begin{corollary}\label{deformation}
Let $\Gamma$ and $\widetilde{\Gamma}$ be two "admissible" contours (in the sense of Definition \ref{admissible}) and let $\varphi_{\Ga; \mathrm{eff}}$ be its associated \textnormal{effective potential}
\begin{align}\label{effective}
\varphi_{\Ga; \mathrm{eff}}(z) \overset{\mathrm{def}}{=} \varphi(z) + U[\mu_{\Gamma}](z) - C_{\Gamma}, & & z \in \mathbb{C}.
\end{align}
Suppose that $\Gamma$ and $\widetilde{\Gamma}$ only differ outside $\mathrm{supp}\, \mu_{\Gamma}$, i.e. $\mathrm{supp}\, \mu_{\Gamma} \subset \Gamma \cap \widetilde{\Gamma}$,
and $\varphi_{\Ga;\mathrm{eff}}(z) \geq 0$ for quasi-every $z \in \widetilde{\Gamma}$. Then $\mu_{\Gamma}$ is also the equilibrium measure for the contour $\widetilde{\Gamma}$,
i.e. $\mu_{\Gamma}=\mu_{ \wt{\Gamma} }$.
\end{corollary}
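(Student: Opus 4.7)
The plan is to apply the converse direction of the Frostman--Euler--Lagrange lemma (Lemma \ref{frostmanlemma}) directly: I would verify that $\mu_{\Gamma}$, viewed as an element of $\mc{P}(\widetilde{\Gamma})$, satisfies the two equilibrium conditions on $\widetilde{\Gamma}$ with the very same constant $C_{\Gamma}$; since such a measure is unique, this forces $\mu_{\Gamma} = \mu_{\widetilde{\Gamma}}$.

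First I would observe that the hypothesis $\mathrm{supp}\, \mu_{\Gamma} \subset \Gamma \cap \widetilde{\Gamma}$ allows one to regard $\mu_{\Gamma}$ as a Borel probability measure on $\widetilde{\Gamma}$ without any modification, and that its electrostatic potential $U[\mu_{\Gamma}]$ is defined on all of $\mathbb{C}$ and is in particular unaffected by the choice of ambient contour. This sets the stage for checking the Frostman conditions at points of $\widetilde{\Gamma}$.

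Next I would split $\widetilde{\Gamma}$ into $\mathrm{supp}\,\mu_{\Gamma}$ and its complement in $\widetilde{\Gamma}$ and verify the two required conditions on each piece. On the support, quasi-every $z \in \mathrm{supp}\,\mu_{\Gamma}$ lies in $\Gamma$, so applying Lemma \ref{frostmanlemma} to the original contour yields $U[\mu_{\Gamma}](z) + \varphi(z) = C_{\Gamma}$ for q.e.\ $z \in \mathrm{supp}\,\mu_{\Gamma}$ --- an intrinsic statement about values of $U[\mu_{\Gamma}]+\varphi$ at these points, independent of which admissible contour carries them. On the complement, the hypothesis $\varphi_{\Gamma;\mathrm{eff}}(z) \geq 0$ for q.e.\ $z \in \widetilde{\Gamma}$ is, by the very definition \eqref{effective}, precisely the inequality $U[\mu_{\Gamma}](z) + \varphi(z) \geq C_{\Gamma}$ required by Frostman on $\widetilde{\Gamma} \setminus \mathrm{supp}\,\mu_{\Gamma}$.

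With the two Frostman conditions verified for $\mu_{\Gamma}$ against $\widetilde{\Gamma}$ with constant $C_{\Gamma}$, the converse part of Lemma \ref{frostmanlemma} (\textit{i.e.}\ Theorem I.1.3 of \cite{saff}) identifies $\mu_{\Gamma}$ as the unique equilibrium measure on $\widetilde{\Gamma}$, giving $\mu_{\Gamma} = \mu_{\widetilde{\Gamma}}$ and the associated equilibrium constant on $\widetilde{\Gamma}$ coincides with $C_{\Gamma}$. The only delicate point is that one must be slightly careful with the quasi-everywhere formulation when passing from the hypothesis on $\widetilde{\Gamma}$ to the Frostman inequality on the subset $\widetilde{\Gamma}\setminus\mathrm{supp}\,\mu_{\Gamma}$, but no more than that: the argument is essentially a direct rewriting of the Frostman conditions.
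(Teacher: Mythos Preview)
Your proposal is correct and is exactly the intended approach: the paper does not give a separate proof of this corollary, treating it as an immediate consequence of the converse direction of Lemma \ref{frostmanlemma}, which is precisely what you do. Your careful verification of the two Frostman conditions for $\mu_{\Gamma}$ on $\widetilde{\Gamma}$ with the same constant $C_{\Gamma}$ is the right (and only) way to make this rigorous.
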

This corollary effectively states that we are free to deform the contour away from the support in the region where the effective potential is non-negative without affecting the equilibrium measure.

\begin{lemma}\label{continuousdensity}
The equilibrium measure has a continuous density with respect to Lebesgue measure. More precisely,  there is a function $\lambda \in \mc{C}^\infty(\Gamma)$ such that
$$\dd \mu_{\Ga}(x) = \sqrt{\max\{ \lambda(x),0\} } \, |\dd x|$$
where $|dx|$ is arc-length measure on $\Gamma$.
\end{lemma}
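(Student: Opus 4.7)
The proof exhibits the density explicitly via a Plemelj--Sokhotski jump computation applied to an algebraic equation satisfied by the Cauchy transform of $\mu_{\Gamma}$. Write $W(z) := g_\Gamma[\mu_\Gamma]^\prime(z) = \int_\Gamma (w-z)^{-1}\,\dd\mu_\Gamma(w)$, which is analytic on $\Cx \setminus \Sigma$ with $\Sigma := \e{supp}\,\mu_\Gamma$ and satisfies $W(z) = \e{O}(z^{-1})$ as $|z| \to +\infty$. By Lemma \ref{frostmanlemma} together with $\varphi = \Re V$ and $U[\mu_\Gamma] = \Re g_\Gamma[\mu_\Gamma]$, the function $\Re\bigl(g_\Gamma[\mu_\Gamma] + V\bigr)$ is constant on $\Sigma$. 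Differentiating tangentially along $\Sigma$ and translating the resulting identity into boundary values of $W$ yields, after the standard contour-potential manipulations, the relation $W_+(z) + W_-(z) = V^\prime(z)$ for $z$ in the relative interior $\Sigma^\circ$.

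Set $R(z) := W(z)^2 - V^\prime(z) W(z)$. Its jump across $\Sigma^\circ$ is
\[
R_+(z) - R_-(z) \;=\; \bigl(W_+(z) - W_-(z)\bigr)\bigl(W_+(z) + W_-(z) - V^\prime(z)\bigr) \;=\; 0 \, ,
\]
so $R$ extends analytically across $\Sigma^\circ$. The one-cut regular hypothesis \ref{potentialh2} controls the behaviour of $W$ at the two endpoints of $\Sigma$ (at worst an inverse-square-root blow-up), so that $R$ is locally integrable there and Morera's theorem promotes it to an entire function. The asymptotics $W(z) = \e{O}(z^{-1})$ and $V^\prime(z) = \e{O}(z^{\kappa-1})$ at infinity then force $R(z) = \e{O}(z^{\kappa-2})$, so $R$ is a polynomial of degree at most $\kappa-2$. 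Solving the quadratic $W^2 - V^\prime W - R = 0$ with the branch of square root fixed by the asymptotic behaviour yields
\[
W(z) \;=\; \tfrac{1}{2}\Bigl(V^\prime(z) - \sqrt{M(z)}\Bigr), \qquad M(z) \;:=\; V^\prime(z)^2 + 4 R(z) \, ,
\]
with $M$ a polynomial of degree $2(\kappa-1)$.

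The density of $\mu_\Gamma$ with respect to arc length on $\Sigma$ is now extracted from the jump $W_+ - W_-$ via Plemelj--Sokhotski, which, after absorbing the unit-tangent factor $T(z) = \gamma^\prime(t)/|\gamma^\prime(t)|$ (smooth and non-vanishing by \ref{hypothesis2}), gives $\dd\mu_\Gamma(z) = \sqrt{\max\{\lambda(z),0\}}\,|\dd z|$ on $\Gamma$ with
\[
\lambda(z) \;:=\; -\,\frac{M(z)}{4\pi^2\, T(z)^2}\, , \qquad z \in \Gamma \, ,
\]
and $\lambda \in \mc{C}^\infty(\Gamma)$ real-valued.

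The main technical obstacle is verifying the sign structure of $\lambda$: strictly positive on $\Sigma^\circ$, simple zeros at the two endpoints of $\Sigma$, and non-positive on $\Gamma \setminus \Sigma$. For $\Gamma = \R$ this is classical; in the present complex-contour setting, the off-critical one-cut part of \ref{potentialh2} together with the $S$-property of $\Gamma_{\e{eq}}$ are precisely the ingredients that produce simple zeros of $M$ at each endpoint (the square-root vanishing of the density there), a definite sign of $M/T^2$ on $\Sigma^\circ$, and the opposite sign on the rest of $\Gamma$. Once these sign properties are secured, the representation with $\lambda$ smooth on all of $\Gamma$ follows.
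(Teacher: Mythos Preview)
Your argument has a genuine gap at the very first step. The lemma, as placed in the paper, is stated for a \emph{general} curve $\Gamma$ satisfying \ref{hypothesis2}--\ref{hypothesis3}; it appears before the $S$-property and before Definition~\ref{regular} are even introduced. At that stage the only variational input available is Lemma~\ref{frostmanlemma}, which gives $\Re\bigl(g_\Gamma[\mu_\Gamma]+V\bigr)=C_\Gamma$ on $\Sigma$. Differentiating this tangentially produces only
\[
\Re\bigl((W_\pm + V')\,\gamma'\bigr)=0 \quad\text{on }\Sigma^\circ,
\]
i.e.\ a single real equation per boundary value, not the full complex relation $W_+ + W_- = -2V'$ (in the paper's conventions, \eqref{EulerLagrange}). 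The imaginary part of that relation is precisely the $S$-curve condition \eqref{Scurve}, which a generic $\Gamma$ does not satisfy. Without it, $R(z)=W^2-V'W$ has a nontrivial jump across $\Sigma^\circ$ and the whole algebraic-curve mechanism collapses: there is no polynomial $M$, no square-root formula for $W$, and no $\lambda$ of the form $-M/(4\pi^2 T^2)$.

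You also invoke the one-cut regular hypothesis \ref{potentialh2} to control endpoint behaviour and signs, but that hypothesis concerns the equilibrium measure on the optimal curve $\Gamma_{\e{eq}}$, not on an arbitrary $\Gamma$; and in any case it is logically downstream of this lemma. The paper instead cites \cite{Bekerman:2015aa}, whose argument is real-variable: it works from the Euler--Lagrange equations on the curve (Frostman only) and the regularity theory for the associated singular integral equation, without any holomorphic extension of the Cauchy transform across $\Sigma$. Your approach would be valid if restricted to $\Gamma_{\e{eq}}$ under \ref{potentialh2}---indeed the paper essentially carries out that computation later, around \eqref{Rformula}--\eqref{ecriture decomposition canonique pour racine de R}---but it does not prove the lemma as stated.
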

\begin{proof}
See the proof of Lemma 3.2 in \cite{Bekerman:2015aa}.
\end{proof}

The theory developed thus far has pertained to sufficiently regular but otherwise arbitrary contours in $\mathbb{C}$.
However, if we optimise our contour, then we can achieve additional desirable properties. Motivated by our discussion in the introduction,
we will deform $\Gamma$ to a contour $\Gamma_{\e{eq}}$ which solves the so-called \textit{max-min energy problem}. To make this precise we must introduce some definitions, taken from \cite{lirias1643667}.
We first introduce the set of admissible contours. In the aforementioned work, the authors consider contours with more complicated asymptotic conditions
labelled by non-crossing partitions. Our setup amounts to considering only the partition of $\{ 1, \dots, \kappa \}$ containing $\{ \alpha,  \alpha^{\prime} \}$ and all other sets being singleton sets.
\begin{definition}[Stretching out to infinity \cite{lirias1643667}]
We define $\kappa$ sectors
\beq
S_j  \, = \, \left\{ z\in \mathbb{C} \, : \, \left| \arg(z) - \theta_j \right| < \frac{\pi}{2\kappa} \right\}  \quad \e{where} \qquad
\theta_j  \, =  \, \frac{ 2\pi (j-1) }{ \kappa } \quad \e{with} \quad  j \in \intn{1}{\kappa} \;.
\nonumber
\enq
A set $F \subset \mathbb{C}$ is said to "stretch out to infinity in sector $S_j$" if there is an $\epsilon > 0$ and an $r_0 > 0$ such that for all $r > r_0$ there is a $z \in F$ such that
\beq
|z| = r  \qquad \e{and} \qquad   \left| \arg(z) - \theta_j \right| < \frac{\pi}{2\kappa} - \epsilon.
\nonumber
\enq
\end{definition}
\begin{definition}[Admissible contours \cite{lirias1643667}]\label{admissible}
The set of admissible contours $\mathcal{T}(\alpha,\alpha^\prime)$ is defined as follows. $\Gamma \in \mathcal{T}(\alpha,\alpha^\prime)$ if and only if the following conditions hold
\begin{enumerate}[label=(\arabic*)]
\item $\Gamma$ is a finite union of $\mc{C}^1$ Jordan arcs.
\item $\Gamma$ is connected.
\item $\Gamma$ stretches out to infinity in the sectors $S_\alpha$ and $S_{\alpha^\prime}$.
\item For every $\ell \in \{ 1, \dots, \kappa \} \setminus \{ \alpha,\alpha^\prime \}$, there is an $R>0$ sufficiently large so that $\Gamma \cap (S_\ell \setminus \op{D}_{0,R} ) = \emptyset$.
\end{enumerate}
Above, $\op{D}_{z_0,\eta}$ stands for the open disk of radius $\eta$ centred at $z_0$.

\end{definition}
We say that  $\Gamma_{\e{eq}}  \in \mathcal{T}(\alpha,\alpha^\prime)$ solves the \textit{max-min energy problem} if
\begin{align}\label{maxmin}
\sup_{\widetilde{\Gamma} \in \mathcal{T}(\alpha,\alpha^\prime)} \inf_{\mu \in \mathcal{P}(\widetilde{\Gamma})} \op{I}_{\Ga}[\mu]=  \inf_{\mu \in \mathcal{P}(\Ga_{\e{eq}})} \op{I}_{\Ga_{\e{eq}}} [\mu]
\, = \, \op{I}_{\Ga_{\e{eq}}} \big[ \mu_{\Ga_{\e{eq}}}  \big]\, .
\end{align}
Related to this is the so-called $S$-property.
\begin{definition}[$S$-property]
We say that $\Gamma \in \mathcal{T}(\alpha,\alpha^\prime)$ satisfies the $S$-property if there is a set of zero capacity $E$ such that for any $z \in \mathrm{supp}\, \mu_{\Ga} \setminus E$
there is a neighbourhood $D \ni z$ such that $D \cap  \mathrm{supp}\, \mu_{\Ga}$ is an analytic arc and
\begin{align}\label{Scurve}
\frac{\partial}{\partial n_+} \big( U[\mu_{\Ga}]+ \varphi \big)(z) = \frac{\partial}{\partial n_-} \big( U[\mu_{\Ga}] + \varphi \big)(z) & & \forall z \in \mathrm{supp}\, \mu_{\Ga} \setminus E
\end{align}
where $\frac{\partial}{\partial n_\pm}$ are the normal derivatives taken on either side of the contour. We remind the reader that
$\mu_{\Ga} \in \mathcal{P}(\Gamma)$ is the equilibrium measure on $\Gamma$ in the external field $\varphi$. If $\Gamma$ satisfies the $S$-property we say that $\Gamma$ is an $S$-curve.
\end{definition}
The $S$-curve condition should be regarded as a first order condition for the maximisation of the energy of the curve.
We now state a crucial existence theorem established by \cite{lirias1643667}. Let $\mathsf{C}[\mu]$ be the Cauchy transform of a measure $\mu$, namely,
\begin{align*}
\mathsf{C}[\mu](z) \overset{\mathrm{def}}{=} \frac{1}{2\pi \i} \Int{\mathbb{C}}{} \frac{1}{w-z} \, \dd \mu(w), & & z \in \mathbb{C}\setminus \mathrm{supp} \, \mu .
\end{align*}
\begin{theorem}[Existence of curve solving the max-min energy problem \cite{lirias1643667}]\label{existence}
There exists a contour $\Gamma_{\e{eq}} \in \mathcal{T}(\alpha,\alpha^\prime)$ which solves the max-min energy problem (\ref{maxmin}).
Furthermore the contour $\Gamma_{\e{eq}}$ is an $S$-curve in the external field $\varphi = \Re (V)$.

The equilibrium measure $\mu_{\Ga_{\e{eq}}}$ of $\Ga_{\e{eq}}$ in external field $\varphi$ is supported on a finite union of analytic arcs that are critical trajectories of the quadratic differential $-R(z) \, \dd z^2$,
where
\begin{align}\label{Rformula}
R(z) = \left( 2\pi \i \, \mathsf{C}[ \mu_{\Ga_{\e{eq}}} ] (z) + V^\prime(z) \right)^2  & & z \in \mathbb{C}\setminus \mathrm{supp} \, \mu_{\Ga_{\e{eq}}}
\end{align}
is a monic polynomial of degree $2 \kappa-2$. The equilibrium density may be expressed in terms of $R$ by
\begin{align*}
\dd \mu_{\Ga_{\e{eq}}} (z) \, =  \, \frac{1}{\i \pi} \sqrt{R(z)}_+ \, \dd z .
\end{align*}
Above, $\sqrt{R(z)}_{+/-}$ denotes the limiting value of $s \mapsto \sqrt{R(s)}$ as $s \rightarrow z\in \Ga_{\e{eq}}$ non-tangentially from the left/right of $\mathrm{supp} \,  \mu_{\Ga_{\e{eq}}}$
with respect to its orientation. Moreover, one has the closed formula
\begin{align}\label{squareroot}
\sqrt{R(z)} = 2\pi \i \, \mathsf{C}[ \mu_{\Ga_{\e{eq}}} ](z) + V^\prime(z),   & & z \in \mathbb{C}\setminus \mathrm{supp} \, \mu_{\Ga_{\e{eq}}} \, .
\end{align}
Finally, the $S$-curve condition (\ref{Scurve}) holds everywhere except at the endpoints of the analytic arcs.
\end{theorem}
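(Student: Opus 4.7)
The plan is to first establish existence of a maximiser via a compactness argument on a suitable space of contours, and then to extract the structural information (the $S$-property, the polynomial $R$, the density formula, and the characterisation of $\mathrm{supp}\,\mu_{\Gamma_{\e{eq}}}$ as critical trajectories) from first-order optimality conditions combined with a Schwarz-reflection argument.

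For existence, the obstruction is that $\mathcal{T}(\alpha,\alpha')$ is not itself compact in any natural topology, since admissible contours can fail to be of bounded length, escape to infinity, or develop arbitrarily rough behaviour. I would take a maximising sequence $(\Gamma_n)_{n \geq 1}$ in $\mathcal{T}(\alpha,\alpha')$ for $\Gamma \mapsto \inf_\mu \op{I}_\Gamma[\mu]$ and first show, using the growth of $\varphi$ at infinity from \eqref{Vinf}, that the supports $\mathrm{supp}\,\mu_{\Gamma_n}$ are contained in a common compact set $K \subset \mathbb{C}$ independent of $n$. Then I would invoke Hausdorff compactness of closed subsets of $K$ to extract a subsequential limit $F \subset K$ which, after verifying connectedness and that the contour structure outside $K$ can be consistently prolonged in the sectors $S_\alpha, S_{\alpha'}$, yields a candidate $\Gamma_{\e{eq}} \in \mathcal{T}(\alpha,\alpha')$. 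Upper semicontinuity of the map $\Gamma \mapsto \inf_\mu \op{I}_\Gamma[\mu]$ under Hausdorff convergence --- which requires showing that $\mu_{\Gamma_n}$ converges weakly to $\mu_{\Gamma_{\e{eq}}}$ via Prokhorov's theorem and lower semicontinuity of $\op{I}$ --- would then establish that the supremum is attained. This compactness/semicontinuity step is the main technical obstacle.

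For the $S$-property, I would implement a standard variational argument: given any smooth, compactly supported vector field $h : \mathbb{C} \longrightarrow \mathbb{C}$, consider the one-parameter family of contours $\Gamma_t = (\mathrm{id} + t h)(\Gamma_{\e{eq}})$ for small $|t|$. Since each $\Gamma_t$ remains admissible and $\Gamma_{\e{eq}}$ is a maximiser, the derivative $\tfrac{\partial}{\partial t}\big|_{t=0} \op{I}_{\Gamma_t}[\mu_{\Gamma_t}] = 0$. The chain rule contributions involving the variation of $\mu_{\Gamma_t}$ itself drop out because $\mu_{\Gamma_t}$ is a minimiser and the Frostman conditions \eqref{Frostman1}--\eqref{Frostman2} from Lemma \ref{frostmanlemma} hold. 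What remains is an integral along $\mathrm{supp}\,\mu_{\Gamma_{\e{eq}}}$ involving $h$ and the normal derivatives of $U[\mu_{\Gamma_{\e{eq}}}] + \varphi$ from either side of the contour; arbitrariness of $h$ then yields exactly the $S$-property \eqref{Scurve}.

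The polynomial structure of $R$ and the support description then fall out of a Schwarz reflection argument. On the support, the tangential derivative of $U[\mu_{\Gamma_{\e{eq}}}] + \varphi$ vanishes by Frostman; combined with the $S$-property this means that the boundary values of $h(z) = 2\pi\i\,\mathsf{C}[\mu_{\Gamma_{\e{eq}}}](z) + V'(z)$ from the two sides of the support satisfy $h_+ = -h_-$, so that $R(z) = h(z)^2$ extends analytically across $\mathrm{supp}\,\mu_{\Gamma_{\e{eq}}}$. Since the only singularities of $\mathsf{C}[\mu_{\Gamma_{\e{eq}}}]$ live on the support, $R$ is entire; its behaviour at infinity (noting $\mathsf{C}[\mu_{\Gamma_{\e{eq}}}](z) = \mathrm{O}(1/z)$ while $V'(z) \sim z^{\kappa-1}$ since $V$ has leading term $z^\kappa/\kappa$) forces $R$ to be a monic polynomial of degree $2(\kappa-1)$. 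The closed formula \eqref{squareroot} for $\sqrt{R}$ is then identified by choosing the branch of the square root consistent with this asymptotic. Computing the jump of $\mathsf{C}[\mu_{\Gamma_{\e{eq}}}]$ across the support via the Sokhotski--Plemelj formula yields $d\mu_{\Gamma_{\e{eq}}}(z) = \tfrac{1}{\i\pi}\sqrt{R(z)}_+ dz$, and the characterisation of $\mathrm{supp}\,\mu_{\Gamma_{\e{eq}}}$ as a union of critical trajectories of $-R\,dz^2$ amounts to the observation that this density is real and positive along the support, which is precisely the defining condition $R(z)\,dz^2 < 0$ for critical trajectories. The exceptional set in the $S$-property consists of the zeros of $R$ within the support, which are the endpoints of the analytic arcs.
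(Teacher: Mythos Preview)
The paper does not actually prove this theorem: it is quoted verbatim as a result of Kuijlaars and Silva \cite{lirias1643667}, with no proof given in the present paper beyond the citation. So there is no ``paper's own proof'' to compare against; you have written a proof outline for a result that the authors take as input from the literature.

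That said, your outline is essentially the strategy of Kuijlaars--Silva. A few remarks on where the genuine work lies. The compactness step you flag as the main obstacle is indeed the hard part: one must show not only that the supports stay in a fixed compact (which follows from growth of $\varphi$) but also that the Hausdorff limit of the $\Gamma_n$ can be completed to an element of $\mathcal{T}(\alpha,\alpha')$; this requires controlling the topology of the limit set and is where Kuijlaars--Silva spend significant effort. Your statement that $\Gamma \mapsto \inf_\mu \op{I}_\Gamma[\mu]$ is upper semicontinuous under Hausdorff convergence also needs care, since enlarging $\Gamma$ can only decrease the infimum, but Hausdorff limits need not be monotone. The variational derivation of the $S$-property and the Schwarz-reflection argument giving the polynomial $R$ are correct in outline; the only subtlety is that the first-variation computation requires enough a priori regularity of $\mu_{\Gamma_{\e{eq}}}$ (continuous density) to justify differentiating under the integral, which in the cited work is obtained from potential-theoretic regularity results.
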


Note that the arc one-form $\dd z$ and $\frac{ 1 }{ \i \pi } \sqrt{R(z)}_+$ are separately not real valued, however their product is real valued,
and to say that $\frac{ 1 }{ \i \pi } \sqrt{R(z)}_+ \, \dd z > 0$ is exactly to say that the support of the equilibrium measure is a critical trajectory of $-R(z) \, \dd z^2$.
We stress that the curve $\Gamma_{\e{eq}}$ is \textit{not} unique. Indeed, by Corollary \ref{deformation} the curve $\Gamma_{\e{eq}}$ can be deformed outside of $\mathrm{supp}\, \mu_{\Ga_{\e{eq}}}$
into another curve $\wt{\Gamma}_{\e{eq}}$ so long as $\varphi_{ \Ga ; \mathrm{eff}}$ defined in  (\ref{effective}) remains non-negative throughout $\wt{\Gamma}_{\e{eq}}$.
We also specify that the $S$-curve property only tests properties on the support of the equilibrium measure.  However, up to such deformations, the optimal curve is unique, so that any two contours that solve the max-min
energy problem for $\mathcal{T}(\alpha,\alpha^\prime)$ will share the same equilibrium measure (see p. 7 of \cite{lirias1643667}).

\vspace{2mm}

It follows from the representation \eqref{squareroot} that
\beq
\sqrt{R(z)} \, = \, S(z) \cdot \big[ (z-\zeta_1)(z-\zeta_2) \big]_{ \sg_{\e{eq}} }^{1/2}
\label{ecriture decomposition canonique pour racine de R}
\enq
with $S$ a monic polynomial of degree $\kappa -2$ and $\sg_{\e{eq}}=\e{supp} \mu_{ \Ga_{\e{eq}} }$. Here,
\beq
\big[ (z-\zeta_1)(z-\zeta_2) \big]_{ \sg_{\e{eq}} }^{1/2} \; = \; (z-\zeta_1) \cdot \exp\bigg\{ \Int{ \sg_{\e{eq}} }{}  \f{ \dd s  }{ 2 (s-z) }  \bigg\}
\label{definition racine avec coupure sur une courbe prescrite}
\enq
stands for the square root of $(z-\zeta_1)(z-\zeta_2)$ which has its cuts along $\sg_{\e{eq}}$, is analytic on $\Cx \setminus \sg_{\e{eq}}$, behaves as
$z \big( 1 + \e{o}(1)\big) $ and whose boundary values on $\sg_{\e{eq}}$ satisfy
$$\big[ (z-\zeta_1)(z-\zeta_2) \big]_{ \sg_{\e{eq}} ;+}^{1/2} \; = \;- \big[ (z-\zeta_1)(z-\zeta_2) \big]_{ \sg_{\e{eq}} ;- }^{1/2}  \quad \e{for} \quad z \in \mathrm{supp} \, \mu_{ \Ga_{\e{eq}} } \;.$$
One establishes \eqref{ecriture decomposition canonique pour racine de R}
by observing that given a contour $\Ga_{\e{eq}}$ solving the max-min problem, it holds
\beq
\sqrt{R(z)}_+ \, = \,  - \sqrt{R(z)}_-    \qquad \e{for} \qquad   z \in \mathrm{supp} \, \mu_{ \Ga_{\e{eq}} } \, .
\label{ecriture conditions de saut pour sqrt de R}
\enq
Hence,
\beq
S(z) \, = \,  \f{ \sqrt{R(z)} }{ \big[ (z-\zeta_1)(z-\zeta_2) \big]_{ \sg_{\e{eq}} }^{1/2}  }
\enq
extends into an analytic function on $\Cx$ since it has no jumps on $\sg_{\e{eq}}\setminus \{\zeta_1, \zeta_2\}$ and has removable
singularities at $\zeta_1, \zeta_2$. Since it grows at most as $z^{\kappa-2}$ at infinity, one infers that $S$ is a monoic polynomial
of degree $\kappa-2$.

\vspace{2mm}

We deduce from \eqref{ecriture conditions de saut pour sqrt de R} the complexified Frostman-Euler-Lagrange conditions
\begin{align}\label{EulerLagrange}
\pi \i \, \mathsf{C}[\mu_{ \Ga_{\e{eq}} }]_+(z) + \pi \i \, \mathsf{C}[\mu_{ \Ga_{\e{eq}} }]_-(z) + V^\prime(z) \, = \,  0,
& & \forall z \in \mathrm{supp} \, \mu_{ \Ga_{\e{eq}} } \setminus \{ z \in \Gamma \, : \, R(z) = 0 \}.
\end{align}
This equation compactly expresses both the Frostman conditions and the $S$-curve conditions,
since multiplying by $\gamma^\prime$ and taking the real and imaginary parts one finds the Euler-Lagrange conditions
\eqref{Frostman1} and \eqref{Frostman2} and the $S$-curve condition \eqref{Scurve} respectively.

If we multiply (\ref{EulerLagrange}) by a continuous function $f$ and integrate with respect to $\dd \mu_{ \Ga_{\e{eq}} }$, we find that
\begin{align}\label{fidentity}
- \frac{1}{2}\Int{ \Ga_{\e{eq}}^2 }{} \frac{f(z) - f(w)}{z-w} \, \dd \mu_{ \Ga_{\e{eq}} }(z) \, \dd \mu_{ \Ga_{\e{eq}} }(w)
\, + \,  \Int{ \Ga_{\e{eq}} }{}  V^\prime(z)f(z) \, \dd \mu_{ \Ga_{\e{eq}} }(z) = 0
\end{align}
for all continuous functions $f : \Ga_{\e{eq}} \longrightarrow \mathbb{C}$.

\begin{definition}[One-cut regular]\label{regular}
Let $V(z) = \frac{z^\kappa}{\kappa}+\e{O}(z^{\kappa-1})$ be a polynomial potential and $\Ga_{\e{eq}} \in \mathcal{T}(\alpha,\alpha^\prime)$ be a solution
of the associated max-min energy problem (\ref{maxmin}). Then we say that $V$ is \textnormal{one-cut regular} if
\begin{enumerate}[label=(\arabic*)]
\item The support of the equilibrium measure $\mu_{ \Ga_{\e{eq}} }$ on $\Ga_{\e{eq}}$ is connected.
\item It is possible to choose the curve $\Ga_{\e{eq}} \in \mathcal{T}(\alpha,\alpha^\prime)$ solving the max-min energy problem such that the polynomial
$R$ (defined by (\ref{EulerLagrange})) has simple zeros at the endpoints of the support of the equilibrium measure and no other zeros on $\Ga_{\e{eq}}$.
\end{enumerate}
\end{definition}

Moving forward, given an equilibrium measure $\mu_{ \Ga_{\e{eq}} }$ associated to the solution of the max-min energy problem there is a natural way of choosing the contour
in a neighbourhood of $\mathrm{supp} \, \mu_{ \Ga_{\e{eq}} }$ that has many desirable properties.  Let $V$ be one-cut regular and $\zeta_1, \zeta_2 \in \Ga_{\e{eq}}$ be respectively
the lower and upper endpoint of the support of the equilibrium measure according to the orientation on the contour given by $\dd \mu_{ \Ga_{\e{eq}} } (z) \, = \, \frac{ 1 }{ \i\pi} \sqrt{R(z)}_+\, \dd z > 0$.
Recall that $\sqrt{R(z)}$, as defined through \eqref{squareroot}, is an analytic function on $\Cx \setminus \e{supp} \mu_{ \Ga_{\e{eq}} } $.
Hence, given the smooth injective curve $\ga_{\e{eq}}$ realising $\Ga_{\e{eq}}=\ga_{\e{eq}}(\R)$, we identify $t_1$ as the unique solution to $\ga_{\e{eq}}(t_1)=\zeta_1$
and define $\Ga_{\e{eq}; \zeta_1 } \, = \, \{ \ga_{\e{eq}}(t) \, : \, t \in [t_1; + \infty[ \}  $ as the portion of the curve $ \Ga_{\e{eq}}$ located \textit{after}
the first endpoint of the support of the equilibrium measure.

Then, one can define the map
\begin{align}
\begin{split}
&\Theta :  \Cx \setminus   \Ga_{\e{eq}; \zeta_1} \longrightarrow \Cx \\
&\Theta(z) = \Int{\zeta_1}{z} \frac{ 1 }{ \i\pi } \sqrt{R(w)} \, \dd w.
\end{split}
\end{align}
in which the integration path from $\zeta_1$ to $z$ does not cross $\Ga_{\e{eq}; \zeta_1}$. $\Th$ is holomorphic on $\Cx \setminus   \Ga_{\e{eq}; \zeta_1}$
and admits cuts along $ \Ga_{\e{eq}; \zeta_1}$ with continuous $\pm$ boundary values. Since the integrand has a square root behaviour at $\zeta_1$ and $\zeta_2$,
$\Th$ exhibit $3/2$ root singularities at these points, see \textit{e.g.} \cite{DeiftKriechMcLaughVenakZhouOrthogonalPlyExponWeights}. More precisely,
there exist biholomorphisms $\varrho, \wt{\varrho}$ defined, resp., on $\op{D}_{0,\eta}$ and $\op{D}_{0,\wt{\eta}}$ with $\eta, \wt{\eta}>0$ such that
\beq
\Th(z) \, = \, - \i \big[ \varrho(z - \zeta_1) \big]^{\f{3}{2}} \;, \quad z \in \op{D}_{\zeta_1,\eta}
\qquad \e{and} \qquad
\Th(z) \, = \, \mf{s}(z) - \i \big[ \wt{\varrho}(z - \zeta_2) \big]^{\f{3}{2}} \;, \quad z \in \op{D}_{\zeta_2,\wt{\eta}}   \;,
\label{ecriture cptmt local Theta}
\enq
in which $\mf{s}(z)=1$ if $z$ located to the left of $\Ga_{\e{eq}}$ while $\mf{s}(z)=-1$ to $z$ located to the right of $\Ga_{\e{eq}}$.
There the $3/2$ roots are computed with respect to the principal branch of the logarithm, and the curves
$\e{supp} \mu_{\Ga_{\e{eq}} } \cap \op{D}_{\zeta_1, \eps_1}$ and $\e{supp} \mu_{\Ga_{\e{eq}}} \cap \op{D}_{\zeta_2, \eps_2}$
with $\eps_a>0$ and  small enough are respectively obtained by solving the equations
\beq
\varrho(z - \zeta_1) \, = \, -t \;, \quad \e{resp}. \quad  \wt{\varrho}(z - \zeta_2) \, = \, -t \;,   \quad \e{with} \quad t \in \intff{0}{ \wt{\eps} \, }
\enq
where $ \wt{\eps} >0$ and small enough.

Of particular interest, is the map obtained from taking
the $+$ boundary value on $ \mathrm{supp} \, \mu_{ \Ga_{\e{eq}} }$:
\begin{align}
\begin{split}
&\Theta_+ : \mathrm{supp} \, \mu_{ \Ga_{\e{eq}} } \longrightarrow \intff{0}{1}\\
&\Theta_+(z) = \Int{\zeta_1}{z} \frac{ 1 }{ \i\pi } \sqrt{R(w)}_+\, \dd w.
\end{split}
\end{align}
$\Theta_+$ is manifestly a bijection between $\mathrm{supp} \, \mu_{ \Ga_{\e{eq}} }$ and $\intff{0}{1}$. However,
$\Theta_+$ is \textit{not}  analytic on an open neighbourhood of $\mathrm{supp} \, \mu_{ \Ga_{\e{eq}} } $ because $\Theta(z)$ and $1-\Theta(z)$ vanish with a $\frac{3}{2}$ power at $z = \zeta_1$ and $z=\zeta_2$ respectively.
However, outside of $\zeta_1, \zeta_2$ it can be analytically continued by observing that starting from \eqref{ecriture decomposition canonique pour racine de R} one may recast its integral representation as
\beq
\Theta_+(z) \, = \, \Int{\zeta_1}{z} \frac{ 1 }{ \i\pi } S(w) \big[ (z-\zeta_1)(z-\zeta_2) \big]_{ \sg_{\e{eq}; \downarrow} }^{1/2} \, \dd w
\enq
where $\sg_{\e{eq}; \downarrow}$ is a slight deformation of the contour $\sg_{\e{eq}}$ away to the right of it and $S$ is as introduced in \eqref{ecriture decomposition canonique pour racine de R}.
 The above is manifestly analytic in an open neighbourhood of $z \in \sg_{\e{eq}}\setminus \{\zeta_1, \zeta_2\}$.

\vspace{2mm}

Now, one can improve the behaviour at the endpoints $\zeta_1, \zeta_2$ by introducing an appropriate reparametrisation of $\intff{0}{1}$ by means of a $+$ boundary values of a function
having $3/2$-singularities at the endpoints. This allows one, all-in-all, to cancel these. Let

\begin{align}\label{psifunction}
\begin{split}
&\Psi : \Cx \setminus \R^+ \longrightarrow \Cx \\
&\Psi(z) = \frac{8}{\i \pi }\Int{0}{z}  w^{\f{1}{2}}(w-1)^{\f{1}{2}}  \, \dd w \, ,
\end{split}
\end{align}
where we use the principal branch of the logarithm to define the power laws. $\Psi$ is analytic on its domain of definition and
there exist biholomorphisms $\tau, \wt{\tau}$ defined, resp., on $\op{D}_{0,\eta^{\prime}}$ and $\op{D}_{0,\wt{\eta}^{\,\prime}}$ with $\eta^{\prime}, \wt{\eta}^{\, \prime}>0$ such that
\beq
\Psi(z) \, = \, - \i \big[ \tau(z) \big]^{\f{3}{2}} \;, \quad z \in \op{D}_{0,\eta^{\prime}}
\qquad \e{and} \qquad
\Psi(z) \, = \, \e{sgn}\big(\Im z \big) - \i \big[ \wt{\tau}(z - 1) \big]^{\f{3}{2}} \;, \quad z \in \op{D}_{0,\wt{\eta}^{\, \prime}}   \;.
\label{ecriture cptmt local Psi}
\enq
The $+$ boundary value of $\Psi$ on $\intff{0}{1}$ is a bijection onto the same interval.
\begin{align}\label{psifunction}
\begin{split}
&\Psi_+ : \intff{0}{1} \longrightarrow \intff{0}{1} \\
&\Psi_+(z) = \frac{8}{\i \pi }\Int{0}{z}  w^{\f{1}{2}}(w-1)^{\f{1}{2}}_+  \, \dd w \, .
\end{split}
\end{align}
Just as for $\Th_+$, $\Psi_+$ admits an analytic continuation onto an open neighbourhood of any $z \in \intoo{0}{1}$.

It is direct to infer from the above discussion that
\begin{align}\label{gammadef}
\gamma = \Theta^{-1}_+\circ \Psi_+ : \intff{0}{1} \longrightarrow \mathrm{supp}\,  \mu_{ \Ga_{\e{eq}} }
\end{align}
is a bijection.

\begin{lemma}
\label{Lemma extension analytique courve gamma}

The map $\gamma$ introduced in \eqref{gammadef} extends into a biholomorphism from an open neighbourhood $U_{\ga}$ of $\intff{0}{1}$ onto $\ga\big( U_{\ga} \big)$.
$U_{\ga}$ contains $\mathrm{supp}\,  \mu_{ \Ga_{\e{eq}} }$ in its strict interior. In particular, $\gamma(x)-\zeta_1$ has a simple zero at $x=0$ and $\gamma(x)-\zeta_2$ has a simple zero at $x=1$.

\end{lemma}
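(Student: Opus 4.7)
The plan is to verify that $\gamma$ extends to a local biholomorphism at every point of $\intff{0}{1}$ and then to glue these local extensions into a biholomorphism on an open neighbourhood $U_\gamma$ of the compact interval $\intff{0}{1}$.

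At an interior point $x_0 \in \intoo{0}{1}$, the analytic continuations of $\Psi_+$ and of $\Theta_+^{-1}$ discussed immediately before the lemma are defined on open discs about $x_0$ and $\Psi_+(x_0)$ respectively. Neither has a critical point there: direct computation yields $\Psi_+^{\prime}(x) = \tfrac{8}{\pi}\sqrt{x(1-x)} \neq 0$ on $\intoo{0}{1}$, while $\Theta_+^{\prime}$ is non-vanishing on $\Psi_+(\intoo{0}{1})$ because the one-cut regularity hypothesis \textnormal{\ref{potentialh2}} forces $\sqrt{R}_+$ to have zeros only at $\zeta_1$ and $\zeta_2$. Hence $\gamma$ extends to a biholomorphism of an open disc around $x_0$ onto its image.

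At the left endpoint $x=0$ the strategy is to exploit the explicit local representations \eqref{ecriture cptmt local Theta} and \eqref{ecriture cptmt local Psi}. Inverting $\Theta(z)=y$ for $z$ near $\zeta_1$ gives $z-\zeta_1 = \varrho^{-1}\bigl((\i y)^{2/3}\bigr)$ with a suitable branch of the fractional power, and substituting $y=\Psi(w)=-\i[\tau(w)]^{3/2}$ together with $\i\cdot(-\i)=1$ and $\bigl([\tau(w)]^{3/2}\bigr)^{2/3}=\tau(w)$ collapses the tower of fractional powers to
\begin{equation*}
\gamma(w) \, = \, \zeta_1 \, + \, \varrho^{-1}\circ\tau(w)
\end{equation*}
on a half-disc about $0$. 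As the composition of two biholomorphisms both vanishing to first order at $0$, the right-hand side is itself a biholomorphism of a full disc about $0$ with a simple zero at $w=0$. This extends $\gamma$ holomorphically across $0$ and shows $\gamma(w)-\zeta_1$ has a simple zero there. The identical argument at $w=1$, replacing $(\varrho,\tau)$ with $(\wt\varrho,\wt\tau)$ and $\zeta_1$ with $\zeta_2$, shows $\gamma(w)-\zeta_2$ has a simple zero at $w=1$.

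Combining these local biholomorphisms with the injectivity of $\gamma$ on the compact interval $\intff{0}{1}$, a standard compactness/gluing argument yields an open neighbourhood $U_\gamma \supset \intff{0}{1}$ on which $\gamma$ is holomorphic, injective and with nowhere-vanishing derivative, hence biholomorphic onto its image, with $\mathrm{supp}\,\mu_{\Ga_{\e{eq}}}$ in the strict interior of $\gamma(U_\gamma)$. The main obstacle I anticipate is the endpoint branch-matching: one must verify that the principal branch of the $3/2$-power produced by $\Psi$ and the principal branch of the $2/3$-power arising in the inversion of $\Theta$ do indeed collapse to the identity on the appropriate half-discs. This reduces to checking that the biholomorphisms $\tau,\varrho$ (and $\wt\tau,\wt\varrho$) have been fixed so that $\Psi$ and $\Theta$ assign the support side of the contour to the same branch of their respective $3/2$-singularities, which is compatible with the parametrisations $\varrho(z-\zeta_1)=-t$ of the support near $\zeta_1$ and the upper-boundary convention encoded in $\Psi_+$.
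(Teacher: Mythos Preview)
Your proposal is correct and follows essentially the same approach as the paper: at the endpoints you collapse the $3/2$- and $2/3$-powers via the local representations \eqref{ecriture cptmt local Theta}--\eqref{ecriture cptmt local Psi} to obtain $\gamma(w)=\zeta_1+\varrho^{-1}\circ\tau(w)$ (and the analogous formula at $1$), and at interior points you use non-vanishing of the derivatives of $\Theta_+$ and $\Psi_+$ to produce a local biholomorphism. Your version is in fact slightly more explicit than the paper's, both in carrying out the fractional-power cancellation and in flagging the branch-matching verification and the compactness/gluing step.
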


\begin{proof}

It follows from the local behaviour of $\Psi$ \eqref{ecriture cptmt local Psi} around $0$ and $1$ and of $\Th$ \eqref{ecriture cptmt local Theta} around $\zeta_1$ and $\zeta_2$ that
\beq
\ga(u) \, = \, \varrho^{-1}\big(\tau(u)\big) \, + \, \zeta_1 \qquad \e{resp.} \qquad
\ga(u) \, = \, \wt{\varrho}^{-1}\big(\wt{\tau}(u-1)\big) \, + \, \zeta_2
\nonumber
\enq
provided that $|u|$, resp. $|u-1|$, is small enough. This ensures that $\ga$ extends to an analytic map in some open neighbourhood of $0$ and $1$.

Let $t_0 \in \intoo{0}{1}$ and let $z_0, s_0$ be defined through $z_0=\Th_+^{-1}(t_0)$ and $s_0 = \Psi_+^{-1}(t_0)$.
Since $\Xi_+^{\prime}(z_0) \not=0$, resp. $\Psi_{+}^{\prime}(s_0)\not=0$, and the function is analytic in some open neighbourhood
of $z_0$, resp. $s_0$, there exist local biholomorphisms $\Ups, \chi$ mapping $0$ to $0$ such that
\beq
\Xi_+(z) \, = \, \underbrace{ \Xi_+(z_0) }_{=t_0} \, + \, \Ups(z-z_0)  \qquad \e{and} \qquad
\Psi_+(s) \, = \, \underbrace{ \Psi_+(s_0) }_{=t_0} \, + \, \chi(s-s_0)  \;,
\enq
on some neighbourhoods of $z_0$ and $s_0$. Hence, for $|s-s_0|$ small enough, it holds
$$\ga(s) \; = \; z_0 + \Ups^{-1}\circ \chi(s-s_0)\, . $$

\end{proof}

From now on, we take $\wt{\eps}>\eps>0$ small enough so that
$$\intff{-\eps}{ 1 +\eps} \subset \intff{-\wt{\eps} }{ 1 + \wt{\eps} \,  } \subset U_{\ga} \, . $$
Below, we will show that we can take $\gamma( \intff{-\epsilon}{1+\epsilon})$ to be a portion of our desired contour $\Gamma_{\e{eq}}$ and we refer to $\gamma$ as the \textit{analytic parametrisation} on this segment.
To "complete" the contour we will need to connect $\gamma(\intff{-\epsilon}{1+\epsilon})$ to infinity in such a way that the resulting contour has all the desired properties. The result of \cite{lirias1643667}
only ensures that any contour solving the max-min energy is at least a union of finitely many $\mc{C}^1$ arcs; however for our problem, it will be convenient to improve on certain properties of the curve.
\begin{proposition}[Existence of desired contour]\label{contourexistence}
Let $V$ be one-cut regular. There exists an admissible contour $\Gamma_{\e{eq}} \in \mathcal{T}(\alpha,\alpha^\prime)$ with the following properties.

\begin{enumerate}[label=(\arabic*)]
\item $\Gamma_{\e{eq}}$ satisfies \textnormal{\ref{hypothesis2}} and \textnormal{\ref{hypothesis3}}.
\item $\Gamma_{\e{eq}}$ solves the max-min energy problem (\ref{maxmin}) in external field $\varphi = \Re (V)$.
\item $\gamma( \intff{-\epsilon}{1+\epsilon}) \subset \Gamma_{\e{eq}}$, where $\gamma$ is the analytic extension of (\ref{gammadef}).
\item For any $z \in \Gamma_{\e{eq}} \setminus \mathrm{supp}\, \mu_{ \Ga_{\e{eq}} }$, $\varphi_{\Ga_{\e{eq}};\mathrm{eff}}(z) > 0$, where the effective potential is defined by \eqref{effective}.
\end{enumerate}
\end{proposition}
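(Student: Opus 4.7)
My starting point is the Kuijlaars--Silva existence Theorem \ref{existence}, which provides at least one admissible contour $\Gamma_0 \in \mathcal{T}(\alpha,\alpha^\prime)$ solving the max-min problem, whose equilibrium measure $\mu_{\Gamma_0}$ is supported on a connected analytic arc $\sigma_{\e{eq}}$ by the one-cut regular hypothesis, and that is an $S$-curve. Since the equilibrium measure depends on the contour only through $\mathrm{supp}\,\mu$ (by Corollary \ref{deformation}), I am allowed to deform $\Gamma_0$ freely in $\mathbb{C}\setminus \sigma_{\e{eq}}$ provided the effective potential $\varphi_{\Gamma_0;\mathrm{eff}}$ remains non-negative along the deformation. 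The strategy is to replace $\Gamma_0$ by a new contour $\Gamma_{\e{eq}}$ built in three pieces: a central piece coinciding with the analytic parametrisation $\gamma(\intff{-\epsilon}{1+\epsilon})$, two infinite tails which eventually coincide with the rays $\R^+ \cdot \e{e}^{2\pi\i\alpha/\kappa}$ and $\R^+ \cdot \e{e}^{2\pi\i\alpha^\prime/\kappa}$, and two transition arcs joining them smoothly while staying inside the open set $\Omega_+ \;\overset{\e{def}}{=}\; \{ z \in \Cx\setminus\sigma_{\e{eq}} : \varphi_{\Gamma_0;\mathrm{eff}}(z) > 0 \}$.

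For the central piece, Lemma \ref{Lemma extension analytique courve gamma} already provides the biholomorphic extension of $\gamma$ to the open neighbourhood $U_\gamma$ of $\intff{0}{1}$. Using the identity $\varphi_{\Gamma_0;\mathrm{eff}}(z) = -\pi \Im \Theta(z)$ (obtained by integrating \eqref{squareroot} from $\zeta_1$ to $z$ and invoking Frostman at $\zeta_1$), combined with the local $3/2$-behaviour \eqref{ecriture cptmt local Theta}, I would check that for $\epsilon > 0$ small enough the points $\gamma(\intff{-\epsilon}{0 }) \cup \gamma(\intff{1}{1+\epsilon})$ lie in $\Omega_+$: indeed $\Psi_+\big(\intff{-\epsilon}{0}\big)$ and $\Psi_+\big(\intff{1}{1+\epsilon}\big)$ are half-intervals outside $\intff{0}{1}$ on which $\Psi$ takes imaginary values of a definite sign determined by the local model, so the composite $\Theta^{-1}_+\circ \Psi_+$ produces points at which $-\pi \Im \Theta > 0$ strictly.

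The two infinite tails will be handled at the level of the asymptotic growth of the potential. Using $V(z)=z^\kappa/\kappa + \e{O}(z^{\kappa-1})$ together with $U[\mu_{\Gamma_0}](z) = -\ln|z| + \e{O}(|z|^{-1})$ at infinity, the effective potential satisfies $\varphi_{\Gamma_0;\mathrm{eff}}(z) = \Re(z^\kappa)/\kappa - \ln |z| + \e{O}(|z|^{\kappa -1})$, so on the rays $\R^+\cdot \e{e}^{2\pi\i\alpha/\kappa}$ and $\R^+\cdot \e{e}^{2\pi\i\alpha^\prime/\kappa}$ one has $\Re(z^\kappa)= |z|^\kappa$, and there exists $R>0$ such that these rays outside $\op{D}_{0,R}$ lie in $\Omega_+$. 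It remains to join $\gamma(-\epsilon)$, respectively $\gamma(1+\epsilon)$, to a point on the appropriate ray inside $\op{D}_{0,R}$ by a $\mc{C}^\infty$ arc contained in $\Omega_+$.

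\textbf{Main obstacle.} The delicate point is this transition step: one must show that the connected component of $\Omega_+$ containing a neighbourhood of $\gamma(-\epsilon)$ reaches out to the sector $S_\alpha$ outside $\op{D}_{0,R}$, and similarly at the other endpoint. Here I would argue as follows. First, because the Kuijlaars--Silva contour $\Gamma_0$ itself already satisfies $\varphi_{\Gamma_0;\mathrm{eff}} \geq 0$ along its entire length and stretches out to infinity in $S_\alpha$ and $S_{\alpha^\prime}$, its portion outside the support lies in the closure $\overline{\Omega_+}$. Since $\varphi_{\Gamma_0;\mathrm{eff}}$ is harmonic on $\Cx\setminus\sigma_{\e{eq}}$, the maximum principle rules out $\varphi_{\Gamma_0;\mathrm{eff}}$ being identically $0$ on any open subset of $\Cx\setminus\sigma_{\e{eq}}$, so the zero level set is a real analytic subvariety of real dimension one with only finitely many singular points in any compact set. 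One can thus perturb the piece of $\Gamma_0$ lying outside the support by an arbitrarily small $\mc{C}^\infty$ displacement transverse to the level set $\{\varphi_{\Gamma_0;\mathrm{eff}}=0\}$ so as to push it into $\Omega_+$, using a partition of unity between the central piece, the transition region, and the tails. Once this perturbed contour has been smoothed at the gluing points and shown to satisfy \ref{hypothesis2}--\ref{hypothesis3} by construction, Corollary \ref{deformation} together with the fact that $\mu_{\Gamma_0}$ remains the equilibrium measure ensures that the resulting $\Gamma_{\e{eq}}$ still solves the max-min energy problem, completing the proof.
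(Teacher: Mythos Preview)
Your overall architecture---central analytic piece $\gamma([-\epsilon,1+\epsilon])$, straight rays at infinity, transition arcs in $\Omega_+$, then smooth everything---matches the paper's, and your treatment of the central piece and the tails is fine. The divergence, and the gap, is in the transition step.

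Your argument there is: $\Gamma_0\setminus\sigma_{\e{eq}}$ lies in $\overline{\Omega_+}$, the zero level set is a real-analytic curve with finitely many singular points, so a small transverse push sends $\Gamma_0$ into $\Omega_+$. The problem is precisely at those singular points, which are the zeros of $R$ off the support. At such a point $z_0$ the gradient $\nabla\varphi_{\e{eff}}$ vanishes, so there is no ``transverse direction''; worse, near a simple zero of $R$ the level structure of $\varphi_{\e{eff}}$ is a saddle $\Re(w^2)$, with two opposite positive sectors. If $\Gamma_0$ passes through $z_0$ from one positive sector into the opposite one, and if those two sectors happen to lie in different connected components of $\Omega_+$, then \emph{no} small perturbation of $\Gamma_0$ remains in $\Omega_+$. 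You never rule this out. What rules it out is exactly part~(2) of the one-cut regularity Definition~\ref{regular}: one may choose the max-min contour so that $R$ has no zeros on it outside $\{\zeta_1,\zeta_2\}$. You state the one-cut hypothesis at the outset but never deploy it in the perturbation; it has to enter here.

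The paper handles this step quite differently, and it is worth seeing how. Rather than perturb $\Gamma_0$, it analyses the connected components of $\Lambda=\Omega_+$ directly. Lemma~\ref{critical1} shows that two distinct components can only touch (off the support) at a zero of $R$; one-cut regularity then forces the portion of the Kuijlaars--Silva curve outside the support to meet at most two components $C_i,C_j$, one abutting $\zeta_1$ and one $\zeta_2$. A further topological argument (Proposition~\ref{samecomponent}) shows these are genuinely distinct and that the ray $Re^{2\pi\i\alpha/\kappa}$ lies in the same component as $\gamma(-\epsilon')$ (not the $\alpha'$ one---this orientation check is another small point you skip). Once both endpoints are known to lie in a common open connected set, a piecewise-linear path suffices (Lemma~\ref{straightlines}), and only then are corners rounded.

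So: your strategy is salvageable, but you must (i) explicitly choose the Kuijlaars--Silva curve with no extraneous zeros of $R$ on it, which one-cut regularity permits, and (ii) argue that near each endpoint $\zeta_j$ there is exactly one component of $\Omega_+$ in the closure (this follows from the $3/2$-local model you already quoted), so that your perturbed $\Gamma_0$ and the central piece $\gamma$ emanate into the same component and can be joined. Without those two ingredients the perturbation argument does not close.
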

\begin{proof}

See Appendix \ref{curvecontruction}.

\end{proof}

\begin{definition}[Complex effective potential]
Let $\Ga_{\e{eq}}$ the contour provided by Proposition \ref{contourexistence}.
Let $\zeta_1, \zeta_2 \in \Gamma_{\e{eq}}$ be the lower and upper endpoints of $\mathrm{supp}\, \mu_{ \Gamma_{\e{eq}} } $ according to the orientation $\frac{1}{\i\pi}\sqrt{R(z)}\, \dd z > 0$.
The \textnormal{complex effective potential} is given by
\begin{align}
&\Phi_{ \Ga_{\e{eq}}; \mathrm{eff} } : \mathbb{C}\setminus \Gamma_{\e{eq}} \longrightarrow \mathbb{C},  &\Phi_{\Ga_{\e{eq}}; \mathrm{eff}}(z) \overset{\mathrm{def}}{=} \int_{\zeta_1}^z \sqrt{R(w)}\, \dd w \label{defphi}
\end{align}
where the integration path from $\zeta_1 \in \mathbb{C}$ to $z\in \mathbb{C}$ is taken arbitrarily as long as it doesn't intersect with $\Gamma_{\e{eq}}$ away from $\zeta_1$.
\end{definition}
It is natural to think of $\Phi_{\Ga_{\e{eq}};\mathrm{eff}}$ as a complexification of the effective potential \eqref{effective} because
$\Re \big[ \Phi_{\Ga_{\e{eq}};\mathrm{eff}} \big] = \varphi_{\Ga_{\e{eq}};\mathrm{eff}}$. Note that
$$\Phi_{\Ga_{\e{eq}};\mathrm{eff}}(z) = V(z) + g_{\Ga_{\e{eq}}}[ \mu_{ \Ga_{\e{eq}} }](z) - C$$
for some constant $C \in \mathbb{C}$, determined by $\Phi_{\mathrm{eff}}(\zeta_1) = 0$,
and where $ g_{\Ga}[ \mu]$ was introduced in Definition \ref{complexenergy}. This can be seen from the fact that the derivatives of both sides are equal by \eqref{Rformula}.
\begin{proposition}\label{steepestdescent}
It holds that $\forall z \in \gamma( \intff{-\epsilon}{1+\epsilon})$
\begin{align}\label{Imconstant}
\Im \left(  \big[ \Phi_{\Ga_{\e{eq}};\mathrm{eff}}\big]_+(z) \, +\, \big[ \Phi_{\Ga_{\e{eq}};\mathrm{eff}} \big]_- (z) \right) = 0  \, ,
\end{align}
where $\pm$ refers to the left and right boundary values according to the orientation of $ \gamma(\intff{-\epsilon}{1+\epsilon})$.
Furthermore, $\forall z \in \gamma\big( \intff{-\epsilon}{1+\epsilon}\setminus \intff{0}{1} \big)$,
\begin{align}\label{effectivepositive}
\Re \left(\big[ \Phi_{\Ga_{\e{eq}};\mathrm{eff}}\big]_+(z) +\big[ \Phi_{\Ga_{\e{eq}};\mathrm{eff}}\big]_-(z) \right) > 0  \,.
\end{align}
\end{proposition}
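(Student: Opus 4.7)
The plan is to transport the problem to the parameter interval $\intff{-\eps}{1+\eps}$ via the analytic parametrisation $\ga = \Th_+^{-1}\circ \Psi_+$. The cornerstone is the master identity
\beq
\sqrt{R(\ga(t))}\,\ga^{\prime}(t) \; = \; 8\,\bigl[ t^{1/2}(t-1)^{1/2}\bigr]_{+}
\nonumber
\enq
obtained by differentiating $\Th_+ \circ \ga = \Psi_+$ and using $\Th^{\prime}(z) = \tfrac{1}{\i\pi}\sqrt{R(z)}$ together with $\Psi^{\prime}(z) = \tfrac{8}{\i\pi}z^{1/2}(z-1)^{1/2}$; the subscript $+$ refers to the boundary value from the upper half plane, taken with principal branches. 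By Lemma \ref{Lemma extension analytique courve gamma}, $\ga$ extends holomorphically to a neighbourhood of $\intff{-\eps}{1+\eps}$, so this identity propagates across the branch points $t=0$ and $t=1$ by analytic continuation in the upper half plane. A routine branch-tracking computation then gives the explicit values of the right-hand side: $-8\sqrt{|t|(1-t)}$ (real negative) on $\intff{-\eps}{0}$, $8\i\sqrt{t(1-t)}$ (purely imaginary positive) on $\intff{0}{1}$, and $8\sqrt{t(t-1)}$ (real positive) on $\intff{1}{1+\eps}$.

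On the support $\ga(\intff{0}{1})$, the jump relation $\sqrt{R}_{+}+\sqrt{R}_{-}=0$ from \eqref{ecriture conditions de saut pour sqrt de R} and the normalisation $\Phi_{\pm}(\zeta_{1})=0$ give $\Phi_{+}+\Phi_{-}\equiv 0$ on the support, which covers \eqref{Imconstant} there. On the exterior arcs $\ga(\intfo{-\eps}{0})\cup\ga(\intof{1}{1+\eps})$, the function $\sqrt{R}$ is holomorphic in a neighbourhood of $\Ga_{\e{eq}}$, so $\Phi$ has no jump across the curve and $\Phi_{+}(\ga(t))=\Phi_{-}(\ga(t))=\Phi(\ga(t))$. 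For $t\in\intfo{-\eps}{0}$, integrating the master identity along $\ga$ yields
\beq
\Phi(\ga(t)) \; = \; \Int{0}{t} \sqrt{R(\ga(s))}\ga^{\prime}(s)\,\dd s \; = \; 8 \Int{t}{0} \sqrt{|s|(1-s)}\,\dd s \; > \; 0 \;,
\nonumber
\enq
so $\Phi_{+}+\Phi_{-}=2\Phi$ is real and strictly positive, establishing both \eqref{Imconstant} and \eqref{effectivepositive} on this arc. For $t\in\intof{1}{1+\eps}$, any admissible path from $\zeta_{1}$ to $\ga(t)$ must traverse the support on one or the other side; using $\int_{0}^{1}\sqrt{s(1-s)}\,\dd s=\tfrac{\pi}{8}$ one reads off $\Phi_{\pm}(\zeta_{2})=\pm\i\pi$, so
\beq
\Phi_{\pm}(\ga(t)) \; = \; \pm \i\pi + 8\Int{1}{t}\sqrt{s(s-1)}\,\dd s \;,
\nonumber
\enq
and the sum $\Phi_{+}+\Phi_{-}=16\int_{1}^{t}\sqrt{s(s-1)}\,\dd s$ is real and strictly positive.

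The main technical delicacy lies in the analytic continuation of the master identity through the endpoints $\zeta_{1}$ and $\zeta_{2}$, at which both $\Th$ and $\Psi$ exhibit $\tfrac{3}{2}$-type branch points. The resolution is provided by the local biholomorphisms $\varrho,\wt{\varrho},\tau,\wt{\tau}$ of \eqref{ecriture cptmt local Theta}--\eqref{ecriture cptmt local Psi} and the identification $\ga(u)=\varrho^{-1}(\tau(u))+\zeta_{1}$ (together with its analogue near $1$) recorded in the proof of Lemma \ref{Lemma extension analytique courve gamma}: these exhibit $\ga$ as holomorphic across $0$ and $1$ with differential cancelling exactly the $\tfrac{3}{2}$-singularities appearing on either side, so that under the prescribed branch conventions the master identity remains valid throughout $\intff{-\eps}{1+\eps}$ and the computation above is unambiguous.
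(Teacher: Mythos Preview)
Your argument is correct and follows essentially the same route as the paper: both reduce to the identity $\Phi_{\pm}\circ\gamma = \i\pi\,\Psi_{\pm}$ (equivalently your master identity) and then evaluate $\i\pi(\Psi_{+}(x)+\Psi_{-}(x))$ explicitly on $\intfo{-\eps}{0}$ and $\intof{1}{1+\eps}$ to obtain the same real positive integrals.

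One expositional slip: you assert that on \emph{both} exterior arcs ``$\Phi$ has no jump across the curve'' because $\sqrt{R}$ is holomorphic there. This inference is invalid in general---local holomorphy of the integrand does not preclude a jump of the primitive arising from the global constraint on the integration path---and in fact $\Phi$ \emph{does} jump by $2\i\pi$ across $\gamma(\intof{1}{1+\eps})$, since any path from $\zeta_{1}$ to a point beyond $\zeta_{2}$ must pass on one side of the support or the other. You clearly recognise this in your treatment of $t\in\intof{1}{1+\eps}$, where you correctly compute $\Phi_{\pm}(\zeta_{2})=\pm\i\pi$ and obtain $\Phi_{+}+\Phi_{-}=16\int_{1}^{t}\sqrt{s(s-1)}\,\dd s$; so the error is confined to the preamble sentence and does not propagate. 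Simply restrict the ``no jump'' claim to $\gamma(\intfo{-\eps}{0})$, where it is genuinely true (that arc lies outside $\Gamma_{\e{eq};\zeta_{1}}$, the actual cut of $\Phi$), and the write-up is clean.
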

\begin{proof}
Recall that $\big[ \Phi_{\Ga_{\e{eq}};\mathrm{eff}}\big]_+(\zeta_1) = \big[ \Phi_{\Ga_{\e{eq}};\mathrm{eff}}\big]_-(\zeta_1) = 0$, hence we need only to show that the left hand side of \eqref{Imconstant}
is constant on $ \gamma\big(\intff{-\epsilon}{1+\epsilon}\big)$.  Observe that on $ \e{supp} \mu_{\Ga_{\e{eq}}}=\gamma(\intff{0}{1})$, the maps $z \mapsto \big[ \Phi_{\Ga_{\e{eq}};\mathrm{eff}}\big]_{\pm}(z)$ are
continuously differentiable and furthermore that
$$ \big[ \Phi_{\Ga_{\e{eq}};\mathrm{eff}}\big]_+^{\prime}(z) \, + \, \big[ \Phi_{\Ga_{\e{eq}};\mathrm{eff}}\big]_-^{\prime}(z) \, = \, \sqrt{R(z)}_+  \, +  \, \sqrt{R(z)}_- \, =\,  0 \;.$$
For $z \in \gamma\big( \intff{-\epsilon}{1+\epsilon}\setminus \intff{0}{1} \big)$, we observe that
$$\big[ \Phi_{\Ga_{\e{eq}};\mathrm{eff}}\big]_+(z) +\big[ \Phi_{\Ga_{\e{eq}};\mathrm{eff}}\big]_-(z)  \, = \,  \i\pi  \big\{ \Theta_+(z) \, + \, \Th_- (z) \big\} \, = \,
\i \pi \big\{ \Psi_+(\gamma^{-1}(z)) \, + \, \Psi_-(\gamma^{-1}(z)) \big\} \;. $$
However, for $x<0$, it holds
$$
\i \pi \big\{ \Psi_+(x) \, + \, \Psi_-(x) \big\}  \; = \; -16 \Int{0}{x} \sqrt{w(w-1)} \dd w \, > \, 0
$$
while, for  $x>1$, it holds
$$
\i \pi \big\{ \Psi_+(x) \, + \, \Psi_-(x) \big\}  \; = \; 16 \Int{1}{x} \sqrt{w(w-1)} \dd w \, > \, 0 \;.
$$
This entails the claim.
\end{proof}

Proposition \ref{steepestdescent} should be regarded as showing that $\gamma\big( \intff{-\epsilon}{1+\epsilon} \big)$ plays a role analogous to a steepest descent contour.

\vspace{2mm}

In our approach it is essential to interpolate the partition function we are interested in with one that we can compute.  In past works on $\beta$-ensembles \cite{Borot:2013aa,BG24}
a rather simple interpolation was possible: by matching endpoints, a convex combination of potentials yielded a convex combination of equilibrium measures.
Thus, the intermediate potential $V_t$ and equilibrium measure were both linear in the interpolation parameter $t \in \intff{0}{1}$.  Unfortunately such a simple solution cannot
be made here because the initial and final equilibrium measures live on different curves and so it would make no sense to take convex combinations of the two.
We instead use a more complicated construction which starts by finding an interpolation between the initial and the final $S$-curves and then works backwards to find the family of potentials that gives rise to it.

\vspace{2mm}

Hereafter in this section, $\gamma$ will be the path defined by (\ref{gammadef}),  or more accurately its analytic extension to $\intff{-\epsilon}{1+\epsilon}$.
Let
\begin{align}\label{contourdef}
\gamma_t(x) \overset{\mathrm{def}}{=} \frac{\gamma(tx)(\zeta_2-\zeta_1)+\zeta_1(\gamma(t)-\zeta_2)}{\gamma(t)-\zeta_1} & & t \in \intff{0}{1}, \, \, x \in \intff{-\epsilon}{1+\epsilon} \, .
\end{align}
Note that for every $t\in \intff{0}{1}$, $\gamma_t(0)=\zeta_1$ and $\gamma_t(1)=\zeta_2$.
When $t = 1$, $\gamma_1 = \gamma$. When $t \downarrow 0$, l'Hôpital's rule shows that $\gamma_0(x) = x (\zeta_2-\zeta_1)+\zeta_1$,
which is simply a straight line path from $\zeta_1$ to $\zeta_2$ (recall that $\gamma$ is analytic and so the relevant derivatives exist).

Let us now reverse the logic and find an equilibrium measure and potential associated to the interpolating curve $\gamma_t\big( \intff{-\epsilon}{1+\epsilon}\big)$.
Let us denote the interpolating equilibrium measure as $\frac{1}{\i\pi}\sqrt{R_t(z)}_{+} \,\dd z$ and the potential as $V_t$. $R_t$ or $V_t$ will turn out to be
certain analytic functions in some open neighbourhood of  $\gamma_t\big( \intff{-\epsilon}{1+\epsilon}\big)$ but, in principle, neither of these
will be a polynomial. Note that because $\gamma$ is invertible on the neighbourhood $U_{\ga}$ of $\intff{-\epsilon}{1+\epsilon}$,
 $\gamma_t$ is invertible on $U_{\ga}/t \supset \intff{-\epsilon}{1+\epsilon}$.
 To start with, we define $R_t$  for all $t\in \intff{0}{1}$  on
\beq
\ga_t\Big( \f{1}{t} U_{\ga} \Big) \, = \, \bigg\{   \frac{\gamma(x)(\zeta_2-\zeta_1)+\zeta_1(\gamma(t)-\zeta_2)}{\gamma(t)-\zeta_1}  \; : \;  x \in U_{\ga}   \bigg\}
\enq
through
\begin{align}
  R_t(z)   \,  = \,  64  \f{ \gamma_t^{-1}(z)\big( \gamma_t^{-1}(z) - 1 \big) }{  \big[ \gamma_t^\prime\big( \gamma_t^{-1}(z)\big)  \big]^2} \;.
\label{expression pour Rt de z}
\end{align}
This equation is true for $t=1$ by \eqref{gammadef}, \textit{i.e.} yields $R_1=R$.
$R_t$ so defined is hence analytic in both $t \in \intff{0}{1}$ and $z \in  \ga_t\big(  U_{\ga}/t \big)$.
$R_t$ admits two zeroes on $\ga_t\big( U_{\ga}/t \big)$, both simple. One is located at $z=\zeta_1$ and the other at $z=\zeta_2$.
$ \big[ \gamma_t^\prime\big( \gamma_t^{-1}(z)\big)  \big]^2 \cdot  R_t(z)\leq  0$ on $\ga_t\big( \intff{0}{1} \big)$ and this is the only subset of $\ga_t\big(  U_{\ga}/t \big)$ where it takes real negative values.
We then define $\sqrt{R_t(z)}$ as having a cut along  $\ga_t\big( \intff{0}{1} \big)$. It is then direct to check that
\beq
\frac{1}{ \i\pi} \sqrt{R_t(\gamma_t(x))}_+ \gamma_t^\prime(x)  = \frac{8}{\pi}  \sqrt{x(1-x)}
\label{Rtdef}
\enq
which shows that as $t \downarrow 0$ the equilibrium measure tends towards the Wigner semicircle law.
\begin{remark}\label{semicircleremark}
Note that $\frac{1}{\i\pi} \sqrt{R_t(\gamma_t(x))}_+ \gamma_t^\prime(x) \, \dd x$ is precisely the pullback of the measure
$\frac{1}{\i\pi} \sqrt{R_t(z)} \, \dd z$ from the curve to the real line under $\gamma_t$. Thus \eqref{Rtdef} states that the pullback of this
measure under $\gamma_t$ always yields the semicircular distribution on $\intff{0}{1}$ independently of $t \in \intff{0}{1}$:
\beq
\dd \nu_{\mathrm{sc}}(x) = \frac{8}{\pi}\mathbbm{1}_{ \intff{0}{1} }(x) \sqrt{x(1-x)} \, \dd x \;.
\label{definition mesure semi cercle sur 0 1}
\enq
\end{remark}
Next, let us reconstruct the potential that gives rise to this equilibrium measure.  To define $V_t$ we enforce that \eqref{squareroot} is true, namely we set
\beq\label{Vtdef}
 V_t^\prime(z)  \, = \,  \sqrt{ R_t(z)} \, - \,  \frac{1}{ \i \pi } \hspace{-2mm} \Int{ \ga_t(\intff{0}{1}) }{}  \hspace{-2mm} \frac{ \sqrt{R_t(w)}_+ }{ w - z }\, \dd w \;.
\enq
The above allows one to provide an alternative expression for $V_t^{\prime}$ where its analyticity in the neighbourhood of $\intff{-\eps}{1+\eps}$ is manifest.
However, for that, we need a more structured expression for $\sqrt{R_t(z)}$.
\begin{lemma}
\label{Lemma representation poiur sqrt de Rt}
There exists $S_t(z)$ jointly analytic in $t \in \intff{0}{1}$ and $z \in \gamma_t\big(\intff{-\epsilon}{1+\epsilon}\big)$ such that
\beq
 \sqrt{R_t(z)} \, = \,  \big[(z-\zeta_1)(z-\zeta_2)\big]^{1/2}_{\sg_{t} } \cdot S_t(z)  \qquad where \qquad   \sg_{t} \; = \; \ga_t\big( \intff{0}{1} \big) \, ,
\enq
the square root is defined as in \eqref{definition racine avec coupure sur une courbe prescrite} while
\beq\label{prefactor}
S_t(z) = S(z_t)  \cdot \bigg( \frac{\gamma(t) - \zeta_1}{t(\zeta_2-\zeta_1)} \bigg)^{\f{3}{2}} \sqrt{\frac{1-\gamma_t^{-1}(z)}{\zeta_2 - z}} \sqrt{\frac{\zeta_2-z_t}{1- \gamma^{-1}(z_t)}} \, ,
\enq
where $S(z)$ has been introduced in \eqref{ecriture decomposition canonique pour racine de R} and  $z_t := \frac{z(\gamma(t) - \zeta_1)-\zeta_1(\gamma(t) - \zeta_2)}{\zeta_2 - \zeta_1}$.

In particular,  $\Dp{t} S_t(z)$ is again jointly analytic in $t \in \intff{0}{1}$ and $z\in\gamma_t( \intff{-\epsilon}{1+\epsilon})$.
\end{lemma}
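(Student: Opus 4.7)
The plan is to obtain the claimed factorisation by a direct algebraic computation starting from \eqref{expression pour Rt de z}, and then to verify joint analyticity factor by factor. First I will use the definition \eqref{contourdef} of $\gamma_t$ to compute
\[
\gamma_t^{-1}(z) \, = \, \tfrac{1}{t}\gamma^{-1}(z_t) \qquad \text{and} \qquad \gamma_t^\prime(\gamma_t^{-1}(z)) \, = \, \gamma^\prime(\gamma^{-1}(z_t)) \cdot \tfrac{t(\zeta_2-\zeta_1)}{\gamma(t)-\zeta_1},
\]
which, inserted into \eqref{expression pour Rt de z}, yield an explicit expression for $R_t(z)$ in terms of the base data $\gamma$ and $z_t$ alone. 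Injecting the factorisation \eqref{ecriture decomposition canonique pour racine de R} applied at $z_t$, namely $R(z_t)=S(z_t)^2(z_t-\zeta_1)(z_t-\zeta_2)$, and using the affine identity $z_t-\zeta_1=\tfrac{\gamma(t)-\zeta_1}{\zeta_2-\zeta_1}(z-\zeta_1)$ to pull out the $(z-\zeta_1)$ factor, I will then extract $(z-\zeta_2)$ from the factor $\gamma_t^{-1}(z)-1$, which has a simple zero at $z=\zeta_2$ because $\gamma_t(1)=\zeta_2$ and $\gamma_t^\prime(1)\neq 0$. The result is a clean factorisation $R_t(z)=(z-\zeta_1)(z-\zeta_2)\cdot T_t(z)$ where $T_t(z)$ is precisely the quantity whose square root, once suitable branches are chosen, yields the formula \eqref{prefactor} for $S_t(z)$.

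Second, I will verify that $T_t(z)$ admits a jointly analytic, single-valued square root on a neighbourhood of $\sigma_t$ containing $\gamma_t(\intff{-\epsilon}{1+\epsilon})$. The only delicate points are the $3/2$-power prefactor and the two ratio-under-a-root factors. For the prefactor, $\tfrac{\gamma(t)-\zeta_1}{t(\zeta_2-\zeta_1)}$ is nonvanishing throughout $\intff{0}{1}$: at $t=0$ its value is $\tfrac{\gamma^\prime(0)}{\zeta_2-\zeta_1}\neq 0$ by Lemma \ref{Lemma extension analytique courve gamma}, while at $t=1$ it equals $1$; hence its $3/2$-power is unambiguously defined by continuity once the branch is fixed at $t=1$ by the normalisation $S_1\equiv S$. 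For the ratios $\tfrac{1-\gamma_t^{-1}(z)}{\zeta_2-z}$ and $\tfrac{\zeta_2-z_t}{1-\gamma^{-1}(z_t)}$, numerators and denominators each have a single simple zero at the same location ($z=\zeta_2$ in the first case, $z_t=\zeta_2$ in the second, again by Lemma \ref{Lemma extension analytique courve gamma} and the injectivity of $\gamma$), so the ratios extend to analytic nonvanishing functions, and on simply-connected neighbourhoods their square roots are well-defined, with the branches fixed once and for all by the same normalisation at $t=1$.

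Third, to match the factorisation $\sqrt{R_t(z)}=[(z-\zeta_1)(z-\zeta_2)]^{1/2}_{\sigma_t}\,S_t(z)$ as an identity of functions cut along $\sigma_t$, I will check the boundary-value relation: both $\sqrt{R_t(z)}$ and $[(z-\zeta_1)(z-\zeta_2)]^{1/2}_{\sigma_t}$ satisfy the jump relation $(\cdot)_+=-(\cdot)_-$ across $\sigma_t$, so their quotient $S_t(z)$ has no jump across $\sigma_t$; combined with removability at the endpoints (where the simple zeros of $R_t$ match those of $(z-\zeta_1)(z-\zeta_2)$), $S_t$ extends to a single-valued analytic function on a full neighbourhood of $\sigma_t$. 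Joint analyticity in $(t,z)$ then follows factor-by-factor, using the analyticity of $\gamma$ and of $\gamma^{-1}$ on $U_\gamma$ established in Lemma \ref{Lemma extension analytique courve gamma}, together with the removability of the apparent singularity of $\gamma_t$ at $t=0$. The statement about $\partial_t S_t(z)$ is then an immediate consequence of joint analyticity.

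The main obstacle is branch consistency and the bookkeeping of the various simple-zero cancellations. Everything reduces to the fact that $\gamma^\prime$ does not vanish on a neighbourhood of $\intff{-\epsilon}{1+\epsilon}$ (including the endpoints $0$ and $1$), but this must be tracked carefully through the composite objects $\gamma_t$, $\gamma_t^{-1}$, $z_t$ and $\gamma^{-1}(z_t)$ to ensure that every apparent pole or zero cancels and that the local square-root branches selected in \eqref{prefactor} glue into the same global analytic function as $t$ traverses the full segment $\intff{0}{1}$.
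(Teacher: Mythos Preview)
Your proposal is correct and follows essentially the same route as the paper: both start from \eqref{expression pour Rt de z}, use the relations $\gamma_t^{-1}(z)=\tfrac{1}{t}\gamma^{-1}(z_t)$ and $\gamma_t'(\gamma_t^{-1}(z))=\gamma'(\gamma^{-1}(z_t))\cdot\tfrac{t(\zeta_2-\zeta_1)}{\gamma(t)-\zeta_1}$, eliminate $[\gamma'(\gamma^{-1}(z_t))]^2$ via the $t=1$ decomposition \eqref{ecriture decomposition canonique pour racine de R}, and arrive at $R_t(z)=S_t^2(z)(z-\zeta_1)(z-\zeta_2)$ with the branch of the square root fixed by the normalisation $S_1=S$. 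Your treatment of branch consistency (via the jump relation across $\sigma_t$) and of the removable singularity at $t=0$ is somewhat more explicit than the paper's, but the underlying argument is the same.
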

\begin{proof}
We begin with formula \eqref{expression pour Rt de z} and observe that  $\gamma^{-1}_t(z) = \frac{1}{t}\gamma^{-1}(z_t)$. Since
$$
\ga_t^{\prime}(x) \, = \, t \ga^{\prime}(tx) \cdot \f{ \zeta_2-\zeta_1 }{ \ga(t) - \zeta_1 } \;,
$$
one arrives to
$$
R_t(z) \, = \,  64  \,  \ga^{-1}(z_t) \, \f{  \ga_t^{-1}(z)-1 }{ t^3 \big[ \ga^{\prime}\big( \ga^{-1}(z_t) \big) \big]^2 }    \cdot \bigg( \f{\ga(t)-\zeta_1  }{ \zeta_2-\zeta_1} \bigg)^2 \;.
$$
By taking the expression for $R_1(z)=R(z)$ and using \eqref{ecriture decomposition canonique pour racine de R}, one infers that
$$
\f{ 1 }{ \big[ \ga^{\prime}\big( \ga^{-1}(z_t) \big) \big]^2 } \; = \; \f{ S^2(z_t) (z_t-\zeta_2) (z_t-\zeta_1)   }{  64 \, \ga^{-1}(z_t) \big(  \ga^{-1}(z_t) - 1 \big)  } \;.
$$
All-in-all, this leads to $R_{t}(z) \, = \,  S_t^2(z) \, \big( z-\zeta_1\big) \big( z-\zeta_2\big)$. Upon taking the square root
cut along $\sg_{t}=\ga_t\big( \intff{0}{1} \big)$, one gets the claim upon fixing the sign by taking the $t\rightarrow 1$ limit.
In order to deduce analyticity, one observes that  $(t,z) \mapsto z_t$ is clearly jointly analytic,
and furthermore the functions $\frac{1-\gamma^{-1}_t(z)}{\zeta_2 - z}$ and $\frac{ \zeta_2-z_t}{1-\gamma^{-1}(z_t)}$ are jointly analytic in $z$ and $t$ and have no zeroes or poles
so that their logarithms  are well-defined.
\end{proof}

Starting from the representation obtained in Lemma \ref{Lemma representation poiur sqrt de Rt}, \eqref{Vtdef} reduces to
\begin{align}\label{interpolatingpotential}
V_t^\prime(z) \, =  \,   S_t(z) \left(z - \frac{\zeta_1+\zeta_2}{2}\right) \, -
   \Int{ \sg_t }{} \big[(w-\zeta_1)(w-\zeta_2)\big]^{1/2}_{\sg_{t};+ } \, \frac{S_t(w)-S_t(z)}{w-z} \, \dd w\,,
\end{align}
with $\sg_t= \ga_t(\intff{0}{1})$. It is clear from the above that $V_t^\prime$ is analytic on a neighbourhood of the curve
$\gamma_t\big( \intff{-\epsilon}{1+\epsilon}\big)$. Finally, we set
\begin{align}\label{interpolatingpotential2}
V_t(z) :=V\left(\frac{\zeta_1+\zeta_2}{2}\right)  \, + \,  \Int{ \frac{\zeta_1+\zeta_2}{2} }{ z} V_t^\prime(w) \, \dd w.
\end{align}
It is direct that $V_t(z)$ is jointly analytic in both $t$ and $z$. Moreover, $V_1=V$ while
\beq
V_0(z) = \frac{4}{(\zeta_2-\zeta_1)^2}\left( z- \frac{\zeta_1+\zeta_2}{2} \right)^2 + V\left(\frac{\zeta_1+\zeta_2}{2}\right) \; .
\label{definition potentiel V0}
\enq
\begin{lemma}\label{curvelowerbound}
For any $p \in \mathbb{N}$, there exists $C_p>0$ such that
\beq
\norm{ \ga_t }_{ \mc{W}^{\infty}_p( \intff{-\eps}{1+\eps }) } \, \leq \, C_p \qquad \mathrm{uniformly} \; \mathrm{in} \;\; t \in \intff{0}{1} \;.
\enq
All the derivatives of $\gamma_t$ are bounded with bounds independent of $t \in \intff{0}{1}$. Furthermore, there exists a constant $0 < C_\gamma < 1$ independent of $t \in \intff{0}{1}$ such that
\begin{align}
0 < C_\gamma \leq \frac{ |\gamma_t(x) - \gamma_t(y)| }{ |x-y| } \leq C_\gamma^{-1}, & & \forall x,y \in \intff{-\epsilon}{1+\epsilon}\, .
\end{align}
\end{lemma}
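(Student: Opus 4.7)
The plan is to first rewrite $\gamma_t$ so that the apparent $0/0$ indeterminacy at $t=0$ becomes manifestly removable. Using Lemma \ref{Lemma extension analytique courve gamma}, I would factor $\gamma(u) - \zeta_1 = u \cdot h(u)$, where $h$ is holomorphic on the open neighbourhood $U_\gamma$ of $\intff{-\eps}{1+\eps}$ and non-vanishing there: one has $h(0) = \gamma'(0) \neq 0$ since $\gamma$ is a local biholomorphism at $0$, and $h(u) \neq 0$ for $u \neq 0$ by injectivity of $\gamma$ combined with $\gamma(0) = \zeta_1$. Substituting this factorisation into \eqref{contourdef} and simplifying yields
\begin{align*}
\gamma_t(x) \; = \; \zeta_1 + (\zeta_2 - \zeta_1)\, x \cdot \frac{h(tx)}{h(t)},
\end{align*}
which is manifestly jointly real-analytic on $\intff{0}{1} \times \intff{-\eps}{1+\eps}$ and correctly reproduces the affine limit $\gamma_0(x) = \zeta_1 + (\zeta_2 - \zeta_1)x$ at $t=0$.

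From this joint analyticity, the $\mc{W}^\infty_p$ bound follows immediately: for each $p$, the derivative $\partial_x^p \gamma_t(x)$ is jointly continuous on the compact set $\intff{0}{1} \times \intff{-\eps}{1+\eps}$ and is therefore uniformly bounded in $t$. Differentiating \eqref{contourdef} directly also gives the clean formula
\begin{align*}
\gamma_t^\prime(x) \; = \; (\zeta_2 - \zeta_1) \, \frac{\gamma'(tx)}{h(t)},
\end{align*}
from which the upper Lipschitz bound is an immediate consequence of the fundamental theorem of calculus applied along the segment joining $y$ to $x$.

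The only step requiring a genuine argument is the lower Lipschitz bound; this is the hard part, because $\gamma_t$ is $\mathbb{C}$-valued so a naive mean-value estimate is unavailable. I would first establish injectivity of $\gamma_t$ on $\intff{-\eps}{1+\eps}$ uniformly in $t$: for $t \in \intoo{0}{1}$, the reformulation gives $\gamma_t(x) = \gamma_t(y) \Longleftrightarrow \gamma(tx) = \gamma(ty)$, which, since $tx, ty \in \intff{-\eps}{1+\eps} \subset U_\gamma$ and $\gamma$ is injective on $U_\gamma$, forces $tx = ty$ and thus $x = y$; the endpoint cases $t=0$ (affine $\gamma_0$) and $t=1$ (the standing hypothesis on $\gamma$) are immediate. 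Next, I introduce the extended difference quotient
\begin{align*}
G_t(x,y) \; = \; \begin{cases} \dfrac{\gamma_t(x) - \gamma_t(y)}{x-y}, & x \neq y, \\[2pt] \gamma_t^\prime(x), & x = y, \end{cases}
\end{align*}
which is jointly continuous on $\intff{0}{1} \times \intff{-\eps}{1+\eps}^2$ thanks to smoothness in $x$. Injectivity gives $G_t(x,y) \neq 0$ off the diagonal, while the explicit formula for $\gamma_t^\prime$, together with the non-vanishing of $\gamma'$ and $h$ on $U_\gamma$, gives $G_t \neq 0$ on the diagonal. Hence $|G_t|$ attains a strictly positive minimum $C_\gamma > 0$ on this compact set, which is exactly the desired uniform lower bound.
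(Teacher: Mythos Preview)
Your proof is correct and follows essentially the same strategy as the paper's: both remove the apparent $t=0$ singularity via the factorisation $\gamma(t)-\zeta_1 = t\,h(t)$ with $h(0)=\gamma'(0)\neq 0$, and both invoke compactness together with injectivity of $\gamma$ to control the difference quotient. The paper is marginally more direct, observing the identity $\frac{|\gamma_t(x)-\gamma_t(y)|}{|x-y|} = |\zeta_2-\zeta_1|\,\frac{t}{|\gamma(t)-\zeta_1|}\,\frac{|\gamma(tx)-\gamma(ty)|}{|tx-ty|}$ and applying the compactness argument to $\gamma$ alone on $\intff{-\eps}{1+\eps}$ rather than to your three-variable function $G_t$, but the content is the same.
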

\begin{proof}
The first claim is trivial. For the second claim, observe that
$$\frac{|\gamma_t(x) - \gamma_t(y)|}{|x-y|} = |\zeta_2 - \zeta_1| \frac{t}{|\gamma(t) - \zeta_1|} \frac{|\gamma(tx) - \gamma(ty)|}{|tx-ty|} .$$
But $\frac{|\gamma(tx) - \gamma(ty)|}{|tx-ty|}$ is bounded above and below by constants independent of $t$ by compactness of $\intff{-\epsilon}{1+\epsilon}$ and injectivity of $\gamma$.
Finally, the singularity of $\frac{t}{|\gamma(t) - \zeta_1|}$ at $t=0$ is removable because $\gamma^\prime(0) \neq 0$.
\end{proof}

\begin{corollary}\label{Vprimebound}
There is a constant $C>0$ independent of $t \in \intff{0}{1}$ such that $|V^\prime_t\big( \gamma^\prime_t(x) \big)  \gamma^\prime_t (x)| \leq C$ for all $x \in \intff{-\epsilon}{1+\epsilon}$.
\end{corollary}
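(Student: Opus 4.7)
The plan is to read off the uniform bound directly from the explicit representation \eqref{interpolatingpotential} of $V_t^{\prime}$, combined with the uniform control on $\gamma_t$ provided by Lemma \ref{curvelowerbound} and the joint analyticity of $S_t$ furnished by Lemma \ref{Lemma representation poiur sqrt de Rt}. The desired quantity $V_t^{\prime}(\gamma_t(x))\,\gamma_t^{\prime}(x)$ is jointly continuous in $(t,x)$, so once its finiteness up to $t=0$ is verified, a compactness argument on $\intff{0}{1}\times\intff{-\epsilon}{1+\epsilon}$ finishes the proof.

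First, I would show that $S_t(z)$ is uniformly bounded for $t\in\intff{0}{1}$ and $z\in\gamma_t(\intff{-\epsilon}{1+\epsilon})$. The only potentially delicate feature of \eqref{prefactor} is the prefactor $\bigl(\tfrac{\gamma(t)-\zeta_1}{t(\zeta_2-\zeta_1)}\bigr)^{3/2}$ as $t\downarrow 0$, but since $\gamma$ is analytic with $\gamma(0)=\zeta_1$ and $\gamma^{\prime}(0)\neq 0$, this ratio extends continuously (indeed analytically) in $t$ with value $\tfrac{\gamma^{\prime}(0)}{\zeta_2-\zeta_1}$ at $t=0$. The remaining ingredients $S(z_t)$, $\sqrt{\tfrac{1-\gamma_t^{-1}(z)}{\zeta_2-z}}$ and $\sqrt{\tfrac{\zeta_2-z_t}{1-\gamma^{-1}(z_t)}}$ are jointly analytic and nonvanishing on the compact set under consideration, hence uniformly bounded.

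Next, I would bound the two pieces of \eqref{interpolatingpotential} separately. The first summand $S_t(z)\bigl(z-\tfrac{\zeta_1+\zeta_2}{2}\bigr)$ is uniformly bounded by the previous step together with the fact that $\gamma_t(\intff{-\epsilon}{1+\epsilon})$ stays in a fixed compact set of $\mathbb{C}$ (again by Lemma \ref{curvelowerbound}). For the integral term, the apparent singularity of the integrand at $w=z$ is removable because $S_t$ is analytic, so $\tfrac{S_t(w)-S_t(z)}{w-z}$ is jointly analytic in $(w,z)$ and uniformly bounded. The factor $[(w-\zeta_1)(w-\zeta_2)]^{1/2}_{\sigma_t;+}$ is uniformly bounded on $\sigma_t=\gamma_t(\intff{0}{1})$, and the arclength of $\sigma_t$ is uniformly bounded by Lemma \ref{curvelowerbound}. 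Putting these together, $V_t^{\prime}(z)$ is uniformly bounded on $\{(t,z):t\in\intff{0}{1},\,z\in\gamma_t(\intff{-\epsilon}{1+\epsilon})\}$.

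Finally, Lemma \ref{curvelowerbound} provides a uniform bound on $|\gamma_t^{\prime}(x)|$ for $x\in\intff{-\epsilon}{1+\epsilon}$ and $t\in\intff{0}{1}$, so multiplying completes the proof. The only nontrivial step is the removability of the $t=0$ singularity in $S_t$; once that is observed, the rest is a straightforward continuity-on-a-compact-set argument.
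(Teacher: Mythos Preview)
Your argument is correct. You bound $V_t'(z)$ directly from the representation \eqref{interpolatingpotential}, using the joint analyticity of $S_t$ from Lemma~\ref{Lemma representation poiur sqrt de Rt} together with the uniform curve estimates of Lemma~\ref{curvelowerbound}, and then multiply by the uniformly bounded $\gamma_t'(x)$. All steps are sound; in particular your handling of the diagonal singularity via the joint analyticity of $(w,z)\mapsto \tfrac{S_t(w)-S_t(z)}{w-z}$ is fine, and the $t\downarrow 0$ removability you highlight is in fact already contained in Lemma~\ref{Lemma representation poiur sqrt de Rt}.

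The paper takes a somewhat different and more explicit route: instead of bounding $V_t'$ and $\gamma_t'$ separately, it substitutes $z=\gamma_t(x)$ into \eqref{Vtdef}, changes the integration variable to $y\in[0,1]$ via $w=\gamma_t(y)$, and uses \eqref{Rtdef} to turn the equilibrium density into the $t$-independent semicircle weight. This yields the closed formula
\[
V_t'(\gamma_t(x))\,\gamma_t'(x)\;=\;8\Big(x-\tfrac12\Big)\;-\;\frac{8}{\pi}\int_0^1\sqrt{y(1-y)}\,\Big(\frac{\gamma_t'(x)}{\gamma_t(y)-\gamma_t(x)}-\frac{1}{y-x}\Big)\,\dd y,
\]
in which the leading term is manifestly $t$-independent and the remaining integrand is the smooth, uniformly bounded ``reparametrisation correction'' $\tfrac{\gamma_t'(x)}{\gamma_t(y)-\gamma_t(x)}-\tfrac{1}{y-x}$. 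This avoids any appeal to $S_t$ and makes the structure of the bound transparent, at the cost of a short derivation. Your approach, by contrast, is more modular---it simply reads off the bound from the lemmas already in hand---and would adapt immediately to higher $\mc{W}^{\infty}_p$ norms of $V_t'\circ\gamma_t\cdot\gamma_t'$ without further computation.
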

\begin{proof}
Combining (\ref{Rtdef}) and (\ref{Vtdef}) we find the formula
\begin{align*}
V^\prime_t(\gamma_t(x)) \gamma^\prime_t(x) = 8\left(x - \frac{1}{2}\right) \,  -  \, \frac{8}{\pi}\int_0^1 \sqrt{y(1-y)} \left( \frac{\gamma^\prime_t(x)}{\gamma_t(y) - \gamma_t(x)} - \frac{1}{y-x} \right)\, \dd y \, .
\end{align*}
The claim follows from the observation that $\frac{\gamma^\prime_t(x)}{\gamma_t(y) - \gamma_t(x)} - \frac{1}{y-x}$ is bounded by a constant independent of $t \in \intff{0}{1}$.

\end{proof}

The next proposition shows that upon choosing $V_t$ as described above and considering the associated external field $\varphi_t = \Re (V_t)$, the natural candidate for the equilibrium measure of $\op{I}_{\Sg_t}$
with $\Sigma_t = \gamma_t\big( \intff{-\epsilon}{1+\epsilon} \big)$
 indeed yields the equilibrium measure.

\begin{proposition} Let
\begin{align}\label{tequilmeas}
\dd \mu_{\Sg_t} (z) \overset{\mathrm{def}}{=} \mathbbm{1}_{  \sg_t }(z)\frac{1}{\i\pi}\sqrt{R_t(z)}_+ \, \dd z  \qquad \mathrm{with} \qquad \sg_t \, = \, \gamma_t\big(\intff{0}{1}\big)
\end{align}
and $\Sigma_t = \gamma_t\big( \intff{-\epsilon}{1+\epsilon} \big)$. $ \mu_{\Sg_t}$ is the equilibrium measure of the energy functional $\op{I}_{\Sg_t}$
in the external field $\varphi_t = \Re (V_t)$.  Furthermore, $\Sigma_t$  is an $S$-curve in the external field $\varphi_t$.
\end{proposition}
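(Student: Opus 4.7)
The plan is to apply the Frostman-Euler-Lagrange characterisation of Lemma \ref{frostmanlemma} to $\Sigma_t$ and $\varphi_t = \Re(V_t)$, and simultaneously deduce the $S$-property from the complex version \eqref{EulerLagrange}. First, I would check that $\mu_{\Sigma_t}$ defined by \eqref{tequilmeas} is a probability measure: by the change of variables $z = \gamma_t(x)$ and the identity \eqref{Rtdef}, its pullback under $\gamma_t$ is exactly the semicircle law $\dd \nu_{\mathrm{sc}}$ on $\intff{0}{1}$ of Remark \ref{semicircleremark}, which has total mass one. Moreover $\mu_{\Sigma_t}$ has a continuous density with respect to arc length on $\sg_t$ (smooth except for square-root vanishing at $\zeta_1,\zeta_2$), so the conclusion of Lemma \ref{frostmanlemma} can be strengthened to equality everywhere on $\sg_t$.

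Next, the complex Euler-Lagrange identity \eqref{EulerLagrange} with $(\Ga_{\e{eq}},R,\mu_{\Ga_{\e{eq}}})$ replaced by $(\Sigma_t,R_t,\mu_{\Sigma_t})$ holds directly by construction. Indeed, the definition \eqref{Vtdef} reads $V_t'(z) = \sqrt{R_t(z)} - 2\pi \i\, \mathsf{C}[\mu_{\Sigma_t}](z)$ for $z \in \Cx \setminus \sg_t$, where I used that $\dd \mu_{\Sigma_t}(w) = \tfrac{1}{\i\pi}\sqrt{R_t(w)}_+ \, \dd w$. Averaging the $+$ and $-$ boundary values on $\sg_t$ and using that $\sqrt{R_t(z)}_+ + \sqrt{R_t(z)}_- = 0$ (by the prescription of the cut) yields
\begin{align*}
\pi \i \, \mathsf{C}[\mu_{\Sigma_t}]_+(z) + \pi \i \, \mathsf{C}[\mu_{\Sigma_t}]_-(z) + V_t'(z) = 0, \qquad z \in \sg_t \setminus \{\zeta_1,\zeta_2\}.
\end{align*}
Multiplying by $\gamma_t'$, integrating from $\gamma_t^{-1}(z)$ to some reference point along $\intff{0}{1}$, and taking real parts gives $U[\mu_{\Sigma_t}](z) + \varphi_t(z) = C_t$ on $\sg_t$ for some constant $C_t$, which is the Frostman equality \eqref{Frostman2}. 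Taking imaginary parts (i.e. using the normal rather than tangential derivative) gives the $S$-curve condition \eqref{Scurve} at every point of $\sg_t$ away from $\{\zeta_1,\zeta_2\}$.

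The last step, and the main technical point, is to upgrade the equality to the Frostman inequality $U[\mu_{\Sigma_t}] + \varphi_t \geq C_t$ on $\Sigma_t \setminus \sg_t = \gamma_t\big(\intff{-\epsilon}{1+\epsilon}\setminus\intff{0}{1}\big)$. For this I would mimic Proposition \ref{steepestdescent}: introduce $\Phi_{t;\mathrm{eff}}(z) = \int_{\zeta_1}^z \sqrt{R_t(w)}\,\dd w$ and note that a straightforward variant of \eqref{ecriture decomposition canonique pour racine de R} - \eqref{definition racine avec coupure sur une courbe prescrite} is provided by Lemma \ref{Lemma representation poiur sqrt de Rt}, so that $\Phi_{t;\mathrm{eff}}$ has a $3/2$-power behaviour at $\zeta_1,\zeta_2$ analogous to $\Phi_{\Ga_{\e{eq}};\mathrm{eff}}$. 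Defining $\Theta_{t,+}(z) = \int_{\zeta_1}^z \tfrac{1}{\i\pi}\sqrt{R_t(w)}_+\,\dd w$, identity \eqref{Rtdef} yields $\Theta_{t,+}(\gamma_t(x)) = \int_0^x \tfrac{8}{\pi}\sqrt{s(1-s)}\,\dd s = \Psi_+(x)$, i.e. $\gamma_t = \Theta_{t,+}^{-1}\circ \Psi_+$ exactly as in \eqref{gammadef}. Then for $z = \gamma_t(x)$ with $x \in \intff{-\epsilon}{1+\epsilon}\setminus\intff{0}{1}$ the same computation as in Proposition \ref{steepestdescent} applies verbatim:
\begin{align*}
[\Phi_{t;\mathrm{eff}}]_+(z)+[\Phi_{t;\mathrm{eff}}]_-(z) = \i \pi \big( \Psi_+(x) + \Psi_-(x) \big) = \pm 16 \Int{\{0 \;\mathrm{or}\;1\}}{x} \sqrt{w(w-1)}\,\dd w > 0.
\end{align*}
Taking real parts and recalling $\varphi_{t;\mathrm{eff}} = \Re\, \Phi_{t;\mathrm{eff}}$, this gives the desired strict inequality $\varphi_t(z) + U[\mu_{\Sigma_t}](z) - C_t > 0$, which is \eqref{Frostman1}. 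The uniqueness in Lemma \ref{frostmanlemma} then identifies $\mu_{\Sigma_t}$ as the equilibrium measure, and the $S$-property follows from the imaginary part as outlined. The only subtlety I anticipate is checking that the local $3/2$-singularity structure of $\Theta$ and $\Psi$ transfers uniformly in $t \in \intff{0}{1}$; this is ensured by Lemma \ref{curvelowerbound} together with the joint analyticity of $S_t$ established in Lemma \ref{Lemma representation poiur sqrt de Rt}.
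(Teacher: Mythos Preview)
Your proposal is correct and follows essentially the same approach as the paper: verify the complexified Euler--Lagrange equation directly from the definition \eqref{Vtdef} of $V_t$, then reduce the Frostman inequality on $\Sigma_t\setminus\sg_t$ to the $t$-independent computation of Proposition~\ref{steepestdescent} via the key identity \eqref{Rtdef}, which makes $[\Phi_{\Sg_t;\mathrm{eff}}]_+(\gamma_t(x))$ independent of $t$. The paper's write-up is slightly more compressed (it invokes \eqref{Rtdef} directly rather than recasting it as $\gamma_t=\Theta_{t,+}^{-1}\circ\Psi_+$), and your closing remark about uniformity in $t$ of the $3/2$-singularity structure is not actually needed for this proposition, but otherwise the arguments coincide.
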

\begin{proof}
By construction $\dd \mu_{\Sg_t} $ is a probability measure that solves the complexified Euler-Lagrange equations (\ref{EulerLagrange}).
One thus needs to check the positivity properties of the associated effective potential.

Define $\Phi_{\Sg_t; \mathrm{eff}}(z) \; = \; \Int{ \zeta_1 }{ z } \sqrt{R_t(s)} \cdot \dd s $ with $z$ belonging to a small open neighbourhood of
$\Sg_t$ split in two by $\ga_t\big( \intff{ - \eps^{\prime} }{ 1 + \eps^{\prime} } \big)$ with $\eps^{\prime}>\eps$. The path of integration joining $\zeta_1$ to $z$
is always located in the side of $\ga_t\big( \intff{ - \eps^{\prime} }{ 1 + \eps^{\prime} } \big)$ which contains $z$. Then, one has that
$$\Big[\Phi_{\Sg_t; \mathrm{eff}}\Big]_+(\gamma_t(x)) \,= \, \Int{0}{x} \sqrt{ R_t\big(\ga_t(s)\big) }_+ \ga_t^{\prime}(s) \dd s
\; = \; 8\Int{0}{x} s^{\f{1}{2}}_+ (s-1)^{\f{1}{2}}_+ \dd s $$
is independent of $t \in \intff{0}{1}$. Since, by construction,
$\Re\Big( \Phi_{\Sg_t; \mathrm{eff}}(z) \Big) \, = \, \varphi_{ \Sg_t; \e{eff} }(z)$, one gets for $x\leq 0$ that
$$ \varphi_{\Sg_t;\e{eff}}(\gamma_t(x)) \; = \;     8    \Int{x}{0} \sqrt{y(y-1)} \, \dd y \geq 0 \qquad \e{and} \qquad
\varphi_{\Sg_t;\e{eff}}(\gamma_t(x)) \; = \;     8    \Int{1}{x} \sqrt{y(y-1)} \, \dd y  \geq 0 \, , $$
for $x\geq 1$.
Hence, by Lemma \ref{frostmanlemma}, $\dd \mu_{\Sg_t}$ is the unique equilibrium measure on $\Sigma_t$ in external field $\varphi_t$.

\end{proof}

\begin{definition}[Contours]\label{contours} Let $\Gamma_1$ be the contour constructed in Lemma \ref{contourexistence}, $\Gamma_0$
be the infinite straight line passing through $\zeta_1$ and $\zeta_2$, and let
$$\Sigma_t \overset{\mathrm{def}}{=} \gamma_t\big( \intff{-\epsilon}{1+\epsilon} \big)\, .$$
Thus, in particular, $\Sigma_0 \subset \Gamma_0$ and $\Sigma_1 \subset \Gamma_1$.
\end{definition}

\section{Concentration of measure}\label{concentrationsection}

Throughout this section, we focus on a curve  $\eta_t : \mathcal{J}  \longrightarrow \mathbb{C}$ with $\mc{J}$ a compact interval
as introduced in \eqref{contourdef}. The curve $\eta_t$ is assumed to be simple, \textit{i.e.} non-self-intersecting, and $\mc{C}^\infty$.
The curve depends on an auxiliary parameter $t\in \intff{0}{1}$ in a smooth way and such that there exist constants $C_{\eta}>0$ and $\{ C_p \}_{p\geq 0}$ such that
\beq
\|  \eta_t \|_{ \mc{W}^{\infty}_p(\mc{J}) } \, \leq  \,  C_p  \qquad \e{and} \qquad C_{\eta} \, < \, \Big| \f{ \eta_t(x) - \eta_t(y) }{ x - y } \Big| \, < \, C_{\eta}^{-1} \, ,
\label{ecriture bornitude norme eta t et ctes lipschitz}
\enq
this uniformly in $t \in \intff{0}{1}$. Moreover, we shall assume that the curve admits an extension to a strictly larger interval $\wt{\mc{J}}$
\textit{i.e.} $\mc{J} \subset O \subset \wt{\mc{J}}$ with $O$ open in $\R$, satisfying the same properties.

In later sections, we will be basically interested in two cases. The first one when $\eta_{t}=\ga$ for $t \in \intff{0}{1}$
where $\ga$ is the curve associated with $\Ga_{\e{eq}}$ as provided by Proposition \ref{contourexistence} and $\mc{J}= \intff{-K}{K}$ with $K$ very large.
The second one when $\eta_{t}=\ga_t$ for $t \in \intff{0}{1}$
where $\ga_t$ is the curve defined in \eqref{contourdef} and $\mc{J}=\intff{-\eps}{1+\eps}$ .

We shall consider the real model (\ref{realmodel}) associated to $\eta_t$:
\beq
\dd \mathbb{P}_{N,\eta_t}(\mathbf{x})  \overset{\mathrm{def}}{=}  \frac{1}{\op{Z}_{N,\eta_t}[W_t] }
\pl{ i < j}{N}|\eta_t(x_i) - \eta_t(x_j)|^\beta  \cdot \pl{k=1}{N}\Big\{ |\eta^\prime_t(x_k)| \cdot  \ex{-N \beta \widetilde{\varphi}_t(x_k) }  \Big\} \cdot \dd \mathbf{x}\, ,  \qquad   \mathbf{x} \in \mathcal{J}^N
\label{definition modele reel avec courbe eta t et pot Wt}
\enq
where
\beq
\op{Z}_{N,\eta_t}[W_t] =   \Int{\mathcal{J}^N}{} \pl{  i < j }{N} |\eta_t(x_i) - \eta_t(x_j)|^\beta   \cdot
\pl{k=1}{N}\Big\{ |\eta_t^\prime(x_k)| \cdot  \ex{-N \beta \widetilde{\varphi}_t(x_k) }  \Big\}  \cdot \dd \mathbf{x} \, .
\enq
Above and in the following, let us fix the shorthand notation $\widetilde{\varphi}_t \, = \Re \big( W_t \circ \gamma_t \big)$.
For the purpose of this section, the potential $W_t$ will be an analytic function in the neighbourhood of the curve $\eta_t(\mc{J})$
that is also jointly analytic in $t\in \intff{0}{1}$. In particular, it will be such that
\beq
\norm{ W_t }_{\mc{W}^{\infty}_p( \eta_t(\mc{J})) } < C_p \,
\label{ecriture borne sur norme W infty du potentiel}
\enq
for some $C_p$ that is uniform in $t\in \intff{0}{1}$. Eventually, we will be interested in two cases: $W_t=V$, with $V$ the polynomial
of \eqref{definition pot polynomial interet}, and $W_t=V_t$ where $V_t$ is the interpolating potential introduced in \eqref{interpolatingpotential2}.
We will show in Section \ref{maxeigsection} that the compact model where $(\eta_t, W_t)=(\ga,V)$ is a good approximation for
the original model and that it allows one to access to the relevant part of the large-$N$ behaviour of the
interpolation in $t$ between the partition function subordinated to original potential $V$ and one
subordinate to a quadratic one.  The method of this section follows closely Section 4.2 of \cite{Gu19}.

The problem on the interval $\mc{J}$ naturally leads to the energy functional
\begin{align}
\begin{split}\label{Jfunctional}
&\mf{I}_{\mc{J}} : \mathcal{P}(\mathcal{J}) \longrightarrow (-\infty,+\infty] \\
&\mf{I}_{\mc{J}}[\nu] \overset{\mathrm{def}}{=} \int_{\mathcal{J}^2} \left(\ln \frac{1}{|\eta_t(x) - \eta_t(y)|} +  \widetilde{\varphi}_t(x) +  \widetilde{\varphi}_t(y) \right) \, \dd \nu(x)  \dd \nu(y).
\end{split}
\end{align}
$\mf{I}_{\mc{J}} $ is essentially the same functional as $\op{I}_{\Om_t}$, with $\Om_t\, = \, \eta_t\big( \mc{J}\big)$
since the two are related by pulling back measures from the curve to $\mc{J}$ under $ \eta_t$, hence by Lemmata \ref{goodrate} and \ref{continuousdensity},
$\mf{I}_{\mc{J}} $ has a unique compactly supported minimiser given by the pull back measure $ \wt{\mu}_{ \Om_t } \, = \, \eta_t^{\#}[ \mu_{ \Om_t }]$. In particular, when
$(\eta_t, W_t)\, =\, (\ga_t, V_t)$, the latter is readily seen to be given by $\dd \nu_{\mathrm{sc}}(x) = \frac{8}{\pi}\mathbbm{1}_{ \intff{0}{1} }(x) \sqrt{x(1-x)} \, \dd x$ (recall Remark \ref{semicircleremark}).
Throughout this section we will make the additional assumption that $\mc{J}$ strictly contains $\e{supp} \big[  \wt{\mu}_{ \Om_t }  \big]$, \textit{i.e.} that there exists an open set $O$ such that
$ \e{supp} \big[ \wt{\mu}_{ \Om_t }  \big] \subset  O  \subset \mc{J}$.

\begin{lemma}\label{logquantiles}
Let $\nu \in \mathcal{P}(\mathcal{J})$ be a compactly supported Borel probability measure on $\mathbb{R}$, absolutely continuous with respect to Lebesgue measure
and with bounded density.  Let
\begin{align*}
\xi_k \, =\,  \sup\left\{ s \in \mathbb{R} \, : \, \nu\big[ \intof{-\infty}{s} \big] \leq \frac{k}{N} \right\} & & k = 1, \dots, N-1
\end{align*}
be the $\frac{1}{N}$ quantiles.  Let $\xi_0 = \inf \mathrm{supp}\, \nu$ and $\xi_N = \sup \mathrm{supp}\, \nu$. Then
$$\frac{1}{N^2} \sum_{\substack{i,j=1 \\ i \neq j}}^N \ln |\eta_t(\xi_i) - \eta_t(\xi_j)| \, = \,
                    \Int{\mathcal{J}^2}{} \ln |\eta_t(x) - \eta_t(y)| \, \dd \nu(x) \, \dd \nu(y) + \e{O}\left( \frac{\ln N}{N} \right).$$
The remainder in uniform in $t \in \intff{0}{1}$.
\end{lemma}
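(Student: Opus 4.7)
The strategy is to reduce to the classical one-dimensional Riemann-sum estimate for the logarithmic kernel by exploiting the bi-Lipschitz character of $\eta_t$. Decompose
\begin{equation*}
\ln\big|\eta_t(x)-\eta_t(y)\big| \; = \; \ln|x-y| \; + \; h_t(x,y), \qquad h_t(x,y) \, \overset{\mathrm{def}}{=} \, \ln\bigg| \frac{\eta_t(x)-\eta_t(y)}{x-y}\bigg|.
\end{equation*}
The bi-Lipschitz bound in \eqref{ecriture bornitude norme eta t et ctes lipschitz} keeps the argument of the logarithm in $h_t$ bounded uniformly away from $0$ and $\infty$, so $h_t$ extends to a $\mc{C}^\infty$ function on $\mc{J}^2$; moreover, since the derivatives of $\eta_t$ and the bi-Lipschitz constant are uniform in $t \in \intff{0}{1}$, every derivative of $h_t$ is bounded uniformly in $t$ as well.

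First I would treat the smooth contribution $h_t$. Set $I_i = \intff{\xi_{i-1}}{\xi_i}$, so that $\nu(I_i) = 1/N$. On each product box $I_i \times I_j$ the error in replacing $h_t$ by $h_t(\xi_i,\xi_j)$ is bounded by $\|\nabla h_t\|_\infty \cdot (|I_i| + |I_j|)$. Integrating against $d\nu \otimes d\nu$ (of mass $1/N^2$) and summing, and using that $\sum_i |I_i|$ is bounded by the diameter $L$ of $\mathrm{supp}\,\nu$, yields a total error of $O(1/N)$ uniform in $t$.

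The main work is the singular piece $\ln|x-y|$, handled by splitting according to distance from the diagonal. The bounded density of $\nu$ (say by $M$) gives the quantile-spacing estimate $\xi_j - \xi_i \ge (j-i)/(MN)$. For the \emph{off-diagonal} regime $|i-j| \ge 2$, the derivative of $\ln|x-y|$ on $I_i \times I_j$ is at most $MN/(|i-j|-1)$, so the Riemann-sum error per box, after multiplying by $1/N^2$, is of order $(|I_i|+|I_j|)/((|i-j|-1)N)$. Re-indexing the sum by $k=|i-j|$, using $\sum_i(|I_i|+|I_{i+k}|)\leq 2L$, and truncating the resulting harmonic series at $k \leq N$ produces the desired $O(\ln N / N)$. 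For the \emph{near-diagonal} regime $|i-j| \le 1$, there are $O(N)$ pairs, and for each such pair $|\ln|\eta_t(\xi_i) - \eta_t(\xi_j)||$ is at most $O(\ln N)$ by the bi-Lipschitz bound combined with $|\xi_i - \xi_{i-1}| \geq 1/(MN)$; symmetrically the integral $\bigl|\iint_{I_i \times I_j} \ln|x-y| \, d\nu \otimes d\nu \bigr|$ is of order $|I_i||I_j|\ln N$, so summing these contributions and dividing by $N^2$ also gives $O(\ln N /N)$.

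Uniformity in $t$ is automatic, since every constant appearing above depends only on $C_\eta$, on the uniform norms $\|\eta_t\|_{\mc{W}^\infty_p(\mc{J})}$ from \eqref{ecriture bornitude norme eta t et ctes lipschitz}, and on the fixed density bound of $\nu$. The main technical obstacle is controlling the logarithmic singularity near the diagonal: one must show that the failure of the discrete sum to include $i=j$ terms is compensated by the lower bound $|I_i| \gtrsim 1/N$ on quantile spacing, which keeps each near-diagonal term only logarithmically large. Once this bookkeeping is carried out, the bi-Lipschitz decomposition effectively reduces the argument to the well-known Riemann-sum estimate for the logarithmic kernel on the real line.
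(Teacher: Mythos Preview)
Your strategy is the paper's: split off the smooth part $h_t$ (called $\Lambda$ there) and reduce to $\ln|x-y|$, then separate the latter into a near-diagonal regime $|i-j|\leq 1$ and an off-diagonal regime $|i-j|\geq 2$. The smooth and off-diagonal estimates match the paper's (the paper phrases the off-diagonal step via the inequality $|\ln|a|-\ln|b||\leq|a-b|(1/|a|+1/|b|)$, which is your gradient bound in disguise).

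There is, however, a genuine slip in your near-diagonal \emph{integral} estimate. You bound $\bigl|\iint_{I_i\times I_j}\ln|x-y|\,d\nu\otimes d\nu\bigr|$ by $O(|I_i||I_j|\ln N)$ and then ``sum and divide by $N^2$''. But the integral against $d\nu\otimes d\nu$ is already correctly normalised --- there is no further $1/N^2$ to apply --- and without it the sum $\sum_{|i-j|\leq 1}|I_i||I_j|\ln N$ is only $O(\ln N)$. The hypotheses give $|I_i|\geq 1/(MN)$ but \emph{no upper bound} on $|I_i|$, since the density is assumed bounded only from above; so $\sum_{|i-j|\leq 1}|I_i||I_j|$ need not be $O(1/N)$. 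The repair (which is what the paper does) is to use the density bound on only one variable and keep $\nu(I_i)=1/N$ on the other:
\[
\iint_{I_i\times I_j}\bigl|\ln|x-y|\bigr|\,d\nu(x)\,d\nu(y)\;\leq\; K\int_{I_i}d\nu(x)\int_0^{|I_i|+|I_j|}|\ln u|\,du\;\leq\;\frac{C}{N}\,(|I_i|+|I_j|)\ln N,
\]
after which $\sum_i|I_i|\leq L$ closes the sum over the $O(N)$ near-diagonal pairs at $O(\ln N/N)$.
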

\begin{proof}
Let us write $\ln|\eta_t(x) - \eta_t(y)| = \Lambda(x,y) + \ln|x - y|$ where $\Lambda(x,y) = \ln\left|\frac{\eta_t(x) - \eta_t(y)}{x-y} \right|$ for $x \neq y$ and $\Lambda(x,x) = \ln|\eta^\prime_t(x)|$.
Note that $\Lambda$ is a $\mc{C}^\infty$ function and, by Lemma \ref{curvelowerbound}, all of its derivatives may be bounded independently of $t \in \intff{0}{1}$.
Hence the only thing we need to show is that
$$\frac{1}{N^2} \sul{ \substack{i,j=1 \\ i \neq j} }{N} \ln |\xi_i - \xi_j| \, =\,  \Int{ \mathcal{J}^2 }{} \ln |x - y| \, \dd \nu(x) \, \dd \nu(y) + \e{O}\left( \frac{\ln N}{N} \right).$$
Let us write
\begin{align*}
\Int{\mathcal{J}^2}{}  \ln|x - y| \, \dd \nu(x)  \dd\nu(y)
&= \Bigg\{ \sul{ \substack{i,j=1 \\ |i-j| \geq 2} }{N} \, + \,  \underbrace{\sul{\substack{i,j=1 \\ |i-j| \leq 1}}{N} }_{=: (A)} \Bigg\}
\hspace{-4mm}  \Int{ \substack{ [\xi_{i-1},\xi_i]  \\ \quad \times [\xi_{j-1},\xi_j] } }{} \hspace{-5mm}  \ln|x - y|  \, \dd \nu(x) \, \dd \nu(y) \;.
\end{align*}
Contribution $(A)$ corresponds to the tridiagonal terms. By hypothesis, $\mathrm{supp} \, \nu \subset \intff{-M}{M}$ and $\dd \nu(x) \leq K \, \dd x$ for some $K > 0$.
This implies that $\xi_i - \xi_{i-1}\geq \frac{1}{KN}$ and that $\ln\frac{|x - y|}{2M} \leq 0$. Hence decomposing
$\ln|x - y| \, = \,  \ln\frac{|x - y|}{2M} +  \ln (2M)$ leads to the bounds
\begin{align*}
\frac{3N-2}{N^2}\ln(2M) \geq (A) \geq   \frac{3N-2}{N^2}\ln(2M)    \, - \, (A^{\prime}) \, ,
\end{align*}
with
\beqa
 (A^{\prime}) & =  &\sul{ i =1 }{N}  \Int{ \xi_{i-1} }{\xi_i } \hspace{-2mm}  \ln\frac{2M}{|x - y|} \,  \dd \nu(x) \dd \nu (y)
\, + \, 2 \sul{ i =1 }{N-1}  \Int{ \xi_{i-1} }{\xi_i } \hspace{-2mm}   \dd \nu(x)  \hspace{-2mm}  \Int{ \xi_{i} }{\xi_{i+1} }\hspace{-2mm}  \dd \nu (y) \,  \ln\frac{2M}{|x - y|} \nonumber \\
& \leq & 2 K \sul{ i =1 }{N}  \Int{ \xi_{i-1} }{\xi_i } \hspace{-2mm}   \dd \nu(x) \Int{0}{ \xi_i-\xi_{i-1} } \hspace{-1mm} \dd y  \ln\frac{2M}{y}
\, + \, 2K \sul{ i =1 }{N-1}  \Int{ \xi_{i-1} }{\xi_i } \hspace{-2mm} \dd \nu(x)  \Int{ \xi_{i} }{\xi_{i+1} } \hspace{-2mm} \dd y  \ln\frac{2M}{y-\xi_i} \nonumber \\
& \leq & \f{ 4 K }{ N } \big[ \ln(2M) + 1 + \ln K + \ln N \big] 2M \;.
\nonumber
\eeqa
Further,  the bounds $\frac{1}{KN}\leq \xi_{i}-\xi_{i-1} \leq 2M$ lead to the estimates  $\sum_{i,j=1, \,   |i - j| = 1}^N \ln|\xi_i - \xi_j| = \e{O}(N \ln N )$.

Finally, the elementary inequality $\left| \ln|x| - \ln|y| \right| \leq |x-y| \left( \frac{1}{|x|}+\frac{1}{|y|} \right)$ allows one to estimate the remaining sums:
\begin{align*}
&\sul{\substack{i,j=1 \\|i - j | \geq 2}}{N} \; \Int{\xi_{i-1}}{\xi_i} \hspace{-1mm} \dd\nu(x) \Int{\xi_{j-1}}{\xi_j} \hspace{-1mm} \dd\nu(y)  \big| \ln|\xi_i - \xi_j| - \ln|x-y| \big| \, \\
&\quad \leq \frac{K}{N}\sum_{\substack{i,j=1 \\|i - j | \geq 2}}^N (\xi_i - \xi_{i-1}+\xi_j-\xi_{j-1}) \left( \frac{1}{|i-j|-1}+\frac{1}{|i-j|}\right) = \e{O}\left( \frac{\ln N}{N} \right).
\end{align*}
\end{proof}
Next, let us establish a lower bound on the partition function.
\begin{lemma}[Lower bound on partition function]\label{partitionlower}
There exists a constant $C > 0$ independent of $t \in \intff{0}{1}$ such that
$$\ln \op{Z}_{N,\eta_t}[W_t] \geq - \frac{\beta N^2}{2} \op{I}_{\Om_t}\big[\mu_{\Om_{t}}\big] - CN \ln N \;. $$
Above, $\Om_t\, = \, \eta_t\big( \mc{J}\big)$ while the curve $\eta_t$ and the potential $W_t$ satisfy to the hypotheses stated in the beginning of the section.
\end{lemma}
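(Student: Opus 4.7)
The approach is classical: lower-bound the integral by restricting integration to a small ordered neighbourhood of a deterministic test configuration placed at the quantiles of the equilibrium measure, and then convert the resulting Riemann-sum-type expressions into the continuum energy via Lemma \ref{logquantiles}.

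\textbf{Setup.} Let $\wt{\mu}_t := \eta_t^{\#}\big[\mu_{\Om_t}\big]$ be the pullback to $\mc{J}$ of the equilibrium measure, where $\Om_t=\eta_t(\mc{J})$. By Lemma \ref{continuousdensity} applied to $\mu_{\Om_t}$ combined with the bi-Lipschitz bound \eqref{ecriture bornitude norme eta t et ctes lipschitz}, $\wt{\mu}_t$ has a continuous density which vanishes like a square root at the endpoints of its support and is bounded above on $\mc{J}$, uniformly in $t \in \intff{0}{1}$. Let $\xi_0 < \xi_1 < \cdots < \xi_N$ be the $k/N$-quantiles as in Lemma \ref{logquantiles}. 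The upper bound on the density forces $\xi_{k+1} - \xi_k \geq c/N$ in the bulk of $\mathrm{supp}\,\wt{\mu}_t$, with $c>0$ independent of $t$.

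\textbf{Restriction and pointwise estimate.} Introduce the pairwise disjoint intervals $B_k := [\xi_k,\xi_k + N^{-3}] \subset \mc{J}$ (disjoint for $N$ large by the previous step) and bound
$$
\op{Z}_{N,\eta_t}[W_t] \;\geq\; \Int{ \prod_k B_k }{} \pl{i<j}{N} |\eta_t(x_i) - \eta_t(x_j)|^\beta \cdot \pl{k=1}{N} \Big\{ |\eta_t^{\prime}(x_k)|\, \ex{-N\beta \widetilde{\varphi}_t(x_k)} \Big\} \,\dd \mathbf{x},
$$
the ordering of the particles being automatic. On this domain, \eqref{ecriture bornitude norme eta t et ctes lipschitz} and \eqref{ecriture borne sur norme W infty du potentiel} give uniformly in $t$: $|\eta_t^{\prime}(x_k)| \geq C_\eta$; $\widetilde{\varphi}_t(x_k) \leq \widetilde{\varphi}_t(\xi_k) + \e{O}(N^{-3})$ by Lipschitzianity; $|\eta_t(x_i) - \eta_t(x_j)| \geq (1 - \e{O}(N^{-2}))\,|\eta_t(\xi_i) - \eta_t(\xi_j)|$ whenever $|i-j|\geq 2$, since $|\eta_t(\xi_i)-\eta_t(\xi_j)|\geq 2cC_\eta/N$ while the displacement error is at most $2C_\eta^{-1}N^{-3}$; and $|\eta_t(x_i) - \eta_t(x_{i+1})| \geq C_\eta(\xi_{i+1} - \xi_i - 2N^{-3})$ for adjacent indices.

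\textbf{Logarithmic accounting.} Taking logarithms, the non-adjacent interaction sum satisfies
$$
\beta \sul{ \substack{i<j \\ |i-j|\geq 2} }{} \ln |\eta_t(\xi_i)-\eta_t(\xi_j)| + \e{O}(1)
\;=\; \f{\beta N^2}{2} \Int{\mc{J}^2}{} \ln|\eta_t(x) - \eta_t(y)| \, \dd \wt{\mu}_t(x)\, \dd \wt{\mu}_t(y) + \e{O}(N \ln N),
$$
where the $\e{O}(N\ln N)$ remainder absorbs the excluded adjacent and diagonal terms, both bounded using the spacing estimate and $|\eta_t(\xi_i)-\eta_t(\xi_j)|\leq C_\eta^{-1}\mathrm{diam}(\mc{J})$, and the second equality is Lemma \ref{logquantiles}. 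The adjacent interaction contribution is $\geq -\e{O}(N\ln N)$ by the same spacing bound. A standard Riemann-sum estimate using $\wt{\mu}_t([\xi_{k-1},\xi_k]) = 1/N$ and the uniform Lipschitz bound on $\widetilde{\varphi}_t$ yields $\sum_k \widetilde{\varphi}_t(\xi_k) = N \int \widetilde{\varphi}_t \, \dd \wt{\mu}_t + \e{O}(1)$; the Jacobians contribute $\geq N \ln C_\eta$, and $\ln \mathrm{vol}\big( \prod_k B_k \big) = -3N \ln N$. Recognising
$$
-\f{\beta}{2} \op{I}_{\Om_t}[\mu_{\Om_t}] \;=\; \f{\beta}{2} \Int{\mc{J}^2}{} \ln|\eta_t(x)-\eta_t(y)|\, \dd\wt{\mu}_t(x)\, \dd\wt{\mu}_t(y) \;-\; \beta \Int{\mc{J}}{} \widetilde{\varphi}_t \, \dd \wt{\mu}_t,
$$
and assembling the contributions yields the announced bound with total error $\e{O}(N \ln N)$.

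\textbf{Main obstacle.} The only genuine technical point is the \emph{uniformity in $t\in\intff{0}{1}$} of the constants, especially the lower bound $\xi_{k+1}-\xi_k \gtrsim 1/N$ on bulk quantile spacings, which requires a uniform positive lower bound on the density of $\wt{\mu}_t$ on compact subsets of $\mathrm{supp}\,\wt{\mu}_t$. For the two cases actually used in the sequel this is immediate: when $(\eta_t,W_t) = (\ga,V)$ the pulled-back measure does not depend on $t$, and when $(\eta_t,W_t) = (\ga_t,V_t)$ it coincides with the fixed semicircle \eqref{definition mesure semi cercle sur 0 1} by Remark \ref{semicircleremark}, so the density profile (and hence every quantile spacing) is $t$-independent by construction.
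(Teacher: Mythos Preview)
Your proof is correct and follows essentially the same strategy as the paper: restrict the integral to small boxes around the $1/N$ quantiles of the pulled-back equilibrium measure, use the bi-Lipschitz bound on $\eta_t$ together with the density upper bound to control the interaction and potential terms pointwise, and then invoke Lemma~\ref{logquantiles} to convert the discrete sums into the continuum energy with an $\e{O}(N\ln N)$ remainder. The paper uses boxes of side $N^{-2}$ rather than $N^{-3}$ and treats all pairs $i\neq j$ uniformly (without separating adjacent from non-adjacent), but these are cosmetic differences; your explicit discussion of the uniformity in $t$ via the $t$-independence of the pulled-back equilibrium measure in the two cases of interest is a useful clarification that the paper leaves implicit.
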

\begin{proof}
As in Lemma \ref{logquantiles}, let $\xi_0, \dots, \xi_N$ be the $\frac{1}{N}$ quantiles of the equilibrium measure $ \wt{\mu}_{ \Om_t } \, = \, \eta_t^{\#}[ \mu_{ \Om_t }]$, the pullback to the real line
of $\mu_{ \Om_t }$. Taken $C_{\eta}$ as given in \eqref{ecriture bornitude norme eta t et ctes lipschitz}, one has
\begin{align*}
\op{Z}_{N,\eta_t}[V_t] %&= \int_{\Sigma^N} \prod_{1 \leq i < j \leq N}|z_i - z_j|^\beta e^{-N \beta\sum_{k=1}^N  \varphi(z_k)} \, |d\mathbf{z}| \\
&\geq \;  C_{\eta}^N \hspace{-5mm} \Int{ \| \mathbf{x} - \mathbf{\xi} \|_\infty \leq N^{-2} }{} \pl{ i < j}{ N} |\eta_t(x_i) - \eta_t(x_j)|^\beta \pl{k=1}{N}\ex{-N \beta \widetilde{\varphi}_t(x_k)}\, \dd\mathbf{x}
\end{align*}
where we recall that $\widetilde{\varphi}_t =\Re \big( W_t \circ \eta_t)$.
Then, by  \eqref{ecriture bornitude norme eta t et ctes lipschitz},
$$|\eta_t(x_i) - \eta_t(x_j)| \, \geq \, |\eta_t(\xi_i) - \eta_t(\xi_j)| \left( 1- \frac{2K C_\eta^{-2}}{N} \right)$$
where $\dd\wt{\mu}_{ \Om_t }(x) \leq K \, \dd x$.
Similarly, one has
$$ N \sul{j=1}{N} \widetilde{\varphi}_t(x_k) \, \leq \, N \sul{j=1}{N} \widetilde{\varphi}_t(\xi_k) + C \, ,$$
where we note that by \eqref{ecriture borne sur norme W infty du potentiel} this $C$ may be taken to be independent of $t \in \intff{0}{1}$.
Finally, the resulting integration region has volume $\left( \frac{2}{N^2} \right)^N$. Then, by Lemma \ref{logquantiles}, we obtain the result.
\end{proof}
Moving forward, let us follow Maïda and Maurel-Segala \cite{Maida:2014aa} and define a regularisation of the empirical measure.
\begin{definition}[Regularised empirical measure]\label{regularmeasure}
Let $L_N^{(\mathbf{x})}$ be the empirical measure, \textit{c.f.} \eqref{definition empirical measure},  associated to some vector $\mathbf{x} \in \mathbb{R}^N$.
Since $L_N^{(\mathbf{x})}$ is invariant under permutations of the coordinates of $\mathbf{x}$, we may just as well assume that
$x_1 \leq  \dots \leq  x_N$. Then, we define
\begin{align*}
\widetilde{x}_1 &:= x_1 \\
\widetilde{x}_k &:= \widetilde{x}_{k-1} + \max\left(  x_k - x_{k-1}, N^{-3} \right)  \quad \e{for}  \quad  k = 2, \dots, N \, .
\end{align*}
Next, we define the regularised empirical measure $\overline{L}_N^{(\mathbf{x})} \in \mathcal{P}(\mathbb{R})$ by
\begin{align*}
\dd\overline{L}_N^{(\mathbf{x})} (x) \overset{\mathrm{def}}{=} \frac{1}{N}\sum_{k=1}^N N^{6} \mathbbm{1}_{ \big[ \widetilde{x}_k - \frac{1}{2N^{6}} ;  \widetilde{x}_k + \frac{1}{2N^{6}} \big]}(x) \, \dd x.
\end{align*}
\end{definition}
Now, observe that one may decompose
$$
\Int{}{}  \dd\overline{L}_N^{(\mathbf{x})} (u) \dd\overline{L}_N^{(\mathbf{x})} (v) \ln |\eta_t(u) - \eta_t(v)|\; = \; W_1 + W_2
$$
where
\begin{align}\label{bound1}
W_1 \, = \, \Int{-\f{1}{2}}{\f{1}{2}} \sum_{\substack{i,j=1  \\ i \neq j}}^N \ln |\eta_t(\widetilde{x}_i + N^{-6} u ) - \eta_t(\widetilde{x}_j + N^{-6} v)| \, \dd u \, \dd v \, =\,
\sum_{\substack{i,j=1  \\ i \neq j}}^N  \ln |\eta_t(\widetilde{x}_i ) - \eta_t(\widetilde{x}_j )| + \e{O}\Big( \f{1}{N} \Big)
\end{align}
with a remainder that, owing to  \eqref{ecriture bornitude norme eta t et ctes lipschitz}, is uniform in $t \in \intff{0}{1}$.
Finally, one has
\begin{align}\label{bound2}
\begin{split}
W_2 = \Int{-\f{1}{2}}{\f{1}{2}}  \sum_{i=1}^N \ln |\eta_t(\widetilde{x}_i + N^{-6} u) - \eta_t(\widetilde{x}_i + N^{-6} v)|\, \dd u \, \dd v \geq - C N  \ln N
\end{split}
\end{align}
for some constant $C > 0$ that, owing to Lemma \ref{curvelowerbound}, is uniform in $t \in \intff{0}{1}$.

\begin{lemma}\label{smoothness}
Let $\mathcal{J}$ and $\widetilde{\mathcal{J}}$ be two compact intervals, where $\widetilde{\mathcal{J}}$ is larger on both sides than $\mathcal{J}$,
meaning that there exists an open set $U \subset \mathbb{R}$ such that $\mathcal{J} \subset U \subset \widetilde{\mathcal{J}}$.
Let $\eta_t$ admit an extension to $\widetilde{\mathcal{J}}$ satisfying the same properties and let $\Lambda : \widetilde{\mathcal{J}}^2 \longrightarrow \mathbb{R}$ be given by
\begin{align*}
\Lambda(x,y) &\overset{\mathrm{def}}{=} \begin{cases} \ln \left| \frac{ \eta_t(x)- \eta_t(y)}{x-y}\right| & x,y \in \widetilde{\mathcal{J}}, \, x \neq y \vspace{2mm} \\
\ln|\eta_t^\prime(x)| & x\in \widetilde{\mathcal{J}}, \, x = y.
\end{cases}
\end{align*}
If $\mathbf{x} \in \mathcal{J}^N$ then for $N$ sufficiently large $\widetilde{\mathbf{x}} \in \widetilde{\mathcal{J}}^N$ (recall Definition \ref{regularmeasure})
and furthermore there is a constant $C^\prime > 0$ and $N_0 \in \mathbb{N}$ such that
\begin{align*}
\left| \Lambda(x_i, x_j) - \Lambda(\widetilde{x}_i, \widetilde{x}_j) \right| \leq C^\prime N^{-2} & & \forall i,j \in \intn{1}{N}\, , \quad and \quad N \geq N_0 ,
\end{align*}
with $N_0$ independent of $t \in \intff{0}{1}$.

\end{lemma}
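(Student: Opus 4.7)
The plan is to reduce this claim to two ingredients: a bound on how much the regularisation moves each point, and a uniform (in $t$) Lipschitz bound for $\Lambda$ on $\widetilde{\mathcal{J}}^2$. The estimate $|\Lambda(x_i,x_j) - \Lambda(\widetilde{x}_i,\widetilde{x}_j)| \leq \mathrm{Lip}(\Lambda) \cdot (|x_i - \widetilde{x}_i| + |x_j - \widetilde{x}_j|)$ will then yield the result, provided one checks that for $N$ large enough the regularised points still lie inside $\widetilde{\mathcal{J}}$ so that $\Lambda$ is defined there.

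First I would estimate $|\widetilde{x}_i - x_i|$. By the construction of Definition \ref{regularmeasure}, $(\widetilde{x}_k - \widetilde{x}_{k-1}) - (x_k - x_{k-1}) = \max(x_k - x_{k-1}, N^{-3}) - (x_k - x_{k-1}) \in \intff{0}{N^{-3}}$. Telescoping from $1$ to $i$ and using $\widetilde{x}_1 = x_1$ gives $0 \leq \widetilde{x}_i - x_i \leq (i-1) N^{-3} \leq N^{-2}$. Since $\mathcal{J} \subset U \subset \widetilde{\mathcal{J}}$ with $U$ open, there exists $\delta > 0$ such that every $x \in \mathcal{J}$ is at distance at least $\delta$ from $\mathbb{R}\setminus\widetilde{\mathcal{J}}$. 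Taking $N_0$ such that $N_0^{-2} < \delta$ ensures $\widetilde{\mathbf{x}} \in \widetilde{\mathcal{J}}^N$ for all $N \geq N_0$, uniformly in $t \in \intff{0}{1}$ since the construction of $\widetilde{\mathbf{x}}$ does not involve $t$.

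Next I would argue that $\Lambda$ is smooth on $\widetilde{\mathcal{J}}^2$ with $\mc{C}^1$ bounds independent of $t \in \intff{0}{1}$. Writing $\eta_t(x) - \eta_t(y) = (x-y)\int_0^1 \eta_t^\prime(y + s(x-y))\, \dd s$, the ratio $\Phi_t(x,y) := (\eta_t(x)-\eta_t(y))/(x-y)$ extends to a $\mc{C}^\infty$ function on $\widetilde{\mathcal{J}}^2$ whose derivatives are controlled by $\|\eta_t\|_{\mc{W}^\infty_p(\widetilde{\mc{J}})}$ for appropriate $p$, hence are bounded uniformly in $t$ by \eqref{ecriture bornitude norme eta t et ctes lipschitz}. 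Moreover, the uniform Lipschitz lower bound in \eqref{ecriture bornitude norme eta t et ctes lipschitz} gives $|\Phi_t(x,y)| \geq C_\eta > 0$ on $\widetilde{\mathcal{J}}^2$. Therefore $\Lambda(x,y) = \ln|\Phi_t(x,y)|$ is $\mc{C}^\infty$ on $\widetilde{\mathcal{J}}^2$ and there exists a constant $L > 0$, independent of $t \in \intff{0}{1}$, such that $\|\nabla \Lambda\|_\infty \leq L$ on $\widetilde{\mathcal{J}}^2$.

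Combining these two ingredients by the mean value theorem on the convex set $\widetilde{\mathcal{J}}^2$ yields, for $N \geq N_0$,
\begin{equation*}
\bigl| \Lambda(x_i,x_j) - \Lambda(\widetilde{x}_i, \widetilde{x}_j) \bigr|
\leq L \bigl( |x_i - \widetilde{x}_i| + |x_j - \widetilde{x}_j| \bigr)
\leq 2 L \, N^{-2} \, ,
\end{equation*}
and one sets $C^\prime = 2L$. There is no serious obstacle here: the only point demanding care is the verification that all bounds on $\Phi_t$ and $\Lambda$ can be taken uniformly in $t$, which is exactly what the standing hypotheses \eqref{ecriture bornitude norme eta t et ctes lipschitz} on $\eta_t$ were designed to ensure.
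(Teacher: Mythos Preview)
Your proof is correct and follows essentially the same approach as the paper: bound the gradient of $\Lambda$ uniformly in $t$ using the standing hypotheses on $\eta_t$, then combine with the telescoping estimate $|\widetilde{x}_i - x_i| \le N^{-2}$ and the mean value theorem. The paper is slightly more explicit, writing out the formula $\partial_x \Lambda(x,y) = \Re\!\left(\dfrac{x-y}{\eta_t(x)-\eta_t(y)}\int_0^1 u\,\eta_t''(y+u(x-y))\,\dd u\right)$ to bound $\|\nabla\Lambda\|$ directly, whereas you argue via the smoothness of $\Phi_t(x,y)=\int_0^1 \eta_t'(y+s(x-y))\,\dd s$ together with the lower bound $|\Phi_t|\ge C_\eta$; both routes amount to the same thing.
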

\begin{proof}
We compute the formula
\begin{align*}
\frac{\partial}{\partial x} \Lambda(x,y) = \Re \bigg(\frac{x-y}{\eta_t(x) - \eta_t(y)} \Int{0}{1} u \eta_t^{\prime \prime}(y+u(x-y)) \, \dd u \bigg)
\end{align*}
and a similar formula for $\frac{\partial}{\partial y} \Lambda(x,y)$ by interchanging $x$ and $y$. Thus
$$\| \grad \Lambda \| \leq \sqrt{2} C_{\eta}^{-1} \sup_{x \in \widetilde{\mathcal{J}}} |\eta_t^{\prime \prime}(x)| < +\infty \, ,$$
with a bound that is uniform in $t \in \intff{0}{1}$.

\end{proof}
Moving forward, we may make the upper bound
\begin{align}\label{bound3}
\begin{split}
\sul{ i < j }{  N }\ln|\eta_t(x_i) - \eta_t(x_j)| &= \sul{ i < j }{ N }\ln|x_i -x_j| \, + \,  \sul{ i < j }{ N }\Lambda(x_i, x_j) \\
&\leq \sul{ i < j }{ N } \ln|\widetilde{x}_i - \widetilde{x}_j|  \, + \,  \sul{ i < j }{ N }\Lambda(\widetilde{x}_i, \widetilde{x}_j)+ \frac{1}{2} C^\prime \\
&= \sul{ i < j }{ N }\ln|\eta_t(\widetilde{x}_i) - \eta_t(\widetilde{x}_j)| + \frac{1}{2} C^\prime
\end{split}
\end{align}
Putting (\ref{bound1}), (\ref{bound2}), (\ref{bound3}) and Lemma \ref{partitionlower} together,  we infer there exists a constant $C > 0$ independent of $t \in \intff{0}{1}$
such that, for all $N \geq 2$,
\begin{align}\label{regularisedbound}
\dd \mathbb{P}_{N,\eta_t}(\mathbf{x})   \, \leq \,
\ex{ - \frac{\beta}{2} N^2 \big( \mf{I}_{\mc{J}}[ \overline{L}_N^{(\mathbf{x})}] - \mf{I}_{\mc{J}}[ \wt{\mu}_{\Om_t}] \big) + C N \ln N} \cdot  \dd \mathbf{x}
\end{align}
where we recall that $ \wt{\mu}_{ \Om_t } \, = \, \eta_t^{\#}[ \mu_{ \Om_t }]$ and $\Om_t = \eta_t(\mc{J})$.
\begin{lemma} Let $ \wt{\mu}_{\Om_t}$ be the unique minimiser of $\mf{J}_{\mc{J}}$. Then
\begin{align*}
 \mf{I}_{\mc{J}}\big[ \, \overline{L}_N^{(\mathbf{x})}\big] - \mf{I}_{\mc{J}}[ \wt{\mu}_{\Om_t}]  \geq \Big\{ \mf{D}_{\eta_t}\Big(\overline{L}_N^{(\mathbf{x})}, \wt{\mu}_{\Om_t}\Big) \Big\}^2
\end{align*}
\end{lemma}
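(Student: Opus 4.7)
The plan is to exploit the strict convexity of $\mf{I}_{\mc{J}}$, which stems from the positive-definiteness of the logarithmic kernel (Lemma \ref{logarithmicdistance}), and combine it with the Frostman--Euler--Lagrange conditions satisfied by the minimiser $\wt{\mu}_{\Om_t}$. Concretely, writing $\mu = \overline{L}_N^{(\mathbf{x})}$ and decomposing $\mu = \wt{\mu}_{\Om_t} + \nu$ where $\nu$ is a signed measure of zero net mass, I would expand the quadratic functional $\mf{I}_{\mc{J}}$ by bilinearity in its double integral part, isolating three contributions: the value at $\wt{\mu}_{\Om_t}$, a cross term linear in $\nu$, and a purely quadratic term in $\nu$.

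The quadratic term is, up to pulling back the logarithmic distance via $\eta_t$, exactly
\[
\Int{ \mc{J}^2 }{} \ln\frac{1}{|\eta_t(x)-\eta_t(y)|}\, \dd \nu(x)\, \dd\nu(y) \; = \; \big\{\mf{D}_{\eta_t}(\mu,\wt{\mu}_{\Om_t})\big\}^2,
\]
which is the desired right-hand side. For the cross term, pulling back the statement of Lemma \ref{frostmanlemma} (Frostman's Lemma) through the homeomorphism $\eta_t$ gives the existence of a constant $\wt{C}_t$ such that the \emph{effective} potential
\[
\mc{E}_t(x) \; \overset{\mathrm{def}}{=} \; \Int{\mc{J}}{} \ln \frac{1}{|\eta_t(x)-\eta_t(y)|} \, \dd \wt{\mu}_{\Om_t}(y) \; + \; \wt{\varphi}_t(x) \; - \; \wt{C}_t
\]
satisfies $\mc{E}_t(x)\geq 0$ quasi-everywhere on $\mc{J}$ and $\mc{E}_t(x) = 0$ quasi-everywhere on $\e{supp}\,\wt{\mu}_{\Om_t}$.

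Using that $\nu$ has zero total mass, the linear cross term reduces to $2\int_{\mc{J}} \mc{E}_t(x)\, \dd\nu(x) = 2\int_{\mc{J}} \mc{E}_t(x)\, \dd\mu(x) - 2\int_{\mc{J}} \mc{E}_t(x)\, \dd\wt{\mu}_{\Om_t}(x)$. The second integral vanishes since $\mc{E}_t = 0$ q.e. on the support of $\wt{\mu}_{\Om_t}$, while the first is nonnegative because the regularised empirical measure $\overline{L}_N^{(\mathbf{x})}$ is absolutely continuous with respect to Lebesgue measure (by construction, Definition \ref{regularmeasure}), hence does not charge sets of zero capacity, so $\mc{E}_t \geq 0$ $\mu$-almost everywhere. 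Adding the nonnegative cross term to the squared distance yields the desired inequality.

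The only potentially subtle point is ensuring that the Frostman conditions transfer cleanly from $\Om_t = \eta_t(\mc{J})$ to $\mc{J}$ via the bi-Lipschitz homeomorphism $\eta_t$; this works because the bi-Lipschitz property \eqref{ecriture bornitude norme eta t et ctes lipschitz} preserves sets of zero capacity and arc-length measure is comparable to Lebesgue measure through $|\eta_t^\prime|$. No uniformity in $t$ is needed for this particular lemma, since the inequality is stated pointwise in $t$. Hence the proof reduces to a clean convexity argument with no genuine technical obstacle.
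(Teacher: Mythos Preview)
Your proposal is correct and follows essentially the same route as the paper: expand $\mf{I}_{\mc{J}}[\mu]$ bilinearly around $\wt{\mu}_{\Om_t}$, identify the quadratic term as $\mf{D}_{\eta_t}^2$, and use the Frostman conditions to show the cross term is nonnegative. You are slightly more explicit than the paper about why the ``quasi-everywhere'' inequality suffices (absolute continuity of $\overline{L}_N^{(\mathbf{x})}$), which is a welcome clarification but not a different argument.
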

\begin{proof}
This follows from the Frostman-Euler-Lagrange conditions (Lemma \ref{frostmanlemma}); namely, there exists a constant $C_{\Om_t}$ such that
\bem
 \mf{I}_{\mc{J}} \big[ \,  \overline{L}_N^{(\mathbf{x})} \big]  \, =  \,  \mf{I}_{\mc{J}}[ \wt{\mu}_{\Om_t}]  \, + \,
 2 \Int{\mathbb{R}}{} \underbrace{\left[ \Int{\mathbb{R}}{} \ln \frac{1}{|\eta_t(x) - \eta_t(y)|} \,
\dd \wt{\mu}_{\Om_t}(y) + \widetilde{\varphi}_t(x) - C_{\Om_t}\right]}_{\geq 0} \, \dd \overline{L}_N^{(\mathbf{x})}(x)  \\
\, + \, \Big\{ \mf{D}_{\eta_t}\Big(\overline{L}_N^{(\mathbf{x})}, \wt{\mu}_{\Om_t}\Big) \Big\}^2 \, .
\end{multline}
\end{proof}
\begin{proposition}\label{logdistbound} There exists a constant $C > 0$  independent of $t \in \intff{0}{1}$ such that, for all $\delta > 0$,
\begin{align*}
\mathbb{P}_{N,\eta_t} \bigg[ \Big\{ \mf{D}_{\eta_t} \big( \,  \overline{L}_N^{(\mathbf{x})}, \wt{\mu}_{\Om_t} \big) \geq \delta \Big\} \bigg] \leq \ex{- \frac{\beta}{2} N^2 \delta^2 + C N \ln N} \;.
\end{align*}
\end{proposition}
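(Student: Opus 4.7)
The plan is to simply combine the two displays that directly precede the proposition. From the regularised upper bound \eqref{regularisedbound} we have
\[
\dd \mathbb{P}_{N,\eta_t}(\mathbf{x})   \, \leq \,
\ex{ - \frac{\beta}{2} N^2 \big( \mf{I}_{\mc{J}}[ \overline{L}_N^{(\mathbf{x})}] - \mf{I}_{\mc{J}}[ \wt{\mu}_{\Om_t}] \big) + C N \ln N} \cdot  \dd \mathbf{x}
\]
with $C>0$ independent of $t\in\intff{0}{1}$, and the preceding lemma gives the quadratic lower bound
\[
\mf{I}_{\mc{J}}\big[ \, \overline{L}_N^{(\mathbf{x})}\big] - \mf{I}_{\mc{J}}[ \wt{\mu}_{\Om_t}]  \; \geq \;  \Big\{ \mf{D}_{\eta_t}\Big(\overline{L}_N^{(\mathbf{x})}, \wt{\mu}_{\Om_t}\Big) \Big\}^2 \, .
\]
Substituting the second display into the first, one obtains the pointwise bound
\[
\dd \mathbb{P}_{N,\eta_t}(\mathbf{x})   \, \leq \,
\ex{ - \frac{\beta}{2} N^2 \{ \mf{D}_{\eta_t}(\overline{L}_N^{(\mathbf{x})}, \wt{\mu}_{\Om_t})\}^2 + C N \ln N} \cdot  \dd \mathbf{x} \, .
\]

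Next, I would integrate this inequality over the event $\{\mathbf{x}\in\mc{J}^N : \mf{D}_{\eta_t}(\overline{L}_N^{(\mathbf{x})},\wt{\mu}_{\Om_t})\geq \delta\}$. On that event the exponent in the integrand is bounded above by $-\tfrac{\beta}{2}N^2\delta^2 + CN\ln N$, so pulling the exponential constant out and bounding the remaining integral by the total Lebesgue volume $|\mc{J}|^N$ of $\mc{J}^N$ yields
\[
\mathbb{P}_{N,\eta_t}\!\left[\mf{D}_{\eta_t}\big( \overline{L}_N^{(\mathbf{x})},\wt{\mu}_{\Om_t}\big)\geq\delta\right]
\; \leq \;
\ex{-\frac{\beta}{2}N^2\delta^2 + C N\ln N + N\ln|\mc{J}|} \, .
\]
Since $\mc{J}$ is a fixed compact interval, the term $N\ln|\mc{J}|$ is absorbed into $CN\ln N$ by enlarging $C$ slightly (this enlargement remains $t$-independent because $\mc{J}$ does not depend on $t$), giving the claimed bound.

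There is no serious obstacle: the entire content of the proposition is already packaged in \eqref{regularisedbound} and in the preceding lemma, and the only thing I have to check is that the probability estimate is carried out on a set of finite Lebesgue measure (which holds since $\mc{J}$ is compact) and that all constants involved are uniform in the parameter $t$, which follows from the uniformity in $t$ of the constant in \eqref{regularisedbound} that was established via Lemmas \ref{partitionlower}, \ref{smoothness}, and the bounds \eqref{bound1}--\eqref{bound3}.
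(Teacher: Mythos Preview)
Your proof is correct and follows exactly the approach the paper takes: the paper's own proof is the single sentence ``This follows from integrating both sides of \eqref{regularisedbound} on the event $\mf{D}_{\eta_t}(\overline{L}_N^{(\mathbf{x})},\wt{\mu}_{\Om_t})\geq\delta$,'' and you have simply spelled out the details, including the harmless absorption of the volume factor $|\mc{J}|^N$ into the $CN\ln N$ term.
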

\begin{proof}
This follows from integrating both sides of \eqref{regularisedbound} on the event $\mf{D}_{\eta_t} \big( \,  \overline{L}_N^{(\mathbf{x})}, \wt{\mu}_{\Om_t} \big) \geq \de$.
\end{proof}
\begin{corollary}\label{concentration}
Let $\wt{\mc{J}}$ strictly contain $\mc{J}$.
Then, there exist constants $C , C^\prime > 0$ independent of $t \in \intff{0}{1}$ and $N_0 \in \mathbb{N}$ such that for all $f \in \mc{W}^{\infty}_{1}(\mc{J})$,  and $N \geq N_0$,
one has $\wt{\mathbf{x}} \in \wt{\mc{J}}^N$. Moreover, it holds
\begin{align*}
\mathbb{P}_{N,\eta_t}\Bigg[ \bigg\{ \Big| \int_{ \wt{\mc{J}} }^{} f(s)
\, \dd \big(\overline{L}_N^{(\mathbf{x})} -  \wt{\mu}_{\Om_t} \big)(s) \Big| \geq \delta  \bigg\} \Bigg] \, \leq \,
\exp\bigg\{- \frac{ C^\prime  N^2 \delta^2 }{ \| f \|_{ \mc{W}^{\infty}_{1}( \wt{\mc{J}}) }^2} + C N \ln N \bigg\} \;.
\end{align*}
\end{corollary}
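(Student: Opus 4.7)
The plan is to deduce the corollary by combining Proposition \ref{logdistbound}, which provides concentration in the logarithmic distance $\mf{D}_{\eta_t}$, with Proposition \ref{weaktopology}, which controls linear statistics against a bounded Lipschitz test function by this very distance. More precisely, for any $f \in \mc{W}^{\infty}_{1}(\wt{\mc{J}})$, Proposition \ref{weaktopology} applied on the interval $\wt{\mc{J}}$ with the curve $\eta_t$ (whose relevant constants are uniform in $t \in \intff{0}{1}$ by \eqref{ecriture bornitude norme eta t et ctes lipschitz}) yields a constant $\mf{C}_{\wt{\mc{J}}} > 0$, independent of $t$, such that
\begin{equation*}
\left| \Int{ \wt{\mc{J}} }{} f \, \dd \big(\overline{L}_N^{(\mathbf{x})} -  \wt{\mu}_{\Om_t} \big) \right|
\, \leq \, \mf{C}_{\wt{\mc{J}}} \, \| f \|_{\mc{W}^{\infty}_{1}(\wt{\mc{J}})} \cdot \mf{D}_{\eta_t}\big(\overline{L}_N^{(\mathbf{x})}, \wt{\mu}_{\Om_t}\big) \, .
\end{equation*}
It follows that the event $\{ |\int f \, \dd(\overline{L}_N^{(\mathbf{x})} - \wt{\mu}_{\Om_t})| \geq \de \}$ is contained in the event $\{ \mf{D}_{\eta_t}(\overline{L}_N^{(\mathbf{x})}, \wt{\mu}_{\Om_t}) \geq \de / ( \mf{C}_{\wt{\mc{J}}} \| f \|_{\mc{W}^{\infty}_{1}(\wt{\mc{J}})}) \}$. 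Invoking Proposition \ref{logdistbound} on the latter event then gives the claimed bound with $C^\prime = \beta / (2 \mf{C}_{\wt{\mc{J}}}^2)$.

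Before applying Proposition \ref{weaktopology} on $\wt{\mc{J}}$, one needs to verify that the regularised sample $\wt{\mathbf{x}}$ (whose associated measure $\overline{L}_N^{(\mathbf{x})}$ has to be a probability measure on $\wt{\mc{J}}$) indeed belongs to $\wt{\mc{J}}^N$ when $\mathbf{x} \in \mc{J}^N$ and $N$ is large. This is a routine check: by Definition \ref{regularmeasure}, the differences $\wt{x}_k - \wt{x}_{k-1}$ exceed $x_k - x_{k-1}$ by at most $N^{-3}$, so a telescoping sum yields $|\wt{x}_k - x_k| \leq (k-1) N^{-3} \leq N^{-2}$ for all $k$; since $\mc{J}$ is strictly contained in $\wt{\mc{J}}$, one has $\wt{\mathbf{x}} \in \wt{\mc{J}}^N$ as soon as $N$ exceeds some $N_0$ depending only on $\mathrm{dist}(\mc{J}, \partial \wt{\mc{J}})$. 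The uniformity of all constants in $t \in \intff{0}{1}$ is inherited from the corresponding uniformity in Propositions \ref{logdistbound} and \ref{weaktopology}, which in turn rests on \eqref{ecriture bornitude norme eta t et ctes lipschitz}. There is no real obstacle here, the corollary being a direct combination of the two propositions just established.
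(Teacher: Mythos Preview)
Your proof is correct and follows exactly the same route as the paper's: bound the linear statistic by $\mf{D}_{\eta_t}$ via Proposition \ref{weaktopology}, then apply the concentration estimate of Proposition \ref{logdistbound} to the resulting event. You even supply the details (the verification that $\wt{\mathbf{x}} \in \wt{\mc{J}}^N$ and the explicit value $C' = \beta/(2\mf{C}_{\wt{\mc{J}}}^2)$) that the paper leaves implicit.
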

\begin{proof}
By Proposition \ref{weaktopology}
\begin{align*}
\mathbb{P}_{N,\eta_t}\Bigg[ \bigg\{ \Big| \int_{ \wt{\mc{J}} }^{} f(s)
\, \dd \big(\overline{L}_N^{(\mathbf{x})} -  \wt{\mu}_{\Om_t} \big)(s) \Big| \geq \delta  \bigg\} \Bigg] & \leq
\mathbb{P}_{N,\eta_t}\Bigg[ \bigg\{  \mf{D}_{\eta_t} \big( \,  \overline{L}_N^{(\mathbf{x})}, \wt{\mu}_{\Om_t} \big)  \geq
                            \frac{\delta}{  \mf{C}_{ \mc{J}}  \| f \|_{ \mc{W}^{\infty}_{1}( \wt{\mc{J}}) } }  \bigg\} \Bigg]
\end{align*}
The result then follows from Proposition \ref{logdistbound}.
\end{proof}
Thus $\Big| \int_{ \wt{\mc{J}} }^{} f(s)\, \dd \big(\overline{L}_N^{(\mathbf{x})} -  \wt{\mu}_{\Om_t} \big)(s) \Big|$ is typically at most of order $\e{O}\Big( \sqrt{\tfrac{\ln N}{N}}  \Big)$.
In fact we shall later see that if $f$ is $\mc{C}^\infty$, the optimal bound is $\e{O}(\tfrac{1}{N})$.  From Corollary \ref{concentration} we may bound the moments.
\begin{corollary}\label{moments} Let $\wt{\mc{J}}$ strictly contain $\mc{J}$.
Then there exists a constant $C_m > 0$ and $N_0 \in \mathbb{N}$, both independent of $t \in \intff{0}{1}$,  such that for all $f \in \mc{W}^{\infty}_1\big( \wt{\mc{J}} \big)$, $m \in \mathbb{N}$ and $N \geq N_0$
\begin{align*}
\mathbb{E}_{N,\eta_t} \Bigg[ \bigg| \int_{\wt{\mc{J}}}^{}f(s)\, \dd \big(\overline{L}_N^{(\mathbf{x})} -  \wt{\mu}_{\Om_t} \big)(s)\bigg|^m \Bigg] \; \leq \;
C_m  \| f \|_{ \mc{W}^{\infty}_1( \wt{\mc{J}} ) }^m \left( \frac{\ln N}{N} \right)^\frac{m}{2}  \;.
\end{align*}
\end{corollary}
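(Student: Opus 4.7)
The plan is to deduce the moment bound from the tail estimate in Corollary \ref{concentration} via the standard layer-cake (tail-integration) formula
\[
 \mbb{E}_{N,\eta_t}\big[|X_N|^m\big] \, = \, m \Int{0}{\infty} \delta^{m-1} \, \mbb{P}_{N,\eta_t}\big[ |X_N| \geq \delta \big] \, \dd\delta \, ,
\]
where $X_N = \int_{ \wt{\mc{J}} } f(s) \, \dd \big( \overline{L}_N^{(\mathbf{x})} - \wt{\mu}_{\Om_t} \big)(s)$. The Gaussian-type tail bound only becomes non-trivial once the exponent $-\tfrac{C^\prime N^2 \delta^2}{\|f\|^2} + CN \ln N$ is negative, so one must split the integration at an appropriate threshold.

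Concretely, I would set $\delta_0 = K \, \| f \|_{ \mc{W}^{\infty}_1(\wt{\mc{J}})} \sqrt{\tfrac{\ln N}{N}}$ with $K>0$ to be chosen. On $\intff{0}{\delta_0}$ we simply bound the probability by $1$, which contributes at most $\delta_0^m = K^m \| f \|_{ \mc{W}^{\infty}_1(\wt{\mc{J}})}^m \big( \tfrac{\ln N}{N} \big)^{m/2}$. On $\intoo{\delta_0}{+\infty}$ we insert Corollary \ref{concentration} and perform the change of variables $u = \sqrt{C^\prime} \, N \delta \, \| f \|_{ \mc{W}^{\infty}_1(\wt{\mc{J}})}^{-1}$, turning the integral into
\[
m \Bigg( \f{\|f\|_{ \mc{W}^{\infty}_1(\wt{\mc{J}}) }}{ \sqrt{C^\prime} \, N} \Bigg)^{\! m} \Int{u_0}{+\infty} u^{m-1} \, \ex{ -u^2 + C N \ln N } \, \dd u \, ,
\]
with $u_0 = K \sqrt{C^\prime} \sqrt{N \ln N}$. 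For $u \geq u_0$ one has $-u^2 + C N \ln N \leq -u^2 \bigl(1 - \tfrac{C}{K^2 C^\prime}\bigr)$, so choosing $K^2 \geq 2C/C^\prime$ yields the uniform bound $-u^2 + C N \ln N \leq -u^2/2$. The remaining Gaussian moment integral is then $\e{O}(1)$ in $N$, giving a contribution of order $\| f \|_{ \mc{W}^{\infty}_1(\wt{\mc{J}})}^m N^{-m}$, which is negligible compared to the main piece of order $\| f \|_{ \mc{W}^{\infty}_1(\wt{\mc{J}})}^m \big( \tfrac{\ln N}{N} \big)^{m/2}$. Summing the two contributions yields the claim, with a constant $C_m$ depending only on $m$ (and on $C, C^\prime$, which by the hypothesis of Corollary \ref{concentration} are independent of $t \in \intff{0}{1}$); the existence of the threshold $N_0$ on $N$ is inherited directly from that corollary.

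There is no real obstacle here: every piece is already in place. The only point that requires a little care is the uniformity in $t \in \intff{0}{1}$, but this is automatic because the constants $C, C^\prime$ in Corollary \ref{concentration} as well as $N_0$ are already uniform in $t$, and the choice of $K$ above depends only on their ratio. Strictly speaking $X_N$ is a random variable on $\mc{J}^N$ while $f$ is integrated over $\wt{\mc{J}}$; since by Lemma \ref{smoothness} for $N \geq N_0$ we have $\wt{\mathbf{x}} \in \wt{\mc{J}}^N$, the regularised empirical measure $\overline{L}_N^{(\mathbf{x})}$ is indeed supported on $\wt{\mc{J}}$ and the integral makes sense, so the same $N_0$ works throughout.
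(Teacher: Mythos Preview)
Your proof is correct and is precisely the standard tail-integration argument one expects here; the paper itself does not spell out a proof for this corollary, treating it as an immediate consequence of Corollary~\ref{concentration}. Your handling of the threshold $\delta_0$, the change of variables, and the uniformity in $t$ are all fine.
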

\begin{corollary}\label{generatingfunction}
Let $\wt{\mc{J}}$ strictly contain $\mc{J}$.
Let $C, C^\prime > 0$ be the constants appearing in Lemma \ref{concentration} and $N_0 \in \mathbb{N}$.
Then,  for all  $N \geq N_0$ and $f \in \mc{W}^{\infty}_1\big( \wt{\mc{J}} \big)$
such that $  \| f \|_{ \mc{W}^{\infty}_1( \wt{\mc{J}} ) }  \leq \sqrt{C C^\prime N \ln N}$ one has
\begin{align*}
\mathbb{E}_{N,\eta_t} \Bigg[  \ex{ N \big| \int_{ \wt{\mc{J}} }^{ } f(s)\, \dd (\overline{L}_N^{(\mathbf{x})} -  \wt{\mu}_{\Om_t} )(s) \big|} \Bigg]
\, \leq \,  2 \ex{ \sqrt{\frac{C N \ln N }{C^\prime}} \cdot  \| f \|_{ \mc{W}^{\infty}_1( \wt{\mc{J}} ) } } \;.
\end{align*}
\end{corollary}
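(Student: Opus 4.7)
The plan is to use the classical layer-cake representation of the exponential moment together with the tail bound from Corollary \ref{concentration}. Set, for brevity, $a := \| f \|_{\mc{W}^{\infty}_1(\wt{\mc{J}})}$ and
$$X \; := \; \Int{\wt{\mc{J}}}{} f(s) \, \dd \big( \overline{L}_N^{(\mathbf{x})} - \wt{\mu}_{\Om_t} \big)(s) \, ,$$
and introduce the threshold $\sigma := \tfrac{a}{N}\sqrt{\tfrac{C N \ln N}{C^{\prime}}}$, which is precisely the scale at which the exponential tail $\exp(-\tfrac{C^{\prime} N^2 t^2}{a^2})$ from Corollary \ref{concentration} starts to dominate the combinatorial factor $\exp(C N \ln N)$. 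I would write
$$\mathbb{E}_{N,\eta_t}\big[ \ex{ N |X| } \big] \; = \; 1 \, + \, \Int{0}{+\infty} N \ex{ Nt } \, \mathbb{P}_{N,\eta_t}\big[ |X| \geq t \big] \, \dd t$$
and then split the integral at $t = \sigma$.

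For the segment $t \in \intff{0}{\sigma}$, I would simply bound $\mathbb{P}_{N,\eta_t}[|X| \geq t ] \leq 1$, giving the contribution $\ex{N\sigma} - 1$. For the segment $t \in \intof{\sigma}{+\infty}$, Corollary \ref{concentration} yields
$$N \ex{ Nt } \, \mathbb{P}_{N,\eta_t}\big[ |X| \geq t \big] \; \leq \; N \exp\bigg\{  N t - \frac{ C^{\prime} N^2 t^2 }{ a^2 } + C N \ln N \bigg\} \; .$$
Completing the square, or equivalently performing the substitution $t = u + \sigma$, the exponent becomes
$$N \sigma \, + \, N u \Big( 1 - \tfrac{ 2 C^{\prime} N \sigma }{ a^2 } \Big) \, - \, \tfrac{ C^{\prime} N^2 u^2 }{ a^2 } \, ,$$
upon using the defining identity $\tfrac{C^{\prime} N^2 \sigma^2}{a^2} = C N \ln N$. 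The hypothesis $a \leq \sqrt{C C^{\prime} N \ln N}$ translates exactly into $\tfrac{ 2 C^{\prime} N \sigma }{ a^2 } \geq 2$, so that the linear coefficient in $u$ is at most $-N$. Dropping the non-positive quadratic term then yields
$$\Int{\sigma}{+\infty} N \ex{ N t } \mathbb{P}_{N,\eta_t}\big[ |X| \geq t \big] \dd t \; \leq \; \ex{ N \sigma } \Int{0}{+\infty} N \ex{ - N u } \dd u \; = \; \ex{N \sigma } \; .$$

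Combining the two pieces gives $\mathbb{E}_{N,\eta_t}[\ex{N|X|}] \leq 1 + (\ex{N\sigma} - 1) + \ex{N\sigma} = 2 \ex{N \sigma}$, which is exactly the claim once one notes $N \sigma = \sqrt{\tfrac{CN\ln N}{C^{\prime}}} \cdot \| f \|_{\mc{W}^{\infty}_1(\wt{\mc{J}})}$. There is essentially no obstacle here: the only point requiring care is checking that the constraint on $\| f \|_{ \mc{W}^\infty_1(\wt{\mc{J}}) }$ ensures the linear-in-$u$ coefficient has absolute value at least $N$, which is what makes the Gaussian integral collapse to a simple exponential integral; all other steps are routine estimates, and the constants $C, C^{\prime}$ and the threshold $N_0$ are inherited directly from Corollary \ref{concentration}.
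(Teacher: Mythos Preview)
Your proof is correct and follows essentially the same approach as the paper's. Both use the layer-cake representation and the tail bound from Corollary \ref{concentration}, then linearize the quadratic in the exponent at the same threshold $N\sigma = \|f\|_{\mc{W}^\infty_1(\wt{\mc{J}})}\sqrt{CN\ln N/C^{\prime}}$; the paper phrases this via a tangent-line inequality applied to $\min\{1,\cdot\}$, whereas you split the integral explicitly at $\sigma$ and substitute $t = u + \sigma$, but the computation and the use of the hypothesis $\|f\| \leq \sqrt{CC^{\prime}N\ln N}$ to force the linear coefficient below $-N$ are identical in substance.
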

\begin{proof}
\begin{align*}
\mathbb{E}_{N,\eta_t} \Bigg[  \ex{ N \big| \int_{ \wt{\mc{J}} }^{ } f(s)\, \dd (\overline{L}_N^{(\mathbf{x})} -  \wt{\mu}_{\Om_t} )(s) \big|} \Bigg] \, \leq \,
1 +  \Int{0}{\infty} \ex{x} \,  \min \bigg\{ 1,   \exp\Big\{- \frac{ C^\prime  x^2 }{ \| f \|_{ \mc{W}^{\infty}_1( \wt{\mc{J}} ) }^2} + C N \ln N \Big\} \bigg\} \, \dd x \;.
\end{align*}
Then, the upper bound
$$- \frac{ C^\prime  x^2 }{ \| f \|_{ \mc{W}^{\infty}_1( \wt{\mc{J}} ) }^2} + C N \ln N  \, \leq \,
-\frac{2 \sqrt{C C^\prime N \ln N}}{ \| f \|_{ \mc{W}^{\infty}_1( \wt{\mc{J}} ) }  } \left(x -\| f \|_{ \mc{W}^{\infty}_1( \wt{\mc{J}} ) } \sqrt{C (C^\prime)^{-1} N \ln N}\right)\,   $$
entails the claim.
\end{proof}

\section{Large deviations for the maximal particle}\label{maxeigsection}

In this section, we mainly focus on the real model adapted to the contour $\Ga_{\e{eq}}$ provided by Proposition  \ref{contourexistence}
and realised by the curve $\ga: \R \rightarrow \Ga_{\e{eq}}$. We recall that this contour enjoys the properties \ref{hypothesis2}-\ref{hypothesis3}.
We shall prove that with probability exponentially close to $1$,
the particles are confined to an arbitrary small neighbourhood of the support of the equilibrium measure.
One corollary of this is that the partition function $\op{Z}_{N,\ga^{\eps} }[V]$ associated with the restriction $\ga^{\eps} \, = \, \ga_{\mid \intff{-\eps}{1+\eps} }$ of the curve $\ga$ to the interval
$\intff{-\eps}{1+\eps}$ is a good approximation of the partition function on the unbounded contour $\op{Z}_{N,\ga }[V]$. In particular, one may focus on the
former so as to obtain the asymptotic expansion of the latter.
 The proofs presented in this section follow closely the proofs given in \cite{Borot:2013aa} or \cite{Gu19}.

Let us begin by introducing the pullback to $\mathbb{R}$ under $\gamma$ of the relevant effective potential:
\beq
\widetilde{\varphi}_{\Ga_{\e{eq}};\mathrm{eff}}(x) = \widetilde{\varphi}(x) \, -\,  \Int{0}{1} \ln|\gamma(x) - \gamma(y)| \, \dd \nu_{\e{sc}}(y) \, - \,  C_{ \Ga_{\e{eq}}}\\
\quad \e{with} \quad\widetilde{\varphi} \, = \,  \Re \big( V \circ \gamma \big) \;.
\nonumber
\enq
We recall that, by construction, the pullback of $\mu_{\Ga_{\e{eq}}}$ under $\ga$ is given by  $\nu_{\e{sc}}$ see \eqref{definition mesure semi cercle sur 0 1} and Remark \ref{semicircleremark}.
 $\nu_{\e{sc}}$ minimizes the functional $\mf{I}_{\R}$ introduced in \eqref{Jfunctional} with the choice $\eta_t=\ga$ and $\wt{\vp}_t \, = \,  \Re \big( V \circ \gamma \big)$
 and  $ C_{ \Ga_{\e{eq}}}$ is the infimum of
$x\mapsto \widetilde{\varphi}(x) \, -\,  \int_{0}^{1} \ln|\gamma(x) - \gamma(y)| \, \dd \nu_{\e{sc}}(y) $ which is achieved throughout $\intff{0}{1}$,
the support of $\nu_{\e{sc}}$.
We shall now build up to proof of the Large Deviation Principle (LDP) for the maximal particle at scale $N$ with rate function $\widetilde{\varphi}_{\Ga_{\e{eq}};\mathrm{eff}}$
for the real model $\mathbb{P}_{N,\ga}$, with $\ga(\R)= \Ga_{\e{eq}}$ as mentioned earlier.
\begin{proposition}(LDP for the maximal particle)\label{MaxEigLDP}
For any closed set $F \subset \mathbb{R}$
$$ \limsup_{N\to +\infty} \frac{1}{N}  \ln \mathbb{P}_{N,\ga} \Big[ \big\{ \exists i \in \{1 , \dots, N  \} \, \text{ s.t. } x_i \in F  \big\} \Big] \leq
- \beta \inf_{x \in F}  \widetilde{\varphi}_{\Ga_{\e{eq}}; \mathrm{eff}}(x) $$
and for any open set $U \subset \mathbb{R}$
$$ \liminf_{N \to +\infty} \frac{1}{N}  \ln \mathbb{P}_{N,\ga}\Big[ \big\{ \exists i \in \{1 , \dots, N  \} \, \text{ s.t. } x_i \in U   \big\} \Big]  \geq
- \beta  \inf_{x \in U}  \widetilde{\varphi}_{\Ga_{\e{eq}}; \mathrm{eff}}(x) \; .  $$
\end{proposition}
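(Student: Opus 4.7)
The plan is to mimic the classical LDP strategy for extreme eigenvalues of $\beta$-ensembles (as in \cite{AGZ, BAG97} and \cite{Borot:2013aa}), adapted to our curve-valued setting. By the $S_N$-invariance of $\mathbb{P}_{N,\ga}$ a union bound gives
$$\mathbb{P}_{N,\ga}\big[\exists i : x_i \in F\big] \; \leq \; N \, \mathbb{P}_{N,\ga}[x_N \in F],$$
so the $\tfrac{1}{N}\ln$ of these quantities coincide in the limit. Integrating the joint density with respect to the remaining $N-1$ variables yields
$$\mathbb{P}_{N,\ga}[x_N \in F] \; = \; \frac{\op{Z}_{N-1,\ga}[V]}{\op{Z}_{N,\ga}[V]} \int_F |\gamma'(x)|\,\ex{-N\beta \widetilde\varphi(x)}\, \mathbb{E}_{N-1,\ga}\!\left[\,\pl{j<N}{}\,|\gamma(x)-\gamma(x_j)|^\beta \right] \dd x.$$
The task reduces to estimating the ratio of partition functions and the expectation inside the integral, both exponentially in $N$.

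Upper bound. Using the coercivity $\widetilde\varphi(x)\sim|\gamma(x)|^\kappa/\kappa$ at infinity, a Chebyshev-type bound shows that all particles lie in some compact interval $\mc{J}=\intff{-M}{M}$ with probability $1-\ex{-c N}$, so one may restrict to a compact subproblem and apply the concentration machinery of Section \ref{concentrationsection}. For each $x$ in a bounded region, writing $\ln|\gamma(x)-\gamma(\cdot)|$ as the sum of a $\mc{W}^\infty_1$ regular piece and a localised singular piece, and handling the singular piece by the Vandermonde repulsion (as in the proof of Lemma \ref{partitionlower} using the regularised empirical measure $\overline L_N$), Corollary \ref{concentration} yields
$$\mathbb{E}_{N-1,\ga}\!\left[\,\pl{j<N}{}\,|\gamma(x)-\gamma(x_j)|^\beta\right] \; \leq \; \exp\!\Big(\beta(N-1)\Int{0}{1}\ln|\gamma(x)-\gamma(y)|\,\dd\nu_{\mathrm{sc}}(y) \, + \, \e{o}(N)\Big),$$
uniformly in $x\in \mc{J}$. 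Lemma \ref{partitionlower} applied to $\op{Z}_{N,\ga}[V]$, together with the matching upper bound that follows from (\ref{regularisedbound}), controls $\op{Z}_{N-1,\ga}[V]/\op{Z}_{N,\ga}[V] \leq \ex{N\beta \, C_{\Ga_{\e{eq}}}+\e{o}(N)}$. Substituting, the integrand equals $\ex{-N\beta\, \widetilde\varphi_{\Ga_{\e{eq}};\mathrm{eff}}(x)+\e{o}(N)}$, so Laplace's method delivers the claimed upper bound, while the contribution of $F\cap \mc{J}^c$ is negligible by the tail argument.

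Lower bound. Fix $x^\star\in U$ with $\widetilde\varphi_{\Ga_{\e{eq}};\mathrm{eff}}(x^\star)$ arbitrarily close to $\inf_U\widetilde\varphi_{\Ga_{\e{eq}};\mathrm{eff}}$ (possible by continuity), choose $\delta>0$ so that $B(x^\star,\delta)\subset U$, and restrict the integral defining $\mathbb{P}_{N,\ga}[x_N \in B(x^\star,\delta)]$ to the event that $x_N\in B(x^\star,\delta)$ while $(x_1,\ldots,x_{N-1})$ stays within an $N^{-2}$-neighbourhood of the $\tfrac{1}{N-1}$-quantiles $\xi_0,\ldots,\xi_{N-2}$ of $\nu_{\mathrm{sc}}$. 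On this event Lemma \ref{logquantiles} and Lemma \ref{smoothness} together give
$$\pl{j<N}{}|\gamma(x_N)-\gamma(x_j)|^\beta \; \geq \; \exp\!\Big(\beta(N-1)\Int{0}{1}\ln|\gamma(x_N)-\gamma(y)|\,\dd\nu_{\mathrm{sc}}(y)\, - \,\e{o}(N)\Big),$$
and the same quantile estimate bounds $\op{Z}_{N,\ga}[V]$ from above at the value $\ex{-\tfrac{\beta N^2}{2}\op{I}_{\Ga_{\e{eq}}}[\mu_{\Ga_{\e{eq}}}]+\e{o}(N^2)}$. The constants again recombine to the effective potential, yielding
$$\mathbb{P}_{N,\ga}\big[x_N \in B(x^\star,\delta)\big] \; \geq \; \ex{-N\beta\, \widetilde\varphi_{\Ga_{\e{eq}};\mathrm{eff}}(x^\star)+\e{o}(N)},$$
and letting $\delta\downarrow 0$ and then $x^\star$ approach the infimum gives the announced lower bound.

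The principal obstacle is the careful bookkeeping needed to ensure that the equilibrium constant $C_{\Ga_{\e{eq}}}$ cancels exactly between the partition-function ratio and the single-body factor, so that the rate function emerges cleanly as $\widetilde\varphi_{\Ga_{\e{eq}};\mathrm{eff}}$ rather than as that expression plus a spurious additive constant. A secondary difficulty is the short-range singularity of $\ln|\gamma(x)-\gamma(y)|$ at $y=\gamma^{-1}(x)$: the straightforward application of Corollary \ref{concentration} fails because the test function is not in $\mc{W}^\infty_1$. One must isolate the singular contribution on a mesoscopic window around the singularity, control it via the deterministic quantile estimates of Section \ref{concentrationsection}, and apply the Lipschitz concentration to the complementary regular part.
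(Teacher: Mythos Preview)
Your overall strategy matches the paper's: reduce to the law of a single coordinate via permutation symmetry, then control the one-body integral using the concentration results of Section~\ref{concentrationsection}. However, your displayed identity for $\mathbb{P}_{N,\ga}[x_N\in F]$ is incorrect as written. In the paper's convention the $(N-1)$-particle ensemble $\mathbb{E}_{N-1,\ga}$ carries the Gibbs weight $\ex{-(N-1)\beta\widetilde\varphi(x_k)}$, not $\ex{-N\beta\widetilde\varphi(x_k)}$, so a factor $\ex{-(N-1)\beta\int\widetilde\varphi\,\dd L_{N-1}}$ is missing inside the expectation. That factor contributes $\beta\int\widetilde\varphi\,\dd\nu_{\mathrm{sc}}$ at scale $N$, and without it your rate function would come out shifted by exactly this constant---precisely the ``bookkeeping obstacle'' you flag, but it is already baked into your formula rather than being a detail to check later. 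Relatedly, in your lower-bound paragraph you say ``the same quantile estimate bounds $\op{Z}_{N,\ga}[V]$ from above''; Lemma~\ref{partitionlower} only yields a \emph{lower} bound---the upper bound comes from~\eqref{regularisedbound}.

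The paper avoids all of this by introducing
\[
\Upsilon_{N-1}(X)=\Int{X}{}\mathbb{E}_{N-1,\ga}\Big[\ex{(N-1)\beta\int_{\R}\{\ln|\gamma(x)-\gamma(y)|-\widetilde\varphi(y)\}\,\dd L_{N-1}^{(\mathbf{x})}(y)}\Big]\ex{-N\beta\widetilde\varphi(x)}|\gamma'(x)|\,\dd x
\]
and noting that $\mathbb{P}_{N,\ga}[x_N\in X]=\Upsilon_{N-1}(X)/\Upsilon_{N-1}(\R)$. The self-normalisation means no partition-function ratio is ever estimated and $C_{\Ga_{\e{eq}}}$ never appears explicitly. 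The singularity of $\ln|\gamma(x)-\gamma(\cdot)|$ is handled not by a mesoscopic splitting but by the smooth truncation $\ln_\delta$ followed by monotone convergence $\delta\downarrow0$ (Lemma~\ref{ballupper}); the lower bound on balls (Lemma~\ref{balllower}) uses Jensen twice together with a smooth approximation of the identity rather than a quantile restriction. A weak LDP on balls plus exponential tightness (Lemma~\ref{confinement}) then gives the full LDP.
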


\begin{corollary}
\label{corollaire approx partition fct}
Let $\ga^{\eps}$ denote the restriction of $\ga$ to $\intff{-\eps}{1+\eps}$, \textit{i.e.} $\ga^{\eps}=\ga_{\mid \intff{-\eps}{1+\eps}}$. Then, there exists $C>0$ such that
\beq
\op{Z}_{N,\ga^{\eps} }[V] \, = \, \op{Z}_{N,\ga }[V] \Big( 1+ \e{O}\big(\ex{-CN } \big)  \Big) \;.
\enq

\end{corollary}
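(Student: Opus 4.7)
The plan is to express the ratio $\op{Z}_{N,\ga^{\eps}}[V] / \op{Z}_{N,\ga}[V]$ as a probability under the real model $\mathbb{P}_{N,\ga}$ and then apply the large deviation bound from Proposition \ref{MaxEigLDP}. Specifically, since the integrand defining $\op{Z}_{N,\ga}[V]$ is non-negative, one has the identity
\begin{equation*}
\op{Z}_{N,\ga^{\eps}}[V] \; = \; \op{Z}_{N,\ga}[V] \cdot \mathbb{P}_{N,\ga}\Big[ \big\{ \forall i \in \intn{1}{N}, \; x_i \in \intff{-\eps}{1+\eps} \big\} \Big] \; = \; \op{Z}_{N,\ga}[V] \cdot \big( 1 - \mathbb{P}_{N,\ga}[A_{\eps}] \big) \, ,
\end{equation*}
where $A_{\eps} = \{ \exists i \in \intn{1}{N} : x_i \in F_{\eps} \}$ with closed set $F_{\eps} := \R \setminus \intoo{-\eps}{1+\eps}$. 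It thus suffices to show that $\mathbb{P}_{N,\ga}[A_{\eps}] = \e{O}\big(\ex{-CN}\big)$ for some $C>0$.

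By the LDP upper bound of Proposition \ref{MaxEigLDP} applied to $F_{\eps}$,
\begin{equation*}
\limsup_{N \to +\infty} \frac{1}{N} \ln \mathbb{P}_{N,\ga}[A_{\eps}] \; \leq \; -\beta \inf_{x \in F_{\eps}} \widetilde{\varphi}_{\Ga_{\e{eq}}; \e{eff}}(x) \; =: \; -\beta c_{\eps} \, .
\end{equation*}
The crux of the matter is then to show that $c_{\eps} > 0$. This relies on two ingredients: first, property (4) of Proposition \ref{contourexistence}, namely $\varphi_{\Ga_{\e{eq}}; \e{eff}}(z) > 0$ for all $z \in \Ga_{\e{eq}} \setminus \e{supp}\, \mu_{\Ga_{\e{eq}}}$, which upon pulling back by $\ga$ yields $\widetilde{\varphi}_{\Ga_{\e{eq}}; \e{eff}}(x) > 0$ for all $x \in \R \setminus \intff{0}{1}$; second, the growth at infinity $\widetilde{\varphi}_{\Ga_{\e{eq}}; \e{eff}}(x) \to +\infty$ as $|x| \to +\infty$, which follows from \eqref{Vinf} since the polynomial potential $V \circ \ga$ grows like $|\ga(x)|^\kappa/\kappa$ with $\kappa \geq 2$ while the contribution $-\int_0^1 \ln|\ga(x) - \ga(y)|\,\dd\nu_{\e{sc}}(y)$ grows only logarithmically. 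Continuity of $\widetilde{\varphi}_{\Ga_{\e{eq}}; \e{eff}}$ on the closed set $F_{\eps}$ combined with these two facts forces the infimum to be attained and to be strictly positive.

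Taking any $C \in \intoo{0}{\beta c_{\eps}}$ then yields $\mathbb{P}_{N,\ga}[A_{\eps}] \leq \ex{-CN}$ for all $N$ large enough, from which the stated estimate
\begin{equation*}
\op{Z}_{N,\ga^{\eps}}[V] \; = \; \op{Z}_{N,\ga}[V] \cdot \Big( 1 + \e{O}\big(\ex{-CN}\big) \Big)
\end{equation*}
follows immediately. The main obstacle is confirming that $c_{\eps}>0$, but once the effective potential is interpreted through Proposition \ref{contourexistence} and the polynomial growth of $V$ is invoked, this is straightforward; the heavy lifting has already been absorbed into the LDP of Proposition \ref{MaxEigLDP}, which in turn rests on the concentration estimates of Section \ref{concentrationsection}.
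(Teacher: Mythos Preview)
Your proof is correct and follows essentially the same route as the paper: write $\op{Z}_{N,\ga^{\eps}}[V]/\op{Z}_{N,\ga}[V]$ as the $\mathbb{P}_{N,\ga}$-probability that all particles lie in $\intff{-\eps}{1+\eps}$, then bound the complementary event via the LDP of Proposition~\ref{MaxEigLDP}. The paper's proof is a two-line sketch that simply invokes Proposition~\ref{MaxEigLDP}; you have additionally spelled out why the rate-function infimum over $F_\eps$ is strictly positive (using property~(4) of Proposition~\ref{contourexistence} together with the polynomial growth of $\widetilde{\varphi}_{\Ga_{\e{eq}};\e{eff}}$ at infinity), which the paper leaves implicit. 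One minor cosmetic point: your identity $1-\mathbb{P}_{N,\ga}[A_\eps]$ uses the closed set $F_\eps=\R\setminus\intoo{-\eps}{1+\eps}$, whereas the exact complement of $\{\forall i:\,x_i\in\intff{-\eps}{1+\eps}\}$ involves the open set $\intff{-\eps}{1+\eps}^{\e{c}}$; this is harmless since $\mathbb{P}_{N,\ga}$ has a density, and in any case your closed-set event dominates the open one, so the upper bound still applies.
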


\begin{proof}

One has the identity
\beq
\op{Z}_{N,\ga }[V]  \, = \, \op{Z}_{N,\ga^{\eps} }[V]  \, + \, \op{Z}_{N,\ga }[V] \cdot  \mathbb{P}_{N,\ga} \Big[ \big\{ \exists i \in \{1 , \dots, N  \} \, \text{ s.t. } x_i \in \intff{-\eps}{1+\eps}^{\e{c}}  \big\} \Big]
\enq
so that one concludes owing to Proposition \ref{MaxEigLDP}.

\end{proof}

\begin{lemma}
$\widetilde{\varphi}_{\Ga_{\e{eq}};\mathrm{eff}} : \mathbb{R}\longrightarrow \intfo{0}{+\infty}$ is a good rate function, \textit{i.e.} is lower semi-continuous (lsc)
and has compact level sets.
\end{lemma}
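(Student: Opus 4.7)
The plan is to verify the three required properties in turn: non-negativity of $\wt{\vp}_{\Ga_{\e{eq}};\e{eff}}$, lower semi-continuity, and coercivity (which together with lsc yields compact level sets).

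First I would note that $\wt{\vp}_{\Ga_{\e{eq}};\e{eff}}$ is just the pullback under $\ga$ of the restriction of the effective potential $\vp_{\Ga_{\e{eq}};\e{eff}}$ from Definition \eqref{effective} to the contour $\Ga_{\e{eq}}$, once one recognises that $\nu_{\e{sc}} = \ga^{\#}\mu_{\Ga_{\e{eq}}}$ (Remark \ref{semicircleremark}). Consequently, non-negativity of $\wt{\vp}_{\Ga_{\e{eq}};\e{eff}}$ on $\R$ follows directly from the Frostman conditions of Lemma \ref{frostmanlemma} together with Proposition \ref{contourexistence}(4), which guarantees strict positivity of $\vp_{\Ga_{\e{eq}};\e{eff}}$ off $\e{supp}\, \mu_{\Ga_{\e{eq}}}$, and vanishing on $\intff{0}{1}$ by \eqref{Frostman2} (the equilibrium measure has continuous density by Lemma \ref{continuousdensity}, so equality holds everywhere on the support, not merely quasi-everywhere).

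For lower semi-continuity I would actually show continuity. The term $\wt{\vp}(x)=\Re(V\circ\ga)(x)$ is continuous as a composition of smooth functions. For the logarithmic integral $x\mapsto\int_0^1 \ln|\ga(x)-\ga(y)|\,\dd\nu_{\e{sc}}(y)$, I would split into the two cases $x\notin\intff{0}{1}$ and $x\in\intff{0}{1}$. In the first case, injectivity of $\ga$ combined with \eqref{hypothesis4} ensures $|\ga(x)-\ga(y)|$ is bounded below on compact neighbourhoods avoiding $\intff{0}{1}$, and continuity follows from dominated convergence. In the second case, using $|\ga(x)-\ga(y)|\geq C_\ga|x-y|$ from \eqref{hypothesis4}, the integrand is uniformly majorised by an integrable log singularity (the semicircle density is bounded on $\intff{0}{1}$), so one may again invoke dominated convergence to obtain continuity.

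The main point is coercivity, that is, $\wt{\vp}_{\Ga_{\e{eq}};\e{eff}}(x)\to+\infty$ as $|x|\to+\infty$. By hypothesis \ref{hypothesis3}, for $|x|$ large one has $\arg\ga(x)\in\{2\pi\a/\kappa,2\pi\a'/\kappa\}$, so $\ga(x)^\kappa=|\ga(x)|^\kappa$ is real positive, whence using $V(z)=\tfrac{z^\kappa}{\kappa}+\e{O}(z^{\kappa-1})$ we get
\begin{equation*}
\wt{\vp}(x)\;=\;\Re V(\ga(x))\;=\;\frac{|\ga(x)|^\kappa}{\kappa}+\e{O}(|\ga(x)|^{\kappa-1})\,.
\end{equation*}
On the other hand, since $|\ga(x)-\ga(y)|\leq |\ga(x)|+\max_{y\in\intff{0}{1}}|\ga(y)|$, one has $\int_0^1\ln|\ga(x)-\ga(y)|\,\dd\nu_{\e{sc}}(y)=\e{O}(\ln|\ga(x)|)$. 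Combined with $|\ga(x)|\to+\infty$ as $|x|\to\infty$ from \ref{hypothesis3}, this yields $\wt{\vp}_{\Ga_{\e{eq}};\e{eff}}(x)\to+\infty$. Together with lsc, this forces every sublevel set $\{\wt{\vp}_{\Ga_{\e{eq}};\e{eff}}\leq M\}$ to be closed and bounded, hence compact. The only mildly subtle step is handling the logarithmic term when $x\in\intff{0}{1}$, but the lower bound \eqref{hypothesis4} reduces this to the standard integrability of $\ln|x-y|$ against a bounded density.
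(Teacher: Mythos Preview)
Your proof is correct and follows essentially the same three-step structure as the paper (non-negativity, lower semi-continuity, coercivity), but the execution differs in two places worth noting.

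For lower semi-continuity, the paper is terser: it simply observes that $\wt{\vp}_{\Ga_{\e{eq}};\e{eff}}$ is continuous off $\e{supp}[\nu_{\e{sc}}]$ and vanishes on that compact set, relying implicitly on the general fact (stated earlier in Section~\ref{equilibriumsection}) that $U[\mu]$ is lsc. You instead prove full continuity via dominated convergence, using the bi-Lipschitz bound \eqref{hypothesis4} to dominate the logarithmic singularity; this is a bit more work but also yields a slightly stronger conclusion.

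For coercivity, the paper uses the splitting inequality \eqref{splitting}, $|x-y|^2 \leq (1+|x|^2)(1+|y|^2)$, to absorb the logarithmic integral into a single $\tfrac{1}{2}\ln(1+|\ga(x)|^2)$ term, then invokes $\wt{\vp}(x) - \tfrac{1}{2}\ln(1+|\ga(x)|^2) \to +\infty$. Your route is more direct: you read off from \ref{hypothesis3} that $\wt{\vp}(x) = |\ga(x)|^\kappa/\kappa + \e{O}(|\ga(x)|^{\kappa-1})$ and bound the log integral by $\e{O}(\ln|\ga(x)|)$. Both work; the paper's inequality has the advantage of being reused several times later in Section~\ref{maxeigsection}, while yours is self-contained.
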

\begin{proof}
$\widetilde{\varphi}_{ \Ga_{\e{eq}}; \mathrm{eff} }$ is continuous on
$\e{supp}[\nu_{\e{sc}}]^c$, vanishes on the compact set $\e{supp}[ \nu_{\e{sc}} ]$,
and so, in particular, is lsc. For any $C \in \mathbb{R}$, $\widetilde{\varphi}_{\Ga_{\e{eq}};\mathrm{eff}}^{-1}((-\infty,C])$ is closed and so we need only to show boundedness.
$\widetilde{\varphi}_{\Ga_{\e{eq}};\mathrm{eff}}(x) \leq C$ implies by
\begin{align}\label{splitting}
|x-y|^2 \leq (1+|x|^2)(1+|y|^2) & & \forall x,y\in \mathbb{C}
\end{align}
that
\begin{align*}
\widetilde{\varphi}(x) - \frac{1}{2} \ln(1+|\gamma(x)|^2) - C_{ \Ga_{\e{eq}}} \leq C - C_{ \Ga_{\e{eq}}}  + \frac{1}{2} \int_\mathbb{R}  \ln(1+|\gamma(y)|^2) \, \dd \nu_{\mathrm{sc}}(y) \;.
\end{align*}
However $\liminf_{x \to \pm \infty} \left( \widetilde{\varphi}(x) -  \frac{1}{2}\ln(1+|\gamma(x)|^2) \right) = +\infty$, and so $x$ must be bounded.
\end{proof}
The method of proving Proposition \ref{MaxEigLDP} involves first proving that the particles are confined with exponentially high probability to \textit{some} compact set.
Once established, we may then use concentration results on compact sets to show that that particles are close to the support of the equilibrium measure.

We begin with the elementary inequality
$$\mathbb{P}_{N,\ga}\big[ \big\{ x_N \in X \big\} \big]  \, \leq  \,  \mathbb{P}_{N,\ga} \big[ \big\{ \exists i \in \{1 , \dots, N  \} \, \text{ s.t. } x_i \in X  \big\} \big]
                    \leq N \mathbb{P}_{N,\ga}\big[ \big\{ x_N \in X \big\} \big] \; . $$
Thus, to prove Proposition \ref{MaxEigLDP}, we need only prove an LDP for $\mathbb{P}_{N,\ga}\big[ \big\{ x_N \in X \big\} \big]$.
For any Borel set $X \subset \mathbb{R}$, let
$$ \Upsilon_{N-1}(X) \overset{\mathrm{def}}{=} \Int{X}{}
\mathbb{E}_{N-1,\ga} \left[ \ex{ (N-1)\beta \int_\mathbb{R} \big[ \ln|\gamma(x) - \gamma(y)| - \widetilde{\varphi}(y) \big] \, \dd L_{N-1}^{(\mathbf{x})}(y)}  \right]
\ex{ - N \beta\widetilde{\varphi}(x) } |\gamma^\prime(x)| \, \dd x \, .$$
Then
$$\mathbb{P}_{N,\ga}\big[ \big\{ x_N \in X \big\} \big] = \frac{\Upsilon_{N-1}(X)}{\Upsilon_{N-1}(\mathbb{R})} \; . $$
Thus we need only prove an LDP for $\Upsilon_{N-1}$. Let us first prove that all the particles are confined to some sufficiently large compact set.
\begin{corollary}\label{Mlimsup}
$$\limsup_{M \to +\infty} \limsup_{N \to +\infty}\frac{1}{N^2} \ln \mathbb{P}_{N,\ga} \big[ A_M \big]  = -\infty$$
where
\beq
A_M \;= \; \Big\{ \mathbf{x} \in \mathbb{R}^N \, :
                        \,  \Int{\mathbb{R}}{} \Big( \widetilde{\varphi}(x) - \frac{1}{2} \ln (1+|\gamma(x)|^2) \Big) \, \dd L_N^{(\mathbf{x})}(x) \geq M \Big\} \;.
\label{ecritude evenement AM}
\enq

\end{corollary}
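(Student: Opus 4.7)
The plan is to exploit the classical splitting inequality \eqref{splitting}, applied pointwise to $z=\ga(x_i)$ and $w=\ga(x_j)$: taking logarithms and summing over $i<j$ yields
\beq
\beta\sul{i<j}{N}\ln|\ga(x_i)-\ga(x_j)| \; \leq \; \frac{\beta(N-1)}{2}\sul{k=1}{N}\ln\!\big(1+|\ga(x_k)|^2\big).
\nonumber
\enq
Combined with the potential contribution $-N\beta\sum_k \widetilde{\varphi}(x_k) = -N^2\beta\int\widetilde{\varphi}\,\dd L_N^{(\mathbf{x})}$, the exponent in the unnormalised density of $\mathbb{P}_{N,\ga}$ is majorised, up to the $\sum_k\ln|\ga'(x_k)|$ factor coming from the arc-length element, by
\beq
-\beta N^2\Int{\mathbb{R}}{}\Big(\widetilde{\varphi}(x)-\tfrac{1}{2}\ln(1+|\ga(x)|^2)\Big)\,\dd L_N^{(\mathbf{x})}(x) \; - \; \frac{\beta}{2}\sul{k=1}{N}\ln\!\big(1+|\ga(x_k)|^2\big).
\nonumber
\enq

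On the event $A_M$, the first term is bounded above by $-\beta N^2 M$, while what remains decouples into a product over the particles. Reintegrating against the one-particle measures $|\ga'(x_k)|\,\dd x_k$ will give
\beq
\Int{A_M}{}\pl{i<j}{N}|\ga(x_i)-\ga(x_j)|^\beta\pl{k=1}{N}\Big\{|\ga'(x_k)|\ex{-N\beta\widetilde{\varphi}(x_k)}\Big\}\dd\mathbf{x} \; \leq \; \ex{-\beta N^2 M}\cdot I^N,
\nonumber
\enq
where $I\overset{\mathrm{def}}{=}\int_{\mathbb{R}}(1+|\ga(x)|^2)^{-\beta/2}|\ga'(x)|\,\dd x$. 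The finiteness $I<+\infty$ will follow from hypotheses \ref{hypothesis2}--\ref{hypothesis3}: these ensure $|\ga(x)|\geq C_\gamma|x|$ and $|\ga'(x)|\leq C_\gamma^{-1}$ for $|x|$ large, so the integrand decays at least like $|x|^{-\beta}$, which is integrable since $\beta\in 2\mathbb{N}^*$.

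To turn this into a bound on the probability, the plan is to divide by a lower bound on the partition function. Since $\op{Z}_{N,\ga}[V]\geq \op{Z}_{N,\ga^\epsilon}[V]$ and the right-hand side falls under the hypotheses of Lemma \ref{partitionlower} (taking $\eta_t=\ga$, $\mathcal{J}=\intff{-\epsilon}{1+\epsilon}$, $W_t=V$), one obtains $\ln\op{Z}_{N,\ga}[V]\geq -\tfrac{\beta N^2}{2}\op{I}_{\Om}[\mu_\Om]-CN\ln N$ with $\Om=\ga(\intff{-\epsilon}{1+\epsilon})$. Combining all ingredients,
\beq
\frac{1}{N^2}\ln\mathbb{P}_{N,\ga}[A_M] \; \leq \; -\beta M \, + \, \frac{\beta}{2}\op{I}_{\Om}[\mu_\Om] \, + \, \frac{\ln I}{N} \, + \, \frac{C\ln N}{N},
\nonumber
\enq
so $\limsup_{N\to+\infty}N^{-2}\ln\mathbb{P}_{N,\ga}[A_M]\leq -\beta M+\frac{\beta}{2}\op{I}_{\Om}[\mu_\Om]$, which tends to $-\infty$ as $M\to+\infty$. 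The main obstacle will simply be the bookkeeping needed to arrange a clean separation between the $\ex{-\beta N^2 M}$ factor on $A_M$ and an integrable one-body residual; once the splitting inequality produces this structure, the remaining argument reduces to a routine combination with a crude lower bound on the partition function.
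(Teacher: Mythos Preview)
Your proof is correct and follows essentially the same approach as the paper: apply the splitting inequality \eqref{splitting} to bound the Vandermonde, extract the factor $\ex{-\beta N^2 M}$ on $A_M$ together with a one-body integral $I<+\infty$, and divide by the lower bound on the partition function from Lemma \ref{partitionlower}. The only cosmetic differences are that the paper bounds $|\ga'(x)|\leq C_\gamma^{-1}$ before integrating (you keep it inside $I$), and that you make explicit the restriction $\op{Z}_{N,\ga}[V]\geq \op{Z}_{N,\ga^\epsilon}[V]$ needed to invoke Lemma \ref{partitionlower}, which the paper leaves implicit.
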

\begin{proof}
Starting from
\begin{align*}
\mathbb{P}_{N,\ga} \big[ A_M \big]= \frac{1}{ \op{Z}_{N, \ga}[V] } \Int{ \mathbb{R}^N }{} \mathbbm{1}_{A_M}(\mathbf{x})
\pl{\substack{ i, j=1 \\ i \neq j}}{N} |\gamma(x_i) - \gamma(x_j)|^{\frac{\beta}{2}}  \cdot \pl{k=1}{N}\Big\{ |\gamma^\prime(x_k)| \ex{-\beta N  \widetilde{\varphi}(x_k)} \Big\} \ \, \dd \mathbf{x} \;,
\end{align*}
the upper bound \eqref{splitting} yields $ \prod_{\substack{ i, j=1 \\ i \neq j}}^N |\gamma(x_i) - \gamma(x_j)| \leq  \prod_{i=1}^N (1+|\gamma(x_i)|^2)^{N-1}$
which, in turn, gives
$$\mathbb{P}_{N,\ga} \big[ A_M \big] \leq    \exp\Big\{-\beta N^2 M  + N^2 \be \op{I}_{\Ga_{\e{eq}}}[\mu_{\Ga_{\e{eq}}} ] + C N \ln N \Big\}\cdot
\bigg( \int_{\mathbb{R}} \frac{1}{ C_{\gamma} \cdot (1+|\gamma(x)|^2)^{\frac{\beta}{2}}} \, \dd x \bigg)^N \,. $$
Here we have bounded the partition function from below with the help of Lemma \ref{partitionlower}.
By \ref{hypothesis3}, one has that $\Int{\mathbb{R}}{} (1+|\gamma(x)|^2)^{-\frac{\beta}{2}} \, \dd x < +\infty$ as soon as $\beta \geq 2$, which entails the claim.
\end{proof}
\begin{lemma}\label{expconfinement}
For any $C > 0$, there is a $K > 0$ and $N_0 \in \mathbb{N}$ sufficiently large such that
\begin{align*}
\mathbb{P}_{N,\ga} \big[ \big\{  \exists i \in \{1 , \dots, N  \} \, \text{ s.t. } x_i \in   \intff{-K}{K}^{\e{c}} \big\} \big] \leq \ex{ - C N }  & & \forall N \geq N_0 \;.
\end{align*}
\end{lemma}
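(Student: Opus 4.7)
The plan is to combine the union bound with the single-particle representation established in the preceding paragraph. By exchangeability the probability in question is bounded by $N\,\mathbb{P}_{N,\ga}[\{x_N\in\intff{-K}{K}^{\e{c}}\}]$, which equals $N\,\Upsilon_{N-1}(\intff{-K}{K}^{\e{c}})/\Upsilon_{N-1}(\R)$. I will show that the numerator decays like $e^{-\beta N f_K}$ with $f_K\to+\infty$ as $K\to\infty$, while the denominator decays only at a fixed exponential rate, so that for $K$ large enough the ratio beats any prescribed $e^{-CN}$.

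For the numerator I would apply the elementary inequality \eqref{splitting} in the form $\ln|\gamma(x)-\gamma(y)|\leq\tfrac{1}{2}\ln(1+|\gamma(x)|^{2})+\tfrac{1}{2}\ln(1+|\gamma(y)|^{2})$ inside the internal expectation. Setting $f(y)\overset{\mathrm{def}}{=}\widetilde\varphi(y)-\tfrac{1}{2}\ln(1+|\gamma(y)|^{2})$ and observing that $-c\overset{\mathrm{def}}{=}\inf_{\R}f>-\infty$, a consequence of \eqref{Vinf} together with $\kappa\geq 2$ (the polynomial growth of $\widetilde\varphi$ dominates the logarithm), the internal expectation is dominated by the deterministic factor $e^{(N-1)\beta[c+\frac{1}{2}\ln(1+|\gamma(x)|^{2})]}$. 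After elementary algebraic simplification of the resulting integrand one obtains
\[
\Upsilon_{N-1}\bigl(\intff{-K}{K}^{\e{c}}\bigr) \leq e^{(N-1)\beta c}\int_{|x|>K} e^{-\beta N f(x)}(1+|\gamma(x)|^{2})^{-\beta/2}|\gamma'(x)|\,\dd x \leq C_0\,e^{(N-1)\beta c-\beta N f_K},
\]
where $f_K\overset{\mathrm{def}}{=}\inf_{|x|\geq K}f(x)\to+\infty$ as $K\to\infty$ and the residual integral is finite because $\beta\geq 2$ and $|\gamma(x)|\sim|x|$ at infinity by \ref{hypothesis3}.

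For the denominator I would rewrite $\Upsilon_{N-1}(\R)=\op{Z}_{N,\ga}[V]/\op{Z}_{N-1,\ga}[V]$ and combine the lower bound $\ln\op{Z}_{N,\ga}[V]\geq -\tfrac{\beta N^{2}}{2}\op{I}_{\Ga_{\e{eq}}}[\mu_{\Ga_{\e{eq}}}]-CN\ln N$ (obtained by restricting the integration domain to $\intff{-\epsilon}{1+\epsilon}^{N}$ and invoking Lemma \ref{partitionlower} on the compact contour $\ga^{\epsilon}$) with the matching large-deviation upper bound $\ln \op{Z}_{N-1,\ga}[V]\leq -\tfrac{\beta(N-1)^{2}}{2}\op{I}_{\Ga_{\e{eq}}}[\mu_{\Ga_{\e{eq}}}]+CN\ln N$ at scale $N^{2}$. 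The upper bound I would derive by rerunning the regularised-empirical-measure argument of Section \ref{concentrationsection} (Definition \ref{regularmeasure}, Lemma \ref{smoothness} and estimates \eqref{bound1}--\eqref{bound3}) on the $(N-1)$-particle model, splitting the integration according to the event $A_M$ from \eqref{ecritude evenement AM} and invoking Corollary \ref{Mlimsup} to neutralise the contribution coming from the unbounded tail of the integration region. The resulting lower bound on the denominator has the form $\Upsilon_{N-1}(\R)\geq e^{-\beta N\,\op{I}_{\Ga_{\e{eq}}}[\mu_{\Ga_{\e{eq}}}]+O(\ln N)}$.

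Assembling the two estimates (together with the union-bound prefactor $N$) the desired probability is bounded by $\exp\{-\beta N[f_K-c-\op{I}_{\Ga_{\e{eq}}}[\mu_{\Ga_{\e{eq}}}]]+O(\ln N)\}$. Given $C>0$ I choose $K$ large enough that $\beta(f_K-c-\op{I}_{\Ga_{\e{eq}}}[\mu_{\Ga_{\e{eq}}}])>C+1$, and then $N_0$ large enough to absorb the $O(\ln N)$ correction. The main obstacle will be establishing the sharp $N^{2}$-scale upper bound on $\ln \op{Z}_{N-1,\ga}[V]$: while the lower bound on $\ln\op{Z}_{N,\ga}[V]$ follows by an essentially verbatim transcription of the compact-contour argument, the matching upper bound requires first using the tail estimate of Corollary \ref{Mlimsup} to reduce the integration on the unbounded contour to a controlled compact regime where the regularised-measure machinery of Section \ref{concentrationsection} becomes applicable.
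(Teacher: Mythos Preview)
Your bound on the numerator $\Upsilon_{N-1}(\intff{-K}{K}^{\e{c}})$ is correct and matches the paper's argument essentially verbatim.

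Your treatment of the denominator, however, takes an unnecessarily circuitous route and has a genuine obstruction. Writing $\Upsilon_{N-1}(\R)=\op{Z}_{N,\ga}[V]/\op{Z}_{N-1,\ga}[V]$ forces you to produce the \emph{sharp} upper bound $\ln\op{Z}_{N-1,\ga}[V]\leq -\tfrac{\beta(N-1)^2}{2}\op{I}_{\Ga_{\e{eq}}}[\mu_{\Ga_{\e{eq}}}]+O(N\ln N)$ on the unbounded contour; an $o(N^2)$ error (which is all the standard truncated--kernel argument gives on $\R$) will not suffice, since it would only yield $\ln\Upsilon_{N-1}(\R)\geq -o(N^2)$, too weak to dominate the $e^{O(N)}$ numerator. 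The regularised--measure estimates of Section~\ref{concentrationsection} (Lemma~\ref{smoothness} and \eqref{bound1}--\eqref{bound3}) are stated and proved for compact $\mc{J}$: they require uniform control of $\Lambda(x,y)=\ln\big|\tfrac{\gamma(x)-\gamma(y)}{x-y}\big|$ and of $\widetilde{\varphi}'$, both of which fail on $\R$. Your proposed remedy of splitting on $A_M$ does not restore compactness: on $A_M^{\e{c}}$ only the empirical average of $\widetilde{\varphi}-\tfrac12\ln(1+|\gamma|^2)$ is bounded, so individual particles can still escape to infinity and the compact--interval argument does not carry over. If instead you split on the event $\{\text{all particles in }\intff{-K'}{K'}\}$, controlling the unnormalised integral on its complement is essentially the content of the lemma you are proving, so the argument threatens to become circular.

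The paper avoids all of this by bounding $\Upsilon_{N-1}(\R)$ from below \emph{directly}. One restricts to $\Upsilon_{N-1}(\intff{-1}{1})$, uses $|\gamma'(x)|\geq C_\gamma$ and $\ln|\gamma(x)-\gamma(y)|\geq\ln C_\gamma+\ln|x-y|$, and applies Jensen's inequality to the outer $x$-integral. This reduces matters to a lower bound on $\mathbb{E}_{N-1,\ga}\big[\exp\big(-(N-1)\beta\int\widetilde{\varphi}\,dL_{N-1}\big)\big]$, which follows from Corollary~\ref{Mlimsup}: restrict to the event $\{\int\widetilde{\varphi}\,dL_{N-1}\leq M\}$, observe its complement is contained in a translate of $A_M$, and conclude $\Upsilon_{N-1}(\R)\geq e^{-C'N}$ for a fixed $C'$ independent of $K$. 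Since $f_K\to+\infty$, this is all that is needed; no sharp control on either partition function is required.
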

\begin{proof}
Let us begin by first establishing a rough lower bound on $\Upsilon_{N-1}(\mathbb{R})$.  By positivity and then by Jensen's inequality
\begin{align*}
\Upsilon_{N-1}(\mathbb{R}) &\geq \Upsilon_{N-1}([-1,1]) \\
&\geq 2 C_\gamma \ex{ (N-1) \beta \ln C_\gamma - \frac{N\beta}{2} \int_{-1}^1 \widetilde{\varphi}(x) \, \dd x }
\mathbb{E}_{N-1, \ga} \Big[ \ex{ (N-1) \beta \int_\mathbb{R} \left( \frac{1}{2}\int_{-1}^1 \ln|x-y| \,\dd x - \widetilde{\varphi}(y)\right) \, \dd L_{N-1}^{(\mathbf{x})}(y) } \Big] \;.
\end{align*}
Then we observe that $\frac{1}{2}\int_{-1}^1 \ln|x-y| \, \dd x \geq \frac{1}{2}\int_{-1}^1 \ln|x| \, \dd x = -1$. To bound the remaining expectation,
introduce  $\op{U}_L \, = \, \Int{\mathbb{R}}{} \widetilde{\varphi}(y) \,\dd L_{N-1}^{(\mathbf{x})}(y)  $ and observe the trivial bound
$$\mathbb{E}_{N-1, \ga} \Big[ \ex{- (N-1) \beta \op{U}_L } \Big] \geq \ex{-M \be (N-1) }  \mathbb{E}_{N-1, \ga} \big[ \mathbbm{1}_{\op{U}_L \leq M} \big] \geq
\ex{-M \be N } \Big\{1 \, - \,   \mathbb{E}_{N-1, \ga} \big[ \mathbbm{1}_{\op{U}_L > M} \big]  \Big\} \,. $$
The last expectation may be estimated by means of Corollary \ref{Mlimsup} which ensures that for any $T$, there exists $M_0, N_0 $ such that
$$
\mathbb{P}_{N,\ga} \big[ A_M \big] \, < \, \ex{-N^2 T} \qquad \e{with} \qquad M \geq M_0 \quad \e{and} \quad  N \geq N_0 \; ,
$$
where $A_M$ as given by \eqref{ecritude evenement AM}.

Furthermore, the growth conditions at $\pm \infty$ ensure that $\wt{\vp}(x) \, - \, \tfrac{1}{2} \ln\big( 1+|\ga(x)|^2 \big) \geq  \tfrac{1}{2} \wt{\vp}(x) + C$. Hence,
\beq
\Big\{ \op{U}_L \geq M  \Big\}  \subset A_{ \f{M}{2} + C} \;.
\enq
So that, for $M$ large enough,
$$
\mathbb{E}_{N-1, \ga} \big[ \mathbbm{1}_{\op{U}_L > M} \big]  \leq \ex{-N^2 T} \;.
$$
All together, the above leads to  $\Upsilon_{N-1}(\mathbb{R}) \geq \ex{-C N}$ for some $C>0$.

Moving forward, let us now provide an upper bound for $\Upsilon_{N-1}(\intff{-K}{K}^{\e{c}} )$.
By using \eqref{splitting} and the bound $\frac{1}{2}\ln(1+|\gamma(x)|^2) - \widetilde{\varphi}(x) \leq C$ for all $x \in \mathbb{R}$ and some $C>0$, we find
\begin{align*}
\Upsilon_{N-1}( \intff{-K}{K}^{\e{c}} ) \leq
\exp\bigg\{ (N-1)\beta \Big\{ C + \hspace{-3mm}\underset{x \in  \intff{-K}{K}^{\e{c}} }{\sup}
\Big( \tfrac{1}{2}\ln(1+|\gamma(x)|^2) - \widetilde{\varphi}(x) \Big) \Big\} \bigg\}
 \int_\mathbb{R}\ex{-\beta \widetilde{\varphi}(x)} \, \f{ \dd x   }{  C_\gamma }
\end{align*}
in which the last integral is finite due to the growth conditions of $\widetilde{\varphi}$. Finally,
observing that
$$\underset{x \in  \intff{-K}{K}^{\e{c}} }{\sup} \Big(\frac{1}{2}\ln(1+|\gamma(x)|^2) - \widetilde{\varphi}(x)\Big) \to -\infty \qquad \e{as} \qquad  K \to +\infty$$
completes the proof.
\end{proof}
We remark that in the course of proving Lemma \ref{expconfinement} we have proven exponential tightness of $\Upsilon_{N-1}$.
\begin{lemma}[Exponential tightness]\label{confinement}
$$\limsup_{K \to +\infty}\limsup_{N \to +\infty} \frac{1}{N} \ln \Upsilon_{N-1}( \intff{-K}{K}^{\e{c}} ) = - \infty $$
\end{lemma}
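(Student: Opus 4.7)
The plan is to recycle the upper bound on $\Upsilon_{N-1}(\intff{-K}{K}^{\e{c}})$ already derived in the course of proving Lemma \ref{expconfinement}. There it was shown that
\begin{align*}
\Upsilon_{N-1}(\intff{-K}{K}^{\e{c}}) \, \leq \,
\exp\bigg\{ (N-1)\beta \Big\{ C \, + \, \sup_{x \in \intff{-K}{K}^{\e{c}}} \Big( \tfrac{1}{2}\ln(1+|\gamma(x)|^2) - \widetilde{\varphi}(x) \Big) \Big\} \bigg\} \cdot \Int{\mathbb{R}}{} \ex{-\beta \widetilde{\varphi}(x)} \, \frac{\dd x}{C_\gamma}
\end{align*}
for some constant $C>0$ independent of $K$ and $N$. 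Call $\mf{M}(K) := \sup_{x \in \intff{-K}{K}^{\e{c}}} \big( \tfrac{1}{2}\ln(1+|\gamma(x)|^2) - \widetilde{\varphi}(x) \big)$ the supremum appearing in this bound. The growth conditions \ref{hypothesis3} on $\gamma$ together with the fact that $V$ has leading term $\tfrac{z^{\kappa}}{\kappa}$ with $\kappa \geq 2$ ensure that along $\Gamma_{\e{eq}}$ the potential grows like $\tfrac{|z|^\kappa}{\kappa}(1+\e{o}(1))$, while $\tfrac{1}{2}\ln(1+|\gamma(x)|^2)$ only grows logarithmically. Hence $\mf{M}(K) \to -\infty$ as $K \to +\infty$. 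Moreover, the integral $\int_\mathbb{R} \ex{-\beta \widetilde{\varphi}(x)} \, \dd x$ is finite by the same growth conditions, so it contributes a $K$- and $N$-independent constant.

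Taking logarithms, dividing by $N$ and letting $N\to+\infty$, one obtains
\begin{align*}
\limsup_{N \to +\infty} \frac{1}{N}\ln \Upsilon_{N-1}(\intff{-K}{K}^{\e{c}}) \, \leq \, \beta \big\{ C + \mf{M}(K) \big\} \, .
\end{align*}
Letting now $K \to +\infty$ and using $\mf{M}(K) \to -\infty$ yields the claim.

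Since the key estimate is already in place, I do not expect any real obstacle; the only point to be verified is that the potential $\widetilde{\varphi}$ indeed dominates $\tfrac{1}{2}\ln(1+|\gamma(x)|^2)$ uniformly outside a compact set, which is immediate from \eqref{Vinf} and the polynomial nature of $V$ with $\kappa\geq 2$. The remark that the exponential tightness of $\Upsilon_{N-1}$ was in fact obtained as a by-product of the proof of Lemma \ref{expconfinement} is stated explicitly just above, so the lemma is essentially a restatement of what was already established.
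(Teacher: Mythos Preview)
Your proposal is correct and matches the paper's approach exactly: the paper does not give a separate proof of Lemma~\ref{confinement} but simply remarks that exponential tightness was already obtained in the course of proving Lemma~\ref{expconfinement}, via precisely the bound you reproduce and the fact that $\sup_{x \in \intff{-K}{K}^{\e{c}}} \big( \tfrac{1}{2}\ln(1+|\gamma(x)|^2) - \widetilde{\varphi}(x) \big) \to -\infty$ as $K \to +\infty$.
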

To prove Proposition \ref{MaxEigLDP} we will prove a weak LDP, which by the aforementioned exponential tightness implies a full LDP.
\begin{definition}[Regularised logarithm]
Let $\theta : \mathbb{R} \longrightarrow \intff{0}{1}$ be a $\mc{C}^\infty$ function such that $\theta$ is even,
$\theta(0) = 0$, $\theta$ is nondecreasing for $x \geq 0$,  and $\theta(x) = 1$ for $|x| \geq 1$.
Then, define the $\delta$-regularised logarithm as
$$\ln_\delta |x| \overset{\mathrm{def}}{=} - \Int{x}{1} \theta\left(\frac{y}{\delta}\right) \frac{1}{y} \, \dd y$$
for $\delta > 0$ a constant. Note that $\ln_\delta$ is $\mc{C}^\infty$, even, and $\ln_\delta |x| = \ln|x|$ for $|x| \geq \delta$.
Furthermore $\ln_\delta$ is decreasing in $\delta > 0$ and $\ln_\delta \longrightarrow \ln$ pointwise as $\delta \downarrow 0$ by monotone convergence.
\end{definition}
\begin{lemma}[Upper bound on balls]\label{ballupper}
$$\limsup_{\vsg\downarrow 0} \limsup_{N \to +\infty }
\frac{1}{N} \ln \Upsilon_{N-1}\big( \intoo{\xi-\vsg}{\xi+\vsg} \big) \, \leq  \,
\beta \Int{0}{1} \big[  \ln|\gamma(\xi) - \gamma(y)| - \widetilde{\varphi}(\xi) - \widetilde{\varphi}(y)  \big] \, \dd \nu_{\e{sc}}(y) \, .$$
\end{lemma}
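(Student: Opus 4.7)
The plan is to bound the integrand of $\Upsilon_{N-1}\!\bigl(\intoo{\xi-\vsg}{\xi+\vsg}\bigr)$ uniformly in $x$ near $\xi$, thereby reducing the problem to controlling, under $\mathbb{P}_{N-1,\ga}$, the exponential of a linear statistic of $L_{N-1}^{(\mathbf{x})}$ that depends only on $\xi$. The first step is to upper bound the logarithm by its regularisation $\ln_\delta \geq \ln$; then, since $\ln_\delta$ is Lipschitz with constant $\e{O}(1/\delta)$ on $\intfo{0}{+\infty}$ and $\ga$ is globally Lipschitz by \eqref{hypothesis4},
$$\bigl|\ln_\delta|\ga(x)-\ga(y)| - \ln_\delta|\ga(\xi)-\ga(y)|\bigr| \;\leq\; \frac{C\vsg}{\delta}\,,\qquad x \in \intoo{\xi-\vsg}{\xi+\vsg}\,,\; y\in \R \, , $$
uniformly in $y$. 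Combined with $|\wt\vp(x)-\wt\vp(\xi)|\leq C_\xi\vsg$ for small $\vsg$, this will lead to
$$\Upsilon_{N-1}\!\bigl(\intoo{\xi-\vsg}{\xi+\vsg}\bigr)\;\leq\; 2\vsg\,C_\ga^{-1}\,\ex{-N\beta\wt\vp(\xi)+\e{O}(N\vsg/\delta)}\cdot\mathbb{E}_{N-1,\ga}\!\bigl[\ex{(N-1)\beta\int_\R f_{\xi,\delta}(y)\,\dd L_{N-1}^{(\mathbf{x})}(y)}\bigr]$$
where $f_{\xi,\delta}(y) := \ln_\delta|\ga(\xi)-\ga(y)|-\wt\vp(y)$. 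Because $\wt\vp(y)-\tfrac{1}{2}\ln(1+|\ga(y)|^2)\to +\infty$ as $|y|\to\infty$ while $\ln_\delta|\ga(\xi)-\ga(y)| \leq C_\delta+\tfrac{1}{2}\ln(1+|\ga(\xi)|^2)+\tfrac{1}{2}\ln(1+|\ga(y)|^2)$, the function $f_{\xi,\delta}$ is bounded from above by some $M_{\xi,\delta}<+\infty$.

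The main step, and the main obstacle, is to establish
\beq\label{eq:keyboundplan}
\limsup_{N\to+\infty}\frac{1}{N}\ln\mathbb{E}_{N-1,\ga}\!\bigl[\ex{(N-1)\beta\int_\R f_{\xi,\delta}(y)\,\dd L_{N-1}^{(\mathbf{x})}(y)}\bigr]\;\leq\;\beta\Int{\R}{}f_{\xi,\delta}(y)\,\dd\nu_{\e{sc}}(y)\, ,
\enq
which I would approach by combining confinement with concentration on a compact set. By Lemma \ref{expconfinement}, for any $T>0$ I can pick $K>0$ so that $\mathbb{P}_{N-1,\ga}[E_K^c]\leq\ex{-TN}$, where $E_K:=\{\mathbf{x}:\,x_i\in\intff{-K}{K},\,i=1,\dots,N-1\}$; taking $T>\beta M_{\xi,\delta}$ makes the contribution of $E_K^c$ to the expectation at most $\ex{-cN}$, hence negligible. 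On $E_K$, I would exploit the key observation that, since $\e{supp}\,\nu_{\e{sc}}=\intff{0}{1}\subset\intoo{-K}{K}$, the measure $\nu_{\e{sc}}$ still satisfies the Frostman conditions \eqref{Frostman1}--\eqref{Frostman2} on $\intff{-K}{K}$, so it is also the equilibrium measure for the compact real model $\mathbb{P}_{N-1,\ga_{\mid\intff{-K}{K}}}$. The elementary restriction bound $\mathbb{P}_{N-1,\ga}[A\cap E_K]\leq\mathbb{P}_{N-1,\ga_{\mid\intff{-K}{K}}}[A]$ together with Corollary \ref{concentration} applied to $f=f_{\xi,\delta}\in\mc{W}^{\infty}_1(\wt{\mc{J}})$ with $\wt{\mc{J}}\supsetneq\intff{-K}{K}$ then yields
$$\mathbb{P}_{N-1,\ga}\!\Bigl[\Bigl\{\Bigl|\Int{\R}{}f_{\xi,\delta}\,\dd L_{N-1}^{(\mathbf{x})}-\Int{\R}{}f_{\xi,\delta}\,\dd\nu_{\e{sc}}\Bigr|>\epsilon\Bigr\}\cap E_K\Bigr]\;\leq\;\ex{-cN^2\epsilon^2+CN\ln N}\,.$$
Splitting the expectation on $E_K$ according to this event bounds it by $\ex{(N-1)\beta(\int f_{\xi,\delta}\dd\nu_{\e{sc}}+\epsilon)}+\ex{(N-1)\beta M_{\xi,\delta}-cN^2\epsilon^2+CN\ln N}$; the second term is negligible for fixed $\epsilon>0$ and $N$ large, so letting $N\to+\infty$ and then $\epsilon\downarrow 0$ yields \eqref{eq:keyboundplan}.

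Combining these estimates will give $\limsup_N N^{-1}\ln\Upsilon_{N-1}(\intoo{\xi-\vsg}{\xi+\vsg})\leq -\beta\wt\vp(\xi)+\e{O}(\vsg/\delta)+\beta\int_\R f_{\xi,\delta}\,\dd\nu_{\e{sc}}$. Sending $\vsg\downarrow 0$ at fixed $\delta$ removes the $\e{O}(\vsg/\delta)$ term, and sending $\delta\downarrow 0$ together with monotone convergence (since $\ln_\delta\downarrow\ln$ pointwise as $\delta\downarrow 0$ and the logarithmic singularity is integrable against the bounded, compactly supported density $\dd\nu_{\e{sc}}$) replaces $\int f_{\xi,\delta}\,\dd\nu_{\e{sc}}$ by $\int[\ln|\ga(\xi)-\ga(y)|-\wt\vp(y)]\,\dd\nu_{\e{sc}}(y)$, delivering the announced inequality. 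The core difficulty is reconciling the unboundedness of $\R$ with the compactly supported concentration of Corollary \ref{concentration}: this is resolved because the exponential rate $T$ in Lemma \ref{expconfinement} can be chosen to dominate the $(\xi,\delta)$-dependent sup bound $M_{\xi,\delta}$, the latter being fixed once $\xi$ and $\delta$ are chosen.
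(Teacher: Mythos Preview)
Your proof plan is correct and follows essentially the same strategy as the paper: regularise the logarithm, use Lemma~\ref{expconfinement} to confine to a compact interval, apply concentration for the compact model, and pass to the limit $\vsg\downarrow 0$ then $\delta\downarrow 0$ by monotone convergence. The paper differs in two minor technical choices. First, it splits $\ln|\ga(x)-\ga(y)| = \ln|x-y| + \Lambda(x,y)$ with $\Lambda$ smooth and regularises only the real-variable factor, which makes the $\mc{W}^\infty_1$ membership of $F_{x,\delta}$ immediate; your direct regularisation $\ln_\delta|\ga(\xi)-\ga(y)|$ also works but requires observing that $|\ga(\xi)-\ga(y)| = |y-\xi|\cdot s(y)$ with $s$ smooth and positive near $\xi$, so that $\ln_\delta(|y-\xi|s(y)) = \ln_\delta((y-\xi)s(y))$ by evenness and is therefore smooth. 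Second, after restricting to $[-K,K]$ the paper invokes Corollary~\ref{generatingfunction} to bound the exponential moment $\mathbb{E}_{N-1,\ga_{\mid[-K,K]}}\bigl[\ex{(N-1)\beta\int F_{x,\delta}\,\dd(\overline{L}_{N-1}-\nu_{\e{sc}})}\bigr]$ directly by $\ex{C_\delta\sqrt{N\ln N}}$, whereas you reach the same conclusion via Corollary~\ref{concentration} and an event-splitting argument. Both routes are valid; the paper's is a little more direct, while yours needs the small additional remark that $T$ must be chosen large enough to beat not merely $\beta M_{\xi,\delta}$ but $\beta\bigl(M_{\xi,\delta}-\int f_{\xi,\delta}\,\dd\nu_{\e{sc}}\bigr)$ so that the $E_K^c$ contribution falls below the target rate after taking $\tfrac{1}{N}\ln$.
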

\begin{proof}
Let $B_K$ be the event that all particles lie in the interval $\intff{-K}{K}$. Then
\begin{align*}
&\Upsilon_{N-1}\big( \intoo{\xi-\vsg}{\xi+\vsg} \big) = \Circled{1} + \Circled{2} \\
&\Circled{1} =
\Int{\xi-\vsg}{\xi+\vsg} \mathbb{E}_{N-1,\ga} \bigg[ \mathbbm{1}_{B_K}
\ex{ (N-1)\beta \int_{\mathbb{R}}^{} \big(  \ln|\gamma(x) - \gamma(y)| - \widetilde{\varphi}(y) \big) \, \dd L_{N-1}^{(\mathbf{x})}(y)}  \bigg] \ex{ - N \beta\widetilde{\varphi}(x)} |\gamma^\prime(x)| \, \dd x \, , \\
&\Circled{2} =
\Int{\xi-\vsg}{\xi+\vsg} \mathbb{E}_{N-1,\ga} \bigg[ \mathbbm{1}_{B_K^{\e{c}}}
\ex{ (N-1)\beta \int_{\mathbb{R}}^{} \big(  \ln|\gamma(x) - \gamma(y)| - \widetilde{\varphi}(y) \big) \, \dd L_{N-1}^{(\mathbf{x})}(y)}  \bigg] \ex{ - N \beta\widetilde{\varphi}(x)} |\gamma^\prime(x)| \, \dd x \; .
\end{align*}
Then applying (\ref{splitting}), we may bound
$$\left| \, \Circled{2} \, \right| \leq C_\gamma^{-1} \ex{2(N-1)\beta C^\prime}
\mathbb{P}_{N-1,\ga}\Big[ \big\{  \exists i \in \{1 , \dots, N  \} \, \text{ s.t. } x_i \in  \intff{-K}{K}^{\e{c}} \big\} \Big]
\cdot  \int_\mathbb{R} \ex{-\beta \widetilde{\varphi}(x)} \, \dd x $$
where $C^\prime = \sup_{x \in \mathbb{R}}(\frac{1}{2}\ln(1+|\gamma(x)|^2) - \widetilde{\varphi}(x))$. Then, by Lemma \ref{expconfinement}, for any $C>0$
there is a $K > 0$ and $N_0 \in \mathbb{N}$ such that for all $N \geq N_0$
\begin{align*}
& \mathbb{P}_{N-1,\ga}\Big[ \big\{  \exists i \in \{1 , \dots, N  \} \, \text{ s.t. } x_i \in  \intff{-K}{K}^{\e{c}} \big\} \Big]\leq \ex{-CN} \;.
\end{align*}
We choose $C>0$ as large as needs be to make the upper bound on $\Circled{2}$ exponentially subleading with respect to the behaviour to be determined of $\Circled{1}$.
 Of course, without loss of generality, we may assume $\intoo{\xi-\vsg}{\xi+\vsg} \subset \intff{-K}{K}$.

Moving forward, for the first term, let us make the bound
$\ln|\gamma(x) - \gamma(y)| \leq \ln_{\delta}|x-y| + \Lambda(x,y)$ where $\delta > 0$ and $\Lambda(x,y) = \ln \left| \frac{\gamma(x) - \gamma(y)}{x-y}\right|$.  Let
\begin{align*}
F_{x,\delta}(y) := \ln_{\delta}|x-y| + \Lambda(x,y) - \widetilde{\varphi}(y) \;.
\end{align*}
Observe that $F_{x,\delta}(y)\ge F_{x,0}(y)$ for every $x$ and $y$. Moreover,
by compactness, there exists a constant $C_\delta > 0$ independent of $x \in [-K,K]$ such that $\| F_{x,\delta} \|_{\mc{W}^{\infty}_1 (\intff{-K}{K})} \leq C_\delta$.
Furthermore, one has $\op{Z}_{N,\ga_{\mid \intff{-K}{K}} }[V] \leq \op{Z}_{N,\ga }[V] $, which leads to
\begin{align*}
&\Circled{1} \leq C_\gamma^{-1} \cdot  \ex{ \frac{2\beta C_\delta}{N-1}  } \\
& \Int{ \xi - \vsg }{ \xi + \vsg } \ex{ (N-1)\beta \Int{0}{1} F_{x,\delta}(s) \, \dd\nu_{\e{sc}}(s)  }  \cdot  \mathbb{E}_ {N-1,\ga_{\mid \intff{-K}{K}} }
\bigg[ \ex{  (N-1)\beta \Int{\mathbb{R}}{} F_{x,\delta}(s) \, \dd( \overline{L}_{N-1}^{(\mathbf{x})} -\nu_{\e{sc}})(s) }  \bigg]
\cdot \ex{-N\beta\widetilde{\varphi}(x) } \, \dd x
\end{align*}
where the $\frac{2\beta C_\delta}{N-1}$ term comes from replacing $L_{N-1}^{(\mathbf{x})}$ with $\overline{L}_{N-1}^{(\mathbf{x})}$.
$(x,y)\mapsto F_{x,\delta}(y)$ is smooth on  $\intff{-K}{K}^2$ so that, by compactness, there exists $C^\prime_\delta > 0$
such that
$$ \left|\frac{\partial}{\partial x}F_{x,\delta}(y)\right| \leq C_\delta^\prime  \qquad \e{for} \; \e{all}  \qquad (x,y) \in \intff{-K}{K}^2 \, . $$
This ensures that
$F_{x,\delta} \leq F_{\xi,\delta} + \vsg C_\delta^\prime $.  A similar bound holds for $\widetilde{\varphi}$.
Finally, by Corollary \ref{generatingfunction}, there is a constant $C_\delta^{\prime\prime} > 0$ such that
$$\mathbb{E}_ {N-1,\ga_{\mid \intff{-K}{K}} }
\bigg[ \ex{  (N-1)\beta \Int{\mathbb{R}}{} F_{x,\delta}(s) \, \dd( \overline{L}_{N-1}^{(\mathbf{x})} -\nu_{\e{sc}})(s) }  \bigg]
\leq \,  \ex{ C^{\prime\prime}_\delta \sqrt{N \ln N} } \, .$$
Putting all this together we find that for all $\delta > 0$
\begin{align*}
\limsup_{\vsg \downarrow 0} \limsup_{N \to +\infty } \frac{1}{N} \ln \Upsilon_{N-1}( \intoo{\xi-\vsg}{\xi+\vsg}) \leq \beta \Int{0}{1} F_{\xi,\delta}(s) \,
\dd \nu_{\e{sc}}(s) - \beta  \widetilde{\varphi}(\xi) \,.
\end{align*}
Taking $\delta \downarrow 0$ we obtain the result by monotone convergence.
\end{proof}

\begin{lemma}[Lower bound on balls]\label{balllower}
$$\liminf_{\vsg \downarrow 0} \liminf_{N \to +\infty } \frac{1}{N} \ln \Upsilon_{N-1}( \intoo{\xi-\vsg}{\xi+\vsg})
\geq \beta \Int{\mathbb{R}}{} \big\{ \ln|\gamma(\xi) - \gamma(y)| - \widetilde{\varphi}(\xi) - \widetilde{\varphi}(y)  \Big\} \, \dd \nu_{\e{sc}}(y) \, . $$
\end{lemma}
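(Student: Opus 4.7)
The plan is to adapt the quantile-based lower bound of Lemma \ref{partitionlower} to the present setting. Writing the inner expectation explicitly gives
\begin{equation*}
\Upsilon_{N-1}(\intoo{\xi-\vsg}{\xi+\vsg}) \, = \, \f{1}{\op{Z}_{N-1,\ga}[V]} \Int{\intoo{\xi-\vsg}{\xi+\vsg}}{} |\gamma'(x)| \ex{-N\beta\widetilde{\varphi}(x)} M(x) \, \dd x \,,
\end{equation*}
where $M(x) = \Int{\R^{N-1}}{} \pl{i<j}{N-1}|\gamma(y_i)-\gamma(y_j)|^\beta \pl{k=1}{N-1}\big\{|\gamma(x)-\gamma(y_k)|^\beta |\gamma'(y_k)| \ex{-N\beta\widetilde{\varphi}(y_k)}\big\} \, \dd \mathbf{y}$, and the task is to lower bound $M(x)$ by restricting the $\mathbf{y}$-integration to a shrinking box around the quantile configuration associated with $\nu_{\e{sc}}$.

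Let $\xi_1 < \ldots < \xi_{N-1}$ be the $\f{k-1/2}{N-1}$-quantiles of $\nu_{\e{sc}}$. Up to a harmless shift of $\xi$ within $\intoo{\xi-\vsg}{\xi+\vsg}$, I may assume $\min_k |\xi - \xi_k| \geq c/N$ for some fixed $c > 0$. Choosing $\vsg_N = \min(\vsg,N^{-2})$ ensures $|x-\xi_k|\geq c/(2N)$ throughout $\intoo{\xi-\vsg_N}{\xi+\vsg_N}$ for $N$ large. Restricting the $x$-integration to $\intoo{\xi-\vsg_N}{\xi+\vsg_N}$ and the $\mathbf{y}$-integration to $\|\mathbf{y}-(\xi_1,\ldots,\xi_{N-1})\|_\infty \leq N^{-3}$, the same Lipschitz comparisons for $\gamma, \gamma'$ and $\widetilde{\varphi}$ as in the proof of Lemma \ref{partitionlower} lead to
\begin{align*}
\ln \Upsilon_{N-1}(\intoo{\xi-\vsg}{\xi+\vsg}) &\geq -N\beta\widetilde{\varphi}(\xi) + \beta \sum_{i < j} \ln|\gamma(\xi_i)-\gamma(\xi_j)| + \beta \sum_k \ln|\gamma(\xi)-\gamma(\xi_k)| \\
&\qquad - N\beta \sum_k \widetilde{\varphi}(\xi_k) - \ln \op{Z}_{N-1,\ga}[V] - C N \ln N,
\end{align*}
the $C N \ln N$ term absorbing the volume factor $(2 N^{-3})^{N-1}$ of the $\mathbf{y}$-box and the $\ln(2\vsg_N)$ from the $x$-interval.

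Lemma \ref{logquantiles} handles the double sum over $(i,j)$, and analogous Riemann-sum estimates give $\sum_k \widetilde{\varphi}(\xi_k) = (N-1)\int \widetilde{\varphi}\, \dd\nu_{\e{sc}} + \e{O}(1)$ and, crucially, $\sum_k \ln|\gamma(\xi)-\gamma(\xi_k)| = (N-1)\int \ln|\gamma(\xi)-\gamma(y)|\,\dd \nu_{\e{sc}}(y) + \e{O}(\ln N)$, the log singularity at $\xi$ contributing only $\e{O}(\ln N)$ thanks to the $c/N$ minimal separation. Combining with the matching upper bound $\ln \op{Z}_{N-1,\ga}[V] \leq -\f{\beta(N-1)^2}{2}\mf{I}_{\R}[\nu_{\e{sc}}] + \e{O}(N\ln N)$ --- derived from the pointwise Frostman-Euler-Lagrange inequality of Lemma \ref{frostmanlemma} applied to the integrand of $\op{Z}_{N-1,\ga}[V]$ together with the concentration of $\overline{L}_{N-1}$ around $\nu_{\e{sc}}$ from Proposition \ref{logdistbound} --- and using $\mf{I}_{\R}[\nu_{\e{sc}}] = -\int\int\ln|\gamma(s)-\gamma(t)|\,\dd\nu_{\e{sc}}(s)\,\dd\nu_{\e{sc}}(t) + 2\int \widetilde{\varphi}\,\dd\nu_{\e{sc}}$, the $N^2$-order contributions cancel, so that dividing by $N$ and passing to $\liminf_{N\to\infty}$ followed by $\vsg \downarrow 0$ yields the claim. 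The principal technical obstacle is the matching upper bound on $\op{Z}_{N-1,\ga}[V]$ at precision $\e{O}(N\ln N)$; the near-diagonal treatment of the quantile sum $\sum_k \ln|\gamma(\xi)-\gamma(\xi_k)|$ is a minor point resolved by the $c/N$ separation choice.
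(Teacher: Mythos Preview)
Your approach is correct but differs from the paper's. The paper never expands $\Upsilon_{N-1}$ as a ratio with $\op{Z}_{N-1,\ga}[V]$ in the denominator; instead it works directly with the expectation form. After restricting to the event $B_K$ that all particles lie in $[-K,K]$, it introduces a smooth bump $\omega_\vsg$ supported in $\intoo{\xi-\vsg}{\xi+\vsg}$ and applies Jensen twice: once to the $x$-integral (against $\omega_\vsg(x-\xi)\,\dd x$) to produce a smooth function $G_{\xi,\vsg}(y)=\int\omega_\vsg(x-\xi)\ln|\gamma(x)-\gamma(y)|\,\dd x-\widetilde{\varphi}(y)$, and once to the inner expectation, giving $\mathbb{E}\big[e^{(N-1)\beta\int G_{\xi,\vsg}\,\dd L_{N-1}}\big]\geq e^{(N-1)\beta\,\mathbb{E}[\int G_{\xi,\vsg}\,\dd L_{N-1}]}$. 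The deviation $\mathbb{E}\big|\int G_{\xi,\vsg}\,\dd(L_{N-1}-\nu_{\e{sc}})\big|$ is then $O(\sqrt{\ln N/N})$ by Corollary~\ref{moments}, and the $\vsg\downarrow 0$ limit is handled by continuity of $x\mapsto\int\ln|\gamma(x)-\gamma(y)|\,\dd\nu_{\e{sc}}(y)$. This sidesteps any bound on $\op{Z}_{N-1,\ga}[V]$ beyond what is already absorbed into the probability measure.

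Your quantile route does work, but the ``principal technical obstacle'' you identify --- the upper bound $\ln\op{Z}_{N-1,\ga}[V]\leq-\tfrac{\beta}{2}(N-1)^2\mf{I}_{\R}[\nu_{\e{sc}}]+O(N\ln N)$ --- does not follow from the justification you give. Frostman together with Proposition~\ref{logdistbound} controls concentration of $\overline{L}_{N-1}$ under the \emph{normalised} measure; it does not by itself bound the normalising constant. The correct derivation is: first pass from the unbounded to the compact model on $[-K,K]$ via Lemma~\ref{expconfinement} (which is proven before Lemma~\ref{balllower} and does not depend on it), then on the compact interval use the regularisation inequalities \eqref{bound1}--\eqref{bound3} to bound the integrand of $\op{Z}_{N-1}$ pointwise by $\exp\big(-\tfrac{\beta}{2}(N-1)^2\mf{I}[\overline{L}_{N-1}]+CN\ln N\big)$, invoke the minimiser property $\mf{I}[\overline{L}_{N-1}]\geq\mf{I}[\nu_{\e{sc}}]$, and integrate over $[-K,K]^{N-1}$. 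With this in place your cancellation of the $N^2$ terms and the quantile Riemann-sum estimates go through (and for the latter $o(N)$ error suffices, so your $O(\ln N)$ claim, while plausible, need not be sharpened). The paper's Jensen argument is shorter precisely because it avoids this detour.
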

\begin{proof} 
Let $B_K$ correspond to the event that all particles lie in the interval $\intff{-K}{K}$. Then
\begin{align*}
&\Upsilon_{N-1}( \intoo{\xi-\vsg}{\xi+\vsg} ) \\
&\geq C_\gamma \mathbb{P}_{N-1, \ga}[B_K] \Int{\xi-\vsg}{\xi+\vsg}
\mathbb{E}_{N-1, \ga_{\mid \intff{-K}{K} } }
\left[ \ex{ (N-1)\beta \int_\mathbb{R} \{  \ln|\gamma(x) - \gamma(y)| - \widetilde{\varphi}(y) \} \, \dd L_{N-1}^{(\mathbf{x})}(y) }  \right] \ex{-N\beta\widetilde{\varphi}(x)} \, \dd x \,.
\end{align*}
We first bound $\widetilde{\varphi}(x) \leq \widetilde{\varphi}(\xi) + \vsg \|\widetilde{\varphi}\|_{\mc{W}^{\infty}_1(\intff{-K}{K})}$.
Let $\widetilde{\omega}$ be a $\mc{C}^\infty(\mathbb{R})$ function supported on $\intff{-1}{1}$,  strictly positive on $\intoo{-1}{1}$, and such that $0 \leq \widetilde{\omega} \leq 1$.
Then $C = \int_{-1}^1 \widetilde{\omega}(x) \, \dd x > 0$ and define $\omega_{\vsg}(x) = \vsg^{-1} C^{-1} \widetilde{\omega}\left(\frac{x}{\vsg}\right)$.  Then let us apply Jensen's inequality to find
\bem
\Upsilon_{N-1}(\intoo{\xi-\vsg}{\xi+\vsg}) \, \geq \, \vsg C
C_\gamma \mathbb{P}_{N-1,\ga}[B_K] \ex{-N\beta \widetilde{\varphi}(\xi) - N \beta \vsg \|\widetilde{\varphi}\|_{\mc{W}^{\infty}_{1}( \intff{-K}{K})}} \\
\times \mathbb{E}_{N-1, \ga_{\mid \intff{-K}{K} } } \Big[ \ex{ (N-1)\beta \int_{\mathbb{R}} G_{\xi, \vsg}(s) \, \dd L_{N-1}^{(\mathbf{x})}(s) }   \Big] \;,
\end{multline}
where
$$
G_{\xi, \vsg}(y) \, :=  \, \Int{\mathbb{R}}{}\omega_{\vsg}( x-\xi)  \ln|\gamma(x) - \gamma(y)|  \, \dd x- \widetilde{\varphi}(y) \,.
$$
$G_{\xi, \vsg}$ is $\mc{C}^\infty$; indeed, one has  $\ln|\gamma(x) - \gamma(y)|  = \ln|x-y| + \Lambda(x,y)$ where $\Lambda(x,y) = \ln \left| \frac{\gamma(x) - \gamma(y)}{x-y} \right|$
is $\mc{C}^\infty$ by Lemma \ref{smoothness} whereas the $\ln|x-y|$ piece yields a convolution with $\omega_{\vsg}$ and is hence $\mc{C}^\infty$. Thus
the smoothness of $G_{\xi, \vsg}$ follows from  dominated convergence given that $\omega_{\vsg}$ has compact support.  Applying Jensen's inequality again we find
\bem
\mathbb{E}_{N-1, \ga_{ \mid \intff{-K}{K}} } \Big[ \ex{ (N-1)\beta \int_{\mathbb{R}} G_{\xi, \vsg}(s) \, \dd L_{N-1}^{(\mathbf{x})}(s) }   \Big]  \, \geq \,
\exp\bigg\{  (N-1)\beta \int_{\mathbb{R}} G_{\xi, \vsg}(s) \, \dd \nu_{\e{sc}}(s)    \\
-  (N-1)\beta \, \mathbb{E}_{N-1, \ga_{ \mid \intff{-K}{K}} } \Big[  \big| \int_{\mathbb{R}}{} G_{\xi, \vsg}(s) \, \dd(L_{N-1}^{(\mathbf{x})} - \nu_{\e{sc}})(s) \big| \Big]   \bigg\} \, .
\end{multline}
Then, applying Corollary \ref{moments}, we find
$$
\liminf_{N \to +\infty } \frac{1}{N}\ln \Upsilon_{N-1}(\intoo{\xi-\vsg}{ \xi+\vsg})  \, \geq \, - \beta \widetilde{\varphi}(\xi) - \beta \vsg \|\widetilde{\varphi}\|_{ \mc{W}^{\infty}_{1}(\intff{-K}{K}) }
+ \beta \int_\mathbb{R} G_{\xi,\vsg}(s) \, \dd \nu_{\e{sc}}(s) \,.
$$
By Fubini's theorem
\begin{align*}
\int_\mathbb{R} G_{\xi,\vsg}(s) \, \dd \nu_{\e{sc}}(s)\,  = \,  \Int{\mathbb{R}}{} \omega_{\vsg}(x-\xi) \left[ \int_\mathbb{R}  \left( \ln|\gamma(x) - \gamma(y)| - \widetilde{\varphi}(y) \right) \,
\dd\nu_{\e{sc}}(y)  \right] \, \dd x \,.
\end{align*}
The function  $x \mapsto \int_\mathbb{R}  \left[ \ln|\gamma(x) - \gamma(y)| - \widetilde{\varphi}(y) \right] \, \dd \nu_{\e{sc}}(y) $ is continuous and $\omega_{\vsg}$
is an "approximation of the identity", hence
$$\liminf_{\vsg \downarrow 0 } \liminf_{N \to +\infty } \frac{1}{N}\ln \Upsilon_{N-1}( \intoo{\xi-\vsg}{ \xi + \vsg} ) \, \geq \,
- \beta \widetilde{\varphi}(\xi) + \beta \int_\mathbb{R}  \left( \ln|\gamma(x) - \gamma(y)| - \widetilde{\varphi}(y)  \right) \, \dd \nu_{\e{sc}}(y) \;. $$
\end{proof}
Lemmata \ref{ballupper} and \ref{balllower} imply a weak LDP and, when combined with Lemma \ref{confinement}, imply Proposition \ref{MaxEigLDP}.
In the remainder of this section, we will obtain estimates on the $1$-point density
outside of $\intff{0}{1}$, the support of $\nu_{\e{sc}}$, in the case of the real model
associated with the family of curves
\beq
\ga_t^{\eps} =  {\ga_{t}}_{ \mid \intff{-\eps }{ 1 + \eps } } \, ,
\label{definition courbe gamma t eps}
\enq
 where $\ga_t$ has been introduced in \eqref{contourdef}, and $t \in \intff{0}{1}$.

\begin{definition}[1-point density]\label{1pointdensity} Let $\gamma_t : [-\epsilon,1+\epsilon] \longrightarrow \mathbb{C}$ be defined by \eqref{contourdef} for $t \in \intff{0}{1}$,
and let $V_t$ be the associated potential introduced in \eqref{interpolatingpotential}-\eqref{interpolatingpotential2}. The 1-point density is defined as
\bem
\varrho^t_{N}(x)  \, \overset{\mathrm{def}}{=} \, \frac{ \ex{-N \beta \widetilde{\varphi}_t(x) } |\ga_t^{\prime}(x)| }{ \op{Z}_{N,\ga_t^{\eps} }[V_t] }
\Int{-\eps }{1+\epsilon} \pl{ i < j }{ N-1} |\gamma_t(x_i) - \gamma_t(x_j)|^\beta \\
\times   \pl{k=1}{N-1} \Big\{ |\gamma_t(x) - \gamma_t(x_k)|^\beta \cdot  \ex{  - N \beta \widetilde{\varphi}_t(x_k)} \cdot |\ga_t^{\prime}(x_k)| \Big\} \, \dd \mathbf{x}
\label{definition 1pt Density}
\end{multline}
where we recall that $\widetilde{\varphi}_t = \Re \big( V_t \circ \gamma_t\big)$.
\end{definition}
\begin{proposition}\label{edgebound} There exist $t$-independent constants $C, C^\prime > 0$  such that
$$\max\left\{ \varrho^t_{N}(-\epsilon),  \varrho^t_{N}(1+\epsilon)\right\} \leq C \ex{-C^\prime N}$$
uniformly in $t \in \intff{0}{1}$.
\end{proposition}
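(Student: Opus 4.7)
The plan is to write $\varrho_N^t(x)$ as a ratio of partition functions and identify its exponential decay rate with the effective potential $\wt\vp_{\Sg_t;\mathrm{eff}}(x) := \vp_{\Sg_t;\mathrm{eff}}(\ga_t(x))$, which by construction is uniformly bounded below by a positive constant at the endpoints $x \in \{-\eps, 1+\eps\}$. Specifically, I would begin with the representation
\[
\varrho_N^t(x) \;=\; |\ga_t^\prime(x)| \, \ex{-N\beta \wt{\vp}_t(x)} \cdot \f{ \op{Z}^{(x)}_{N-1, \ga_t^{\eps}} }{ \op{Z}_{N,\ga_t^\eps}[V_t] }\,,
\]
where $\op{Z}^{(x)}_{N-1, \ga_t^\eps}$ denotes the $(N-1)$-fold integral appearing in \eqref{definition 1pt Density}. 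Factoring the interaction with the fixed particle out of the integrand gives
\[
\op{Z}^{(x)}_{N-1, \ga_t^\eps} \;=\; \op{Z}_{N-1, \ga_t^\eps}[V_t] \cdot \mathbb{E}_{N-1,\ga_t^\eps}\!\left[ \exp\!\Big( (N-1)\beta \Int{\intff{-\eps}{1+\eps}}{} \big( \ln|\ga_t(x)-\ga_t(y)| - \wt\vp_t(y) \big)\, \dd L_{N-1}^{(\mathbf{x})}(y) \Big) \right].
\]

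Next I would estimate each piece. The inner expectation is controlled from above via the concentration of the empirical measure around $\nu_{\mathrm{sc}}$ established in Section \ref{concentrationsection} (Corollaries \ref{concentration} and \ref{generatingfunction}), yielding the leading contribution $(N-1)\beta \int_0^1 \big[ \ln|\ga_t(x)-\ga_t(y)| - \wt\vp_t(y) \big] \dd\nu_{\mathrm{sc}}(y)$. The ratio $\op{Z}_{N-1, \ga_t^\eps}[V_t]/\op{Z}_{N,\ga_t^\eps}[V_t]$ is handled by bounding the denominator from below via Lemma \ref{partitionlower} and the numerator from above by the matching LDP-type upper bound, whence this ratio contributes $\exp\!\big( \tfrac{(2N-1)\beta}{2}\mf{I}_{\mc{J}}[\nu_{\mathrm{sc}}] + \e{O}(\ln N) \big)$, with $\mf{I}_{\mc{J}}[\nu_{\mathrm{sc}}] = C_t + \int \wt\vp_t \dd\nu_{\mathrm{sc}}$ obtained by integrating the Frostman identity against $\nu_{\mathrm{sc}}$. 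Invoking the Frostman--Euler--Lagrange relation
\[
\wt\vp_t(x) - \Int{0}{1}{} \ln|\ga_t(x)-\ga_t(y)|\,\dd\nu_{\mathrm{sc}}(y) \;=\; C_t + \wt\vp_{\Sg_t;\mathrm{eff}}(x),
\]
one sees that the $C_t$ and $\int\wt\vp_t\dd\nu_{\mathrm{sc}}$ contributions cancel between numerator and denominator, leaving
\[
\ln \varrho_N^t(x) \;\leq\; -N\beta\, \wt\vp_{\Sg_t;\mathrm{eff}}(x) + \e{O}(\sqrt{N \ln N})
\]
uniformly in $t \in \intff{0}{1}$. At $x = -\eps$ (resp.\ $x = 1+\eps$), the explicit formula for the effective potential derived in the proof that $\Sg_t$ is an $S$-curve yields $\wt\vp_{\Sg_t;\mathrm{eff}}(-\eps) = 8\int_0^\eps \sqrt{u(u+1)}\,\dd u$ (resp.\ $8\int_1^{1+\eps}\sqrt{y(y-1)}\,\dd y$), both strictly positive and independent of $t$, which gives the claimed exponential decay with a uniform rate.

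The main technical obstacle is the logarithmic singularity of the test function $y \mapsto \ln|\ga_t(x) - \ga_t(y)|$ at $y = x$, which lies inside the integration interval $\intff{-\eps}{1+\eps}$ when $x$ sits at one of the edges. This prevents a direct application of Corollary \ref{generatingfunction}, since the test function fails to belong to $\mc{W}^\infty_1(\intff{-\eps}{1+\eps})$. To handle this I would replace $\ln|\cdot|$ by the regularised logarithm $\ln_\delta$ introduced before Lemma \ref{ballupper}, with $\delta$ chosen as a small negative power of $N$, and bound the regularisation error by a crude a priori estimate on the expected number of particles within distance $\delta$ of $x$, obtained from Proposition \ref{logdistbound} applied to a smooth bump approximating the indicator of $(x-\delta, x+\delta)$. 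Taking $\delta$ small enough, this error sits at a scale much below $\e{O}(\sqrt{N \ln N})$ on the exponential, and the $t$-uniformity of all constants follows from the $t$-uniformity already built into the concentration estimates of Section \ref{concentrationsection}.
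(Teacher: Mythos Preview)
Your overall strategy---identifying the exponential rate with $\wt\vp_{\Sg_t;\mathrm{eff}}$ and using the concentration estimates of Section~\ref{concentrationsection}---is the right one and matches the paper. However, there is a genuine gap in how you handle the ratio $\op{Z}_{N-1,\ga_t^\eps}[V_t]/\op{Z}_{N,\ga_t^\eps}[V_t]$.

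You claim this ratio contributes $\exp\!\big(\tfrac{(2N-1)\beta}{2}\mf{I}_{\mc{J}}[\nu_{\mathrm{sc}}] + \e{O}(\ln N)\big)$. But Lemma~\ref{partitionlower} gives $\ln\op{Z}_N \geq -\tfrac{\beta N^2}{2}\mf{I} - CN\ln N$, and the matching LDP-type upper bound on $\op{Z}_{N-1}$ likewise carries an $\e{O}(N\ln N)$ error. Subtracting, the ratio is controlled only up to $\e{O}(N\ln N)$ in the exponent, not $\e{O}(\ln N)$. This error then swamps the leading term $-N\beta\,\wt\vp_{\Sg_t;\mathrm{eff}}(x)$, which is merely $\e{O}(N)$, and your final inequality $\ln\varrho_N^t(x)\leq -N\beta\,\wt\vp_{\Sg_t;\mathrm{eff}}(x)+\e{O}(\sqrt{N\ln N})$ cannot be reached this way.

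The paper avoids separating the ratio of partition functions altogether. It writes $\varrho_N^t(x)$ as $|\ga_t'(x)|\ex{-\beta N\wt\vp_t(x)}\cdot\mathbb{E}_{N-1,\ga_t^\eps}[\cdots]/\Upsilon_{N-1;\eps}(\intff{-\eps}{1+\eps})$ and bounds $\Upsilon_{N-1;\eps}$ from below \emph{directly} by Jensen's inequality (against both $\dd\nu_{\mathrm{sc}}$ and $\mathbb{E}_{N-1}$). This produces, for any $\Delta>0$, an error of size $N\Delta$ rather than $N\ln N$, and one then takes $\Delta$ small relative to $\wt\vp_{\Sg_t;\mathrm{eff}}(-\eps)$. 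The same Jensen trick is what makes Lemmata~\ref{ballupper}--\ref{balllower} work at speed $N$ rather than $N\ln N$.

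A minor point on the logarithmic singularity: your scheme of taking $\delta = N^{-\alpha}$ and separately bounding the regularisation error is unnecessary. Since $\ln_\delta \geq \ln$ pointwise, replacing $\ln$ by $\ln_\delta$ already gives an upper bound on the expectation in the numerator; there is no error to control. The paper keeps $\delta>0$ fixed, applies Corollary~\ref{generatingfunction} to the smooth function, and then sends $\delta\downarrow 0$ at the very end (only $\int\ln_\delta|x-y|\,\dd\nu_{\mathrm{sc}}(y)\to\int\ln|x-y|\,\dd\nu_{\mathrm{sc}}(y)$ is needed). This is simpler and avoids the $N$-dependent $\mc{W}^\infty_1$ norm that your choice $\delta=N^{-\alpha}$ would introduce.
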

\begin{proof}
We first write
\begin{align*}
\varrho^t_{N}(x) \, = \,
 |\gamma^\prime_t(x)| \ex{ - \beta N \widetilde{\varphi}_t(x) }
\cdot  \frac{ \mathbb{E}_{N-1, \ga_t^{\eps}} \Big[ \ex{ (N-1)\beta \int_\mathbb{R} \{ \ln|\gamma_t(x) - \gamma_t(y)| - \widetilde{\varphi}_t(y) \} \, \dd L_{N-1}^{(\mathbf{x})}(y) } \Big] }
{ \Upsilon_{N-1; \eps } (\intff{-\epsilon}{1+\epsilon}) }
\end{align*}
where
\begin{align*}
\Upsilon_{N-1; \eps } (\intff{-\epsilon}{1+\epsilon})  \,  =  \, \Int{-\epsilon}{1+\epsilon}  |\gamma^\prime_t(x)| \ex{-\beta N \widetilde{\varphi}_t(x)}
\mathbb{E}_{N-1, \ga_t^{\eps}} \bigg[ \ex{ (N-1)\beta \int_\mathbb{R} \{ \ln|\gamma_t(x) - \gamma_t(y)|-\widetilde{\varphi}_t(y) \} \, \dd L_{N-1}^{(\mathbf{x})}(y)} \bigg] \, \dd x \;.
\end{align*}
We begin by bounding $\Upsilon_{N-1; \eps }(\intff{-\epsilon}{1+\epsilon})$ from below.
First, observe the bound  $\dd x \geq \frac{\pi}{4} \dd\nu_{\mathrm{sc}}(x)$ and then apply Jensen's inequality to find
\bem
\Upsilon_{N-1;\eps}( \intff{-\epsilon}{1+\epsilon}) \, \geq \, \frac{\pi}{4} C_\gamma \ex{ - \beta \| \widetilde{\varphi_t} \|_{\mc{W}^{\infty}_{0}(\intff{-\eps}{1+\eps})} - (N-1)\beta C_\varphi }\\
\times  \exp\bigg\{ - (N-1)\beta \int_\mathbb{R} \widetilde{\varphi_t}(s) \, \dd \nu_{\mathrm{sc}}(s) -(N-1)\beta \,
\mathbb{E}_{N-1, \ga_t^{\eps}} \Big[ \int_\mathbb{R} \widetilde{\varphi}_{\Sg_t;\mathrm{eff}}(s) \, \dd L_{N-1}^{(\mathbf{x})}(s) \Big]  \bigg\} \, .
\end{multline}
Here, we remind that $\Sg_t$ has been introduced in Definition \ref{contours}. $\widetilde{\varphi}_{\Sg_t;\mathrm{eff}}$ refers to the effective potential
associated to the equilibrium measure $\nu_{\e{sc}}$ on $\intff{0}{1}$ and in presence of the external field $\wt{\vp}_{t}$. Note that $\widetilde{\varphi}_{\Sg_t;\mathrm{eff}}$,
by its very construction, is $t$-independent.

For any $\delta > 0$, consider a $\mc{C}^\infty(\intff{-\epsilon}{1+\epsilon})$ function $\psi_\delta$ which has the following properties:
\begin{enumerate}
 \item $ 0 \leq \psi_\delta \leq \widetilde{\varphi}_{\Sg_t;\mathrm{eff}}(-\epsilon) \,  = \,  \widetilde{\varphi}_{\Sg_t;\mathrm{eff}}(1+\epsilon)$;
 \item $\psi_\delta(x) = 0$ for $x \in \intff{\delta}{1-\delta}$;
 \item  $\psi_\delta(x) \, = \, \widetilde{\varphi}_{\Sg_t;\mathrm{eff}}(-\epsilon)  \, = \,   \widetilde{\varphi}_{\Sg_t;\mathrm{eff}}(1+\epsilon)$ for $x \in \intff{-\epsilon}{1+\epsilon}\setminus \intff{0}{1}$.
\end{enumerate}
Importantly, one has that  $\psi_\delta \geq \widetilde{\varphi}_{\Sg_t;\mathrm{eff}}$. Hence, by Corollary \ref{moments}, there exists a constant $C > 0$ such that
\begin{align*}
\mathbb{E}_{N-1, \ga_t^{\eps}} \left[ \int_\mathbb{R} \widetilde{\varphi}_{\Sg_t;\mathrm{eff}}(s) \, \dd L_{N-1}^{(\mathbf{x})}(s) \right]
\leq \frac{8}{\pi} \delta \widetilde{\varphi}_{\Sg_t;\mathrm{eff}}(-\epsilon) + C \| \psi_\delta \|_{ \mc{W}^{\infty}_1(\intff{-\epsilon}{1+\epsilon})} \sqrt{\frac{\ln N}{N}} \\
    + \frac{2 \| \widetilde{\varphi}_{\Sg_t;\mathrm{eff}} \|_{ \mc{W}^{\infty}_1( \intff{-\epsilon}{1+\epsilon})} }{ (N-1)^2 } \;.
\end{align*}
Thus,  for any $\Delta > 0$, there is an $N_0 \in \mathbb{N}$ independent of $t \in \intff{0}{1}$ such that
\begin{align*}
\Upsilon_{N-1;\eps }(\intff{-\epsilon}{1+\epsilon}) \, \geq \,
\ex{ -(N-1)\beta C_{\Sg_t}  -(N-1)\beta \int_\mathbb{R} \widetilde{\varphi}_t(s) \,\dd \nu_{\mathrm{sc}}(s) - (N-1) \beta \widetilde{\varphi}_{\Sg_t;\mathrm{eff}}(-\epsilon) \Delta} & & \forall N \geq N_0 \; .
\end{align*}
Next, turning to the numerator, for $\delta > 0$, let us bound
$$ \ln|\gamma_t(x) - \gamma_t(y)| \, \leq \, \ln_{\delta}|x-y| + \La(x,y) \qquad \e{where} \qquad  \La(x,y) = \ln \left| \frac{\gamma_t(x) - \gamma_t(y)}{x-y}\right| \, .$$
Then, there is a constant $C > 0$ (independent of $t \in \intff{0}{1}$ and $x \in \intff{-\epsilon}{1+\epsilon}$) such that
$\| \La(x,\cdot) \|_{ \mc{W}^{\infty}_1(\intff{-\epsilon}{1+\epsilon})} \leq C$.  From this it follows that there is a $C_\delta > 0$ such that
\bem
\mathbb{E}_{N -1, \ga_t^{\eps}} \bigg[ \ex{ (N-1)\beta \int_\mathbb{R} \{ \ln|\gamma_t(x) - \gamma_t(y)| - \widetilde{\varphi}_t(y)\} \, \dd L_{N-1}^{(\mathbf{x})}(y)} \bigg] \\
\leq \ex{ (N-1)\beta \int_\mathbb{R}\{ \ln_\delta|x-y| + \La(x,y) - \widetilde{\varphi}_t(y) \} \, \dd\nu_{\mathrm{sc}}(y) + C_\delta \sqrt{N \ln N}} \;.
\nonumber
\end{multline}
$\int_\mathbb{R} \ln_\delta|x-y| \,  \dd \nu_{\mathrm{sc}}(y)  \longrightarrow \int_\mathbb{R} \ln|x-y| \,  \dd\nu_{\mathrm{sc}}(y)$ as $\delta \downarrow 0$,
hence for any $\Delta > 0$ by taking $\delta > 0$ sufficiently small $\int_\mathbb{R} \ln_\delta|x-y| \,  \dd \nu_{\mathrm{sc}}(y) \leq \int_\mathbb{R} \ln|x-y| \,  \dd \nu_{\mathrm{sc}}(y) + \Delta$.
Then, for any $\Delta > 0$, there is an $N_0 \in \mathbb{N}$
\begin{align*}
\mathbb{E}_{N-1, \ga_t^{\eps} } \left[ \ex{(N-1)\beta \int_\mathbb{R} \{ \ln|\gamma_t(x) - \gamma_t(y)| - \widetilde{\varphi}_t(y) \} \, \dd L_{N-1}^{(\mathbf{x})}(y)} \right]
\, \leq  \, \ex{ (N-1)\beta \int_\mathbb{R}\{ \ln|\gamma_t(x)-\gamma_t(y)| - \widetilde{\varphi}_t(y) \} \, \dd \nu_{\mathrm{sc}}(y) + N \Delta }
\end{align*}
for all $N \geq N_0$.  Then, there exists $C > 0$ and $N_0 \in \mathbb{N}$, such that
\begin{align*}
\varrho^t_{N}(x) \leq C \ex{-(N-1)\beta \widetilde{\varphi}_{\Sg_t;\mathrm{eff}}(x) + N \Delta + (N-1)\beta \widetilde{\varphi}_{\Sg_t;\mathrm{eff}}(-\epsilon) \Delta } & & \forall N \geq N_0 \;.
\end{align*}
Setting $x = -\epsilon$ or $x = 1+\epsilon$ and taking $\Delta > 0$ sufficiently small completes the proof.
\end{proof}

\section{Lower bound on the complex partition function}\label{realDSsection}

Throughout this section we focus on the curve $\ga_t^{\eps}$, the restriction of $\gamma_t$ to
\beq
I_{\eps}=\intff{-\eps}{1+\eps}
\label{definition intervale eps voisinage 0 et 1}
\enq
\textit{c.f.} \eqref{definition courbe gamma t eps}, and on the associated contour $\Sigma_t = \gamma_t(I_{\eps})$, defined by \eqref{contourdef}. In this section we prove the following proposition.
\begin{proposition}[Lower bound on the complex partition function]\label{lowerbound}
There exists $t$-independent constants $C > 0$ and $N_0 \in \mathbb{N}$   such that
\begin{align}
\left| \frac{ \mathcal{Z}_{N ,\Sigma_t}[V_t] }{ \op{Z}_{N, \ga_t^{\eps}}[V_t] } \right| \geq C & & \forall N \geq N_0
\end{align}
uniformly in $t \in \intff{0}{1}$.
In particular $\mathcal{Z}_{N ,\Sigma_t}[V_t]  \neq 0$ so the complex model (\ref{complexmodel}) is well-defined.
\end{proposition}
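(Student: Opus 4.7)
The plan is to reduce the desired bound to a concentration-plus-CLT statement for the real model, by separating modulus and phase, exploiting the steepest-descent property of the $S$-curve to kill the dangerous linear-in-fluctuations contribution to the phase, and then recognising the limiting Gaussian expectation as a Fredholm determinant which I show to be uniformly bounded away from zero. Concretely, I first parametrise $\Sigma_t$ by $\gamma_t$: since $\beta$ is an even integer, polar decompositions of $(\gamma_t(x_a)-\gamma_t(x_b))^\beta$, $\gamma_t'(x_k)$ and $e^{-N\beta V_t(\gamma_t(x_k))}$ combine to give
$$\frac{\mathcal{Z}_{N,\Sigma_t}[V_t]}{\op{Z}_{N,\gamma_t^\eps}[V_t]} \; = \; \mathbb{E}_{N,\gamma_t^\eps}\bigl[\,e^{i P_N(\mathbf{x};t)}\,\bigr]\,,$$
where
$$P_N(\mathbf{x};t) \; = \; \beta\!\sum_{a<b}\!\arg\bigl(\gamma_t(x_a)-\gamma_t(x_b)\bigr) + \sum_k \arg\gamma_t'(x_k) - N\beta\sum_k \Im V_t(\gamma_t(x_k))$$
is a real smooth function representable as $N^2\mathcal{P}_t[L_N^{(\mathbf{x})}]$ plus an $\mathrm{O}(N)$ correction. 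The task is thereby reduced to a $t$-uniform lower bound on the modulus of this expectation.

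My next step is a Taylor expansion of the functional $\mathcal{P}_t$ around $\nu_{\e{sc}}$, which by Remark \ref{semicircleremark} is the equilibrium measure of the real model. Writing $L_N^{(\mathbf{x})} = \nu_{\e{sc}} + \delta_N$, the zeroth-order term is a deterministic $\mathrm{O}(N^2)$ phase that factors out as a unimodular prefactor; the first-order term has integrand $\mathcal{L}_t(x) = -\Im\bigl(g_{\Sigma_t}[\mu_{\Sigma_t}](\gamma_t(x)) + V_t(\gamma_t(x))\bigr)$, and the complexified Euler--Lagrange relations (\ref{EulerLagrange}) combined with Proposition \ref{steepestdescent} show that $\mathcal{L}_t$ is constant on $\gamma_t([0,1]) = \mathrm{supp}\,\mu_{\Sigma_t}$. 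Since $\delta_N$ has zero total mass, and since by Proposition \ref{edgebound} it concentrates on $[0,1]$ up to an exponentially small error, the linear term contributes only an exponentially small perturbation plus a bulk phase. This cancellation is the precise $S$-curve incarnation of the steepest-descent mechanism.

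What remains, after pulling out the constant and linear parts, is a bounded smooth quadratic form in $\delta_N$ of size $\mathrm{O}(1)$. I would then invoke the Dyson--Schwinger hierarchy for the real model $\mathbb{P}_{N,\gamma_t^\eps}$, combined with the invertibility of the real master operator $\Xi_{\gamma_t}$ uniformly in $t \in \intff{0}{1}$ via Proposition \ref{realmasterinverse}, to prove a central limit theorem: smooth linear statistics of $\delta_N$ converge to a centred Gaussian field $G_t$ with covariance expressed through $\Xi_{\gamma_t}^{-1}$, so that $\mathbb{E}_{N,\gamma_t^\eps}[e^{iP_N}]$ converges as $N\to+\infty$ to $\mathbb{E}[e^{i\mathcal{Q}_t(G_t)}]$ for a continuous quadratic form $\mathcal{Q}_t$ depending continuously on $t$. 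Using the Fredholm-determinant representation developed in Appendix \ref{fredholmsection}, this limit takes the form $|\det(I - K_t)|^{-1/2}$ for an explicit trace-class operator $K_t$ on $L^2(I_\eps)$ continuous in $t$. The hard part -- and the main obstacle of the whole argument -- is to exclude $\det(I - K_t) = 0$ for $t \in \intff{0}{1}$, equivalently to rule out $1$ as an eigenvalue of $K_t$; I would settle this by a functional-analytic argument leveraging the strict positive-definiteness of the logarithmic energy (Lemma \ref{logarithmicdistance}). Continuity and nonvanishing on the compact parameter interval then furnish a strictly positive infimum of $|\det(I - K_t)|^{-1/2}$, giving the sought uniform lower bound $C > 0$.
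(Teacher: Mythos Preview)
Your overall architecture matches the paper's: polar decomposition, cancellation of the linear fluctuation term via the $S$-curve property, CLT for the residual quadratic form, and a Fredholm determinant representation of the limit. There are, however, two genuine gaps.

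\textbf{The linear term does not merely vanish on the support; it vanishes on all of $I_\eps$.} You claim $\mathcal{L}_t$ is constant on $\gamma_t([0,1])$ and then invoke Proposition~\ref{edgebound} to argue that $\delta_N$ ``concentrates on $[0,1]$ up to an exponentially small error''. This is incorrect on two counts. First, Proposition~\ref{edgebound} bounds the one-point density only at the extreme endpoints $-\eps$ and $1+\eps$; it says nothing about particles in $(-\eps,0)\cup(1,1+\eps)$, and indeed edge fluctuations will typically place particles there. With an $N^2$ prefactor on the linear term, any surviving non-constancy outside $[0,1]$ produces a contribution of order $N$, not $o(1)$. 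Second, the fix is already in the statement you cite: Proposition~\ref{steepestdescent}, equation~\eqref{Imconstant}, asserts that $\Im\bigl([\Phi_{\mathrm{eff}}]_+ + [\Phi_{\mathrm{eff}}]_-\bigr)$ vanishes on the \emph{whole} of $\gamma_t(I_\eps)$, not merely on the support. This is a consequence of the specific parametrisation $\gamma=\Theta_+^{-1}\circ\Psi_+$, which extends the steepest-descent property beyond the support. In the paper this makes the linear term (denoted $\mc{G}_{\e{out}}$) vanish identically, with no appeal to concentration.

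\textbf{The Fredholm determinant is not merely nonzero; its modulus is at least $1$.} You identify ``excluding $\det(I-K_t)=0$'' as the main obstacle and propose to attack it via logarithmic-energy positivity. The paper's route is much shorter and structural: the operator is $i\beta\,\mathsf{B}_t\mathsf{A}_t$ where $\mathsf{B}_t\ge 0$ (it is a covariance) and $\mathsf{A}_t$ is self-adjoint (since $a_t$ is real symmetric). Hence $\mathsf{B}_t\mathsf{A}_t$ is similar to the self-adjoint operator $\sqrt{\mathsf{B}_t}\,\mathsf{A}_t\sqrt{\mathsf{B}_t}$ and therefore has real eigenvalues $\lambda_j$, giving
\[
\bigl|\det(\mathsf{1}-i\beta\,\mathsf{B}_t\mathsf{A}_t)\bigr|^2 \;=\; \prod_{j}\,\bigl|1-i\beta\lambda_j\bigr|^2 \;=\; \prod_j (1+\beta^2\lambda_j^2)\;\ge\;1\,.
\]
The factor of $i$ is precisely what makes the determinant bounded away from zero; there is no eigenvalue problem to solve, and no continuity-plus-compactness argument is needed for nonvanishing. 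One still needs uniform-in-$t$ bounds on the exponential prefactor in the Fredholm formula, which the paper obtains from uniform Hilbert--Schmidt bounds on $\mathsf{A}_t$, $\mathsf{B}_t$ and growth bounds on $\mathsf{m}_t$, $P_t$.
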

Note that the upper bound $\op{Z}_{N, \ga_t^{\eps}}[V_t] \geq \big| \mathcal{Z}_{N ,\Sigma_t}[V_t] \big| $ is trivial.

\vspace{2mm}

We recall that $\epsilon^\prime > \epsilon > 0$ is chosen such that the original $S$-curve $\gamma$ in the neighbourhood of the support of the equilibrium measure,
and hence its $t$ interpolation $\ga_t$, is analytic on $\intff{-\epsilon^\prime}{1+\epsilon^\prime}$. Then we introduce $\chi : \mathbb{R} \longrightarrow \intff{0}{1}$,
a $\mc{C}^\infty(\mathbb{R})$ function such that $\chi(x) = 1$ for $x \in  I_{\eps}$ and $\chi(x) = 0$ for $x \in \mathbb{R} \setminus \intff{-\epsilon^\prime}{1+\epsilon^\prime}$.
Then let
\begin{align*}
a_t(x,y) &= \chi(x) \chi(y) \,  \mathrm{arg} \, \Big\{  \frac{\gamma_t(x) - \gamma_t(y)}{x-y}  \Big\} & & x,y \in \mathbb{R} \\
p_t(x) &= \chi(x) \, \mathrm{arg} \, \big\{ \gamma^\prime_t(x) \big\}  & & x \in \mathbb{R}
\end{align*}
so that $a_t$ and $p_t$ are smooth and compactly supported functions on $\mathbb{R}$.

\begin{remark} As stated $a_t$ and $p_t$ are not well-defined because $\arg$ is a multi-valued function. $\gamma^\prime_t(x)$ and $\frac{\gamma_t(x) - \gamma_t(y)}{x-y}$ are always nonzero and so we need only show that we can make a consistent choice of branch. We first fix $\arg \{ \gamma_t^\prime(0)\}$ by making some arbitrary choice of branch and then define its values elsewhere by requiring that $$\frac{\dd}{\dd x} \arg \{ \gamma_t(x)\} = \Im \frac{\gamma_t^{\prime\prime}(x)}{\gamma_t^{\prime}(x)}.$$ 
Similarly $\arg \{ \frac{\gamma_t^\prime(x) - \gamma_t(y)}{x-y} \}$ is defined for $x = y$ as $\arg \{ \gamma_t^\prime(x)\}$ and then for $x < y$ by requiring
$$ \frac{\partial}{\partial x} \arg \left\{\frac{\gamma_t(x) - \gamma_t(y)}{x-y}\right\}  = \Im \left( \frac{\gamma_t^\prime(x)}{\gamma_t(x) - \gamma_t(y)} - \frac{1}{x-y}\right).$$
and then extending to $y < x$ by symmetry.
\end{remark}
\begin{proposition}\label{formula}
It holds that
\begin{align}
\left| \frac{\mathcal{Z}_{N,\Sigma_t}[V_t]}{\op{Z}_{N, \ga_t^{\eps}}[V_t] } \right|
\, =  \, \Big| \mathbb{E}_{N , \ga_t^{\eps}} \big[  \ex{ \mc{G}_{\e{in}} } \big]\Big| \,,
\end{align}
where
\beq
\mc{G}_{\e{in}} \, = \, \frac{\i\beta N^2}{2} \Int{\mathbb{R}^2}{} a_t(x,y) \, \dd \mc{L}_N^{(\mathbf{x})} (x)  \dd \mc{L}_N^{(\mathbf{x})} (y)
+ \i N \Big(1 - \frac{\beta}{2} \Big) \Int{\mathbb{R}}{} p_t(x) \dd \mc{L}_N^{(\mathbf{x})} (x)
\label{ecriture expression D in}
\enq
and where we have introduced the empirical measure centred on $\nu_{\e{sc}}$:
\beq
\mc{L}_N^{(\mathbf{x})} =  L_N^{(\mathbf{x})} - \nu_{\e{sc}} \, .
\label{definition mesure empirique centree}
\enq
\end{proposition}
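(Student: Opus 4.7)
The plan is to parametrise the complex integral on $\Sigma_t$ via $z_k = \gamma_t(x_k)$ with $x_k \in I_\eps$ and, using $\beta \in 2\mathbb{N}^*$, cleanly extract the modulus and phase of the integrand. Because $\beta$ is even, $\ex{\i\beta \arg(x_a - x_b)} = 1$ for every branch of $\arg$, so the factor $(\gamma_t(x_a) - \gamma_t(x_b))^\beta$ decomposes as $|\gamma_t(x_a) - \gamma_t(x_b)|^\beta \cdot \ex{ \i \beta a_t(x_a, x_b) }$ (using $\chi \equiv 1$ on $I_\eps$). Likewise $\gamma_t'(x_k) = |\gamma_t'(x_k)| \ex{\i p_t(x_k)}$ and $\ex{-N\beta V_t(\gamma_t(x_k))} = \ex{-N\beta \widetilde{\varphi}_t(x_k)} \cdot \ex{-\i N \beta \Im V_t(\gamma_t(x_k))}$. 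Multiplying out and dividing by $\op{Z}_{N,\ga_t^{\eps}}[V_t]$ reduces the statement to showing that the combined phase
$$\Phi_N(\mathbf{x}) \, := \, \beta \sum_{a<b} a_t(x_a, x_b) \, + \, \sum_k p_t(x_k) \, - \, N \beta \sum_k \Im V_t(\gamma_t(x_k))$$
satisfies $\i\Phi_N = \mc{G}_{\e{in}} + \i C_t(N)$ for some deterministic real $C_t(N)$.

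Symmetrising via $\sum_{a<b} = \tfrac{1}{2}\big(\sum_{a,b} - \sum_{a=b}\big)$ and using that $a_t(x,x) = p_t(x)$ on $I_\eps$ recasts $\Phi_N$ into
$$\Phi_N \, = \, \frac{\beta N^2}{2}\Int{}{} a_t\, \dd L_N^{\otimes 2} \, + \,  \Big(1-\tfrac{\beta}{2}\Big) N\Int{}{} p_t\, \dd L_N \, - \, N^2 \beta \Int{}{} \Im(V_t \circ \gamma_t)\, \dd L_N\,.$$
Substituting $L_N = \mc{L}_N + \nu_{\e{sc}}$ in each term, the pure $\nu_{\e{sc}}$ contributions form a deterministic real constant $C_t(N)$; the pure $\mc{L}_N$ contributions are precisely $\mc{G}_{\e{in}}/\i$; and all the remaining cross-terms collapse into $\beta N^2 \int h(x)\, \dd \mc{L}_N(x)$, where $h(x) := \int_0^1 a_t(x, y) \, \dd\nu_{\e{sc}}(y) - \Im V_t(\gamma_t(x))$.

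The crux of the proof is to show that $h$ is \emph{constant} on $I_\eps$; since $\mc{L}_N[I_\eps] = 0$, this kills the spurious linear cross-term. On $\intoo{0}{1}$, differentiating under the integral (with $\chi \equiv 1$) gives $\frac{\dd}{\dd x}\int a_t(x,y) \dd\nu_{\e{sc}}(y) = \Im\big[\gamma_t'(x) \cdot \mathrm{P.V.} \int_0^1 \frac{\dd\nu_{\e{sc}}(y)}{\gamma_t(x) - \gamma_t(y)}\big]$; the complexified Euler--Lagrange equation \eqref{EulerLagrange} for $V_t$ on the $S$-curve $\Sigma_t$, pulled back via $\gamma_t$, identifies this with $\Im (V_t \circ \gamma_t)'(x)$, so $h'(x) = 0$. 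On $I_\eps \setminus \intff{0}{1}$, where $\gamma_t(x)$ lies off the cut, the defining identity $V_t'(z) = \sqrt{R_t(z)} + \int \frac{\dd\mu_{\Sg_t}(w)}{z-w}$ from \eqref{Vtdef} yields $h'(x) = -\Im\big[\sqrt{R_t(\gamma_t(x))} \gamma_t'(x)\big]$; by the analytic extension of \eqref{Rtdef}, one has $\sqrt{R_t(\gamma_t(x))} \gamma_t'(x) = \pm 8\sqrt{x(x-1)} \in \R$ for $x \notin \intff{0}{1}$ (the analog for the interpolating family of Proposition \ref{steepestdescent}), so $h'$ vanishes again. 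Continuity of $h$ at $0$ and $1$ then patches the piecewise-constant values together.

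With $h$ constant on $I_\eps$, we obtain $\i\Phi_N = \mc{G}_{\e{in}} + \i C_t(N)$, and the unimodular factor $\ex{\i C_t(N)}$ disappears under the modulus, yielding the claimed identity. The main technical obstacle is the constancy of $h$ on the \emph{full} interval $I_\eps$ rather than merely on the support $\intff{0}{1}$: this requires combining two a priori distinct ingredients, namely the interior Euler--Lagrange equation on $\mathrm{supp}\, \mu_{\Sg_t}$ and the off-support "steepest descent" identity that is the defining feature of the $S$-curve construction.
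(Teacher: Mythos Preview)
Your proof is correct and follows essentially the same route as the paper's. Both separate modulus and phase, centre on $\nu_{\e{sc}}$, and identify the potentially dangerous order-$N^2$ cross term as $\beta N^2\int h\,\dd\mc{L}_N$ with $h(x)=\int a_t(x,y)\,\dd\nu_{\e{sc}}(y)-\Im V_t(\gamma_t(x))$, which must be shown constant on $I_\eps$. The only difference is packaging: the paper recognises $h$ (up to a constant) as $-\tfrac12\Im\big([\Phi_{\Sg_t;\e{eff}}]_++[\Phi_{\Sg_t;\e{eff}}]_-\big)\circ\gamma_t$ and invokes Proposition~\ref{steepestdescent} once, whereas you split into the on-support case (Euler--Lagrange) and the off-support case (reality of $\sqrt{R_t(\gamma_t(x))}\gamma_t'(x)$ via \eqref{Rtdef}). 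These are the same computation unpacked in two different ways.
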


\begin{proof}

To prove this let us separate the integrand of \eqref{Cpartitionfunction} into a modulus and a phase. We observe that because $\beta$ is even $(\gamma_t(x_i) - \gamma_t(x_j))^\beta$
has the same phase as $\big(\frac{\gamma_t(x_i) - \gamma_t(x_j)}{x_i-x_j} \big)^\beta$. Then if we let $\mathrm{ph}(z) = \frac{z}{|z|}$ denote the phase of a complex number we have
\begin{align*}
\frac{\mathcal{Z}_{N,\Sigma_t}[V_t]}{\op{Z}_{N, \ga_t^{\eps}}[V_t] }  &=
\mathbb{E}_{N, \ga^{\eps}_t} \Bigg[ \mathrm{ph}\bigg( \pl{\substack{i,j=1 \\ i \neq j}}{N}
\Big( \frac{\gamma_t(x_i)-\gamma_t(x_j)}{x_i-x_j} \Big)^{\frac{\beta}{2}} \cdot \pl{i=1}{N} \Big\{ \gamma^\prime_t(x_i) \ex{ -\beta N V_t(\gamma_t(x_i)) } \Big\} \bigg) \Bigg] \\
&=  \mathbb{E}_{N, \ga^{\eps}_t} \Bigg[  \mathrm{ph}\bigg( \prod_{i,j=1 }^N \left( \frac{\gamma_t(x_i)-\gamma_t(x_j)}{x_i-x_j} \right)^{\frac{\beta}{2}}
\pl{i=1}{N} \Big\{ \gamma^\prime_t(x_i)^{1-\frac{\beta}{2}} \ex{-\beta N V_t(\gamma_t(x_i))} \Big\} \bigg) \Bigg]\\
&=  \mathbb{E}_{N, \ga^{\eps}_t} \Bigg[  \e{e}^{\frac{\i\beta N^2 }{2}\int_{\mathbb{R}^2} a_t(x,y) \, \dd L_N^{(\mathbf{x})}(x) \dd L_N^{(\mathbf{x})}(y)
+ \i(1-\frac{\beta}{2})N \int_{\mathbb{R}}  p_t(x) \, \dd L_N^{(\mathbf{x})}(x) - \i \beta N \int_{\mathbb{R}}  \Im\big( V_t(\gamma_t(x)) \big) \, \dd L_N^{(\mathbf{x})}(x)  } \Bigg]
\end{align*}

Let us now re-centre with respect to the equilibrium measure. By using \eqref{definition mesure empirique centree}, one gets
\begin{align}
\begin{split}\label{exponent}
&\frac{\mathcal{Z}_{N,\Sigma_t}[V_t]}{\op{Z}_{N, \ga_t^{\eps}}[V_t] } \, = \,  \ex{\mc{G}_{\e{sc}}}
\cdot \mathbb{E}_{N, \ga^{\eps}_t} \bigg[ \ex{ \mc{G}_{\e{out}} + \mc{G}_{\e{in}} }   \bigg].
\end{split}
\end{align}
There, $\mc{G}_{\e{in}}$ is as in \eqref{ecriture expression D in}, while
\beq
\mc{G}_{\e{sc}} \, = \, \frac{\i\beta N^2 }{2}\Int{\mathbb{R}^2}{} a_t(x,y) \, \dd^2 \nu_{\e{sc}}(x,y)
+ \i\Big(1-\frac{\beta}{2}\Big)N \Int{\mathbb{R}}{}  p_t(x) \, \dd \nu_{\e{sc}}(x)   - \i \beta N^2 \Int{\mathbb{R}}{}  \Im\big( V_t(\gamma_t(x)) \big) \, \dd \nu_{\e{sc}}(x)
\nonumber
\enq
and
\beq
\mc{G}_{\e{out}} \, = \, - \i \beta N^2  \Int{\mathbb{R}}{} \, \dd \mc{L}_N^{(\mathbf{x})}(x)  \Big\{  \Im\big( V(\gamma_t(x)) \big)   \, - \, \Int{\R}{}a_t(x,y) \, \dd \nu_{\e{sc}}(y) \Big\} \;.
\enq
By taking the $x$-derivative of $\mc{G}_{\e{out}}$'s integrand and applying the Sokhotski-Plemelj theorem, one observes that it coincides throughout $I_{\eps}$
with the derivative of $\frac{1}{2} \Im \big( \big[\Phi_{\Sg_t;\mathrm{eff}}\big]_+ + \big[\Phi_{\Sg_t;\mathrm{eff}}\big]_- \big)(\gamma_t(x))$
by virtue of \eqref{Vtdef}.
Hence, there exists $C_t \in \mathbb{R}$, possibly dependent on $t \in \intff{0}{1}$ but independent of $x \in I_{\eps}$,  such that
\begin{align*}
\Im\big( V_t(\gamma_t(x)) \big) - \Int{0}{1} \mathrm{arg} \Big\{ \frac{\gamma_t(x)-\gamma_t(y) }{ x - y } \Big\}  \, \dd \nu_{\mathrm{sc}}(y)
\, = \,  \frac{1}{2} \Im \big( \big[\Phi_{\Sg_t;\mathrm{eff}}\big]_+ + \big[\Phi_{\Sg_t;\mathrm{eff}}\big]_- \big)(\gamma_t(x))  + C_t,
\end{align*}
for any $x \in I_{\eps}$. Yet, by its very construction and analogously to \eqref{Imconstant} of Proposition \ref{steepestdescent},
the first term on the left hand side vanishes on $I_{\eps}$, hence ensuring that the right hand side is constant on that
interval. This entails that $\mc{G}_{\e{out}}=0$ and, by the fact that $\mc{G}_{\e{sc}} \in \i\R$, yields the claim.

\end{proof}

Moving forward,  because $a_t$ and $p_t$ are bounded and compactly supported we can define their Fourier transforms:

\beq
A_t(x,y) \, = \,  \Int{ \mathbb{R}^2 }{} a_t(x^\prime,y^\prime) \ex{ 2\pi \i y y^\prime - 2\pi \i x x^\prime}\, \dd x^\prime \, \dd y^\prime  \qquad \e{and} \qquad
P_t(x) \, =  \, \Int{ \mathbb{R} }{ }    p_t(x^\prime) \ex{ - 2 \pi \i x x^\prime}\, \dd x^\prime  \, .
\enq
By construction, these are Schwartz functions.
It will also be convenient to introduce the Fourier transform of the centred empirical measure \eqref{definition mesure empirique centree}
\beq
\phi_N(x)  \, =  \, N \Int{ \mathbb{R} }{ }   \ex{- 2 \pi \i x x^\prime}\, \dd \mc{L}_N^{(\mathbf{x})}(x^\prime) \;.
\nonumber
\enq
Then, by Plancherel's theorem, $\mc{G}_{\e{in}}$ introduced in \eqref{ecriture expression D in} may be recast as
\beq
\mc{G}_{\e{in}} = \frac{ \i \beta}{2} \Int{\mathbb{R}^2}{} A_t(x,y) \overline{\phi_N(x)} \phi_N(y) \, \dd x \, \dd y  \,+ \,
\i  \big(1 - \frac{\beta}{2} \big) \Int{\mathbb{R}}{} \overline{P_t(x)} \phi_N(x)  \,  \dd x \;,
\label{ecriture expression alternative pour Gin en Fourier}
\enq
in which $\ov{z}$ stands for the complex conjugate of $z \in \Cx$.

Let us define the \textit{real master operator} $\Xi_{\ga_t} : \mc{W}^{\infty}_{k}(I_{\eps}) \longrightarrow \mc{W}^{\infty}_{k-1}( I_{\eps} )$
which acts as
\beq
\Xi_{\ga_t}[f] (x) = \Re \Big( V^\prime_t(\gamma_t(x)) \gamma^\prime_t(x) \Big) f(x)  \, - \,
 \Int{0}{1}  \Re \bigg\{  \frac{\gamma^\prime_t(x) f(x) - \gamma^\prime_t(y) f(y)}{\gamma_t(x) - \gamma_t(y)}\bigg\}   \dd \nu_{\mathrm{sc}}(y).
\label{definition real master operator}
\enq
It is established in Appendix \ref{singularsection} that $\Xi_{\ga_t}$ is invertible on the closed subspace of co-dimension $1$
$\e{ker}[\mc{K}_{\ga_t}]$, in which $\mc{K}_{\ga_t}: \mc{W}_1^{\infty}(\Sg_t) \rightarrow \R$ is the continuous linear form introduced in \eqref{definition form lineaire K3}.
Note that $\mc{K}_{\ga_t}$ is normalised so that $\mc{K}_{\ga_t}[1]=1$ which entails that $f-\mc{K}_{\ga_t}[f] \in \e{ker}[\mc{K}_{\ga_t}]$ for any $f \in \mc{W}^{\infty}_{1}(I_{\eps})$.
In all the manipulations that follow, we will be dealing with $\mc{W}^{\infty}_{k}(I_{\eps})$ functions which are not necessarily
in $\e{ker}[\mc{K}_{\ga_t}]$. However, the functions of our interest will always appear in integrals against zero mass measures.
Hence, one may always translate them by $\mc{K}_{\ga_t}[f]$ without altering the result while ensuring membership of $\e{ker}[\mc{K}_{\ga_t}]$.
In this sense, we may think of the operator
\beq
\wt{\Xi}_{\ga_t}^{-1}\big[ f \big] \, = \,  \Xi_{\ga_t}^{-1}\big[ f - \mc{K}_{\ga_t}[f] \big]
\enq
as providing an extension of $\Xi_{\ga_t}^{-1}$ from $\e{ker}[\mc{K}_{\ga_t}]$ to $\mc{W}_{p}^{\infty}(I_{\eps})$. Still, one should observe that the explicit formula
for $\Xi_{\ga_t}^{-1}$ as an integral operator only involves differences of its argument at different points: it is thus invariant under constant shift.
Thus, in terms of formulae, $\Xi_{\ga_t}$ does coincide with its extension $\wt{\Xi}_{\ga_t}^{-1}$. We shall thus from now on identify this extension
with $\Xi_{\ga_t}^{-1}$, which means that we understand that operator as being given by the integral operator appearing in the \textit{rhs} of
\eqref{ecriture explicite inverse real master operator}.

Thus, within these conventions, for any $f \in \mc{W}_{1}^{\infty}(I_{\eps})$, one may introduce the functionals
\beqa
\mathfrak{m}_t[f] &= &\Big( \frac{1}{\beta}-\frac{1}{2} \Big) \cdot  \nu_{\mathrm{sc}}\Big( \mc{R}_{\ga_t} \circ\Xi_{\ga_t}^{-1} [f]  \Big) \, ,
\label{esperance Gaussienne} \vspace{3mm} \\
\mathfrak{V}_t[f] &= &  \frac{1}{\beta} \cdot \nu_{\mathrm{sc}}\Big(  f^\prime \Xi_{\ga_t}^{-1} [f] \Big) \, .
\label{variance Gaussienne}
\eeqa
There, $\nu_{\e{sc}}(f)=\Int{}{} f \dd \nu_{\e{sc}}$ stands for the linear statistic evaluated against $\nu_{\e{sc}}$ and
\beq
\mc{R}_{\ga_t}[f](s) \, =  \, \Re \Big( \f{ \ga_{t}^{\prime\prime}(s) }{  \ga_{t}^{\prime}(s) } \, + \, \Dp{s}  \Big) f(s) \;.
\enq

$\mathfrak{V}_t[f]$ will turn out to be the variance  of  a random variable, however it is not entirely obvious from the explicit formula \eqref{variance Gaussienne} that it is non-negative. We now establish this property.
\begin{lemma}
\label{Lemma positivite variance}
 For any $f \in \e{ker}[\mc{K}_{\ga_t}]$,  one has that  $\mathfrak{V}_t(f) \, \geq \,   0$.

\end{lemma}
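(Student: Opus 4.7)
The plan is to recognise $\mathfrak{V}_t[f]$ as a logarithmic energy of a zero-mass signed measure on the curve $\ga_t(\intff{0}{1})$ and then to invoke the positive-definiteness statement of Lemma \ref{logarithmicdistance}. Concretely, I would start by exploiting the explicit integral inversion formula \eqref{ecriture explicite inverse real master operator} for $\Xi_{\ga_t}^{-1}$ established in Proposition \ref{realmasterinverse}, together with the representation already used in the proof of Proposition \ref{weaktopology}. Setting $g = \Xi_{\ga_t}^{-1}[f]$, and using $f\in \ker \mc{K}_{\ga_t}$ so that the additive constant $\mc{K}_{\ga_t}[f]$ drops out, one obtains
\begin{equation*}
f(x) \,=\, \int_{0}^{1}\ln\frac{1}{|\ga_t(x)-\ga_t(y)|}\,\dd\sigma(y),\qquad
\dd\sigma(y) \,=\, \frac{\dd}{\dd y}\!\left(\sqrt{y(1-y)}\, g(y)\right)\dd y,
\end{equation*}
where $\sigma$ is a signed measure of zero total mass on $\intff{0}{1}$ since $\sqrt{y(1-y)}$ vanishes at both endpoints.

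Next, I would reduce $\nu_{\mathrm{sc}}(f'g)$ to an integral of $f$ against $\sigma$. Writing $\dd\nu_{\mathrm{sc}}(x)=\rho_{\mathrm{sc}}(x)\,\dd x$ with $\rho_{\mathrm{sc}}(x)=\tfrac{8}{\pi}\sqrt{x(1-x)}$ and integrating by parts on $\intff{0}{1}$, the boundary contributions vanish because $\rho_{\mathrm{sc}}$ vanishes at $0$ and $1$; this turns $\nu_{\mathrm{sc}}(f'g)$ into an integral of $-f$ against $(g\rho_{\mathrm{sc}})'\dd x$, which up to the prefactor $\tfrac{8}{\pi}$ is exactly the measure $\dd\sigma$. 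Substituting the logarithmic representation of $f$ and applying Fubini then gives, up to a positive multiplicative constant coming from the accumulated signs,
\begin{equation*}
\mathfrak{V}_t[f] \;\propto\; \iint_{\intff{0}{1}^{2}}\ln\frac{1}{|\ga_t(x)-\ga_t(y)|}\,\dd\sigma(x)\,\dd\sigma(y).
\end{equation*}

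To finish, I would push $\sigma$ forward to a compactly supported, zero-mass signed measure $\wt{\sigma}$ on $\ga_t(\intff{0}{1})\subset\Cx$. Lemma \ref{logarithmicdistance} then yields
\begin{equation*}
\iint \ln\frac{1}{|z-w|}\,\dd\wt{\sigma}(z)\,\dd\wt{\sigma}(w) \;=\; \frac{1}{2\pi}\int_{\R^{2}}\frac{|\wh{\wt{\sigma}}(p,q)|^{2}}{p^{2}+q^{2}}\,\dd p\,\dd q\;\geq\;0,
\end{equation*}
which, combined with the sign picked up above, delivers $\mathfrak{V}_t[f]\ge 0$. The delicate point, and the main obstacle I anticipate, is the book-keeping of signs and prefactors through the integration by parts and the identification of $g$ with the measure $\sigma$ in the inversion formula (in particular, verifying that the boundary terms at $0$ and $1$ truly vanish given the regularity of $g$ at the endpoints and the square-root weights). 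Once the expression is correctly cast as a logarithmic energy of a zero-mass signed measure on the plane, non-negativity is an immediate consequence of Lemma \ref{logarithmicdistance}.
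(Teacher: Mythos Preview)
Your approach is correct and will yield a valid proof, but it is genuinely different from the paper's argument. The paper proceeds conceptually: it perturbs the equilibrium measure $\nu_{\mathrm{sc}}$ by a transport map $x\mapsto x+\varepsilon h(x)$, expands $\mf{I}_{\mc{J}}[\nu_{\mathrm{sc}}^{\varepsilon h}]\ge \mf{I}_{\mc{J}}[\nu_{\mathrm{sc}}]$ to second order, and identifies the non-negative quadratic term with $\nu_{\mathrm{sc}}\!\big(h\,(\Xi_{\ga_t}[h])'\big)$; setting $h=\Xi_{\ga_t}^{-1}[f]$ then gives $\nu_{\mathrm{sc}}(f'\Xi_{\ga_t}^{-1}[f])\ge 0$. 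In other words, the paper reads $\mf{V}_t$ as the Hessian of the energy at its minimiser. Your route is more direct and computational: two integrations by parts reduce $\nu_{\mathrm{sc}}(f'g)$ to a logarithmic energy of the zero-mass measure $d\sigma$, and Lemma~\ref{logarithmicdistance} finishes the job. Your argument is shorter and makes the positivity completely explicit; the paper's argument is coordinate-free and explains structurally why the variance is the right quadratic form.

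One caution on your book-keeping, which you rightly flag: the display you lifted from the proof of Proposition~\ref{weaktopology} suppresses a sign and a prefactor. Carrying out the integration by parts carefully on $[0,1]$ (using that $V_t'(\ga_t(x))\ga_t'(x)$ equals the principal-value integral of $\ga_t'(x)/(\ga_t(x)-\ga_t(y))$ against $\nu_{\mathrm{sc}}$ on the support) gives, for $x\in[0,1]$,
\[
f(x)\;=\;-\tfrac{8}{\pi}\int_0^1 \ln\frac{1}{|\ga_t(x)-\ga_t(y)|}\,\dd\sigma(y),
\]
with your $\sigma$. A second integration by parts in $\nu_{\mathrm{sc}}(f'g)$ produces another factor $-\tfrac{8}{\pi}$, so the two minus signs cancel and you indeed obtain $\nu_{\mathrm{sc}}(f'g)=\tfrac{64}{\pi^2}\iint\ln\tfrac{1}{|\ga_t(x)-\ga_t(y)|}\,\dd\sigma(x)\dd\sigma(y)\ge 0$. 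Note also that the representation of $f$ above is only valid on $[0,1]$ (off the support an effective-potential term appears), but this suffices since in Step~3 you only integrate $f$ against $\dd\sigma$, which lives on $[0,1]$. The boundary terms at $0,1$ vanish because $\sqrt{y(1-y)}\,\ln|\ga_t(x)-\ga_t(y)|\to 0$ even when $x$ coincides with an endpoint, and $(g\rho_{\mathrm{sc}})'$ is in $L^1([0,1])$ since $g\in\mc{C}^\infty(I_\eps)$ and $\rho_{\mathrm{sc}}'$ has only inverse-square-root singularities.
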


\begin{proof}

Let $I_{\eps}$ be as in \eqref{definition intervale eps voisinage 0 et 1} and recall the functional $\mf{I}_{I_{\eps} }$ \eqref{Jfunctional} adapted to $\ga_t$ and let $\nu_{\e{sc}}$ be its unique minimiser.
Then for any function  $f \in \mc{C}^\infty(\mathbb{R})$ such that $| f^\prime | <1$,  define the transported measure $\nu^{f}_{\e{sc}}$ by
$$\Int{\mathbb{R}}{} g(x) \, \dd \nu^{f}_{\e{sc}}(x) \, =\,  \Int{\mathbb{R}}{} g(x+f(x)) \, \dd \nu_{\e{sc}}(x) \; . $$
We must have $\mf{I}_{\mc{J} }[\nu^{f}_{\e{sc}}] \geq \mf{I}_{\mc{J} }[ \nu_{\e{sc}} ]$. Thus, if we expand this relation with respect to $f$,
the linear term must vanish and the quadratic term must be non-negative.  The linear term gives
\begin{align}
-\frac{1}{2} \Int{\mathbb{R}^2}{} \Re  \Big( \frac{\gamma^\prime_t(x) f(x) - \gamma^\prime_t(y) f(y) }{ \gamma_t(x) - \gamma_t(y)}  \Big) \, \dd^2 \nu_{\e{sc}}(x,y)
\, + \,  \Int{\mathbb{R}}{}  \wt{\vp}_{t}^{\, \prime}(x)   f(x) \,\dd\nu_{\e{sc}}(x)   \,  = \,  0
\label{equality}
\end{align}
where $\wt{\vp}_{t}(x) = \Re\big( V_t( \ga_t(x) ) \big)$ and the quadratic term gives
\bem
\de \mf{I}_{\mc{J}; \nu_{\e{sc}} }^{(2)}[ f ] \, = \,  \frac{1}{2}  \Int{\mathbb{R}^2}{} \Re \bigg\{  \bigg( \frac{\gamma^\prime_t(x) f(x) - \gamma^\prime_t(y) f(y) }{ \gamma_t(x) - \gamma_t(y)}  \bigg)^2
\, -  \,  \frac{\gamma^{\prime\prime}_t(x) f^2(x) - \gamma^{\prime\prime}_t(y) f^2(y) }{ \gamma_t(x) - \gamma_t(y)}  \bigg\} \, \dd^2 \nu_{\e{sc}}(x,y) \\
\,  + \,   \Int{\mathbb{R}}{} \wt{\vp}_{t}^{\, \prime\prime }(x) f^2(x) \, \dd\nu_{\e{sc}}(x) \, \geq  \, 0 \, .
\end{multline}
Since it is a homogeneous functional of degree $2$, the relation extends to all $f$, \textit{i.e.} not necessarily satisfying $| f^\prime | <1$.

The left hand side of $\de \mf{I}_{\mc{J}; \nu_{\e{sc}} }^{(2)}[ f ]$ can be identified with $\nu_{\mathrm{sc}}\big(  f (\Xi_{\ga_t} [f] )^\prime \big)$.
Indeed, starting from the definition \eqref{definition real master operator} of $\Xi_{\ga_t} [f]$ one gets that
$ (\Xi_{\ga_t} [f])^\prime(x) = (A) + (B)$  where
\begin{align*}
(A) &=  \ \wt{\vp}_{t}^{\, \prime}(x)  f^{\prime}(x)  \, - \,  \Fint{0}{1} \Re  \bigg\{ \frac{\gamma^\prime_t(x) f^\prime(x)}{\gamma_t(x) - \gamma_t(y)} \bigg\} \dd \nu_{\mathrm{sc}}(y)  \;, \\
(B) &=  \wt{\vp}_{t}^{\, \prime\prime }(x)  f(x)  \, +  \, \Fint{0}{1} \Re  \bigg\{ \gamma^\prime_t(x) \frac{\gamma^\prime_t(x) f(x) - \gamma^\prime_t(y) f(y)}{ \big(\gamma_t(x) - \gamma_t(y) \big)^2} \,  - \,
\frac{\gamma^{\prime \prime}_t(x) f(x) }{\gamma_t(x) - \gamma_t(y)}  \bigg\}\, \dd\nu_{\mathrm{sc}}(y) \; .
\end{align*}
Upon symmetrising in  $x \leftrightarrow y$ and using \eqref{equality} with the replacement of $f$ with $f f^\prime$,  we observe that
$$\Int{0}{1} f(x) (A) \, \dd \nu_{\mathrm{sc}}(x) = 0 \, . $$
Then the  $x \leftrightarrow y$ symmetrisation of the second contribution yields
$$ \Int{0}{1} f(x) (B) \, \dd\nu_{\mathrm{sc}}(x)  =\de \mf{I}_{\mc{J}; \nu_{\e{sc}} }^{(2)}[ f ]  \, . $$

We thus conclude that for all $f \in \mc{C}^\infty(\mathbb{R})$,  $\nu_{\mathrm{sc}}\big(  f (\Xi_{\ga_t} [f] )^\prime \big) \geq 0$.
Then, upon taking $f= \Xi_{\ga_t}^{-1} [g]$ with $ g \in  \e{ker}[\mc{K}_{\ga_t}]$, it holds $\nu_{\mathrm{sc}} \big( g^\prime \,  \Xi_{\ga_t}^{-1} [g] \big) \geq 0$.

\end{proof}

We have now introduced enough notation and results so as to state the central limit theorem for the real model. The latter will be established with the help of the
machinery of Dyson-Schwinger equations (see \cite{Gu19} for an introduction to this method).

\begin{theorem}[CLT for the real model]\label{CLT}
Let $f \in \mc{C}^\infty(I_{\eps})$ with $I_{\eps}$ as in \eqref{definition intervale eps voisinage 0 et 1} . Then
 the following convergence in distribution holds for the constrained real model $\mathbb{P}_{N,\ga_t^{\eps}}$
\eqref{definition modele reel avec courbe eta t et pot Wt} associated with the curve $\ga_t^{\eps}$ \eqref{contourdef} of support $I_{\eps}$
and in external potential $V_t$ \eqref{interpolatingpotential}-\eqref{interpolatingpotential2}:
\beq
N\mc{L}_N^{(\mathbf{x})}(f) \overset{\mathrm{d}}{\longrightarrow} \mathcal{N}\big( \mathfrak{m}_t[f] , \mathfrak{V}_t[f] \big) \quad as \quad  N \to +\infty \;.
\enq
$\mc{L}_N^{(\mathbf{x})}$ appearing above is as introduced in \eqref{definition mesure empirique centree} where
\beq
\mc{L}_N^{(\mathbf{x})}(f)  \, = \, \Int{\mathbb{R}}{} f(s) \, \dd \mc{L}_N^{(\mathbf{x})}(s)  \;.
\enq
\end{theorem}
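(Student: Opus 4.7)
The plan is to combine a Laplace transform approach with the method of Dyson--Schwinger (loop) equations. Introduce the tilted model $\mathbb{P}_{N,\ga_t^{\eps},s}$ obtained from $\mathbb{P}_{N,\ga_t^{\eps}}$ by the shift $\widetilde{\varphi}_t \mapsto \widetilde{\varphi}_t - \tfrac{s}{\beta N} f$, and set $\Psi_N(s) := \mathbb{E}_{N,\ga_t^{\eps}}\!\big[\exp(sN\,\mc{L}_N^{(\mathbf{x})}(f))\big]$. Then $\partial_s \ln \Psi_N(s) = \mathbb{E}_{N,\ga_t^{\eps},s}\!\big[N\,\mc{L}_N^{(\mathbf{x})}(f)\big]$, so by L\'evy's continuity theorem applied to purely imaginary $s$ the theorem reduces to establishing
\[
\mathbb{E}_{N,\ga_t^{\eps},s}\!\big[N\,\mc{L}_N^{(\mathbf{x})}(f)\big] \,=\, \mathfrak{m}_t[f] \,+\, s\,\mathfrak{V}_t[f] \,+\, \e{o}(1)
\]
uniformly for $s$ in a compact neighbourhood of the origin, and then integrating from $0$ to $s$.

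Next I derive the Dyson--Schwinger equation for the tilted model. Integration by parts in each variable $x_k$ against a test function $h \in \mc{C}^\infty(I_{\eps})$, followed by symmetrisation of the resulting double sum in $(x_i,x_j)$, produces an identity of the form
\[
\mathbb{E}_{N,\ga_t^{\eps},s}\!\Big[N\,\mc{L}_N^{(\mathbf{x})}\big(\Xi_{\ga_t}[h]\big)\Big] \,=\, -\,\nu_{\mathrm{sc}}\!\big(\mc{R}_{\ga_t}[h]\big) \,-\, \tfrac{s}{\beta}\,\nu_{\mathrm{sc}}\!\big(h'\,f\big) \,+\, \mathcal{E}_N(h;s).
\]
Here the curvature contribution $\nu_{\mathrm{sc}}(\mc{R}_{\ga_t}[h])$ stems from the Jacobian $|\ga_t'|$ weighted by $(1 - \tfrac{\beta}{2})$ together with the diagonal of the symmetrised repulsion; the linear-in-$s$ term arises from the tilt entering $\widetilde{\varphi}_t$; and $\mathcal{E}_N(h;s)$ is a remainder quadratic in $\mc{L}_N^{(\mathbf{x})}$. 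Crucially, the Euler--Lagrange identity \eqref{fidentity} characterising $\nu_{\mathrm{sc}}$ ensures the cancellation of an otherwise $\e{O}(N)$ deterministic piece, and boundary contributions at $-\eps$ and $1+\eps$ are controlled by Proposition~\ref{edgebound}.

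I then invert the master operator. Substituting $h = \widetilde{\Xi}_{\ga_t}^{-1}[f]$ (using the extension described just before \eqref{esperance Gaussienne}) reduces $\Xi_{\ga_t}[h]$ to $f - \mc{K}_{\ga_t}[f]$, and the constant is absorbed since $\mc{L}_N^{(\mathbf{x})}$ integrates to zero. This yields
\[
\mathbb{E}_{N,\ga_t^{\eps},s}\!\big[N\,\mc{L}_N^{(\mathbf{x})}(f)\big] \,=\, \big(\tfrac{1}{\beta} - \tfrac{1}{2}\big)\,\nu_{\mathrm{sc}}\!\big(\mc{R}_{\ga_t}\!\circ\!\Xi_{\ga_t}^{-1}[f]\big) \,+\, s\,\tfrac{1}{\beta}\,\nu_{\mathrm{sc}}\!\big(f'\,\Xi_{\ga_t}^{-1}[f]\big) \,+\, \e{o}(1),
\]
matching exactly \eqref{esperance Gaussienne} and \eqref{variance Gaussienne}, with positivity of $\mathfrak{V}_t[f]$ already established in Lemma~\ref{Lemma positivite variance}. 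Integrating $\partial_s\ln\Psi_N(s)$ in $s$ from $0$ then delivers the Gaussian limit of $\Psi_N$, completing the proof.

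The main obstacle is controlling $\mathcal{E}_N(h;s)$ uniformly in $s$ in a compact neighbourhood of the origin. A first application of Corollary~\ref{moments} yields only $\mathcal{E}_N = \e{O}(\ln N / N)$, which after multiplication by $N$ is not yet $\e{o}(1)$. One therefore needs a standard bootstrap argument: the rough $\e{O}\!\big(\sqrt{\ln N/N}\big)$ bound on $\mc{L}_N^{(\mathbf{x})}(\phi)$ for smooth $\phi$ is reinjected into the loop equation to upgrade it to $\e{O}(1/N)$, at which point the quadratic remainder becomes $\e{o}(N^{-1})$ and the claim follows. A secondary technical issue is that the concentration bounds of Section~\ref{concentrationsection} must persist uniformly for the tilted model $\mathbb{P}_{N,\ga_t^{\eps},s}$; this is straightforward because the tilt perturbs the potential only by $\e{O}(N^{-1})$ and so neither the equilibrium measure $\nu_{\mathrm{sc}}$ nor the uniform constants of Lemma~\ref{curvelowerbound} are affected to leading order.
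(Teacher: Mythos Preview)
Your approach is correct and is a standard route to the CLT, but it differs from the paper's. The paper does not introduce a tilted measure or compute a Laplace transform. Instead it works directly with the untilted Dyson--Schwinger hierarchy \eqref{ecriture equation des boucles} for the multilinear statistics $\langle \mc{F}\rangle_{\otimes^k \mc{L}_N^{(\mathbf{x})}}$, and proves in Proposition~\ref{momentconv} that for products $\mc{F}(\bs{s}_k)=\prod_a f_a(s_a)$ these converge (after rescaling by $N^k$) to the Gaussian moment functional $\Psi_G(\{f_1,\dots,f_k\})$, defined recursively by the Wick-type rule. The CLT then follows by the method of moments. The bootstrap you identify is exactly the content of Lemma~\ref{normbound}, which upgrades the $(\ln N/N)^{k/2}$ a~priori bound of Proposition~\ref{Proposition a priori croissance statistique k lineaire} to $N^{-k}$ by iterating \eqref{DSbound}.

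What each approach buys: your Laplace-transform route is economical for a single $f$ and makes the appearance of $\mathfrak{m}_t[f]$ and $\mathfrak{V}_t[f]$ transparent, but it requires re-checking the concentration and boundary estimates of Sections~\ref{concentrationsection}--\ref{maxeigsection} uniformly over the tilted family $\mathbb{P}_{N,\ga_t^{\eps},s}$ (as you note, this is harmless since the tilt is $\e{O}(N^{-1})$). The paper's moment approach avoids any tilted model and, more importantly, delivers directly the joint convergence of $\{N\mc{L}_N^{(\mathbf{x})}(f_a)\}_a$ to a Gaussian family, which is what is actually needed downstream in Proposition~\ref{proplim} to handle the quadratic form $\mc{G}_{\e{in}}$. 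Your argument proves Theorem~\ref{CLT} as stated but would need a multivariate extension to serve that later purpose.
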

The variance $\mathfrak{V}_t[f]$ of the Gaussian is positive owing to Lemma \ref{Lemma positivite variance}, as it should be.

In order to prove Theorem \ref{CLT}, we shall now introduce the Dyson-Schwinger equations which relate moments of linear statistics,
and more generally averages of functions in many variables integrated  against the re-centred empirical measure
$\mc{L}_N^{(\mathbf{x})}$ as introduced in \eqref{definition mesure empirique centree}. To state these, we need several ingredients:

$\chi^{(p)}: \mc{W}_{\ell}^{\infty}(\R^n) \rightarrow  \mc{W}_{\ell}^{\infty}(\R^{n-1})$ stands for the operator
\beq
\chi^{(p)}[\phi]\big( \xi_1,\dots, \xi_{n-1} \big) \, = \, \phi\big( \xi_1,\dots, \xi_{p-1}, \xi_1,\xi_{p},\dots, \xi_{n-1} \big) \,.
\enq
In what follows the  space on which $\chi^{(p)}$ acts  will be indicated implicitly by the function it acts on. We shall further need the $\ga_t$ deformed non-commutative derivative
\beq
\mc{D}_{\ga_t}[f] (x,y) \, = \, \Re  \bigg( \frac{\gamma^\prime_t(x) f(x) - \gamma^\prime_t(y) f(y)}{\gamma_t(x) - \gamma_t(y)} \bigg) \;.
\enq
Finally, we extend the operators $\mc{D}_{\ga_t}$,  $\Xi_{\ga_t}$ and $\mc{R}_{\ga_t}$ to functions $\mc{F}\in \mc{W}^{\infty}_1(I_{\eps}^k)$ with $I_{\eps}=\intff{-\eps}{1+\eps}$
as
\beq
\mc{D}_{\ga_t}^{(1)}[\mc{F}](\bs{s}_{k+1}) \, = \, \mc{D}_{\ga_t}[\mc{F}(*,s_3,\dots, s_{k+1})](s_1,s_2)
\enq
and
\beq
\Xi_{\ga_t}^{(1)}[\mc{F}]( \bs{s}_{k}) \, = \, \Xi_{\ga_t}[\mc{F}(*,s_2,\dots, s_{k})](s_1) \;,
\quad
\mc{R}_{\ga_t}^{(1)}[\mc{F}]( \bs{s}_{k}) \, = \, \mc{R}_{\ga_t}[\mc{F}(*,s_2,\dots, s_{k})](s_1)
\enq
in which $*$ denotes the dummy variables on which the operator acts while $\bs{s}_p=(s_1,\dots, s_p)$.
Finally, we introduce the notation
\beq
\big< \mc{F} \big>_{\bigotimes\limits^{k}_{a=1} \nu_a  } \;  = \; \mathbb{E}_{N, \ga_t^{\eps} } \Big[ \Int{}{} \mc{F}( \bs{s}_k)  \pl{a=1}{k} \dd \nu_a (s_a) \Big] \;.
\enq
Then, the loop equations take the form
\bem
\big< \mc{F} \big>_{\bigotimes\limits^{k+1} \mc{L}_N^{(\mathbf{x})}  } \, = \, \f{1}{2} \Big< \mc{D}_{\ga_t}^{(1)} \circ \big(\Xi_{\ga_t}^{(1)} \big)^{-1}[\mc{F}] \Big>_{\bigotimes\limits^{k+2}\mc{L}_N^{(\mathbf{x})}  }
\, + \, \f{1}{N} \Big( \f{1}{\be}- \f{1}{2} \Big) \Big< \mc{R}_{\ga_t}^{(1)}\circ \big(\Xi_{\ga_t}^{(1)} \big)^{-1}[\mc{F}] \Big>_{L_N^{(\mathbf{x})} \bigotimes\limits^{k}  \mc{L}_N^{(\mathbf{x})}  } \\
\, + \, \f{1}{N^2 \be} \sul{a=1}{k}  \Big< \chi^{(a+1)}\big[\Dp{a+1}  \big(\Xi_{\ga_t}^{(1)} \big)^{-1}[\mc{F}] \big] \Big>_{L_N^{(\mathbf{x})} \bigotimes\limits^{k-1}  \mc{L}_N^{(\mathbf{x})}  }
\, + \, \mf{B}_{k+1}\Big[  \big(\Xi_{\ga_t}^{(1)} \big)^{-1}[\mc{F}] \Big]
\label{ecriture equation des boucles}
\end{multline}
in which, denoting $\ov{\mathbb{E}}_{N, \ga_t^{\eps} }= \op{Z}_{N,\ga_t}[V_t] \cdot  \mathbb{E}_{N, \ga_t^{\eps} }$, one has
\beq
 \mf{B}_{k+1}[ \mc{F} ]  \, = \, \f{- 1}{N^2 \be} \sul{ \a \in \Dp{} I_{\eps} }{} \f{ 1 }{ \op{Z}_{N,\ga_t}[V_t]  }
\Dp{\a}  \ov{\mathbb{E}}_{N, \ga_t^{\eps} } \Big[ \Int{}{} \mc{F}(u, \bs{s}_k)  \pl{a=1}{k} \dd \mc{L}_N^{(\mathbf{x})}  (s_a) \Big]_{\mid u = \a} \;.
\label{definition operateur de bord eqn SD ordre k}
\enq
%
%
%
%
%
%
%\beq
%
%
%\enq
%
%
%
%
%
%

The Dyson-Schwinger equations can be derived by integration by parts or invariance of the partition function under changes of variables. We refer to
\cite{Borot:2013aa,KozBorotGuionnetLargeNBehMulIntOfToyModelSoVType,Gu19} for the details of the various derivations.

First of all, we observe that the boundary term $ \mf{B}_{k+1}[ \mc{F} ]$ defined in \eqref{definition operateur de bord eqn SD ordre k} only contributes
to an exponentially small order in $N$.

\begin{lemma}
\label{Lemma estimation partie bord dans SD}

There exists $C, C^{\prime}>0$ such that the following upper bound holds
\beq
\big| \mf{B}_{k+1}[ \mc{F} ] \big| \, \leq \, C^{\prime}  \norm{ \mc{F} }_{ \mc{W}^{\infty}_0(I_{\eps}^{k+1}) } \cdot \ex{- C N} \;.
\enq
\end{lemma}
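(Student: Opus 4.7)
The plan is to recognise the boundary term $\mf{B}_{k+1}[\mc{F}]$ as being, up to bounded conditional expectations, the one-point density $\varrho^t_N$ of the real model evaluated at the two endpoints of $I_{\eps}$, and then to invoke Proposition \ref{edgebound} to conclude. First I would unfold the definition \eqref{definition operateur de bord eqn SD ordre k}: the boundary operator $\Dp{\a}$ acting at $\a\in \Dp{} I_{\eps}$ on the unnormalised expectation arises from integration by parts in one of the $N$ integration variables in the derivation of the Dyson--Schwinger equations, so that $\Dp{\a}\ov{\mathbb{E}}_{N,\ga_t^{\eps}}[\,\cdot\,]|_{u=\a}$ equals, up to the sign prescribed by the outward normal at $\a$, the integrand of the unnormalised partition function with one particle frozen at the value $\a$.

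Next, using the permutation symmetry of the density $\dd\mathbb{P}_{N,\ga_t^{\eps}}$ and the definition \eqref{definition 1pt Density} of the one-point density, this identification can be written as
\[
\f{1}{\op{Z}_{N,\ga_t^{\eps}}[V_t]}\Dp{\a}\ov{\mathbb{E}}_{N,\ga_t^{\eps}}\bigg[\Int{}{}\mc{F}(u,\bs{s}_k)\pl{a=1}{k}\dd\mc{L}_N^{(\mathbf{x})}(s_a)\bigg]_{\mid u=\a}
\;=\;\pm\, N\,\varrho^t_N(\a)\,\mc{E}_{\a}\!\big[\mc{F}(\a,\,\cdot\,)\big],
\]
where $\mc{E}_{\a}$ denotes the conditional expectation under the $(N-1)$-particle real model obtained by freezing one variable at $\a$, integrated against $k$ copies of the centred empirical measure $\mc{L}_N^{(\mathbf{x}^\prime)}$ of the remaining particles. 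Both $L_N^{(\mathbf{x}^\prime)}$ and $\nu_{\mathrm{sc}}$ are probability measures supported in $I_{\eps}$, so $\mc{L}_N^{(\mathbf{x}^\prime)}$ has total variation at most $2$; hence the crude pointwise bound $|\mc{E}_{\a}[\mc{F}(\a,\cdot)]|\leq 2^k \norm{\mc{F}}_{\mc{W}^{\infty}_0(I_{\eps}^{k+1})}$ holds.

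Combining these two observations and summing over the two endpoints $\a\in\{-\eps,1+\eps\}$ leads to the estimate
\[
\big|\mf{B}_{k+1}[\mc{F}]\big|\;\leq\;\frac{2\cdot 2^k}{N\beta}\,\max_{\a\in\Dp{} I_{\eps}}\varrho^t_N(\a)\,\cdot\,\norm{\mc{F}}_{\mc{W}^{\infty}_0(I_{\eps}^{k+1})}.
\]
Proposition \ref{edgebound} provides $t$-uniform constants $C,C^{\prime}>0$ with $\varrho^t_N(\a)\leq C\ex{-C^{\prime}N}$ for $\a\in\{-\eps,1+\eps\}$, so after harmlessly absorbing the factor $2^{k+1}/(N\beta)$ into a slightly smaller exponent one obtains the claimed inequality. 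The only genuinely non-routine step is the first one, namely the clean identification of $\Dp{\a}\ov{\mathbb{E}}_{N,\ga_t^{\eps}}[\,\cdot\,]|_{u=\a}$ with a multiple of the one-point density; once this is in place, the remainder is a trivial total-variation estimate combined with the edge-decay input from Section \ref{maxeigsection}.
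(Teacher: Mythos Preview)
Your proof is correct and follows essentially the same route as the paper: write the boundary term as (a multiple of) the integrand with one particle frozen at $\a$, recognise the one-point density $\varrho^t_N(\a)$, bound the remaining conditional expectation trivially by $2^k\norm{\mc{F}}_{\mc{W}^{\infty}_0}$, and invoke Proposition~\ref{edgebound}. One small inaccuracy: the centred empirical measure appearing after freezing is $\mc{L}_N^{(\wt{\mathbf{x}})}$ with $\wt{\mathbf{x}}=(\a,x_1,\dots,x_{N-1})$, i.e.\ it still includes the frozen particle, not just the remaining ones---but this does not affect the total-variation bound you use.
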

\begin{proof}

Given  $\a \in \{-\eps, 1+\eps \}$, it holds
\bem
\msc{S}_{\a} \, = \, \Dp{\a}  \ov{\mathbb{E}}_{N, \ga_t^{\eps} } \Big[ \Int{}{} \mc{F}(u, \bs{s}_k)  \pl{a=1}{k} \dd \mc{L}_N^{(\mathbf{x})}  (s_a) \Big]_{\mid u = \a} \, = \, N
  \frac{ \ex{-N \beta \widetilde{\varphi}_t(\a) } |\ga_t^{\prime}(\a)| }{  \e{sgn}(\a)  \cdot  }
\Int{-\eps }{1+\epsilon} \pl{ i < j }{ N-1} |\gamma_t(x_i) - \gamma_t(x_j)|^\beta \\
\times   \pl{k=1}{N-1} \Big\{ |\gamma_t(\a) - \gamma_t(x_k)|^\beta \cdot  \ex{  - N \beta \widetilde{\varphi}_t(x_k)} \cdot |\ga_t^{\prime}(x_k)| \Big\}
\cdot \Big\{ \Int{}{} \mc{F}(\a, \bs{s}_k)  \pl{a=1}{k} \dd \mc{L}_N^{(\wt{\mathbf{x}})}  (s_a) \Big\} \cdot  \dd \mathbf{x}
\nonumber
\end{multline}
where $\wt{\mathbf{x}}=(\a,\bs{x})$. Then, direct bounds yield
$$\big| \msc{S}_{\a} \big| \, \leq \, N 2^k \, \op{Z}_{N,\ga_t}[V_t]\, \varrho_{N}^{t}(\a)  \,  \norm{ \mc{F} }_{ \mc{W}^{\infty}_0(I_{\eps}^{k+1}) }  $$
with the $1$-point density $\varrho_{N}^{t}$ defined in \eqref{definition 1pt Density}. It then only remains to invoke Proposition \ref{edgebound}.

\end{proof}

\begin{definition}[Multilinear norm]

Given $\ell, k \in \mathbb{N}$, the multilinear norm of order $\ell$ in $k$ variables takes the form
\beq
S_{\ell,k} = \sup_{\substack{ \mc{F} \in \mc{W}_{\ell}^{\infty}( I_{\eps}^k ) \\
\| \mc{F} \|_{ \mc{W}_{\ell}^{\infty}( I_{\eps}^k ) } \leq 1}} \Big| \big< \mc{F} \big>_{\bigotimes\limits^{k} \mc{L}_N^{(\mathbf{x})}  } \Big| \;.
\enq
\end{definition}
Note that $S_{\ell+1,k} \leq S_{\ell,k}$.  In what follows, we will show that for any $k \in \mathbb{N}$ there is an $\ell$ sufficiently large (depending on $k$) such that
$S_{\ell,k} = \e{O}(N^{-k})$.  This is an improvement of the bound given in Corollary \ref{moments} which implies that $S_{\ell,k} = \e{O}\big(\left( \frac{\ln N}{N} \right)^\frac{k}{2}\big)$.

\begin{lemma}\label{quadraticbound}
Then there exists a constant $C_\ell > 0$ such that for all $k, \ell \geq 0$ and  $\mc{F} \in  \mc{W}_{\ell+1}^{\infty}( I_{\eps}^{k} )$,
\beq
\big|\big|  \mc{D}_{\ga_t}^{(1)} [\mc{F}]  \big|\big|_{ \mc{W}_{\ell}^{\infty}( I_{\eps}^{k+1} ) } \, \leq \, C_{\ell}
\norm{ \mc{F}  }_{ \mc{W}_{\ell+1}^{\infty}( I_{\eps}^{k} ) }\, ,  \qquad
\big|\big|   \big(\Xi_{\ga_t}^{(1)} \big)^{-1}[\mc{F}]  \big|\big|  _{ \mc{W}_{\ell}^{\infty}( I_{\eps}^{k} ) } \, \leq \, C_{\ell}
\norm{ \mc{F}  }_{ \mc{W}_{\ell+1}^{\infty}( I_{\eps}^{k} ) }
\label{borne sur derivee NC et inverse Xi multivars}
\enq
as well as
\beq
\big|\big|  \mc{R}_{\ga_t}^{(1)} [\mc{F}]   \big|\big|  _{ \mc{W}_{\ell}^{\infty}( I_{\eps}^{k} ) } \, \leq \, C_{\ell}
\norm{ \mc{F}  }_{ \mc{W}_{\ell+1}^{\infty}( I_{\eps}^{k} ) }
\label{borne sur operateur R Gat multivars}
\enq
with all bounds being uniform in $t\in \intff{0}{1}$.
\end{lemma}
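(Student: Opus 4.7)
The strategy is to treat the three bounds independently, since they concern structurally different operators, and to reduce the multivariable statements to their one-variable counterparts by freezing the auxiliary variables and noting that partial derivatives with respect to $(s_2,\dots,s_k)$ commute with the operators, which act solely on the first argument.

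For the bound on $\mc{D}_{\ga_t}^{(1)}[\mc{F}]$, the key observation is that the apparent singularity on the diagonal $s_1=s_2$ is removable. Writing
\begin{equation*}
\frac{\gamma_t'(x)f(x)-\gamma_t'(y)f(y)}{\gamma_t(x)-\gamma_t(y)} \; = \; \frac{\int_0^1 (\gamma_t' f)'(y+u(x-y))\,\dd u}{\int_0^1 \gamma_t'(y+u(x-y))\,\dd u}
\end{equation*}
for $f(x)=\mc{F}(x,s_3,\dots,s_{k+1})$ expresses the ratio as a smooth function of $(x,y)$ times a factor bounded below uniformly in $t$, thanks to Lemma \ref{curvelowerbound} (which ensures $|\gamma_t'| \geq C_\ga > 0$ uniformly in $t$). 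Bounds on $\ell$ derivatives of this expression reduce, via the chain rule and Faà di Bruno, to bounds on $\ell+1$ derivatives of $\gamma_t'f$ and on $\|\gamma_t\|_{\mc{W}^\infty_{\ell+2}(I_\eps)}$, the latter being uniform in $t$ again by Lemma \ref{curvelowerbound}. Taking real parts and freezing $(s_3,\dots,s_{k+1})$ yields the first estimate in \eqref{borne sur derivee NC et inverse Xi multivars}.

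For $\mc{R}_{\ga_t}^{(1)}$, the bound \eqref{borne sur operateur R Gat multivars} is immediate: this is a first-order differential operator in $s_1$ with coefficient $\Re(\gamma_t''/\gamma_t')$, which by Lemma \ref{curvelowerbound} is bounded in $\mc{W}^\infty_\ell(I_\eps)$ uniformly in $t\in \intff{0}{1}$. Thus $\norm{\mc{R}_{\ga_t}^{(1)}[\mc{F}]}_{\mc{W}^\infty_\ell(I_\eps^k)} \leq C_\ell \norm{\mc{F}}_{\mc{W}^\infty_{\ell+1}(I_\eps^k)}$ follows at once by the Leibniz rule.

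The main obstacle is the estimate for $(\Xi_{\ga_t}^{(1)})^{-1}$, for which I would invoke Proposition \ref{realmasterinverse} from Appendix \ref{singularsection}, applied pointwise in $(s_2,\dots,s_k)$ to the function $s_1\mapsto \mc{F}(s_1,s_2,\dots,s_k)-\mc{K}_{\ga_t}[\mc{F}(\cdot,s_2,\dots,s_k)]$ (recalling the convention, discussed just after \eqref{definition real master operator}, that $(\Xi_{\ga_t})^{-1}$ is understood through \eqref{ecriture explicite inverse real master operator} and hence is insensitive to this shift). The crucial point is that Proposition \ref{realmasterinverse} must be stated with a bound that is uniform in $t\in \intff{0}{1}$, which is ensured by the joint analyticity of $t\mapsto V_t$ and $t\mapsto \gamma_t$ together with the uniform estimates of Lemma \ref{curvelowerbound} and Corollary \ref{Vprimebound}; these hypotheses correspond exactly to the uniformity clause appearing in the analogous statement of Proposition \ref{weaktopology}. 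Differentiation with respect to the parameters $(s_2,\dots,s_k)$ commutes with the inversion of $\Xi_{\ga_t}$ (since the latter acts only in the first variable and the parametric dependence through $\mc{F}$ is smooth), so each such $\partial_{s_j}$, $j\geq 2$, applied to $(\Xi_{\ga_t}^{(1)})^{-1}[\mc{F}]$ is controlled by the one-variable estimate applied to $\partial_{s_j}\mc{F}$. The apparent loss of one derivative in the bound \eqref{borne sur derivee NC et inverse Xi multivars} stems from the structure of the explicit inverse formula \eqref{ecriture explicite inverse real master operator}, which trades one factor of smoothness for the removal of the endpoint square root singularity at $\{-\eps,1+\eps\}$. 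Putting the three estimates together concludes the proof.
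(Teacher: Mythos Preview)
Your proof is correct and follows essentially the same route as the paper: an integral representation to desingularise $\mc{D}_{\ga_t}^{(1)}$ via Lemma~\ref{curvelowerbound}, direct estimation for $\mc{R}_{\ga_t}^{(1)}$, and an appeal to Proposition~\ref{realmasterinverse} for the inverse. The paper's decomposition for $\mc{D}_{\ga_t}^{(1)}$ is slightly different---it factors out $\tfrac{x-y}{\gamma_t(x)-\gamma_t(y)}$ and splits the numerator as $\gamma_t'(x)\bigl(\mc{F}(x,\cdot)-\mc{F}(y,\cdot)\bigr)+\mc{F}(y,\cdot)\bigl(\gamma_t'(x)-\gamma_t'(y)\bigr)$---but yours is equivalent. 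One small remark: your closing comment about the ``loss of one derivative'' is misleading, since Proposition~\ref{realmasterinverse} actually shows the inverse \emph{gains} a derivative ($\|\Xi_{\eta_t}^{-1}[g]\|_{\mc{W}^\infty_k}\leq C\|g\|_{\mc{W}^\infty_{k-1}}$); the bound in the lemma is simply a weaker consequence. Also, the square-root weight in \eqref{ecriture explicite inverse real master operator} sits at the endpoints $\{0,1\}$ of the support of $\nu_{\mathrm{sc}}$, not at $\{-\eps,1+\eps\}$.
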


\begin{proof}

To start with, observe that one has the integral representation
\bem
 \mc{D}_{\ga_t}^{(1)} [\mc{F}] (x,y,\bs{s}_{k-1})   \, = \,
\Re\bigg\{ \frac{ x-y }{ \gamma_t(x)-\gamma_t(y) }  \Big[ \mc{F}(y,\bs{s}_{k-1}) \f{ \gamma_t^{\prime}(x)-\gamma_t^{\prime}(y) }{ x-y }  \\
\, + \,  \gamma^\prime_t(x) \Int{0}{1} \hspace{-1mm} \dd u  \, \Dp{1}\mc{F}(y + u(x-y) ,\bs{s}_{k-1})    \Big] \bigg\} \;.
\nonumber
\end{multline}
By Lemma \ref{curvelowerbound}, it holds $\frac{x-y}{\gamma_t(x)-\gamma_t(y)} \leq C_\gamma^{-1}$ for some $C_\gamma > 0$ and $\norm{\ga_t}_{ \mc{W}^{\infty}_p(I_{\eps}) }\leq  C_p$,
for some $C_p$, both uniformly in  $t \in \intff{0}{1}$. This entails the first bound in \eqref{borne sur derivee NC et inverse Xi multivars}.
The bound \eqref{borne sur operateur R Gat multivars} follows from direct estimation and the mentioned results of Lemma \ref{curvelowerbound}. Finally, the
second bound in \eqref{borne sur derivee NC et inverse Xi multivars} is a direct consequence of Proposition \ref{realmasterinverse}.

\end{proof}

Note that the above results immediately yield the three upper bounds
\beqa
\Big|  \Big< \mc{D}_{\ga_t}^{(1)} \circ \big(\Xi_{\ga_t}^{(1)} \big)^{-1}[\mc{F}] \Big>_{\bigotimes\limits^{k+2}\mc{L}_N^{(\mathbf{x})}  }  \Big|  & \; \leq \;  &
C_{\ell} \cdot \| \mc{F} \|_{ \mc{W}_{\ell+2}^{\infty}( I_{\eps}^{k+1} ) } \cdot S_{\ell, k+2} \, ,  \label{ecriture borne sup sur contri derivee NC} \vspace{2mm}\\
\Big|  \Big< \mc{R}_{\ga_t}^{(1)} \circ \big(\Xi_{\ga_t}^{(1)} \big)^{-1}[\mc{F}] \Big>_{\bigotimes\limits^{k+1}\mc{L}_N^{(\mathbf{x})}  }  \Big|   &  \; \leq \; &
C_{\ell} \cdot \| \mc{F} \|_{ \mc{W}_{\ell+2}^{\infty}( I_{\eps}^{k+1} ) } \cdot S_{\ell, k+1} \, , \label{ecriture borne sup sur contri partie entropique} \vspace{2mm} \\
\Big|  \Big< \chi^{(a+1)}\big[\Dp{a+1}  \big(\Xi_{\ga_t}^{(1)} \big)^{-1}[\mc{F}] \big] \Big>_{\bigotimes\limits^{k}  \mc{L}_N^{(\mathbf{x})}  } \Big|  & \; \leq \; &
  C_{\ell} \cdot \| \mc{F} \|_{ \mc{W}_{\ell+2}^{\infty}( I_{\eps}^{k+1} ) } \cdot S_{\ell, k} \;.
\label{ecriture borne sup sur contri derivee fct multipts}
\eeqa

We now pass on to establishing an \textit{a priori} bound on the $k^{\e{th}}$ linear statistic of $\mc{F}\in \mc{W}^{\infty}_{\ell}(I_{\eps}^k)$, with $\ell$ large enough.

\begin{proposition}
\label{Proposition a priori croissance statistique k lineaire}

 For any $k \in \mathbb{N}$, there exists $\wt{C}_k>0$ such that given $\mc{F}\in \mc{W}^{\infty}_{3k}(I_{\eps}^k)$ one has the following upper bound
\beq
\Big|  \big<  \mc{F}   \big>_{ \bigotimes\limits^{k}\mc{L}_N^{(\mathbf{x})}  }  \Big|  \,  \leq \,
\wt{C}_{k} \cdot \Big(  \f{\ln N }{ N } \Big)^{\f{k}{2} } \cdot \| \mc{F} \|_{ \mc{W}_{3k}^{\infty}( I_{\eps}^{k} ) } \;.
\enq

\end{proposition}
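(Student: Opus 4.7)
I would proceed by induction on $k$, using the Dyson--Schwinger equation \eqref{ecriture equation des boucles} together with the operator estimates of Lemma \ref{quadraticbound} and the boundary estimate of Lemma \ref{Lemma estimation partie bord dans SD}. The base case $k=0$ is immediate from $|\mathbb{E}_{N,\ga_t^{\eps}}[\mc{F}]|\leq \|\mc{F}\|_{\mc{W}^{\infty}_{0}}$. The case $k=1$ follows directly from Corollary \ref{moments} with $m=1$: by Jensen,
\[
\big|\langle f\rangle_{\mc{L}_N^{(\mathbf{x})}}\big|\;\leq\;\mathbb{E}_{N,\ga_t^{\eps}}\!\big[|\mc{L}_N^{(\mathbf{x})}(f)|\big]\;\leq\; C\,\|f\|_{\mc{W}^{\infty}_{1}(I_{\eps})}\sqrt{\ln N/N}\,.
\]

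For the inductive step, suppose the bound holds for every order $j\leq k$. Given $\mc{F}\in\mc{W}^{\infty}_{3(k+1)}(I_\eps^{k+1})$, apply \eqref{ecriture equation des boucles} to rewrite $\langle\mc{F}\rangle_{\bigotimes^{k+1}\mc{L}_N^{(\mathbf{x})}}$ as the sum of four contributions: (a) the leading $(k+2)$-order statistic $\tfrac{1}{2}\langle \mc{D}_{\ga_t}^{(1)}\circ(\Xi_{\ga_t}^{(1)})^{-1}[\mc{F}]\rangle_{\bigotimes^{k+2}\mc{L}_N^{(\mathbf{x})}}$; (b) a $\tfrac{1}{N}$-prefactored statistic on $L_N^{(\mathbf{x})}\otimes\bigotimes^{k}\mc{L}_N^{(\mathbf{x})}$; (c) $\tfrac{1}{N^2}$-prefactored statistics on $L_N^{(\mathbf{x})}\otimes\bigotimes^{k-1}\mc{L}_N^{(\mathbf{x})}$; (d) the boundary term $\mf{B}_{k+1}$, which is $O(e^{-cN})$. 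Decomposing $L_N^{(\mathbf{x})} = \mc{L}_N^{(\mathbf{x})} + \nu_{\e{sc}}$ inside (b) and (c) reduces them to centred statistics of orders $k{+}1$, $k$, $k{-}1$. By Lemma \ref{quadraticbound}, the transformed test functions $\mc{D}^{(1)}\Xi^{-1}[\mc{F}]$, $\mc{R}^{(1)}\Xi^{-1}[\mc{F}]$ and $\chi^{(a+1)}\partial_{a+1}\Xi^{-1}[\mc{F}]$ are controlled in the required Sobolev norms by $C\|\mc{F}\|_{\mc{W}^{\infty}_{3(k+1)}}$, so the inductive hypothesis controls the $k$- and $(k{-}1)$-order pieces by $C\|\mc{F}\|_{\mc{W}^{\infty}_{3(k+1)}}(\ln N/N)^{(k+1)/2}$, while the $(k{+}1)$-order piece appearing inside (b) comes with a $1/N$ prefactor and can be absorbed into the left-hand side after taking the supremum over normalised test functions.

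The crux is contribution (a). My plan is to iterate \eqref{ecriture equation des boucles} on it: reapplying the identity produces a $(k+3)$-order statistic with coefficient $\tfrac{1}{4}$ plus fresh lower-order remainders, and continuing for $M$ rounds yields a telescoping identity
\[
\langle \mc{F}\rangle_{\bigotimes^{k+1}\mc{L}_N^{(\mathbf{x})}}\;=\;\frac{1}{2^M}\big\langle\big(\mc{D}^{(1)}\Xi^{(1)-1}\big)^{M}[\mc{F}]\big\rangle_{\bigotimes^{k+M+1}\mc{L}_N^{(\mathbf{x})}}\;+\;\sum_{j=0}^{M-1}\frac{1}{2^{j+1}}\,\mf{R}_{k+j+1}\!\left[\big(\mc{D}^{(1)}\Xi^{(1)-1}\big)^{j}[\mc{F}]\right],
\]
where each $\mf{R}_m$ collects the $1/N$- and $1/N^2$-prefactored remainders. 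Every $\mf{R}_m$ is a sum of centred statistics of order strictly less than $m$, hence, by the inductive hypothesis, contributes a geometrically summable error dominated in total by $C\|\mc{F}\|_{\mc{W}^{\infty}_{3(k+1)}}(\ln N/N)^{(k+1)/2}$, the common ratio between successive remainders being $O(\sqrt{\ln N/N})$. The terminal $(k{+}M{+}1)$-order statistic is handled using the inductive hypothesis applied at that order (for moderate $M$) combined with the prefactor $2^{-M}$, which ensures the tail vanishes when $M$ is chosen large enough in terms of $N$.

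The main difficulty is therefore the combinatorial and analytic bookkeeping: each application of $\mc{D}^{(1)}\circ\Xi^{(1)-1}$ costs two derivatives of $\mc{F}$ by Lemma \ref{quadraticbound}, while the operator $\chi^{(a+1)}\partial_{a+1}\circ\Xi^{(1)-1}$ entering through (c) costs a further derivative; the hypothesis $\mc{F}\in\mc{W}^{\infty}_{3(k+1)}$ is precisely what is needed to absorb these losses through the $k+1$ iterations the scheme requires, while ensuring that the constants $C^M$ coming from repeated application of Lemma \ref{quadraticbound} are dominated by the geometric factor $2^{-M}$. Once the balance between these competing growths is made explicit, the induction closes and produces the announced bound.
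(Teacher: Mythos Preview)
There is a genuine circularity in your scheme. The Dyson--Schwinger equation trades a $(k{+}1)$-statistic for a $(k{+}2)$-statistic plus lower-order remainders; iterating it $M$ times leaves you with a $(k{+}M{+}1)$-order statistic that the inductive hypothesis (which only covers orders $\leq k$) says nothing about. Your sentence ``handled using the inductive hypothesis applied at that order'' has no content: you are proving the bound at order $k{+}1$, not at order $k{+}M{+}1$. The only a priori control you have on the terminal term is the trivial $|\langle G\rangle_{\bigotimes^{m}\mc{L}_N^{(\mathbf{x})}}|\le 2^{m}\|G\|_{L^\infty}$, which blows up faster than $2^{-M}$ decays. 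And the claim that the constants $C^{M}$ from repeated applications of Lemma~\ref{quadraticbound} are beaten by $2^{-M}$ is unjustified: those constants depend on the regularity index (which drifts by two at each step) and there is no reason they stay below $2$. In short, the DS equation is a bootstrapping tool that \emph{consumes} an a priori bound like Proposition~\ref{Proposition a priori croissance statistique k lineaire}; it cannot manufacture one from scratch.

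The paper's proof is completely different and direct. Extend $\mc{F}$ to a compactly supported function on $\R^k$ (Whitney), take the Fourier transform, and use Plancherel to write
\[
\big\langle\mc{F}\big\rangle_{\bigotimes^{k}\mc{L}_N^{(\mathbf{x})}}
=\mathbb{E}_{N,\ga_t^{\eps}}\!\bigg[\Int{\R^k}{}\wh{\mc{F}_{\mf{e}}}(\bs{u})\,\overline{\xi_N(u_1)}\cdots\overline{\xi_N(u_k)}\,\dd^{k}u\bigg],
\qquad \xi_N(u)=\int\ex{-2\pi\i su}\,\dd\mc{L}_N^{(\mathbf{x})}(s)\,.
\]
Then H\"older on the product $\prod\xi_N$ reduces everything to the \emph{univariate} moment bound $\mathbb{E}[|\xi_N(u)|^{k}]\leq C_k(1+|u|)^{k}(\ln N/N)^{k/2}$, which is exactly Corollary~\ref{moments} applied to $f=\ex{-2\pi\i u\,*}$. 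The regularity hypothesis $\mc{F}\in\mc{W}^{\infty}_{3k}$ enters only to make $\wh{\mc{F}_{\mf{e}}}(\bs{u})\prod(1+|u_a|)$ integrable after integrating by parts. This is the missing idea: decouple the variables via Fourier so that the multilinear statistic becomes a superposition of products of single linear statistics, for which Corollary~\ref{moments} already gives the $(\ln N/N)^{k/2}$ rate.
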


\begin{proof} Let $\mc{F}\in \mc{C}^{p}(I_{\eps}^k)$.  Then by the Whitney extension theorem there exists a compactly supported extension $\mc{F}_{\mf{e}}$ to $\mc{C}^{p}(\R^k)$ such that $\norm{ \mc{F}_{\mf{e}} }_{ \mc{W}^{\infty}_p(\R^k) } \, \leq \, C_{p,k} \cdot \norm{ \mc{F} }_{ \mc{W}^{\infty}_p(I_{\eps}^k) }$ for some constant $C_{p,k}>0$.

Then by Plancherel's theorem
\begin{align*}
\Big|\big<  \mc{F}   \big>_{ \bigotimes\limits^{k}\mc{L}_N^{(\mathbf{x})}  } \Big|  \; &= \;    \Big| \mathbb{E}_{N, \ga_t^{\eps} }\Big[ \Int{\mathbb{R}^k}{} \wh{ \mc{F}_{\mf{e}} }(\bs{u}_k) \, \overline{\xi_N(u_1)} \dots \overline{\xi_N(u_k)} \, \dd^k u \Big]\Big|\\
&\leq    \Int{\mathbb{R}^k}{} |\wh{ \mc{F}_{\mf{e}} }(\bs{u}_k)| \,   \prod_{m=1}^k \big(\mathbb{E}_{N, \ga_t^{\eps} } \big[ |\xi_N(u_m)|^k \big]\big)^\frac{1}{k} \, \dd^k u
\end{align*}
where $\xi_N(u) = \int_\mathbb{R} e^{-2\pi i  s u } \, \dd \mc{L}_N^{(\mathbf{x})}(s)$. Then by Corollary \ref{moments}
$$
\mathbb{E}_{N, \ga_t^{\eps} } \big[ |\xi_N(u)|^k \big] \leq C_k (1+|u|)^{k} \Big(\frac{\ln N}{N}\Big)^{\frac{k}{2}} \; .
$$
where we have replaced $\mc{L}_N^{(\mathbf{x})}$ by $\overline{\mc{L}}_N^{(\mathbf{x})}$ at the price of terms of order $N^{-3} \ll 
\sqrt{\frac{\ln N}{N}}$. The result then immediately follows upon taking $p=3k$ and integrating the Fourier transform by parts, and observing that $\prod_{m=1}^k (1+|u_m|)^{-2}$ is integrable on $\mathbb{R}^k$.

\end{proof}

It follows from Lemma \ref{Lemma estimation partie bord dans SD} and equations \eqref{ecriture borne sup sur contri derivee NC}, \eqref{ecriture borne sup sur contri partie entropique}
and \eqref{ecriture borne sup sur contri derivee fct multipts} applied to the Dyson-Schwinger equations \eqref{ecriture equation des boucles} that the multilinear norms satisfy, for any $k \geq 0$,  the bounds
\begin{align}\label{DSbound}
S_{\ell+2,k+1} \,  \leq \,   C_{\ell,k} \Big( S_{\ell,k+2}  \, + \,  \frac{1}{ N } S_{\ell,k}\mathbbm{1}_{k \geq 1}
 +  \frac{1}{N} S_{\ell,k+1} \, + \,  \frac{1}{ N^2} S_{\ell,k-1}\mathbbm{1}_{k \geq 2} + \ex{-CN}\Big)
\end{align}
for some $\ell, k$ dependent constants that are uniform in $N$ and $t \in \intff{0}{1}$.

\begin{lemma}
\label{normbound} There exist constants $C_{k, \ell} > 0$ and sequences $b_k$ such that
$$ S_{ \ell , k} \leq C_{k,\ell} \frac{1}{N^k}   $$
for all $k \in \mathbb{N}$ and $\ell \geq b_k$, this holding uniformly in $t \in \intff{0}{1}$.

\end{lemma}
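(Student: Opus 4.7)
The plan is a bootstrap that iteratively sharpens the a priori estimate of Proposition \ref{Proposition a priori croissance statistique k lineaire} using the Dyson--Schwinger inequality \eqref{DSbound}. Set $\rho_N = \sqrt{\ln N / N}$, so Proposition \ref{Proposition a priori croissance statistique k lineaire} reads $S_{\ell,k} \leq \wt{C}_k\, \rho_N^{k}$ for $\ell \geq 3k$. The key observation is that the inequality \eqref{DSbound} controls $S_{\ell+2,k+1}$ primarily by the \emph{higher-order} quantity $S_{\ell,k+2}$, together with remainders carrying negative powers of $N$; since $N^{-1} = \rho_N^2/\ln N$, each application of \eqref{DSbound} should upgrade the exponent of $\rho_N$ by one while costing only two extra degrees of regularity.

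I will prove by induction on $m \geq 0$ the claim that for every $k \geq 1$ there exist $\ell_{m,k} \in \mathbb{N}$ and $C_{m,k} > 0$, uniform in $t \in \intff{0}{1}$, such that $S_{\ell,k} \leq C_{m,k}\, \rho_N^{k+m}$ for all $\ell \geq \ell_{m,k}$ and all $N$ sufficiently large. The base case $m = 0$ is exactly Proposition \ref{Proposition a priori croissance statistique k lineaire}. For the inductive step, fix $k \geq 0$ and take $\ell$ no smaller than each $\ell_{m,j}$ with $j \in \{k-1,k,k+1,k+2\}$ actually appearing in \eqref{DSbound} (the indicator functions eliminate the cases $j = -1$ and $j = 0$). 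The inductive hypothesis then yields $S_{\ell,k+2} = O(\rho_N^{k+m+2})$, $N^{-1} S_{\ell,k} = O(\rho_N^{k+m+2}/\ln N)$, $N^{-1} S_{\ell,k+1} = O(\rho_N^{k+m+3}/\ln N)$ and $N^{-2} S_{\ell,k-1} = O(\rho_N^{k+m+3}/(\ln N)^2)$, each of which, together with the exponentially small boundary term, is $O(\rho_N^{(k+1)+(m+1)})$. Consequently $S_{\ell+2,k+1} \leq C\, \rho_N^{(k+1)+(m+1)}$, establishing the claim at level $m+1$ with $\ell_{m+1,k+1} := 2 + \max_j \ell_{m,j}$.

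To conclude, apply the claim at level $m = k+1$ (for $k \geq 1$): this gives $S_{\ell,k} \leq C_{k+1,k}\, \rho_N^{2k+1} = C_{k+1,k}\, (\ln N)^{k+1/2} N^{-k-1/2}$ for $\ell \geq b_k := \ell_{k+1,k}$. Since $(\ln N)^{k+1/2} \leq \sqrt{N}$ for $N$ sufficiently large, the right hand side is majorised by a constant times $N^{-k}$, uniformly in $t \in \intff{0}{1}$; absorbing the finitely many small values of $N$ into the constant yields the stated bound for all $N$. The case $k = 0$ is trivial since $S_{\ell,0} \leq 1$ directly from the definition. The only genuine subtlety in the argument is the presence of $N^{-1} S_{\ell,k+1}$ on the right hand side of \eqref{DSbound}, whose correlation order coincides with that of the left hand side; this does not obstruct the induction, because at level $m$ the inductive hypothesis already bounds this term by $O(\rho_N^{k+m+3}/\ln N)$, which is strictly better than the target at level $m+1$, so no absorption back into the left hand side is required and the strictly hierarchical gain in $\rho_N$ is preserved.
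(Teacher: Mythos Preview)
Your proof is correct and follows essentially the same bootstrap as the paper: both start from Proposition \ref{Proposition a priori croissance statistique k lineaire} and iterate the Dyson--Schwinger inequality \eqref{DSbound} to upgrade the exponent of $\rho_N=\sqrt{\ln N/N}$ by one at each step, at the cost of two degrees of regularity. The only cosmetic difference is the final cleanup: the paper stops its induction at level $j=k$ (giving $S_{\ell,k}\le C(\ln N)^k N^{-k}$) and then invokes \eqref{DSbound} one more time to kill the logarithms directly, whereas you push the induction one step further to $m=k+1$ and absorb the surviving $(\ln N)^{k+1/2}$ into $N^{1/2}$.
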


\begin{proof}

One first shows that there exist sequences $a_k^{(j)}$, $0\leq j \leq k$, and constants $C_{j,k,\ell}$ such that
 \begin{align*}
S_{\ell, k } \leq C_{j,k,\ell} \left( \frac{\ln N}{N} \right)^\frac{k+j}{2} \qquad  \forall j, k \in \mathbb{N}, \, \,  j \leq  k \quad \e{and} \quad
\ell \geq a_k^{(j)} \;.
\end{align*}
For $j=0$, this is a consequence of Proposition \ref{Proposition a priori croissance statistique k lineaire} and the result is then established by induction on $j$
with the help of \eqref{DSbound}. Once the result is shown to hold at $j=k$, it is enough to invoke \eqref{DSbound} once more so as to conclude.

\end{proof}

The moral of this method is that we are applying dominant balance analysis to \eqref{ecriture equation des boucles}.
The bound in Lemma \ref{normbound} could also be achieved by the transport method (or "change of variables")
method in Section 4.3.3 of \cite{Gu19}. This method of repeatedly inverting the master operator to improve bounds was used by Borot and Guionnet (see Lemma 4.2 of \cite{Gu19}).

\begin{definition}
Let $\Psi_{G}$ be a multilinear functional defined on finite sets of $\mc{C}^\infty(\intff{-\epsilon}{1+\epsilon})$ functions as follows.  $\Psi_{G}(\emptyset) = 1$ and for $k \geq 0$
\begin{align*}
\Psi_{G}(\{ f_1, \dots, f_{k+1} \}) &=  \mf{m}_t[f_{k+1}] \cdot \Psi_{G}(\{ f_1, \dots, f_k\})  \, +  \, \frac{1}{\beta} \sul{q=1}{k} \mf{C}_t[f_q,f_{k+1}] \cdot \Psi_{G}(\{ f_1, \dots, f_k\} \setminus \{ f_q\}) \;.
\end{align*}
Above, $\mf{m}_t[f]$ is as introduced in \eqref{esperance Gaussienne} and we have set
\beq
\mf{C}_t[f,g]\, = \, \nu_{\mathrm{sc}}\big(  f^\prime \Xi_{\ga_t}^{-1} [g] \big)\;.
\label{definition covariance}
\enq
\end{definition}
Note that $\mf{C}_t[f,f]=\mf{V}_t[f]$  with $\mf{V}_t$ introduced in \eqref{variance Gaussienne}. Moreover, analogously to Lemma \ref{Lemma positivite variance} one shows that
$\mf{C}_t[f,g]$ is symmetric in $f,g$.  Note that $\Psi_{G}(\{ f_1, \dots, f_k\})$
so defined exactly coincides with the expectation $\mathbb{E}_{G}[X_1\cdots X_k]$ of $k$ Gaussian random variables
such that $\mathbb{E}_{G}[X_a] = \mf{m}_t[f_a]$ and $\e{Cov}[X_a, X_b] \; = \; \mf{C}_t[f_a,f_b]$. Indeed, elementary calculations show that the latter satisfies the induction
\beq
\mathbb{E}_{G}[X_1\cdots X_k]\, = \, \mf{m}_t[f_k] \cdot \mathbb{E}_{G}[X_1\cdots X_{k-1}] \, + \, \sul{q=1}{k-1} \e{Cov}[X_q, X_k] \cdot \mathbb{E}_{G}[X_1\cdots \wh{X}_q \cdots  X_{k-1}]
\enq
with the $\wh{}\,$ indicating that the variable $X_q$ is removed from the product.

\begin{proposition}\label{momentconv}
 Let $f_a \in \mc{C}^{\infty}( I_{\eps} )$, with $a=1,\dots, k$ and set
 $\mc{F}(\bs{s}_k)=\pl{a=1}{k}f_a(s_a)$. Then, the below asymptotic expansion holds
\begin{align*}
  \big<  \mc{F}   \big>_{ \bigotimes\limits^{k}\mc{L}_N^{(\mathbf{x})}  }   \, =  \, \f{1}{N^k} \, \Psi_G(\{ f_1, \dots, f_k \})
\, + \,  \e{O}\Big(  \f{1}{N }  \| \mc{F} \|_{ \mc{W}^{\infty}_{\ell_k}( I_{\eps}^k ) } \Big) \;,
\end{align*}
for some sequence $\ell_{k}$. Above, the control on the remainder is uniform in $t \in \intff{0}{1}$.
\end{proposition}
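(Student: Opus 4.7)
The plan is to proceed by induction on $k$, using the Dyson--Schwinger equations \eqref{ecriture equation des boucles} to express a $(k+1)$-linear statistic in terms of statistics of orders $k-1$, $k$, and $k+2$, and then identifying the leading coefficient at order $N^{-k-1}$ with the recursion defining $\Psi_G$. The base case $k=0$ is immediate, and the base case $k=1$ follows from \eqref{ecriture equation des boucles} at $k=0$ applied to $\mc{F}=f$: the $\mc{D}$-term is $\e{O}(N^{-2})$ by Lemma \ref{normbound}, the boundary term is exponentially small by Lemma \ref{Lemma estimation partie bord dans SD}, and the $\mc{R}$-term, after splitting $L_N^{(\mathbf{x})}=\nu_{\e{sc}}+\mc{L}_N^{(\mathbf{x})}$, produces $\frac{1}{N}\mf{m}_t[f]+\e{O}(N^{-2})$, matching $\Psi_G(\{f\})=\mf{m}_t[f]$.

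For the inductive step, suppose the expansion holds for all statistics of order at most $k$, and apply \eqref{ecriture equation des boucles} to $\mc{F}(\bs{s}_{k+1})=\prod_{a=1}^{k+1} f_a(s_a)$. Since $(\Xi^{(1)})^{-1}$ acts only on the first variable, one has $(\Xi^{(1)})^{-1}[\mc{F}](\bs{s}_{k+1}) = \Xi_{\ga_t}^{-1}[f_1](s_1) \cdot \prod_{a\geq 2} f_a(s_a)$. The $\bigotimes^{k+2}\mc{L}_N^{(\mathbf{x})}$-contribution is $\e{O}(N^{-k-2})$ by Lemma \ref{normbound}, provided the regularity index $\ell$ is taken large enough, which is possible by Lemma \ref{quadraticbound}. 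The boundary term $\mf{B}_{k+1}$ is again exponentially small. The remaining two contributions are analysed after decomposing $L_N^{(\mathbf{x})}=\nu_{\e{sc}}+\mc{L}_N^{(\mathbf{x})}$: the pieces carrying the extra $\mc{L}_N^{(\mathbf{x})}$ are suppressed by a further $N^{-1}$ via Lemma \ref{normbound}, so only the $\nu_{\e{sc}}$-parts contribute at the leading order.

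Specifically, the $\mc{R}$-term yields
$$\tfrac{1}{N}\Big(\tfrac{1}{\beta}-\tfrac{1}{2}\Big)\,\nu_{\e{sc}}\big(\mc{R}_{\ga_t}\circ\Xi_{\ga_t}^{-1}[f_1]\big)\cdot \big\langle \textstyle\prod_{a\geq 2} f_a\big\rangle_{\bigotimes^{k}\mc{L}_N^{(\mathbf{x})}} \,=\, \tfrac{1}{N^{k+1}}\,\mf{m}_t[f_1]\,\Psi_G(\{f_2,\dots,f_{k+1}\})+\e{O}(N^{-k-2})$$
by the inductive hypothesis, while the $a$-th $\chi$-term produces
$$\tfrac{1}{N^{k+1}\beta}\,\mf{C}_t[f_1,f_{a+1}]\,\Psi_G\big(\{f_b : b\neq 1,a+1\}\big)+\e{O}(N^{-k-2}),$$
using $\int f_{a+1}'\,\Xi_{\ga_t}^{-1}[f_1]\,\dd \nu_{\e{sc}} = \mf{C}_t[f_{a+1},f_1]$ together with the symmetry of $\mf{C}_t$. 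Summing over $a$ gives exactly the defining recursion of $\Psi_G$ but with $f_1$ playing the distinguished role; the permutation symmetry of $\Psi_G$ (evident from its Gaussian-expectation interpretation) allows us to identify this with $\Psi_G(\{f_1,\dots,f_{k+1}\})$, closing the induction.

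The main obstacle is the bookkeeping of the regularity thresholds $\ell_k$: each successive application of $\Xi_{\ga_t}^{-1}$, $\mc{D}_{\ga_t}^{(1)}$, or partial differentiation costs one derivative on the test functions (Lemma \ref{quadraticbound}), and Lemma \ref{normbound} requires $\ell$ sufficiently large depending on the multilinear order being bounded. Tracking this through the induction produces an explicit, if unwieldy, sequence $\ell_k$. Uniformity in $t\in\intff{0}{1}$ is automatic, since the estimates in Lemmata \ref{quadraticbound}, \ref{normbound}, and \ref{Lemma estimation partie bord dans SD}, as well as the invertibility bounds on $\Xi_{\ga_t}$ from Proposition \ref{realmasterinverse}, are all uniform in $t$.
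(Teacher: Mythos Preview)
Your proof is correct and follows essentially the same strategy as the paper: induction on $k$ via the Dyson--Schwinger equations \eqref{ecriture equation des boucles}, with the $\bigotimes^{k+2}$-term controlled by Lemma \ref{normbound}, the boundary term by Lemma \ref{Lemma estimation partie bord dans SD}, and the $\mc{R}$- and $\chi$-terms split via $L_N^{(\mathbf{x})}=\nu_{\e{sc}}+\mc{L}_N^{(\mathbf{x})}$ so that the $\nu_{\e{sc}}$-parts reproduce the recursion defining $\Psi_G$. The only cosmetic difference is that the paper places $f_{k+1}$ in the first slot (writing $\mc{F}_{k+1}(\bs{s}_{k+1})=f_{k+1}(s_1)\prod_{a=1}^k f_a(s_{a+1})$) so as to match the $\Psi_G$ recursion directly, whereas you place $f_1$ there and then invoke the permutation symmetry of $\Psi_G$; both are fine.
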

\begin{proof}
The proof goes by induction.  The claim is trivial when $k=0$ since the products are empty, which establishes the base case.
Next, we assume that the expansion holds up to $k$ functions and set
$$\Phi_N(\{ f_1, \dots, f_k \}) \overset{ \mathrm{def}}{=} N^k   \big< \pl{a=1}{k}f_a(s_a)    \big>_{ \bigotimes\limits^{k}\mc{L}_N^{(\mathbf{x})}  } \,.  $$
Then, denoting $\mc{F}_{k+1}(\mf{s}_{k+1})=f_{k+1}(s_1) \pl{a=1}{k}f_a(s_{a+1})$, we get that
the Dyson-Schwinger equations \eqref{ecriture equation des boucles} may be recast as
\bem
\Phi_N(\{ f_1, \dots, f_{k+1} \})  = \f{ N^{k+1} }{ 2 }   \Big< \mc{D}_{\ga_t}^{(1)} \circ \big(\Xi_{\ga_t}^{(1)} \big)^{-1}[\mc{F}_{k+1}] \Big>_{\bigotimes\limits^{k+2}\mc{L}_N^{(\mathbf{x})}  }
\, + \,\mf{m}_t[f_{k+1}] \,  \Phi_N\big( \big\{ f_1, \dots, f_k \big\} \big) \\
  + \frac{1}{N}\left( \frac{1}{\beta}- \frac{1}{2} \right) \Phi_N\big( \big\{ f_1, \dots, f_k,  \mc{R}_{\ga_t}\circ\Xi_{\ga_t}^{-1}[f_{k+1}]  \big\} \big)
  + \frac{1}{\beta} \sul{q=1}{k} \mf{C}_{t}[f_q,f_{k+1}] \, \Phi_N \big( \big\{ f_1, \dots, f_k \big\} \setminus \{ f_q  \} \big)\\
  + \frac{1}{\beta N} \sum_{q=1}^k   \Phi_N\big( \big\{ f_1, \dots, f_k, f_q^\prime \Xi_{\ga_{t} }^{-1}[f_{k+1}] \big\} \setminus \{ f_q \} \big)
\, + \, \mf{B}_{k+1}\Big[  \big(\Xi_{\ga_t}^{(1)} \big)^{-1}[\mc{F}_{k+1}] \Big] \; .
\end{multline}
Thus, upon centring with respect to $\Psi_G$ for $p\in \{k-1,k, k+1\}$
\beq
\Phi_N(\{ f_1, \dots, f_{p} \})  \, = \, \Psi_G(\{ f_1, \dots, f_{p} \})  \, + \, \de \Phi_N(\{ f_1, \dots, f_{p} \}) \;,
\enq
we get
\bem
\de \Phi_N(\{ f_1, \dots, f_{k+1} \})  = \f{ N^{k+1} }{ 2 }   \Big< \mc{D}_{\ga_t}^{(1)} \circ \big(\Xi_{\ga_t}^{(1)} \big)^{-1}[\mc{F}_{k+1}] \Big>_{\bigotimes\limits^{k+2}\mc{L}_N^{(\mathbf{x})}  }
\, + \,\mf{m}_t[f_{k+1}] \, \de \Phi_N\big( \big\{ f_1, \dots, f_k \big\} \big) \\
  + \frac{1}{N}\left( \frac{1}{\beta}- \frac{1}{2} \right) \Phi_N\big( \big\{ f_1, \dots, f_k,  \mc{R}_{\ga_t}\circ\Xi_{\ga_t}^{-1}[f_{k+1}]  \big\} \big)
  + \frac{1}{\beta} \sul{q=1}{k} \mf{C}_{t}[f_q,f_{k+1}] \, \de \Phi_N \big( \big\{ f_1, \dots, f_k \big\} \setminus \{ f_q  \} \big)\\
  + \frac{1}{\beta N} \sum_{q=1}^k   \Phi_N\big( \big\{ f_1, \dots, f_k, f_q^\prime \Xi_{\ga_{t} }^{-1}[f_{k+1}] \big\} \setminus \{ f_q \} \big)
\, + \, \mf{B}_{k+1}\Big[  \big(\Xi_{\ga_t}^{(1)} \big)^{-1}[\mc{F}_{k+1}] \Big] \; .
\end{multline}
Upon using Lemma  \ref{Lemma estimation partie bord dans SD}, the estimate \eqref{ecriture borne sup sur contri derivee NC} along with Lemma \ref{normbound}
one readily infers that for some $c_k$,
\beq
\de \Phi_N(\{ f_1, \dots, f_{k+1} \})  \; = \; \e{O}\Big( \f{1}{N} \norm{ \mc{F}_{k+1} }_{\mc{W}^{\infty}_{c_k}(I_{\eps}^{k+1}) } \Big)
\enq
with a control that is uniform in $t \in \intff{0}{1}$.
\end{proof}

From this it follows that the collection $\left\{ N \mc{L}_N^{(\mathbf{x})}(f_p) \right\}_{p=1}^k$ converges in distribution towards a set of $k$ jointly Gaussian random variables with
means $\mf{m}_t[f_p]$ as introduced in \eqref{esperance Gaussienne}  and covariances $\mathfrak{C}_t[f_p,f_q]$ as introduced in \eqref{definition covariance}.
In particular, Proposition \ref{momentconv}  entails Theorem \ref{CLT}.

\vspace{2mm}

Recall that $\phi_N(x) \overset{\mathrm{def}}{=} N \mc{L}_N^{(\mathbf{x})} (\ex{-2\pi \i x *})$.
Proposition \ref{momentconv} has established that  $\{ \phi_N(x) \}_{x \in \mathbb{R}}$ converges in distribution to a Gaussian family
$\{ \phi(x) \}_{x \in \mathbb{R}}$ where
\beq
\mathbb{E}_{G;t}\big[ \phi(x) \big] \, = \,  \mathfrak{m}_t[\ex{-2\pi \i x *}] \qquad \e{and} \qquad
\mathrm{Cov}_G\big( \phi(x), \overline{\phi(y)} \big) \, = \,  \mathfrak{C}_t[\ex{ - 2\pi \i x * }, \ex{2\pi i y *}] \, .
\nonumber
\enq
\begin{proposition}\label{proplim}
Recall $\mc{G}_{\e{in}}$ \eqref{ecriture expression D in} along with its Fourier representation \eqref{ecriture expression alternative pour Gin en Fourier}.
One has the below convergence of expectations
\begin{align}
\begin{split}\label{gaussianexp}
\mathbb{E}_{N, \ga_t} \big[ \ex{\mc{G}_{\e{in}}} \big]
\underset{N \to +\infty}{\longrightarrow}
 \mathbb{E}_{G;t}  \big[ \ex{\msc{G}_{\e{in}}} \big]
\end{split}
\end{align}
uniformly in $t \in \intff{0}{1}$ and with
\beq
\msc{G}_{\e{in}} \, = \, \frac{i \beta}{2} \Int{\mathbb{R}^2}{} A_t(x,y) \overline{\phi(x)} \phi(y) \, \dd x \, \dd y \, + \,  \i  \Big(1 - \frac{\beta}{2} \Big) \Int{\mathbb{R}}{} \overline{P_t(x)} \phi(x)  \,  \dd x  \;.
\label{definition msc G in}
\enq
\end{proposition}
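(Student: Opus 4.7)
The starting observation is that $\mc{G}_{\e{in}}$ is purely imaginary: the functions $a_t$ and $p_t$ are real-valued by construction, so $\mc{G}_{\e{in}} \in i\mathbb{R}$ and therefore $|e^{\mc{G}_{\e{in}}}| = 1$. Similarly $\msc{G}_{\e{in}} \in i\mathbb{R}$ almost surely. In particular both expectations in \eqref{gaussianexp} are bounded by $1$, and the convergence we seek is the convergence of characteristic functions of the real-valued random variable $Y_N := -i\mc{G}_{\e{in}}$ toward the corresponding Gaussian object $Y := -i\msc{G}_{\e{in}}$.

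The plan is to approximate $Y_N$ in three successive cut-offs and then invoke Proposition \ref{momentconv}. First, since $A_t$ and $P_t$ are Schwartz with bounds uniform in $t \in \intff{0}{1}$ (a consequence of Lemma \ref{curvelowerbound} and \eqref{ecriture borne sur norme W infty du potentiel}), given $\delta>0$ one may pick $M>0$ and replace $A_t,P_t$ by $A_t^{(M)},P_t^{(M)}$ supported in $\intff{-M}{M}^{\bullet}$, with error $\|A_t - A_t^{(M)}\|_{L^1} + \|P_t - P_t^{(M)}\|_{L^1} \leq \delta$ uniformly in $t$. Using Corollary \ref{moments} applied to the tail contributions (together with the representation of $\phi_N$ as $N$ times a Fourier coefficient of $\mc{L}_N^{(\mathbf{x})}$, which makes $\mathbb{E}[|\phi_N(u)|^2]$ of order $\ln N$ with a polynomial prefactor in $u$ that is integrable once $A_t$ is Schwartz), the $L^1$-type approximation transfers into an $\e{O}(\delta)$ bound on $|Y_N - Y_N^{(M)}|$ in expectation, uniformly in $N$ and $t$. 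Since $|e^{iY_N} - e^{iY_N^{(M)}}| \leq |Y_N - Y_N^{(M)}|$, this reduces the problem to compactly supported Fourier data, and the same estimate applied to the Gaussian limit handles the right-hand side.

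Second, on $\intff{-M}{M}^{\bullet}$, approximate the integrals defining $\mc{G}_{\e{in}}$ by Riemann sums on a grid of mesh $h>0$; the smoothness of $A_t^{(M)},P_t^{(M)}$ together with the uniform moment bounds of Corollary \ref{moments} produces an error $\e{O}(h)$ uniformly in $N$ and $t$. After this discretisation, $Y_N^{(M,h)}$ is a polynomial (quadratic plus linear) in finitely many values $\phi_N(x_1),\dots,\phi_N(x_p)$, with coefficients bounded uniformly in $t$. Proposition \ref{momentconv}, applied to the family of monomials $\prod \ex{-2\pi \i x_j s_{a_j}}$, yields joint convergence of $(\phi_N(x_1),\dots,\phi_N(x_p))$ in the sense of moments toward the centred Gaussian vector $(\phi(x_1),\dots,\phi(x_p))$ of covariance $\mf{C}_t$. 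Since the Gaussian limit is characterised by its moments, this convergence in moments upgrades to convergence in distribution. Applying the continuous mapping theorem to the continuous (polynomial) function $(\phi_N(x_j)) \mapsto e^{iY_N^{(M,h)}}$ and using that this composite is bounded of modulus $1$, bounded convergence gives $\mathbb{E}[e^{iY_N^{(M,h)}}] \to \mathbb{E}_{G;t}[e^{iY^{(M,h)}}]$.

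The final step is to let $h \downarrow 0$ and $M \uparrow \infty$, which by the uniform approximation bounds of the first two steps closes the argument. The main obstacle, and what the preceding strategy is designed to bypass, is that $\mc{G}_{\e{in}}$ is a continuous (indeed unbounded) functional of $\phi_N$, so the convergence in moments provided by Proposition \ref{momentconv} cannot be invoked directly; it has to be localised to a finite-dimensional projection, and the price paid in doing so has to be controlled by the Schwartz decay of $A_t,P_t$ and the concentration bounds of Section \ref{concentrationsection}. Uniformity in $t \in \intff{0}{1}$ is ensured at each step since all constants appearing in Lemma \ref{curvelowerbound}, Corollary \ref{moments}, and Proposition \ref{momentconv} are $t$-independent, and since $A_t,P_t,\mf{m}_t,\mf{C}_t$ depend smoothly on $t$ with uniform bounds.
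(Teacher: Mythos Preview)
Your overall strategy is sound but there is a concrete gap in the truncation step. You claim that Corollary~\ref{moments} makes $\mathbb{E}\big[|\phi_N(u)|^2\big]$ of order $\ln N$. Since $\phi_N(u)=N\,\mc{L}_N^{(\mathbf{x})}(\ex{-2\pi\i u\,\cdot})$, Corollary~\ref{moments} with $m=2$ actually gives
\[
\mathbb{E}\big[|\phi_N(u)|^2\big]\;\le\;C\,(1+|u|)^2\,N\ln N,
\]
which grows with $N$. With this bound your tail error $\mathbb{E}\big[|Y_N-Y_N^{(M)}|\big]$ is only controlled by $(\text{Schwartz tail})\times N\ln N$, so it is \emph{not} uniform in $N$ and the triple limit does not close. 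The repair is to use Lemma~\ref{normbound} (the Dyson--Schwinger bootstrap) rather than the raw concentration estimate: writing $\mathbb{E}\big[|\phi_N(u)|^2\big]=N^2\big\langle \ex{-2\pi\i u(s_1-s_2)}\big\rangle_{\otimes^2\mc{L}_N^{(\mathbf{x})}}$ and invoking $S_{\ell,2}\le C_\ell N^{-2}$ yields $\mathbb{E}\big[|\phi_N(u)|^2\big]\le C\,(1+|u|)^{\ell}$ uniformly in $N$ and $t$. With this in hand, your truncation and discretisation errors are genuinely uniform and the rest of your argument goes through.

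By contrast, the paper avoids the three nested approximations altogether by proving directly that $\mathbb{E}_{N,\ga_t}[\mc{G}_{\e{in}}^{\,m}]\to\mathbb{E}_{G;t}[\msc{G}_{\e{in}}^{\,m}]$ for every $m$, with remainder $\e{O}(N^{-1})$ uniform in $t$. The mechanism is: expand the $m$th power, exchange expectation and Fourier integration by Fubini (legitimate since $|\phi_N|\le 2N$ and $A_t,P_t$ are Schwartz), apply Proposition~\ref{momentconv} pointwise to the products $\prod\phi_N(\cdot)\prod\overline{\phi_N(\cdot)}$ with a remainder polynomially bounded in the Fourier variables, and then integrate back against the Schwartz weights. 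Convergence of all moments plus determinacy of the limiting law gives convergence in distribution of $\mc{G}_{\e{in}}$, and boundedness of $\ex{(\cdot)}$ on $\i\R$ then yields the claim. Your route is conceptually more elementary in that it reduces everything to a finite-dimensional CLT, but it requires managing three limits and, crucially, still needs the sharp $S_{\ell,2}=\e{O}(N^{-2})$ bound to make the outer two uniform; the paper's route is more direct once Proposition~\ref{momentconv} (with its quantitative remainder) is in hand.
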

\begin{proof}

We first establish the convergence of the moments of the random variable $\mc{G}_{\e{in}}$.
$\mathbb{E}_{N,\ga_t}[ \mc{G}_{\e{in}}^m]$ may be expanded as
\bem
\mathbb{E}_{N,\ga_t}[ \mc{G}_{\e{in}}^m] \, = \, \sul{k=0}{m} C^k_m \Big( \i \tfrac{\be}{2} \Big)^{k}   \Big( \i \big[1- \tfrac{\be}{2}  \big]\Big)^{m-k}
\Int{ \mathbb{R}^{2k}  }{}  \dd \mathbf{x}\, \dd \mathbf{y} \hspace{-2mm} \Int{\R^{m-k} }{} \hspace{-2mm} \dd \mathbf{s} \; \pl{ i=1 }{ m } A_t(x_i,y_i) \pl{j=1}{k-m} \overline{P(s_j)}  \\
 \times  \mathbb{E}_{N,\ga_t}\bigg[  \pl{i=1}{m} \Big\{ \ov{\phi}_N(x_i) \phi_N(y_i) \Big\} \pl{j=1}{k-m} \phi_N(s_j) \bigg] \;.
\nonumber
\end{multline}
Above, we could invoke Fubini's theorem to exchange the orders of integration  at finite $N$ since  $|\phi_N(x)| \leq 2N$, and $A_t$ and $P_t$ are in the Schwartz class.
Then by Proposition \ref{momentconv} we have
\bem
\mathbb{E}_{N,\ga_t}\bigg[  \pl{i=1}{m} \Big\{ \ov{\phi}_N(x_i) \phi_N(y_i) \Big\} \pl{j=1}{k-m} \phi_N(s_j) \bigg]  \\
= \mathbb{E}_{G;t}\bigg[  \pl{i=1}{m} \Big\{ \ov{\phi}(x_i) \phi(y_i) \Big\} \pl{j=1}{k-m} \phi(s_j) \bigg]
+ \e{O}\bigg(  \f{1}{N}  \norm{  \mc{F}_{m,k} }_{ \mc{W}^{\infty}_{\ell_{m+k}}(I_{\eps}^{m+k}) }  \bigg)\,,
\nonumber
\end{multline}
in which
\beq
\mc{F}_{m,k}\big(\mathbf{x}_k, \mathbf{y}_k, \mathbf{s}_{m-k} \mid  \bs{\la}_k, \bs{\mu}_k, \bs{\nu}_{m-k} \big)  \, = \,
\pl{a=1}{k} \ex{ 2\i\pi (x_a \la_a - y_a \mu_a)} \pl{a=1}{m-k} \ex{-2\i\pi s_a \nu_a} \;.
\enq
Hence,
\beq
  \norm{  \mc{F}_{m,k} }_{ \mc{W}^{\infty}_{\ell}(I_{\eps}^{m+k}) }  \, \leq \,
\pl{a=1}{k}\Big\{   \big( 1+|x_a| \big)^{\ell} \big( 1+|y_a| \big)^{\ell} \Big\} \cdot
                    \pl{a=1}{m-k}  \big( 1+|s_a| \big)^{\ell} \, .
\enq

Now, observe that Proposition \ref{realmasterinverse} along with Lemma \ref{curvelowerbound} leads to the estimates
\beq\label{growth}
|\mathfrak{m}_t[f]| \, \leq \,  C \| f \|_{  \mc{W}^{\infty}_{2}(I_{\eps})} \qquad \e{and} \qquad
|\mathfrak{C}_t[f,g]| \, \leq  \, C  \| f \|_{  \mc{W}^{\infty}_{1}(I_{\eps})} \| g \|_{  \mc{W}^{\infty}_{1}(I_{\eps})} \, ,
\enq
with constants that are uniform in $t \in \intff{0}{1}$. These bounds along with $A_t, P_t$ being in the Schwartz class uniformly in $t\in \intff{0}{1}$ allow one
to apply Fubini backwards so as to infer that
\beq
\mathbb{E}_{N,\ga_t}[ \mc{G}_{\e{in}}^m] \, = \, \mathbb{E}_{G;t}[ \msc{G}_{\e{in}}^m]  \;+ \; \e{O}\big( \f{1}{N} \big) \,.
\enq
This yields convergence of the moments uniformly in $t \in \intff{0}{1}$.
Now, given that the random variable $ \mc{G}_{\e{in}}$ converges in moments and that
moments uniquely characterise $\mathbb{E}_{G;t}$, it follows that $\mc{G}_{\e{in}}$ converges in distribution
to $\msc{G}_{\e{in}}$. Since
$\mathbb{E}_{G;t}[ \ex{ \msc{G}_{\e{in}} } ] $ is continuous in $t \in \intff{0}{1}$
and the convergence of moments is uniform in  $t \in \intff{0}{1}$, the
convergence of $\mathbb{E}_{N, \ga_t} \big[ \ex{\mc{G}_{\e{in}}} \big]$ towards $\mathbb{E}_{G;t}[ \ex{ \msc{G}_{\e{in}} } ]$ is also uniform in
$t \in \intff{0}{1}$.

\end{proof}
We next give a more explicit formula for the limit in Proposition \ref{proplim}.
\begin{definition} Let us introduce the spaces
\begin{align}
H &= \left\{ f \text{ measurable} \, : \, \int_\mathbb{R} |f(x)|^2 (1+x^6) \, dx < +\infty \right\} \\
H^\prime &= \left\{ f \text{ measurable} \, : \, \int_\mathbb{R} |f(x)|^2 (1+x^6)^{-1} \, dx < +\infty \right\} \;.
\end{align}
It is clear that these are both Hilbert spaces since they are isometrically isomorphic to $L^2(\mathbb{R})$, and $H^\prime$ may be identified with the dual of $H$ and vice versa.
Furthermore under a Fourier transform they may be identified with Sobolev spaces.  Define the natural pairing
\begin{align*}
&( \cdot, \cdot) : H \times H^\prime \rightarrow \Cx \\
&( f, g) = \Int{\mathbb{R}}{} \overline{f(x)}g(x) \, \dd x \;.
\end{align*}
\end{definition}
Let $B_t : \mathbb{R}^2 \longrightarrow \mathbb{C}$ be defined by $B(x,y) = \mathfrak{C}_t[\ex{-2\pi i x *}, \ex{2\pi \i y *}]$ and $\mathsf{m}_t : \mathbb{R} \longrightarrow \mathbb{C}$
be defined by $\mathsf{m}_t(x) = \mathfrak{m}_t[\ex{-2\pi \i x *}]$. The uniform in $t \in \intff{0}{1}$ upper bounds \eqref{growth} ensure that
$\mathsf{m}_t \in H^\prime$ and $B_t \in H^\prime \otimes H^\prime$ so that $B_t$ may be regarded as the integral kernel of a bounded operator
\begin{align}
\mathsf{B}_t : H \longrightarrow H^\prime
\end{align}
Next, since $A_t$ is rapidly decreasing $A_t \in H \otimes H$ and hence may be regarded as the integral kernel of a bounded operator
\begin{align}
\mathsf{A}_t: H^\prime \longrightarrow H
\end{align}
Thus the product $\mathsf{B}_t \, \mathsf{A}_t : H^\prime \longrightarrow H^\prime$ is bounded.
\begin{proposition}
$\mathsf{B}_t \,  \mathsf{A}_t$ is trace class and hence the Fredholm determinant $\det_{H^\prime}(\mathsf{1}- \i \beta \mathsf{B}_t\, \mathsf{A}_t)$ exists.
Furthermore $|\det_{H^\prime}(\mathsf{1}- \i \beta \mathsf{B}_t\, \mathsf{A}_t)|\geq 1$.
\end{proposition}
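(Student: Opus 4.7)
The proof splits naturally into two tasks: showing $\mathsf{B}_t\mathsf{A}_t$ is trace class, and establishing the lower bound $|\det_{H^\prime}(\mathsf{1}-\i\beta\mathsf{B}_t\mathsf{A}_t)|\geq 1$. For the first, I would show that both $\mathsf{A}_t$ and $\mathsf{B}_t$ are Hilbert--Schmidt, whence their product is automatically trace class. Since $a_t$ is smooth and compactly supported, its Fourier transform $A_t$ is Schwartz on $\mathbb{R}^2$, so $A_t\in H\otimes H$ and $\mathsf{A}_t: H^\prime\to H$ is Hilbert--Schmidt. For $\mathsf{B}_t$, the estimates \eqref{growth} yield $|B_t(x,y)|\leq C(1+|x|)(1+|y|)$, which is square integrable against the weight $(1+x^6)^{-1}(1+y^6)^{-1}$, so $B_t\in H^\prime\otimes H^\prime$ and $\mathsf{B}_t: H\to H^\prime$ is Hilbert--Schmidt.

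For the determinant bound, the strategy is to show that all eigenvalues of $\mathsf{B}_t\mathsf{A}_t$ are real. Once that is known, writing them as $\{\lambda_j\}_{j\geq 1}$ one gets
\[
|\det\nolimits_{H^\prime}(\mathsf{1}-\i\beta\mathsf{B}_t\mathsf{A}_t)| \;=\; \prod_j|1-\i\beta\lambda_j| \;=\; \prod_j\sqrt{1+\beta^2\lambda_j^2}\;\geq\; 1,
\]
the product being convergent because $\mathsf{B}_t\mathsf{A}_t$ is trace class. To prove reality of the spectrum, I would identify $H$ and $H^\prime$ with $L^2(\mathbb{R})$ via the unitary isomorphisms $f\mapsto(1+x^6)^{1/2}f$ and $f\mapsto(1+x^6)^{-1/2}f$; under these, $\mathsf{A}_t$ and $\mathsf{B}_t$ transform into operators $\widetilde{\mathsf{A}}_t,\widetilde{\mathsf{B}}_t$ on $L^2$, and the composition $\widetilde{\mathsf{B}}_t\widetilde{\mathsf{A}}_t$ is conjugate to $\mathsf{B}_t\mathsf{A}_t$ by a unitary, hence shares its spectrum. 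The symmetries $A_t(x,y)=\overline{A_t(y,x)}$ (from $a_t$ being real-valued and symmetric in its arguments) and $B_t(x,y)=\overline{B_t(y,x)}$ (from the symmetry of $\mathfrak{C}_t$ in \eqref{definition covariance} and its reality on real-valued arguments) then transfer to self-adjointness of $\widetilde{\mathsf{A}}_t$ and $\widetilde{\mathsf{B}}_t$ on $L^2$.

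Positivity of $\widetilde{\mathsf{B}}_t$ is the key input, and is where Lemma \ref{Lemma positivite variance} enters. Extending $\mathfrak{C}_t$ $\mathbb{C}$-bilinearly and writing $f=u+\i v$ with $u,v$ real, the mixed terms in $\mathfrak{C}_t[f,\overline{f}]$ cancel by symmetry, leaving $\mathfrak{C}_t[u,u]+\mathfrak{C}_t[v,v]\geq 0$. By Parseval, this translates to positivity of the Hermitian form $h\mapsto\int\!\!\int \overline{h(x)}B_t(x,y)h(y)\,\mathrm{d}x\,\mathrm{d}y$ on $L^2$, so that $\widetilde{\mathsf{B}}_t\geq 0$. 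Then $\widetilde{\mathsf{B}}_t$ admits a positive square root $\widetilde{\mathsf{B}}_t^{1/2}$, and the operator $\widetilde{\mathsf{B}}_t^{1/2}\widetilde{\mathsf{A}}_t\widetilde{\mathsf{B}}_t^{1/2}$ is a trace-class self-adjoint operator on $L^2$ with real eigenvalues. Since its non-zero spectrum coincides with that of $\widetilde{\mathsf{B}}_t\widetilde{\mathsf{A}}_t$, and thus of $\mathsf{B}_t\mathsf{A}_t$, the desired reality of the spectrum follows.

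The main technical obstacle should be the careful translation of the bilinear form $\mathfrak{C}_t$, defined via \eqref{definition covariance} on $\mathcal{W}^\infty_1(I_\epsilon)$-functions in $\ker[\mathcal{K}_{\gamma_t}]$, into a Hermitian form on the Fourier-transformed $L^2$-space: one must verify that the positivity argument extends to Fourier transforms of general $L^2$-functions (not only Schwartz-like ones), that the two sides of Parseval are meaningful within the relevant regularity classes, and that the constant shift by $\mathcal{K}_{\gamma_t}[f]$ required by Lemma \ref{Lemma positivite variance} does not affect the quadratic form (which it should not, as the pairings involved are against measures of vanishing mass).
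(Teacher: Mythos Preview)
Your proposal is correct and follows essentially the same route as the paper: conjugate $\mathsf{A}_t,\mathsf{B}_t$ to self-adjoint Hilbert--Schmidt operators on $L^2(\mathbb{R})$ via the weights $(1+x^6)^{\pm 1/2}$, use positivity of $\mathfrak{C}_t[f,\overline{f}]$ to take a square root of the $\mathsf{B}$-operator, and conclude from the reality of the spectrum of $\sqrt{\breve{\mathsf{B}}_t}\,\breve{\mathsf{A}}_t\,\sqrt{\breve{\mathsf{B}}_t}$. The only cosmetic difference is that the paper invokes Sylvester's identity to equate the two Fredholm determinants directly, whereas you phrase it in terms of coincidence of nonzero spectra; both are equivalent.
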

\begin{proof}
Let us identify $H$ and $H^\prime$ with $L^2(\mathbb{R})$ by the following (unitary) operators
\beq
\left\{ \ba{c }
\mathsf{S} :  H   \longrightarrow   L^2(\mathbb{R}) \vspace{2mm} \\
    \mathsf{S}[f](x) = \sqrt{1+x^6} f(x) \ea  \right.
\qquad \e{and} \qquad
\left\{ \ba{c } \mathsf{S}^\prime : H^\prime    \longrightarrow   L^2(\mathbb{R}) \vspace{2mm} \\
    \mathsf{S}^\prime [f](x) = \frac{ f(x)  }{\sqrt{1+x^6}} \ea \right.  \;.
\enq
Then let
\beq
\ba{cccc} \breve{\mathsf{A}}_t := & \mathsf{S} \mathsf{A}_t (\mathsf{S}^\prime)^{-1} &: & L^2(\mathbb{R}) \longrightarrow  L^2(\mathbb{R}) \vspace{2mm}  \\
\breve{\mathsf{B}}_t :=  & \mathsf{S}^\prime \mathsf{B}_t \mathsf{S}^{-1} &:& L^2(\mathbb{R}) \longrightarrow  L^2(\mathbb{R}) \ea \; .
\enq
$\breve{\mathsf{A}}_t$ has integral kernel $\sqrt{1+x^6}\sqrt{1+y^6}A_t(x,y)$ and $\breve{\mathsf{B}}_t$ has integral kernel $\frac{1}{\sqrt{1+x^6}\sqrt{1+y^6}}B_t(x,y)$.
Both of these are Hilbert-Schmidt operators uniformly in $t$ and hence the product $\breve{\mathsf{B}}_t \, \breve{\mathsf{A}}_t$ is trace class uniformly in $t$.
However $\breve{\mathsf{B}}_t \, \breve{\mathsf{A}}_t = \mathsf{S}^\prime \mathsf{B}_t \, \mathsf{A}_t (\mathsf{S}^\prime)^{-1}$, hence $\mathsf{B}_t \, \mathsf{A}_t$ is also trace class.

Next, observe that both $\breve{\mathsf{A}}_t$ and $\breve{\mathsf{B}}_t$ are self-adjoint operators on $L^2(\mathbb{R})$. Furthermore $\breve{\mathsf{B}}_t$ is positive-semidefinite since
\beq
\left\langle f, \breve{\mathsf{B}}_t f \right\rangle_{L^2(\mathbb{R})} =
\mathfrak{C}_t\bigg[ \widehat{   \tfrac{ f(*) }{ \sqrt{1+(*)^6} }    } ,  \overline{ \widehat{   \tfrac{ f(*) }{ \sqrt{1+(*)^6} }    } } \bigg]  \, \geq \,  0 \;.
\nonumber
\enq
Here, we have extended $\mathfrak{C}_t[f,g]$ to complex valued functions by bi-linearity, which indeed entails from the positivity of the
variance $\mf{V}_t[f]\geq 0$ that $\mathfrak{C}_t[f,\ov{f}]\geq 0$.

Hence we may define a square root $\sqrt{\breve{\mathsf{B}}}_t$, which is also a self-adjoint operator.  Then by Sylvester's identity
$$\det_{H^\prime}(\mathsf{1}- \i \beta \mathsf{B}_t\mathsf{A}_t)  \, = \, \det_{L^2(\mathbb{R})}\left(\mathsf{1}- \i \beta \sqrt{\breve{\mathsf{B}}}_t \breve{\mathsf{A}}_t \sqrt{\breve{\mathsf{B}}}_t \right) \; .$$
$\sqrt{\breve{\mathsf{B}}}_t\breve{\mathsf{A}}_t\sqrt{\breve{\mathsf{B}}}_t$ is self-adjoint and so has real eigenvalues $\lambda_1, \lambda_2, \dots \in \mathbb{R}$. Then,
$$\left| \det_{H^\prime}(\mathsf{1} - \i \beta \mathsf{B}_t\mathsf{A}_t) \right|^2 = \prod_{j\geq 1} \left|1- \i \beta \lambda_j \right|^2 \geq 1 $$
\end{proof}

The right hand side of \eqref{gaussianexp} has a functional-analytic expression in terms of the operators $\mathsf{A}_t$ and $\mathsf{B}_t$.
Recalling the definition  \eqref{definition msc G in} of   $\msc{G}_{\e{in}}$, we have
\begin{align}\label{expformula}
\mathbb{E}_{G;t}\big[ \ex{ \msc{G}_{\e{in}} } \big]   \, =  \, \frac{1}{\sqrt{\det_{H^\prime}(\mathsf{1}- \i \beta \mathsf{B}_t\mathsf{A}_t)}}
\ex{ \msc{B}_{t}[\mathsf{m}_t, P_t]  } \;,
\end{align}
where
\bem
 \msc{B}_{t}[\mathsf{m}_t, P_t] \, = \, \frac{1}{2} \i \beta \big(\mathsf{m}_t, \mathsf{A}_t \big( \mathsf{1} - \i \beta \mathsf{B}_t\mathsf{A}_t \big)^{-1} \mathsf{m}_t \big)_{ L^{2}(\R) }
 \, + \,  \i \beta \big(P_t, (\mathsf{1}- \i \beta \mathsf{B}_t \mathsf{A}_t )^{-1}\mathsf{m}_t\big)_{L^2(\R)} \\
\, + \,  \frac{1}{2}\beta^2  \big( P_t, (\mathsf{1} - \i \beta \mathsf{B}_t\mathsf{A}_t)^{-1} \mathsf{B}_tP_t\big)_{L^2(\R)} \;.
\end{multline}
This formula is proven in Appendix \ref{fredholmsection}.
\begin{proof}[Proof of Proposition \ref{lowerbound}]

We need to establish a uniform lower bound for \eqref{expformula}. First of all, observe that \eqref{growth}
ensures that  $|B_t(x,y)| \leq C (1+|x|)(1+|y|)$ for some $C > 0$ not depending on $t \in \intff{0}{1}$. Hence,
\beq
\| \breve{\mathsf{B}}_t \|_{\mathrm{HS}}^2 \, = \,  \Int{\mathbb{R}^2}{} \frac{ |B_t(x,y)|^2 }{ (1+x^6) (1+y^6) } \, \dd x \, \dd y \, < \,  C^{\prime}
\nonumber
\enq
uniformly in $t \in \intff{0}{1}$. Further, one has
\beq
\| \breve{\mathsf{A}}_t \|_{\mathrm{HS}}^2 \, = \,  \Int{\mathbb{R}^2}{}  |A_t(x,y)|^2  \underbrace{(1+x^6)}_{ \leq (1+|x|^3)^2 } \underbrace{(1+y^6)}_{ \leq (1+|y|^3)^2 }  \, \dd x \, \dd y
\, \leq \, C \norm{ a_t }^2_{ \mc{W}^{\infty}_{6} ( \intff{-\eps^{\prime} }{ 1+\eps^{\prime} }^2) }
\enq
and this quantity is uniformly bounded in $t$ since $(t,x) \mapsto a_t(x)$ is jointly $\mathcal{C}^\infty$.

Hence, it follows that
\begin{align*}
\Big| \det_{H^\prime}\big( \mathsf{1}- \i \beta \mathsf{B}_t\mathsf{A}_t \big)\Big| \, \leq  \, \ex{ \beta \| \breve{\mathsf{A}}_t \|_{\mathrm{HS}}  \| \breve{\mathsf{B}}_t \|_{\mathrm{HS}} }
\end{align*}
is also bounded uniformly in $t \in \intff{0}{1}$.

It remains to bound $ \msc{B}_{t}[\mathsf{m}_t, P_t]$. One infers from \eqref{growth} and the smoothness and uniform boundedness of $a_t, p_t$ that
\beq
\left\{ \ba{ccc} \big| \mathsf{m}_t(x) \big| & \leq & C  ( 1+x^2) \vspace{2mm}  \\ \big| P_t(x) \big| & \leq & \f{ C }{ 1+x^4  } \ea \right.
\quad \e{and}   \quad
\left\{ \ba{ccc} \big| B_t(x,y) \big|& \leq & C  ( 1+|x|)( 1+|y|)  \vspace{2mm} \\ \big| A_t(x,y) \big| & \leq & \f{ C }{ (1 + x^4) (1 + y^4)  } \ea \right.  \; .
\enq
We shall bound $ \msc{B}_{t}[\mathsf{m}_t, P_t]$ by relying on the Fredholm series representation for $(\mathsf{1}- \i \beta \mathsf{B}_t\mathsf{A}_t)^{-1} = \mathsf{1}- \mathsf{R}_t$ where $\mathsf{R}_t$ has kernel
\begin{align*}
R_t(x,y) =\frac{1}{\det\left(\mathsf{1}- \i \beta \mathsf{B}_t\mathsf{A}_t\right)_{H^\prime}} \sul{n \geq 0}{} \frac{(-\i\beta)^{n+1}}{n!} \Int{\mathbb{R}^n}{}
\det\left( \begin{matrix}
\rho_t(x,y) & \rho_t(x,u_k) \\
\rho_t(u_j,y) & \rho_t(u_j, u_k)
\end{matrix} \right)_{j,k =1, \dots, n} \, \dd\mathbf{u}
\end{align*}
where $\rho_t$ is the kernel of $\mathsf{B}_t\mathsf{A}_t$.
The estimates on $A_t$ and $B_t$ entail that
$$|\rho_t(x,y) | \leq C \f{ (1+|x|) }{ (1+y^4) }\, .$$
Then, Hadamard's inequality allow one to infer that there is  $C^{\prime} > 0$ such that
\begin{align*}
|R_t(x,y)| \, \leq \,  C^\prime \f{ (1+|x|) }{ (1+y^4) } \;.
\end{align*}
Putting together all of these uniform in $t\in \intff{0}{1}$ estimates, one obtains the claim.

\end{proof}

\section{Expansion of the complex partition function}\label{expansionsection}

In this section we prove Theorem \ref{maintheorem}.  Let us recall the contour $\Gamma_{\e{eq}}$ whose existence is proven in Proposition \ref{contourexistence}
and the central piece $\gamma : [-\epsilon,1+\epsilon] \longrightarrow \mathbb{C}$ (defined by (\ref{gammadef})) whose image contains the support of the equilibrium measure.
Let $\Sigma = \gamma([-\epsilon,1+\epsilon])$. Then, it holds that
\beqa
\mc{Z}_{ N,\Ga_{\e{eq}} }[V] & = & \mc{Z}_{N,\Sg}[V] \, + \, \Int{\Ga^N}{}\dd \mathbf{z}   \pl{a<b}{N}(z_a-z_b)^{\be} \pl{a=1}{N} \Big\{  \ex{-N \be V(z_a) } \Big\} \cdot \mathbbm{1}_{( \Sg^N)^{\e{c}} }(\mathbf{z})
\nonumber \\
\mc{Z}_{ N,\Ga_{\e{eq}} }[V] & = & \mc{Z}_{N,\Sg}[V] \, + \, \e{O}\bigg( \op{Z}_{N,\ga}[V] \mathbb{P}_{N,\ga}\Big[ \big\{\exists a \, : \, x_a \in \intff{-\eps}{1+\eps}^{\e{c}} \big\} \Big]  \bigg) \;.
\nonumber
\eeqa
Thus, by Propositions  \ref{MaxEigLDP} and \ref{lowerbound}, there is a $C > 0$ such that
$$ \mc{Z}_{N,\Ga_{\e{eq}} }[V] \, = \, \mc{Z}_{N,\Sg}[V] \Big( 1 +  \e{O}\big( \ex{-CN} \big) \Big) \; . $$
Now, recall  $V_t$ defined by \eqref{interpolatingpotential}-\eqref{interpolatingpotential2}, and the curves $\Sigma_t$ and $\Gamma_0$ that are introduced in Definition \ref{contours}.
Owing to Propositions \ref{lowerbound}, $t \mapsto \mc{Z}_{N,\Sg_t}[V_t]$ does not vanish on $\intff{0}{1}$ provided that $N$ is large enough. It is direct to
see that it is also smooth in $t$. From these observations one may infer that
$$ \ln \mc{Z}_{ N,\Ga_{\e{eq}} }[V] =  \ln \mathcal{Z}_{N,\Gamma_0}[V_0] + \Int{0}{1}  \Dp{t} \ln \mathcal{Z}_{N,\Sigma_t}[V_t] \, \dd t
\, + \,  \e{O} \big( \ex{-CN } \big) \;. $$
Note that here, we have also invoked the large deviation principle for the maximal particle in the real model subordinate to the quadratic potential $V_0$
which yields $\mc{Z}_{N,\Ga_{0} }[V_0] \, = \, \mc{Z}_{N,\Sg_0}[V_0] \Big( 1 +  \e{O}\big( \ex{-CN} \big) \Big)$.

As a matter of fact the $t$-derivative under the integral may be expressed in an alternative way which permits large-$N$ estimation.
Indeed, let
\beq
\big<\!\! \big<  \mc{F} \big>\!\! \big>_{ \bigotimes_{a=1}^{k}\nu_a }^{(t)} \, = \, \mc{E}_{N,\Sg_t}\bigg[ \,  \Int{ \R^N}{} \pl{a=1}{k} \dd \nu_a(s_a)\, \mc{F}(\bs{s}_k) \bigg]
\enq
%
%
%L_N^{(\mathbf{z})}
in which $ \mc{E}_{N,\Sg_t}$ corresponds to the expectation with respect to the complex model's measure $ \Pi_{N, \Sg_t}$ introduced in \eqref{complexmodel}.
Then, since the contour integral depends on the contour only by its end points, one gets
\beq
\Dp{t} \ln \mathcal{Z}_{N,\Sigma_t}[V_t]   \,  =  \, - \beta N^2 \big<\!\! \big<   \Dp{t} V_t  \big>\!\! \big>_{ L_N^{(\mathbf{z})} }^{(t)} \, +  \hspace{-2mm}
\sul{ \a \in \{-\eps, 1+\eps\} }{}\hspace{-2mm} \e{sgn}(\a) \, \mf{r}_N^t\big( \ga_{t}(\a) \big)
\enq
in which
\beq
 \mf{r}_N^t\big( z \big)   \; = \; \frac{ \ex{-N \beta  V_t(z)} }{ \mathcal{Z}_{N,\Sigma_t}[V_t] } \Int{\Sg_t^{N-1}}{} \pl{ a < b }{ N-1} (z_a - z_b)^\beta  \cdot
 \pl{a=1}{N} \Big\{ \ex{-N \beta  V(z_a)} (z - z_a)^\beta  \Big\} \, \dd\mathbf{z} \;.
\enq
Direct bounds lead to
\beq
\big|  \mf{r}_N^t\big( \ga_t(x) \big)  \big| \, \leq \, \bigg|  \f{ \op{Z}_{N,\ga_t^{\eps}}[V_t]  }{   \mathcal{Z}_{N,\Sigma_t}[V_t]  } \bigg| \cdot
\f{ \varrho_N^t(x)  }{ |\ga_t^{\prime}(x) | } \;.
\enq
where $\rho^t_N$ is the 1-point density introduced in Definition  \ref{1pointdensity}.
It then follows from Lemma \ref{curvelowerbound} combined with Propositions \ref{edgebound} and \ref{lowerbound} that there exists $C>0$ such that, uniformly in $t \in \intff{0}{1}$, one has
\beq
\Dp{t} \ln \mathcal{Z}_{N,\Sigma_t}[V_t]  \,  =  \, - \beta N^2 \big<\!\! \big<   \Dp{t} V_t  \big>\!\! \big>_{ L_N^{(\mathbf{z})} }^{(t)} \, +  \e{O}\big( \ex{-CN} \big) \;.
\enq
The Dyson-Schwinger equations for the complex model allow one to access to the large-$N$ asymptotic expansion of
the linear statistics, in particular of $\big<\!\! \big<   \Dp{t} V_t  \big>\!\! \big>_{ L_N^{(\mathbf{z})} }^{(t)}$.

We thus start by introducing the master operator relevant to the problem. In the following, we shall denote by $\mc{O}(\msc{N})$ the space of holomorphic functions
on some open neighbourhood $\msc{N}$ of $\Sg_t$.
\begin{definition}[ Complex master operator]
Let $ \mu_{\Sg_t} \in \mathcal{P}(\Sigma_t)$ be the equilibrium measure defined by \eqref{tequilmeas} and $V_t$ be the interpolating potential \eqref{interpolatingpotential},\eqref{interpolatingpotential2}.
Then the complex master operator $\De_{\Sg_t}: \mc{O}(\mc{N}) \rightarrow \mc{O}(\mc{N})$ is defined as
\beq
\De_{\Sg_t}[f] (z) \,=\,   V^\prime_t(z) f(z)  \,  - \,   \Int{\Sigma_t}{}  \mc{D}[f](z,w)  \dd \mu_{\Sg_t} (w)  \qquad \e{with} \quad  z \in \Sigma_t
\enq
and where
\beq
\mc{D}[f](z,w) \, = \, \f{ f(z) - f(w) }{ z-w }
\label{definition derivee non commutative}
\enq
corresponds to the non-commutative derivative.
\end{definition}

As shown in Proposition \ref{Proposition inversion complex master operator}, there exists a continuous functional $\mc{K}_{\Sg_t}$ on $\mc{O}(\mc{N})$
with $\mc{K}_{\Sg_t}[1]=1$ such that $\e{Im}\big[ \De_{\Sg_t} \big] = \e{ker}[\mc{K}_{\Sg_t} ]$, and $\De_{\Sg_t}: \mc{O}(\mc{N}) \rightarrow  \e{ker}[\mc{K}_{\Sg_t} ]$
is invertible. In a manner similar to the real master operator, it is convenient to extend the inverse to $\mc{O}(\mc{N}) $ as
\beq
\wt{\De}_{\Sg_t}^{-1}[f] \, = \,  \De_{\Sg_t}^{-1}\big[ f - \mc{K}_{\Sg_t}[f] \big]   \;.
\enq
Since the explicit expression for $ \De_{\Sg_t}^{-1}$ only involves differences of its argument, \textit{c.f.} Proposition \ref{Proposition inversion complex master operator},
it is invariant under translations. We will thus keep denoting the extension by $\De_{\Sg_t}^{-1}$.

Recall that the contours $\gamma_t : \intff{-\epsilon}{1+\epsilon} \rightarrow \mathbb{C}$ can be analytically extended to a neighbourhood $\msc{N}$ of $\intff{-\epsilon}{1+\epsilon}$. Moreover,
this neighbourhood can be taken such that $\mc{N}_t  \, =  \, \ga_t(\msc{N})$ contains some tubular neighbourhood of fixed, $t$-independent, size of $\ga_t( \intff{-\epsilon}{1+\epsilon})$.
Let $\om_0, \om_1, \om_2, \dots$ be a sequence of closed contours, oriented anticlockwise,
enclosing $\intff{-\epsilon}{1+\epsilon}$ in this common neighbourhood such that  $\om_{k}$ lies in the open set between  $\om_{k+1}$ and $\intff{-\epsilon}{1+\epsilon}$.
Next let
\beq
\Omega_k^t \overset{\mathrm{def}}{=} \gamma_t(\om_k) \qquad \e{and} \qquad   \mc{N}_t  \, =  \, \ga_t(\msc{N}) \, .
\label{definition voisinage mathcal N et courbes Omega k t}
\enq

We remark that by compactness and continuity $\mathrm{dist}(\Om_k^t ,\Om_{k+1}^t ) \geq \delta_k > 0$ where $\delta_k$ is independent of $t \in \intff{0}{1}$.  Let
\begin{align*}
\| f \|_{ L^{\infty}(\Omega_k^t) } \overset{\mathrm{def}}{=} \sup_{z \in \Omega_k^t} |f(z)|
\end{align*}
Then by Proposition  \ref{Proposition inversion complex master operator} there exists a constant $C_k > 0$ independent of $t \in \intff{0}{1}$ such that
$$ \| (\De_{\Sg_t})^{-1}[ f ]\|_{ L^{\infty}(\Omega_k^t) } \leq C_k \|  f \|_{ L^{\infty}(\Omega_{k+1}^t) } \, . $$
Of course, by the Cauchy integral formula, we also have $\| f^\prime \|_{ L^{\infty}(\Omega_k^t) } \leq \widetilde{C_k} \|  f \|_{ L^{\infty}(\Omega_{k+1}^t) }$ for some $\widetilde{C_k} > 0$.

Then, the loop equations are expressed in terms of the centred empirical measure
$$ \mc{L}_N^{(\mathbf{z})} \, = \,  L_N^{(\mathbf{z})} - \mu_{\Sg_t}$$
and take the form
\bem
 \big<\!\! \big<    \mc{F}  \big>\!\! \big>_{ \bigotimes\limits^{k+1}  \mc{L}_N^{(\mathbf{z})} }^{(t)}
\, = \, \f{1}{2} \Big<\!\! \Big<  \mc{D}^{(1)} \circ \big(\De_{\Sg_t}^{(1)} \big)^{-1}[\mc{F}] \Big>\!\! \Big>^{(t)}_{\bigotimes\limits^{k+2}\mc{L}_N^{(\mathbf{z})}  }
\, + \, \f{1}{N} \Big( \f{1}{\be}- \f{1}{2} \Big)\Big<\!\! \Big< \Dp{1} \circ \big(\De_{\Sg_t}^{(1)} \big)^{-1}[\mc{F}] \Big>\!\! \Big>^{(t)}_{L_N^{(\mathbf{z})} \bigotimes\limits^{k}  \mc{L}_N^{(\mathbf{z})}  } \\
\, + \, \f{1}{N^2 \be} \sul{a=1}{k}  \Big<\!\! \Big< \chi^{(a+1)}\big[\Dp{a+1}  \big(\De_{\Sg_t}^{(1)} \big)^{-1}[\mc{F}] \big] \Big>\!\! \Big>^{(t)}_{L_N^{(\mathbf{z})} \bigotimes\limits^{k-1}  \mc{L}_N^{(\mathbf{z})}  }
\, + \, \mathbb{B}_{k+1}\Big[  \big(\De_{\Sg_t}^{(1)} \big)^{-1}[\mc{F}] \Big] \;.
\end{multline}
Upon setting $\ov{\mc{E}}_{N, \Sg_t }= \mc{Z}_{N,\Sg_t}[V_t] \cdot  \mc{E}_{N, \Sg_t }$, the boundary term reads
\beq
 \mathbb{B}_{k+1}[ \mc{F} ]  \, = \, \f{-1}{N^2 \be} \sul{ \a \in \Dp{} \Sg_{t} }{} \f{ 1 }{ \mc{Z}_{N,\Sg_t}[V_t]  }
\Dp{\a}  \ov{\mc{E}}_{N, \Sg_t } \Big[ \Int{}{} \mc{F}(u, \bs{s}_k)  \pl{a=1}{k} \dd \mc{L}_N^{(\mathbf{z})}  (s_a) \Big]_{\mid u = \a} \;.
\enq
As in the case of the real model, one may invoke  Proposition \ref{edgebound} to bound  the boundary term as
\beq
 \Big|   \mathbb{B}_{k+1}[ \mc{F} ]   \Big| \, \leq \, C \norm{F}_{ L^{\infty}(\Omega_0^t) } \ex{-C^{\prime} N} \;,
\enq
for some $C, C^{\prime} >0$.

By repeating the reasoning outlined in the case of the Dyson-Schwinger equations for the real model, one shows that there exist continuous functionals
$\Psi_{k;p}^{(t)}$
on $\mc{H}(\msc{N}_t^k)$
with the neighbourhood $\msc{N}_t$ of $\Sg_t$ as introduced in \eqref{definition voisinage mathcal N et courbes Omega k t}, such that
\beq
 \big<\!\! \big<    \mc{F}  \big>\!\! \big>_{ \bigotimes\limits^{k+1}  \mc{L}_N^{(\mathbf{z})} }^{(t)} \, = \,
\sul{p=0}{ \ell } \f{1}{N^{k+p} } \Psi_{k;p}^{(t)}\big[ \mc{F} \big] \; + \; \e{O} \Big(  \f{ 1 }{ N^{\ell +1+k} }  \norm{ \mc{F} }_{ \Om_{a_{k,\ell}}^{t} }  \Big)
\enq
for some sequence $a_{k,\ell}$ independent of $t$ with a remainder that is uniform in $t\in \intff{0}{1}$. Moreover, $t \rightarrow \Psi_{k;p}^{(t)}\big[ \mc{F} \big]$ is
continuous in $t\in \intff{0}{1}$.

In particular, one gets the expansion for the $1$-linear statistic
\bem
 \big<\!\! \big<    f \big>\!\! \big>_{  \mc{L}_N^{(\mathbf{z})} }^{(t)} \, = \,  \f{1}{N} \Big( \frac{1}{\beta}- \frac{1}{2} \Big) \mu_{\Sg_t}\Big(  \Dp{1} \De_{\Sg_t} ^{-1} [f] \Big)
 + \frac{1}{N^2} \Big( \frac{1}{\beta} - \frac{1}{2} \Big)^2  \mu_{\Sg_t}\Big(  \Dp{1} \De_{\Sg_t} ^{-1} \circ \Dp{1} \De_{\Sg_t} ^{-1} [f] \Big) \\
    + \frac{1}{2 N} \Big( \frac{1}{\beta} - \frac{1}{2} \Big)^2
\mu_{\Sg_t}\otimes \mu_{\Sg_t}\Big(  \Dp{1} \big( \De_{\Sg_t}^{(1)}\big)^{-1} \Dp{2} \big( \De_{\Sg_t}^{(2)}\big)^{-1} \Big[ \mc{D}^{(1)} \circ \big(\De_{\Sg_t}^{(1)} \big)^{-1}[f] \Big] \Big) \\
  + \frac{1}{2N^2 \beta} \mu_{\Sg_t}\Big( \chi^{(2)} \circ \Dp{2} \circ\mc{D}^{(1)} \circ \big(\De_{\Sg_t}^{(1)} \big)^{-1}[f] \Big)  \, + \,
  \e{O}\bigg( \f{1}{N^2}  \norm{f }_{ L^{\infty}(\Om_{s}^t)  } \bigg)
\label{ecriture AD statistique 1 lineraires complexes}
\end{multline}
with $s$ large enough.

Surprisingly, if we take $f = \Dp{t}V_t$ both the zeroth and first order terms in the expansion of
$ \big<\!\! \big<    \Dp{t}V_t \big>\!\! \big>_{ L_N^{(\mathbf{z})} }^{(t)}$ may be expressed as exact $t$-derivatives.
\begin{lemma}
Recall the complex energy functional $ \mc{I}_{\Sigma_t}$   defined through equation \eqref{definition fnelle energie complexe} of
Definition \ref{complexenergy}. Then, one has
$$\mu_{\Sg_t}\big( \Dp{t}V_t \big) \, =  \, \frac{1}{2} \frac{ \dd }{\dd t} \mc{I}_{\Sigma_t}\big[\mu_{\Sg_t} \big] \, . $$
\end{lemma}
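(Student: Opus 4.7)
The plan is to pull back the entire computation to the fixed reference interval $\intff{0}{1}$ via $\ga_t$, thus exploiting that by Remark \ref{semicircleremark} the pullback $\ga_t^{\#}\mu_{\Sg_t} = \nu_{\e{sc}}$ does \emph{not} depend on $t$. Writing the half-sum of boundary values of the $g$-function as
\[
\tfrac{1}{2}\big( g_{\Sg_t}[\mu_{\Sg_t}]_+(z) + g_{\Sg_t}[\mu_{\Sg_t}]_-(z) \big) \; = \; - \Int{\Sg_t}{} \Lambda(z,w) \, \dd\mu_{\Sg_t}(w),
\]
with $\Lambda(z,w) = \tfrac{1}{2}\big( \ln(z-w)_+ + \ln(z-w)_-\big)$ denoting the symmetric boundary-value average, the complex energy takes the pulled-back form
\[
\mc{I}_{\Sg_t}[\mu_{\Sg_t}] \; = \; - \Int{0}{1}\Int{0}{1} \Lambda\big( \ga_t(x),\ga_t(y)\big) \dd \nu_{\e{sc}}(x) \dd \nu_{\e{sc}}(y) \; + \; 2\Int{0}{1} V_t\big(\ga_t(x)\big) \dd \nu_{\e{sc}}(x).
\]

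Since $\nu_{\e{sc}}$ is $t$-independent, I would differentiate directly under the integral sign. The potential part contributes, by the chain rule, the desired term $2\mu_{\Sg_t}(\Dp{t} V_t)$ together with a ``contour-variation'' piece $2\int V_t^\prime(\ga_t(x)) \Dp{t}\ga_t(x) \dd \nu_{\e{sc}}(x)$. For the logarithmic part, the analytic extensions of $\Lambda$ satisfy $\Dp{z}\Lambda(z,w) = (z-w)^{-1}$ and $\Dp{w}\Lambda(z,w) = -(z-w)^{-1}$ off the diagonal, so that
\[
\Dp{t} \Lambda\big(\ga_t(x),\ga_t(y)\big) \; = \; \frac{\Dp{t}\ga_t(x) - \Dp{t}\ga_t(y)}{\ga_t(x) - \ga_t(y)},
\]
and a symmetrisation in $(x,y)$ yields
\[
\Int{}{}\Int{}{} \Dp{t}\big[\Lambda \circ (\ga_t\times\ga_t)\big] \dd \nu_{\e{sc}}^{\otimes 2} \; = \; 2 \Int{}{}\Int{}{} \frac{\Dp{t}\ga_t(x)}{\ga_t(x) - \ga_t(y)} \dd\nu_{\e{sc}}(x)\dd\nu_{\e{sc}}(y).
\]

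The crux is then to invoke the weak Euler-Lagrange identity \eqref{fidentity}, which carries over verbatim to the pair $(\Sg_t, V_t)$ since $\mu_{\Sg_t}$ solves the complexified Euler-Lagrange condition \eqref{EulerLagrange} by construction. Applied with the test function $f(z) = \Dp{t}\ga_t\big(\ga_t^{-1}(z)\big)$, which is smooth on $\mathrm{supp}\, \mu_{\Sg_t}$ by Lemma \ref{Lemma extension analytique courve gamma} and can be extended continuously to $\Sg_t$, the identity reads after pullback
\[
\Int{}{} V_t^\prime(\ga_t(x)) \Dp{t}\ga_t(x) \dd \nu_{\e{sc}}(x) \; = \; \tfrac{1}{2}\Int{}{}\Int{}{} \Dp{t}\big[\Lambda\circ(\ga_t\times\ga_t)\big] \dd\nu_{\e{sc}}^{\otimes 2}.
\]
Substituting this back into the previous paragraph, the two contour-variation contributions cancel exactly, leaving $\tfrac{\dd}{\dd t}\mc{I}_{\Sg_t}[\mu_{\Sg_t}] = 2 \mu_{\Sg_t}(\Dp{t} V_t)$, which is the claim.

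The main technical obstacle is rigorously justifying the pointwise $t$-differentiation of $\Lambda\circ(\ga_t\times\ga_t)$ at the diagonal $x=y$, where $\Lambda$ is logarithmically singular, and near the support endpoints $x \in \{0,1\}$, where the branch-cut structure of $\Lambda$ meets the square-root vanishing of the density of $\mu_{\Sg_t}$. Both singularities are nevertheless integrable against $\dd\nu_{\e{sc}}^{\otimes 2}$, and one makes the argument rigorous by combining the joint $\mc{C}^\infty$-smoothness of $(t,x)\mapsto \ga_t(x)$ on $\intff{0}{1}\times\intff{-\eps}{1+\eps}$, guaranteed by \eqref{contourdef} and Lemma \ref{Lemma representation poiur sqrt de Rt}, with a limiting argument approaching the diagonal and the endpoints from within the support of the semicircle and applying dominated convergence.
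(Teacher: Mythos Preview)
Your proof is correct and follows essentially the same approach as the paper: pull back to $\intff{0}{1}$ via $\ga_t$ so that the reference measure $\nu_{\e{sc}}$ is $t$-independent, differentiate, and use the complexified Euler--Lagrange relation to cancel the contour-variation terms, leaving only $2\mu_{\Sg_t}(\Dp{t}V_t)$.

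The one technical difference is in how the logarithmic singularity of the $g$-function is handled when differentiating under the integral. You keep the $x$-integration on $\intff{0}{1}$ and argue that, although $\Lambda(\ga_t(x),\ga_t(y))$ is singular on the diagonal, its $t$-derivative is the smooth difference quotient $\tfrac{\Dp{t}\ga_t(x)-\Dp{t}\ga_t(y)}{\ga_t(x)-\ga_t(y)}$, so dominated convergence applies; you then invoke the weak Euler--Lagrange identity \eqref{fidentity}. The paper instead exploits the analyticity of the semicircle density on $\Cx\setminus(\R^-\cup\intfo{1}{+\infty})$ to deform the $x$-contour off $\intff{0}{1}$ onto curves $\mc{C}_\pm$ lying strictly above and below the real axis; on these deformed contours the integrand is manifestly analytic in $t$, differentiation is immediate, and after deforming back one applies the pointwise Euler--Lagrange equation \eqref{EulerLagrange}. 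The paper's contour deformation sidesteps your ``main technical obstacle'' entirely and is worth knowing as a clean alternative; your direct approach is perfectly valid but requires the dominated-convergence justification you sketch.
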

\begin{proof}
We start by writing the complex energy in terms of the parametrisation $\gamma_t$.
\beq
\mc{I}_{\Sigma_t}\big[\mu_{\Sg_t} \big] \, =  \, \Int{0}{1} \Big\{\frac{1}{2} \big(g_{\Sg_t}[\mu_{\Sg_t} ] \big)_+\big(\gamma_t(x)\big) \, + \, \frac{1}{2}
\big(g_{\Sg_t}[\mu_{\Sg_t} ] \big)_-\big(\gamma_t(x)\big) \, + \, 2 V_t\big(\gamma_t(x)\big) \Big\}  \dd \nu_{\e{sc}}(x) \,.
\nonumber
\enq
Observe that $\dd \nu_{\e{sc}}(x) \, = \, \f{8}{\pi} \sqrt{x(1-x)} \dd x$ has a density holomorphic on $\Cx \setminus \big\{ \R^{-} \cup \intfo{1}{+\infty}\big\}$.
This allows us to consider two contours $\mathcal{C}_+$ and $\mathcal{C}_-$, each going from $0$ to $1$ in the domain where $\gamma_t$ is analytic,
with $\mathcal{C}_+$ lying above the real axis and $\mathcal{C}_-$ lying below the real axis. Hence
\beq
\mc{I}_{\Sigma_t}\big[\mu_{\Sg_t} \big] \, =  \,  \frac{1}{2} \Int{\mathcal{C}_+}{}  g_{\Sg_t}[\mu_{\Sg_t}]\big(\gamma_t(x)\big)\dd \nu_{\e{sc}}(x) \, + \,
\frac{1}{2} \Int{\mathcal{C}_-}{} g_{\Sg_t}[\mu_{\Sg_t}]\big(\gamma_t(x)\big) \dd \nu_{\e{sc}}(x)
 \, + \, 2 \Int{0}{1} V_t\big(\gamma_t(x)\big)  \dd \nu_{\e{sc}}(x) \;.
\nonumber
\enq
 This step allows one to differentiate under the integral sign, which leads to
\bem
\frac{ \dd }{ \dd t } \mc{I}_{\Sigma_t}\big[\mu_{\Sg_t} \big]
\, = \,   \pi \i \Int{\mathcal{C}_+ }{} \mathsf{C}[\mu_{\Sg_t}]\big(\gamma_t(x)\big)  \Dp{t}\gamma_t(x) \dd \nu_{\e{sc}}(x)
  \, +\,   \pi \i \Int{\mathcal{C}_-}{} \mathsf{C}[\mu_{\Sg_t}]\big(\gamma_t(x)\big)  \Dp{t}\gamma_t(x) \dd \nu_{\e{sc}}(x)  \\
 \, + \,  2 \Int{0}{1} V_t^\prime\big(\gamma_t(x)\big)  \Dp{t}\gamma_t(x) \dd \nu_{\e{sc}}(x)  \, + \, 2 \Int{0}{1}  \Dp{t} V_t\big(\gamma_t(x)\big)   \dd \nu_{\e{sc}}(x) \;.
\nonumber
\end{multline}
Upon deforming $\mathcal{C}_{\pm}$ back to $\intff{0}{1}$, one  gets that the sum of the first three terms vanishes by \eqref{EulerLagrange}.
\end{proof}

\begin{lemma}
One has that
$$\frac{ \dd }{ \dd t } \Int{\zeta_1}{\zeta_2} \ln \Big\{ \frac{ \dd \mu_{\Sg_t} (z) }{\dd z }  \Big\} \cdot \dd \mu_{\Sg_t}(z)  \, = \,
\mu_{\Sg_t}\Big( \Dp{1}  \De_{\Sg_t}^{-1} [ \Dp{t}V_t ]  \Big) \, . $$
\end{lemma}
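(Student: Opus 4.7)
I plan to pull back the left-hand side to $\intff{0}{1}$ via $\ga_t$, identify $\De_{\Sg_t}^{-1}[\Dp{t} V_t]$ up to an additive constant with minus the ``velocity field'' $\psi_t(z) := \Dp{t}\ga_t(\ga_t^{-1}(z))$ of the contour deformation, and then check that both computations produce the same integral against $\nu_{\e{sc}}$.

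First I would unfold the left-hand side: by \eqref{Rtdef} one has $\tfrac{\dd \mu_{\Sg_t}}{\dd z}(\ga_t(x))\cdot \ga_t^{\prime}(x) = \tfrac{8}{\pi}\sqrt{x(1-x)}$ on $\intff{0}{1}$, so that
\[
\Int{\zeta_1}{\zeta_2} \ln\Big\{\tfrac{\dd \mu_{\Sg_t}}{\dd z}(z)\Big\}\, \dd \mu_{\Sg_t}(z) \; = \; \Int{0}{1} \ln\Big\{\tfrac{8\sqrt{x(1-x)}}{\pi}\Big\} \dd\nu_{\e{sc}}(x) \; - \; \Int{0}{1} \ln \ga_t^\prime(x) \, \dd\nu_{\e{sc}}(x) \;.
\]
Only the second integral depends on $t$, so writing $u_t(x):=\Dp{t}\ga_t(x)$ one gets
\[
\tfrac{\dd}{\dd t}\Int{\zeta_1}{\zeta_2} \ln\Big\{\tfrac{\dd \mu_{\Sg_t}}{\dd z}(z)\Big\}\, \dd \mu_{\Sg_t}(z) \; = \; - \Int{0}{1} \tfrac{u_t^\prime(x)}{\ga_t^\prime(x)}\, \dd \nu_{\e{sc}}(x) \;.
\]

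Next I would introduce the velocity field $\psi_t$: by Lemma \ref{Lemma extension analytique courve gamma} and the joint analyticity of $(t,z)\mapsto \ga_t(z)$, $\psi_t$ extends to a holomorphic function on an open neighbourhood of $\Sg_t$. The key claim is that $\De_{\Sg_t}[-\psi_t] = \Dp{t} V_t - C_t^\prime$ on that neighbourhood, for some constant $C_t^\prime$; the translation invariance of the explicit formula for $\De_{\Sg_t}^{-1}$ then yields $\Dp{1} \De_{\Sg_t}^{-1}[\Dp{t} V_t] = -\psi_t^\prime$. To establish the claim I would proceed in two steps. Differentiating $g^t(z) := - \int_0^1 \ln(z - \ga_t(y))\, \dd\nu_{\e{sc}}(y)$ at fixed $z \in \Cx\setminus \sg_t$ gives $\Dp{t} g^t(z) = \int \psi_t(w)/(z-w)\, \dd\mu_{\Sg_t}(w)$; taking half the sum of the $\pm$ boundary values on $\sg_t$ and absorbing the $V_t^\prime$-contribution into a non-commutative derivative exactly as in the reduction used to obtain \eqref{EulerLagrange}, one identifies $\tfrac{1}{2}\big((\Dp{t} g^t)_+ + (\Dp{t} g^t)_-\big)(z) = \De_{\Sg_t}[\psi_t](z)$ for $z\in \sg_t$. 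Then, differentiating the Euler-Lagrange identity $V_t + \tfrac{1}{2}(g^t_+ + g^t_-) = C_t$ along the moving point $z=\ga_t(x)$, with $x\in\intff{0}{1}$ fixed, the chain rule produces $\Dp{t} V_t(\ga_t(x)) + V_t^\prime(\ga_t(x)) u_t(x)$ from $V_t$ and $\De_{\Sg_t}[\psi_t](\ga_t(x)) + \tfrac{1}{2}(g^{t\prime}_+ + g^{t\prime}_-)(\ga_t(x)) u_t(x)$ from the $g$-function; the two $u_t$-terms cancel thanks to \eqref{EulerLagrange}, leaving
\[
\Dp{t}V_t(\ga_t(x)) + \De_{\Sg_t}[\psi_t](\ga_t(x)) \; = \; C_t^\prime \qquad \forall\, x \in \intff{0}{1} \;.
\]
Since both sides are holomorphic in $z$ on a common open neighbourhood of $\Sg_t$ and $\sg_t$ is an analytic arc with accumulation points, analytic continuation extends the identity to that full neighbourhood.

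Finally, with $\Dp{1}\De_{\Sg_t}^{-1}[\Dp{t} V_t] = -\psi_t^\prime$ in hand, pulling back by $\ga_t$ using $\dd \mu_{\Sg_t}\, = \,\ga_t^{\#}\big(\dd\nu_{\e{sc}}\big)$ together with $\psi_t^\prime(\ga_t(x)) = u_t^\prime(x)/\ga_t^\prime(x)$ gives
\[
\mu_{\Sg_t}\big( \Dp{1}\De_{\Sg_t}^{-1}[\Dp{t} V_t]\big) \; = \; -\Int{0}{1} \tfrac{u_t^\prime(x)}{\ga_t^\prime(x)}\, \dd\nu_{\e{sc}}(x)\;,
\]
which coincides with the expression for the $t$-derivative of the left-hand side obtained in the first step. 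The main obstacle is the cancellation in the second paragraph: one must carefully track the two chain-rule contributions proportional to $u_t$ -- one arising from moving the evaluation point inside $V_t$, the other from moving it inside $(g^t_+ + g^t_-)$ -- and recognise that they combine via the Euler-Lagrange relation itself. Once this cancellation is identified, the remaining ingredients (analytic continuation to a neighbourhood of $\Sg_t$ and the pullback computation) are essentially bookkeeping.
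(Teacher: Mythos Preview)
Your argument is correct and takes a genuinely different route from the paper. The paper proceeds abstractly: it differentiates the identity \eqref{fidentity} in $t$ to obtain $\int \De_{\Sg_t}[f]\,\Dp{t}\rho_t\,\dd z + \int (\Dp{t}V_t)' f\,\rho_t\,\dd z = 0$, substitutes $f \to \De_{\Sg_t}^{-1}[f]$, invokes the symmetry of the complex covariance $\mu_{\Sg_t}(f'\De_{\Sg_t}^{-1}[g]) = \mu_{\Sg_t}(g'\De_{\Sg_t}^{-1}[f])$, and then specialises to $f = \ln\rho_t$ via a density argument (to cope with the logarithmic blow-up at the endpoints) together with two integrations by parts. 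Your approach instead exploits the specific construction of the interpolation: the pullback density under $\ga_t$ is the $t$-independent semicircle (Remark~\ref{semicircleremark}), so the $t$-derivative of the left-hand side becomes elementary, and you then identify $\De_{\Sg_t}^{-1}[\Dp{t}V_t]$ \emph{explicitly} as minus the velocity field $\psi_t$ of the contour deformation by differentiating the integrated Euler--Lagrange relation along the moving curve. The paper's argument is more portable---it uses only the master-operator structure and would survive in settings where the pullback measure depends on $t$---whereas your argument is more transparent geometrically, yields the bonus formula $\De_{\Sg_t}^{-1}[\Dp{t}V_t] = -\psi_t$, and sidesteps both the covariance-symmetry step and the density argument for $\ln\rho_t$.
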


\begin{proof}
Let $\rho_t(z) = \frac{ \dd \mu_{\Sg_t} (z) }{\dd z } $ be the density of $\mu_{\Sg_t}$ with respect to the natural curve measure $\dd z$.
Then taking \eqref{fidentity} and differentiating with respect to $t \in \intff{0}{1}$, we find that for all functions $f  \in \mc{W}^{\infty}_{1}(\Sigma_t)$
it holds
\begin{align*}
\Int{\zeta_1}{\zeta_2} \De_{\Sg_t}[f](z)  \Dp{t}  \rho_t(z)  \, \dd z  \, + \,  \Int{\zeta_1}{\zeta_2}  \Dp{t}V_t^{\prime}(z)  \,  f(z) \,\rho_t(z) \, \dd z = 0\, .
\end{align*}
Replacing $f$ with $ \De_{\Sg_t}^{-1}[f](z)$ and using the symmetry of the covariance we find
\begin{align*}
\Int{\zeta_1}{\zeta_2} f(z)  \Dp{t}  \rho_t(z)\, dz   \, \dd z  \, + \,  \Int{\zeta_1}{\zeta_2}   \, f^\prime(z)   \De_{\Sg_t}^{-1}\big[ \Dp{t}V_t  \big](z)  \, \rho_t(z) \, \dd z = 0\, .
\end{align*}
Integrating by parts
\beq
\Int{\zeta_1}{\zeta_2} f(z)  \Dp{t}  \rho_t(z)\, dz   \, \dd z  \, = \,
\Int{\zeta_1}{\zeta_2} f(z)  \Dp{z}\De_{\Sg_t}^{-1}\big[ \Dp{t}V_t   \big](z) \, \rho_t(z) \, \dd z
      \, + \, \Int{\zeta_1}{\zeta_2} f(z)  \De_{\Sg_t}^{-1}\big[ \Dp{t}V_t \big](z) \, \rho_t^{\prime}(z) \, \dd z\, .
\nonumber
\enq
$\rho_t^\prime(z)$ has a singularity of negative square root type at the endpoints $\zeta_1$ and $\zeta_2$, and so it is integrable.
Since $\ln \rho_t(z)$ has at most $\e{O}\big( \ln |(z-\zeta_1)(z-\zeta_2)| \big)$ singularities at the endpoints and
is smooth otherwise, see Lemma \ref{Lemma representation poiur sqrt de Rt}, we can extend
the above relation to  $f(z) = \ln \rho_t(z)$ by density. Finally, by integrating by parts twice
\beq
\mu_{\Sg_t}\Big(  \Dp{z}\De_{\Sg_t}^{-1}\big[ \Dp{t}V_t   \big]  \Big) \, = \,
\Int{\zeta_1}{\zeta_2} \ln \rho_t(z)   \Dp{z}\De_{\Sg_t}^{-1}\big[ \Dp{t}V_t   \big](z) \rho_t(z) \, \dd z
      \, + \, \Int{\zeta_1}{\zeta_2} \ln \rho_t(z)   \De_{\Sg_t}^{-1}\big[ \Dp{t}V_t   \big](z) \rho_t^\prime(z) \, \dd z \,.
\nonumber
\enq
\end{proof}
Next, let us consider the asymptotic expansion of $\ln \mathcal{Z}_{V_0,\Gamma_0}(N,\beta)$.  Because $V_0$ is quadratic, the Selberg-Mehta formula provides an exact expression for this
partition function, which in turn permits an asymptotic expansion.
\begin{proposition}[Asymptotic expansion of Selberg-Mehta integral]\label{selbergexpansionthm} The partition function for the quadratic potential $V_0$ has the following asymptotic expansion; more precisely, for every $K \in \mathbb{N}$ one has
\begin{align*}
\ln \mathcal{Z}_{N,\Gamma_0}[V_0] = \frac{\beta}{2} N \ln N + \frac{3+\frac{\beta}{2}+\frac{2}{\beta}}{12} \ln N + \sum_{k=-2}^K \frac{F_k(\beta,V_0)}{N^k} + \e{O}(N^{-K-1}), & & N \to +\infty \, .
\end{align*}
In particular
\begin{align*}
F_{-2}(\beta,V_0) &= - \frac{\beta}{2}  \mc{I}_{  \Ga_{0} }[\mu_{ \Ga_{0} }], \\
F_{-1}(\beta,V_0) &= \left(\frac{\beta}{2} - 1\right) \left[ \Int{   \Ga_{0} }{} \ln \Big\{ \frac{\dd \mu_{ \Ga_{0} } }{ \dd z }(z) \Big\}  \, \cdot  \, \dd \mu_{ \Ga_{0} } (z)
    +   \ln \frac{\beta}{2} \right] + \frac{\beta}{2} \ln  \frac{2\pi}{\e{e}}- \ln \Gamma\left(  \frac{\beta}{2}\right).
\end{align*}
\end{proposition}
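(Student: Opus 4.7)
The plan is to reduce the integral $\mathcal{Z}_{N,\Gamma_0}[V_0]$ to the classical Mehta integral by an affine change of variables, apply the exact Selberg--Mehta formula, and then expand the resulting product of Gamma functions using Stirling's asymptotic series. Set $c = \frac{\zeta_1+\zeta_2}{2}$ and $r = \frac{\zeta_2-\zeta_1}{2}$. Since $\Gamma_0$ is the infinite straight line through $\zeta_1$ and $\zeta_2$, parametrise it as $z = c + r s$ with $s \in \mathbb{R}$. By the definition \eqref{definition potentiel V0}, $V_0(c + r s) = s^2 + V(c)$. Since $\beta$ is an even integer, the factor $(z_a - z_b)^\beta = r^{\beta}(s_a-s_b)^\beta = r^\beta |s_a - s_b|^\beta$ introduces no sign or branch ambiguity, and so
\begin{equation*}
\mathcal{Z}_{N,\Gamma_0}[V_0] = r^{N + \tfrac{\beta}{2} N(N-1)}\, \ex{-\beta N^2 V(c)} \Int{\mathbb{R}^N}{} \pl{a<b}{N} |s_a-s_b|^\beta \pl{k=1}{N} \ex{-N\beta s_k^2}\, \dd\mathbf{s}.
\end{equation*}

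Next, I would rescale $s_k = t_k /\sqrt{2 N \beta}$, after which the remaining integral becomes the standard Mehta integral $M_N(\beta) = \int_{\mathbb{R}^N}\prod_{a<b}|t_a-t_b|^\beta \prod_k \ex{-t_k^2/2}\, \dd\mathbf{t}$, for which the Selberg--Mehta formula (see Ch.~17 of \cite{mehta}) gives
\begin{equation*}
M_N(\beta) = (2\pi)^{N/2} \pl{j=1}{N} \frac{\Gamma(1 + j\beta/2)}{\Gamma(1+\beta/2)}\, .
\end{equation*}
Combining the two displays yields an exact formula for $\ln \mathcal{Z}_{N,\Gamma_0}[V_0]$ in terms of elementary functions of $N$ and the product $\prod_{j=1}^{N}\Gamma(1 + j\beta/2)$. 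The large-$N$ asymptotic expansion of the latter product is classical: writing $\ln \prod_{j=1}^{N}\Gamma(1 + j\beta/2) = \sum_{j=1}^N \ln \Gamma(1 + j\beta/2)$ and applying the Euler--Maclaurin formula together with Stirling's series (equivalently, the asymptotic expansion of the Barnes $G$-function) produces an all-order asymptotic expansion in powers of $N^{-1}$, with the characteristic $\tfrac{\beta}{2} N \ln N$ and $\tfrac{1}{12}(3+\tfrac{\beta}{2}+\tfrac{2}{\beta})\ln N$ terms exactly as claimed.

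The last step is to identify the two leading coefficients $F_{-2}(\beta,V_0)$ and $F_{-1}(\beta,V_0)$. For $F_{-2}$, one computes $\mathcal{I}_{\Gamma_0}[\mu_{\Gamma_0}]$ directly: under the parametrisation $z = c+rs$, the measure $\mu_{\Gamma_0}$ pulls back to the rescaled semicircle law on $[-1,1]$ (see Remark \ref{semicircleremark} applied at $t=0$), and the complex energy is then a classical computation involving the logarithmic potential of the semicircle. The agreement of the $N^2$-coefficient with $-\tfrac{\beta}{2}\mathcal{I}_{\Gamma_0}[\mu_{\Gamma_0}]$ follows after collecting the $r$-dependent prefactor, the $V(c)$ term and the rescaling contribution $-\tfrac{\beta}{4}N(N-1)\ln(2N\beta)$. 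For $F_{-1}$, one writes the density of $\mu_{\Gamma_0}$ with respect to $\dd z$ explicitly and uses the fact that $\int \ln \rho\, \dd \mu_{\Gamma_0}$ for the semicircle is a known elementary constant; the factor $-\ln\Gamma(\beta/2)$ arises directly from $-N\ln\Gamma(1+\beta/2)$ via the functional equation and the Stirling expansion, while the $\tfrac{\beta}{2}\ln(2\pi/\e{e})$ piece comes from the $(2\pi)^{N/2}$ factor together with the leading Stirling correction.

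The only conceptual subtlety is the bookkeeping of branches in the identification step: the density $\dd \mu_{\Gamma_0}/\dd z$ is complex-valued, so the entropy-type integral $\int \ln(\dd \mu_{\Gamma_0}/\dd z)\, \dd \mu_{\Gamma_0}$ must be interpreted with the natural branch inherited from the orientation of $\Gamma_0$. I would handle this by writing $\dd \mu_{\Gamma_0}/\dd z = (2r)^{-1}\,(\dd\nu_{\mathrm{sc}}/\dd s)\cdot(\dd s/\dd z)$ and tracking the $\ln(2r)$ and $\ln(\dd s/\dd z)$ contributions separately; these recombine with the $r$-prefactor and the rescaling term above to give precisely the stated formula for $F_{-1}(\beta,V_0)$. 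All the computational work here is routine, and the expansion to arbitrary order $K$ follows automatically from the uniformity of Stirling's series.
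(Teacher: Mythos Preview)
Your proposal is correct and follows essentially the same route as the paper: reduce to the Selberg--Mehta integral, use the exact product formula, and expand the product $\prod_{j=1}^N \Gamma(1+j\beta/2)$ asymptotically, then identify the first two coefficients via the explicit semicircle computations. The only minor technical difference is that the paper, instead of applying Euler--Maclaurin directly, exploits that $\mathfrak t = \beta/2$ is a positive integer to prove the identity
\[
\prod_{j=1}^{n-1}(\mathfrak t j)! \;=\; \mathfrak t^{-\tfrac{n(\mathfrak t-1)}{2}}\Big(\prod_{j=1}^{\mathfrak t n-1} j!\Big)^{1/\mathfrak t}\prod_{p=1}^{\mathfrak t-1}\frac{\Gamma(p/\mathfrak t)^{1-p/\mathfrak t}}{\Gamma(p/\mathfrak t+n)^{1-p/\mathfrak t}},
\]
which reduces everything to the standard Barnes $G$-function and finitely many Gamma functions whose Stirling expansions are then read off directly; this sidesteps any error analysis in Euler--Maclaurin but is otherwise equivalent to your approach.
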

\begin{proof}
See Appendix \ref{selbergexpansion}.
\end{proof}
Inserting the asymptotic expansion \eqref{ecriture AD statistique 1 lineraires complexes} for $\big<\!\! \big<    \Dp{t}V_t \big>\!\! \big>_{  \mc{L}_N^{(\mathbf{z})} }^{(t)}$ and integrating over $t\in \intff{0}{1}$, noting that all remainders are uniform in that parameter, we arrive at Theorem \ref{maintheorem}.

\appendix
\section{Singular integral operators and their inverses}\label{singularsection}

In this section we show how the various singular integral operators of interest to our analysis may be inverted.  To begin we define various quantities.

\begin{definition}[$\mc{W}^{\infty}_k$ spaces]\label{Ckspaces}
Let $\Sg$ be a smooth curve in $\Cx$, then
$$\mc{W}^{\infty}_k(\Sg) \, = \, \big\{ f : \Sg \rightarrow \Cx \; : \;  f^{(p)} \in L^{\infty}(\Sg), \quad p=0,\dots, k  \big\}$$
in which $L^{\infty}(\Sg)$ is defined with respect to the arc length measure on $\Sg$.  We let
\begin{align}
\| f \|_{  \mc{W}^{\infty}_{k}(\Sg)} \overset{\mathrm{def}}{=} \sul{p=0}{k} \underset{x \in \Sg}{\e{supess}}| f^{(p)}(x)| \;.
\end{align}
Note that we have the Banach algebra-type inequality $\| f g \|_{  \mc{W}^{\infty}_{k}(\Sg)} \leq 2^k \| f \|_{  \mc{W}^{\infty}_{k}(\Sg)} \| g \|_{  \mc{W}^{\infty}_{k}(\Sg)}
$ and monotonicity $\| f \|_{  \mc{W}^{\infty}_{k}(\Sg)} \leq \| f \|_{  \mc{W}^{\infty}_{k+1}(\Sg)}$.
\end{definition}

\subsection{The complex master operator}

Below, we focus on a one-parameter $t\in \intff{0}{1}$ family of compact smooth curves $\Om_t$
which are contained in some open neighbourhood $\wt{\mc{N}}_t$ which contains a tubular neighbourhood of $\Om_t$ having a fixed size, \textit{i.e.} $t$-independent size.
The endpoints of $\Om_t$ are assumed to be fixed with respect to $t\in \intff{0}{1}$
and given by $\zeta_1$ for the origin, respectively to $\Om_t$'s  orientation,  and $\zeta_2$ for the final endpoint, \textit{viz}. $\Dp{}\Om_t=\{ - \zeta_1, \zeta_2\}$
where the signs correspond to the algebraic orientation of the point in the boundary.

In this subsection we focus on the so-called complex master operator. It is defined in terms of a finite  measure
\beq
\nu_t \in \mc{M}(\Om_t) \qquad \e{with} \qquad  \dd \nu_{t}(z) \; = \; r_+(z)\, P_t(z) \cdot \dd z
\quad \e{and} \quad  r(z) \,= \,   \big[ (z-\zeta_1)(z-\zeta_2) \big]_{ \Om_t ; +}^{1/2}  \;.
\enq
Above, the square root is as defined through \eqref{definition racine avec coupure sur une courbe prescrite} and $P_t$ is a one parameter $t$ holomorphic function defined on the open neighbourhood
$\wt{\mc{N}}_t$ of $\Om_t$, vanishing nowhere, and such that
\beq
\norm{ P_t }_{ \wt{\mc{N}}_t } \, = \, \e{sup}\big\{ |P_t(z)| \, : \, z \in \wt{\mc{N}}_t \big\} \, < \, C
\enq
for some $C>0$ uniform in $t$. Finally, we focus on an open neighbourhood $\mc{N}_t$ of $\Om_{t}$ strictly contained, uniformly in $t$, in $\wt{\mc{N}}_t$,
\textit{viz} $\e{d}\big( \ov{\mc{N}}_t , \Dp{} \wt{\mc{N}}_t \big)>c$.
For $z \in  \e{Int}\big( \Om_t\big)$ we set
\beq
W_t(z) \, = \, \Fint{ \Om_t}{} \f{ \dd \nu_t(\om)  }{ z-\om } \, = \,  \Oint{ \Ga(\Om_t) }{}   \f{ r(\om)P_t(\om) }{ -2( z-\om )  } \dd \om \;.
\enq
Here $\Ga(\Om_t)$ is a loop in $\mc{N}_t$ of index $1$ around $\Om_t$. Note that the final equality in the above equation ensures that $W_t\in \mc{O}( \mc{N}_t )$ and that  $\norm{ W_t }_{\mc{N}_t} \, \leq  \, C$ uniformly in $t$ for some $C>0$.

The complex master operator $\De_{\Om_t}: \mc{O}( \mc{N}_t ) \rightarrow \mc{O}( \mc{N}_t )$ corresponds to the following integral operator
\beq
\De_{\Om_t}[f] (z) \,=\,   W^\prime_t(z) f(z)  \,  - \,   \Int{\Om_t}{}  \f{ f(z) - f(w) }{ z-w }  \dd \nu_{t} (w)  \qquad \e{with} \quad  z \in \mc{N}_t \,.
\enq
Before discussing the invertibility of $\De_{\Om_t}$  we need to introduce the following functional on $\mc{O}( \mc{N}_t )$
\beq
\mc{K}_{\Om_t}[g] \, = \, \Int{ \Om_t }{} \f{  g(s) }{ - \i \pi  r_+(s)  } \dd s  \, = \, \Oint{ \Ga(\Om_t) }{}\f{  g(s) }{ 2 \i \pi  r(s)  } \dd s
\enq
with $ \Ga(\Om_t)$ a loop of index $1$ around $\Om_t$ in $\mc{N}_t$.
The functional satisfies $\mc{K}_1[1]=1$.
Moreover, one has that $\norm{ \mc{K}_{\Om_t}[g] }_{ \Om_t } \leq C \norm{ g }_{\Om_t}$, so that $\mc{K}_{\Om_t}$
is continuous on $\mc{O}( \mc{N}_t )$.  In particular, $\e{ker}[\mc{K}_{\Om_t}]$ is a closed subspace of $\mc{O}( \mc{N}_t )$ of codimension $1$.

\begin{proposition}
\label{Proposition inversion complex master operator}
The image of the complex master operator coincides with $\e{ker}[\mc{K}_{\Om_t}]$. This operator is invertible on its image and in particular one has
\beq
\De_{\Om_t}^{-1}[g] (z) \,= \, \f{1}{P_t(z)} \Int{\Om_t}{} \f{ g(s)-g(z) }{r_+(s) (s-z) } \cdot \f{ \dd s }{ \pi^2 } \qquad \mathrm{for} \; \mathrm{all} \quad g \in \e{ker}[\mc{K}_{\Om_t} ] \;.
\label{expression pour complex master inverse}
\enq
Finally, given a sequence of successively enclosed contours $\Ga_{k}(\Om_t) \subset \mc{N} $ with $\e{d}\big(\Ga_{k}(\Om_t), \Ga_{k+1}(\Om_t) \big) > 0$
uniformly in $t$, it holds that
\beq
\norm{ \De_{\Om_t}^{-1}[g]  }_{  \Ga_{k}(\Om_t) } \, \leq \, C_k  \norm{ g  }_{  \Ga_{k+1}(\Om_t) }  \qquad \mathrm{for} \; \mathrm{all} \quad g \in \e{ker}[\mc{K}_1] \;.
\enq
\end{proposition}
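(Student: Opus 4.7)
The plan is to verify the candidate inverse directly through Cauchy integral calculus, exploiting the jump structure of the square-root weight $r$. My first step is to recast the defining formula \eqref{expression pour complex master inverse} as a loop integral. Using $r_+ = - r_-$ on $\Om_t$, one gets
\[
\De_{\Om_t}^{-1}[g](z) \; = \; \f{1}{ 2 \pi^2 P_t(z) } \Oint{ \Ga(\Om_t) }{ } \f{ g(s) - g(z) }{ r(s)(s-z) } \dd s
\]
for any positively oriented loop $\Ga(\Om_t) \subset \mc{N}_t$ of index one around $\Om_t$ enclosing $z$. Since $g \in \mc{O}(\mc{N}_t)$ the apparent pole at $s = z$ is removable, and since $P_t$ is non-vanishing on $\mc{N}_t$, this representation makes it manifest that $\De_{\Om_t}^{-1}[g] \in \mc{O}(\mc{N}_t)$ and depends holomorphically on $z$ across the cut $\Om_t$.

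The heart of the proof is the algebraic identity $\De_{\Om_t}[\De_{\Om_t}^{-1}[g]] = g$ for $g \in \e{ker}[\mc{K}_{\Om_t}]$. For $z \in \mc{N}_t \setminus \Om_t$ I would use the decomposition
\[
\De_{\Om_t}[f](z) \; = \; \big( W_t^{\prime}(z) - W_t(z) \big) f(z) + \Int{\Om_t}{} \f{ f(w) }{ z-w } \dd \nu_t(w),
\]
substitute $f = \De_{\Om_t}^{-1}[g]$, and convert the $w$-integral on $\Om_t$ into an integral on a loop $\Ga^{\prime}(\Om_t) \subset \mc{N}_t$ enclosing $\Om_t$ and lying inside $\Ga(\Om_t)$. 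Fubini then exchanges the $(s,w)$ orders and the inner $w$-integral is computed by partial fractions $\tfrac{1}{(z-w)(s-w)} = \tfrac{1}{s-z}\big(\tfrac{1}{z-w} - \tfrac{1}{s-w}\big)$ and residue calculus. Collapsing the resulting loops and tracking the residues at $w = s$ and $s = z$, together with deformation to infinity (where the integrand decays thanks to $W_t = \e{O}(z^{-1})$, $W_t^{\prime} = \e{O}(z^{-2})$, $r = \e{O}(z)$), reduces the expression to $g(z)$ plus a constant multiple of $\mc{K}_{\Om_t}[g]$. The hypothesis $\mc{K}_{\Om_t}[g] = 0$ kills this residual term and closes the identity.

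Next, the inclusion $\e{Im}[\De_{\Om_t}] \subseteq \e{ker}[\mc{K}_{\Om_t}]$ is obtained by computing $\mc{K}_{\Om_t}[\De_{\Om_t}[f]]$ as a loop integral over $\Ga(\Om_t)$: an integration by parts to transfer the derivative in $W_t^{\prime}(s)f(s)$ onto $f(s)/r(s)$, followed by deformation of the resulting loop across $\Om_t$, shows that the integral vanishes. Combined with the inversion identity just proved (which yields injectivity of $\De_{\Om_t}$ modulo constants, themselves absorbed into the normalisation $\mc{K}_{\Om_t}[1]=1$), this establishes that $\De_{\Om_t}$ maps bijectively onto $\e{ker}[\mc{K}_{\Om_t}]$.

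Finally, the norm bound on nested contours follows directly from the loop representation. Given $z \in \Ga_k(\Om_t)$, I deform the integration contour in $\De_{\Om_t}^{-1}[g](z)$ to a loop $\wt{\Ga} \subset \mc{N}_t$ strictly between $\Ga_k(\Om_t)$ and $\Ga_{k+1}(\Om_t)$. The $t$-uniform lower bound on $\e{d}\big(\Ga_k(\Om_t), \Ga_{k+1}(\Om_t)\big)$ forces $|s-z| \geq c > 0$ for $s \in \wt{\Ga}$, while the uniform bounds on $|P_t|^{-1}$, on $|r|^{-1}$ on $\wt{\Ga}$, and on the arclength of $\wt{\Ga}$, yield the estimate
\[
\norm{ \De_{\Om_t}^{-1}[g] }_{ L^{\infty}(\Ga_k(\Om_t)) } \; \leq \; C_k \norm{ g }_{ L^{\infty}(\Ga_{k+1}(\Om_t)) }
\]
with $C_k$ independent of $t \in \intff{0}{1}$. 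The main technical obstacle will be the double-integral manipulation behind the inversion identity: one must carefully track the residues picked up when exchanging the order of integration between $\Om_t$ and $\Ga(\Om_t)$ and their interplay with the square-root cut of $r$ on $\Om_t$. It is precisely the choice of normalisation $\mc{K}_{\Om_t}[1] = 1$, together with the weight structure $\dd\nu_t = r_+ P_t \, \dd z$, that arranges for the constant-valued residual term to be eliminated by the hypothesis $g \in \e{ker}[\mc{K}_{\Om_t}]$.
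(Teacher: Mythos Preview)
Your strategy—direct verification that the candidate formula furnishes a right inverse, plus a separate check that $\e{Im}[\De_{\Om_t}]\subseteq\e{ker}[\mc{K}_{\Om_t}]$—is a legitimate alternative to the paper's route, which instead solves a scalar Riemann--Hilbert problem. The paper starts from $\De_{\Om_t}[f]=g$, observes that on $\e{Int}(\Om_t)$ this collapses to the singular integral equation $\fint_{\Om_t}\tfrac{f(\omega)}{z-\omega}\dd\nu_t(\omega)=g(z)$, encodes the unknown in $F(z)=\tfrac{1}{r(z)}\int_{\Om_t}\tfrac{f(\omega)\dd\nu_t(\omega)}{2(\omega-z)}$, reads off the jump $F_+-F_-=-g/r_+$ and the decay $F=\e{O}(z^{-2})$, solves by Plemelj, and recovers $f$ from $F_++F_-=\i\pi fP_t$. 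This packaging simultaneously \emph{derives} the inverse formula, establishes injectivity of $\De_{\Om_t}$ (since $f$ is uniquely determined by $g$), and forces $\mc{K}_{\Om_t}[g]=0$ as the consistency condition for the $\e{O}(z^{-2})$ decay. Your direct computation is essentially the content of the paper's converse step (``a direct calculation shows $\De_{\Om_t}[f]=g$'') promoted to the main argument.

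Two genuine gaps remain. First, establishing $\De_{\Om_t}\circ\De_{\Om_t}^{-1}=\e{id}$ on $\e{ker}[\mc{K}_{\Om_t}]$ gives surjectivity onto that subspace, not injectivity; your remark ``which yields injectivity of $\De_{\Om_t}$ modulo constants'' has the roles reversed. You still owe either the companion identity $\De_{\Om_t}^{-1}\circ\De_{\Om_t}=\e{id}$ or a direct argument that $\De_{\Om_t}[f]=0\Rightarrow f=0$, which is precisely what the RHP construction delivers for free. Second, your mechanism for $\e{Im}\subseteq\e{ker}[\mc{K}_{\Om_t}]$ via ``integration by parts to transfer the derivative in $W_t'(s)f(s)$'' does not work and appears to rest on reading $W_t'$ literally as a derivative (a typo in the paper's definition; the operator must carry $W_t$ itself for the proof's first line to hold). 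The correct computation uses the off-curve identity $\De_{\Om_t}[f](s)=\i\pi r(s)P_t(s)f(s)+\int_{\Om_t}\tfrac{f(w)}{s-w}\dd\nu_t(w)$ for $s\in\Ga(\Om_t)$, cf.\ \eqref{ecriture expression terme mutliplicatif sur f op master cplx}; then $\mc{K}_{\Om_t}[\De_{\Om_t}[f]]$ splits into $\tfrac{1}{2}\oint P_tf\,\dd s=0$ by Cauchy and a term involving $\oint\tfrac{\dd s}{r(s)(s-w)}=0$ by deforming to infinity—no integration by parts enters. Your loop rewriting of the inverse and the nested-contour norm bound are correct and match the paper.
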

The proof follows closely the technique explained in \cite{DeiftOrthPlyAndRandomMatrixRHP}. We stress that even though the above formulae are obtained
through complex analysis techniques, they also allow one to invert $\De_{\Om_t}$ on $\mc{W}_{k}^{\infty}(\Sg_t)$ spaces, without needing to recourse to
analyticity of the measure $\nu_t$'s density.

\begin{proof}

 Assume that $g \in \e{Im}\big[ \De_{\Om_t} \big]$. So, for all $z \in \Om_t$, it holds that
\beq
\Fint{ \Om_t}{} \f{ f(\om)  }{ z - \om } \dd \nu_t(\om) \, = \, g(z)  \;.
\enq
Set
\beq
F(z) \, = \, \f{ 1 }{ r(z) } \Int{ \Om_t }{}  \f{ f(\om) \dd \nu_t(\om) }{ 2(\om-z)  } \;.
\label{definition F}
\enq
 $F \in \mc{O}(\Cx \setminus \Om_t)$ and admits $\pm$ continuous boundary values on $\e{Int}( \Om_t )$ and at most square root singularities at the endpoints of $\Om_t$.
Furthermore, it behaves as $\e{O}(z^{-2})$ as $z \rightarrow \infty$ and its boundary values satisfy $F_{+}(z) - F_-(z)=-g(z)/r_+(z)$ for $z \in \e{Int}(\Om_t)$.
Hence,
\beq
F(z) \, = \, \Int{ \Om_t}{}  \f{  - g(s) }{ r_+(s) (s-z) } \cdot \f{ \dd s }{ 2\i\pi } \;.
\label{ecriture F comme sol RHP}
\enq
On the one hand, explicit calculations based on \eqref{definition F} yield $F_+(z)+F_-(z)=\i\pi f(z) P_t(z)$ for $z \in \e{Int}(\Om_t)$. On the other hand, one infers from
\eqref{ecriture F comme sol RHP} that given $z \in \e{Int}(\Om_t)$,
\beq
F_+(z)+F_-(z)= \Fint{ \Om_t}{} \f{  - g(s) }{ r_+(s) (s-z) } \cdot \f{ \dd s }{  \i\pi } \; = \; \Oint{ \Ga(\Om_t) }{} \f{    g(s) }{ r (s) (s-z) } \cdot \f{ \dd s }{  2 \i\pi }  \;.
\enq
Thus, we get that
\bem
f(z) \, = \, \f{1}{P_t(z)} \Oint{ \Ga(\Om_t) }{} \f{ g(s)  }{ r(s) (s-z) } \cdot \f{ \dd s }{ -2 \pi^2 }
\, = \, \f{1}{P_t(z)} \Oint{ \Ga(\Om_t) }{} \f{ g(s) - g(z)  }{ r(s) (s-z) } \cdot \f{ \dd s }{ -2 \pi^2 } \\
\, = \, \f{1}{P_t(z)} \Int{\Om_t}{} \f{ g(s)-g(z) }{r_+(s) (s-z) } \cdot \f{ \dd s }{ \pi^2 } \;.
\label{ecriture expression preimage de g}
\end{multline}
This entails \eqref{expression pour complex master inverse} for $z \in \e{Int}(\Om_t)$. However, by analytic continuation, this formula extends $f$ holomorphically to $\mc{N}_t$.
Moreover, one may check that this extension does coincide with the formula for $f$ that one can obtain upon trivially inverting $\De_{\Om_t}[f] (z) \,=\, g(z)$
for $z   \in \mc{N}_t\setminus \Om_t $. For that it is enough to observe that for such $z$s, one has 
\beq
  W_t(z) \, - \, \Int{ \Om_t}{} \f{ \dd \nu_t(\om)  }{ z-\om }  \; = \; \i\pi r(z) P_t(z) \;.
\label{ecriture expression terme mutliplicatif sur f op master cplx}
\enq

One should note that for the sake of consistency with the asymptotics of $F(z)$, the leading order term in the $z \rightarrow \infty$ expansion of $F$ as given by \eqref{ecriture F comme sol RHP}
should vanish, which implies that $\mc{K}_{\Om_t}[g]=0$. Thus, $ \e{Im}\big[ \De_{\Om_t} \big]\subset \ker[\mc{K}_{\Om_t} ]  $. Reciprocally, if $g\in \ker[\mc{K}_{\Om_t} ]$,
a direct calculation shows that given $f$ as in \eqref{ecriture expression preimage de g}, it holds $\De_{\Om_t}[f]=g$ on $\mc{N}$ and thus
 $\ker[\mc{K}_{\Om_t} ] \subset \e{Im}\big[ \De_{\Om_t} \big]$. Finally, the continuity bounds for $\De_{\Om_t}^{-1}$ are readily obtained by direct estimation.

\end{proof}

\subsection{The real master operator}

We now turn to considering the \textit{real master operator}  $\Xi_{\eta_t} : \mc{W}^{\infty}_{k}( \mc{J}^{\prime} ) \longrightarrow \mc{W}^{\infty}_{k-1}( \mc{J}^{\prime}  )$.
We need to introduce a few auxiliary objects to define it.

First of all, we consider diffeomorphisms $\eta_t: \mc{J}^{\prime} \rightarrow \eta_t( \mc{J}^{\prime} )$
with $\mc{J}^{\prime}$ being a fixed segment $\intff{a^{\prime}}{b^{\prime}}$ and single out a subinterval $\mc{J}=\intff{a}{b}$
of $\mc{J}^{\prime}$, \textit{viz}. $\mc{J} \subset\mc{J}^{\prime}$. In principle, $\mc{J}$ may coincide with $\mc{J}^{\prime}$
We shall work under the assumption that there exist $t$-independent constants $ C_{\eta} > 0$ and $\{ C_p \}_{ p \geq 0 }$
\beq
\norm{\eta_t}_{  \mc{W}^{\infty}_{k}( \mc{J}^{\prime}   )} \, \leq \, C_p \qquad \e{and} \qquad
C_{\eta} \, \leq \, \Big| \f{ \eta_t(x) - \eta_t(y)}{ x - y}   \Big| \, \leq \, C_{\eta}^{-1} \;\; \forall x,y \in \mc{J}^{\prime} \;.
\enq
If $\mc{J}$ is a strict subinterval of $\mc{J}^{\prime}$, \textit{i.e.} $\mc{J} \subset \e{Int}\big( \mc{J}^{\prime} \big)$,
then we shall assume that $\eta_t$ is analytic on some open, $t$-independent, neighbourhood $\msc{N}$ of $\mc{J}^{\prime}$.

The real master operator is defined in terms of a finite  measure
\beq
\xi_t \in \mc{M}( \mc{J} ) \qquad \e{with} \qquad  \dd \xi_{t}(z) \; = \; \mf{s}_+(x)\, \mf{S}_t(x) \cdot \dd x
\quad \e{and} \quad  \mf{s}(z) \,= \,   (z-a)^{\f{1}{2}}(z-b)^{\f{1}{2}}  \;.
\enq
Above, the square roots are defined in terms of the principal branch of the logarithm so that $\mf{s}$ has cuts across $\intff{a}{b}$.
$\mf{S}_t$ is a holomorphic function in a fixed, $t$-independent open neighbourhood $\mf{N}$ of $\mc{J}^{\prime}$.
It is non vanishing on the latter and $\mf{S}_t(x) \in -\i\R^+$ for $x \in \mc{J}^{\prime}$ and satisfies the bound
$\norm{ \mf{S}_t }_{ L^{\infty}(\mf{N}) }, < \, C$ uniformly in $t\in \intff{0}{1}$.

Finally, we define an auxiliary potential for $x \in \e{Int}(\mc{J})$
\beq
\msc{W}_t\big( \eta_t(x) \big) \,  = \, \Fint{ \mc{J} }{} \f{ \dd \xi_t (y)  }{  \eta_t(x) \, - \, \eta_t(y) }
\, = \,  \Oint{ \Ga(\mc{J}) }{} \f{ \mf{r}(y) \mf{S}_t(y) \dd y   }{ -2\big( \eta_t(x) \, - \, \eta_t(y) \big) }
\enq
The second integral representation provides an analytic continuation of $\msc{W}_t$ to some open neighbourhood
of $\mc{J}$, containing $\mc{J}^{\prime}$ in particular. It makes only sense when $\mc{J}$ is a strict subinterval of $\mc{J}^{\prime}$.

We are finally in position to define the real master operator $\Xi_{\eta_t} \, : \, \mc{W}^{\infty}_k(\mc{J})  \rightarrow   \mc{W}^{\infty}_{k-1}(\mc{J})$:
\beq
\Xi_{\eta_t}[f] (x) = \Re \Big( \msc{W}_t(\eta_t(x)) \eta^\prime_t(x) \Big) f(x)  \, - \,
 \Int{0}{1}  \Re \bigg\{  \frac{\eta^\prime_t(x) f(x) - \eta^\prime_t(y) f(y)}{\eta_t(x) - \eta_t(y)}\bigg\}   \dd \xi_{t}(y) \;.
\label{definition real master operator}
\enq

\begin{proposition}\label{realmasterinverse}

 There exists a continuous linear functional $\mc{K}_{\eta_t}: \mc{W}^{\infty}_k(\mc{J}) \rightarrow \Cx$,
satisfying
\beq
 \mc{K}_{\eta_t} [1] = 1 \qquad and  \qquad
 \big| \mc{K}_{\eta_t} [f] \big| \, \leq \, C\norm{ f }_{ \mc{W}^{\infty}_k(\mc{J}) }
\enq
for some $C>0$ independent of $t \in \intff{0}{1}$,  such that the image of
the real master operator is given by $ \e{Im}\big[ \Xi_{\eta_t} \big] \, = \, \e{ker}\big[\mc{K}_{\eta_t} \big]$.
Moreover, $\Xi_{\eta_t} \, : \, \mc{W}^{\infty}_k(\mc{J})  \rightarrow \e{ker}\big[\mc{K}_{\eta_t} \big]$ is invertible and
there exists $C^{\prime}>0$ and independent of $t \in \intff{0}{1}$ such that
\beq
\norm{ \Xi_{\eta_t}^{-1}[g] }_{ \mc{W}^{\infty}_k(\mc{J})  } \; \leq  \;  C^{\prime} \cdot \norm{ g }_{ \mc{W}^{\infty}_{k-1}(\mc{J})  }
\enq

\end{proposition}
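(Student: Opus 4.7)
The plan is to reduce the inversion of the real master operator to the complex master operator inversion of Proposition \ref{Proposition inversion complex master operator}, by lifting the real-valued problem on $\mc{J}$ to a Riemann--Hilbert problem on the curve $\Om_t := \eta_t(\mc{J})$. First I would set $\nu_t := (\eta_t)_\# \xi_t$, whose density with respect to the complex differential on $\Om_t$ is of the form $r_+(w) P_t(w)$ for an appropriate non-vanishing holomorphic $P_t$ (following the argument of Lemma \ref{Lemma representation poiur sqrt de Rt}), and to a real $f \in \mc{W}^{\infty}_{k}(\mc{J})$ I would associate the complex function $\wt{f}(w) := \eta^\prime_t(\eta_t^{-1}(w)) f(\eta_t^{-1}(w))$ on $\Om_t$. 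A direct algebraic manipulation, using the identity $W_t(z) F(z) - \int (F(z)-F(w))/(z-w)\, \dd \nu_t(w) = \Fint F(w)\,\dd \nu_t(w)/(z-w)$, then yields the key reformulation
\beq
\Xi_{\eta_t}[f](x) \, = \, \Re \Fint_{\Om_t} \f{ \wt{f}(w) \, \dd \nu_t(w) }{ \eta_t(x) - w } \;,
\nonumber
\enq
so that $\Xi_{\eta_t}[f] = g$ becomes the requirement that a specific principal-value Cauchy transform along $\Om_t$ have prescribed real part $g \circ \eta_t^{-1}$.

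Next, I would mirror the Riemann--Hilbert argument used in the proof of Proposition \ref{Proposition inversion complex master operator}: introduce
\beq
F(z) \, := \,  \f{ 1 }{ r(z) } \Int{ \Om_t }{} \f{ \wt{f}(w) \, \dd \nu_t(w)  }{ 2(w-z)  }, \qquad z \in \Cx \setminus \Om_t,
\nonumber
\enq
which is $\e{O}(z^{-2})$ at infinity and has at most integrable square-root singularities at the endpoints of $\Om_t$. The Plemelj formulae provide two algebraic relations on $\e{Int}(\Om_t)$: one from the jump $F_+ - F_-$ which extracts $\wt{f}(u) P_t(u)$, and one from the symmetric sum $F_+ + F_-$ which, after taking real parts, is prescribed by the master-operator equation to equal $g \circ \eta_t^{-1}$. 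Combining these two relations point-wise on $\Om_t$ and solving back for $\wt{f}$ would yield an explicit inversion formula expressing $f$ as a double Cauchy-type integral of $g$, weighted by $1/r_+$ and $1/P_t$, wholly analogous to formula \eqref{expression pour complex master inverse}.

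The compatibility condition arises from matching the $\e{O}(z^{-2})$ decay of $F$ at infinity with the algebraic solution: it imposes a single real-linear constraint on $g$ which, normalised so that $\mc{K}_{\eta_t}[1] = 1$, is precisely the functional $\mc{K}_{\eta_t}$. Continuity of $\mc{K}_{\eta_t}$ on $\mc{W}^{\infty}_{k-1}(\mc{J})$ would follow from the integrability of $1/\mf{s}_+$ near the endpoints together with the uniform-in-$t$ lower bounds on $|\mf{S}_t|$ and $|\eta_t^{\prime}|$ guaranteed by the hypotheses. For the $\mc{W}^{\infty}_{k}$ bound on $\Xi_{\eta_t}^{-1}[g]$, the key mechanism is the same subtraction $(g(s)-g(z))$ that regularises the Cauchy kernel in \eqref{expression pour complex master inverse}; combined with the assumed analyticity of $\eta_t$ on the open neighbourhood $\msc{N}$ (when $\mc{J}\subsetneq \mc{J}^{\prime}$) or with careful analysis of the square-root endpoint behaviour (when $\mc{J}=\mc{J}^{\prime}$), one can differentiate under the integral and apply Cauchy's formula to transfer derivatives onto smooth prefactors, yielding a $t$-independent constant.

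The main obstacle will be coordinating the ``$\Re$'' in $\Xi_{\eta_t}$ with the complex-analytic Plemelj machinery: only the symmetric sum of the boundary values of $F$ is pinned down by $g$, whereas the antisymmetric combination encodes the unknown $\wt{f}$ itself. Carefully verifying that the two-equation algebraic system produced by the reformulation uniquely determines $\wt{f}$ modulo the single scalar obstruction defining $\mc{K}_{\eta_t}$, and that the resulting $\wt{f}$ is indeed of the form $\eta_t^{\prime}\cdot f \circ \eta_t^{-1}$ with $f$ real-valued, is the technical heart of the argument. This relies on the specific structure of the problem---namely that $\mf{S}_t \in -\i \R^+$ and $\mf{s}_+$ is pure imaginary on $\e{Int}(\mc{J})$, so that $\dd \xi_t$ is a positive real measure---to ensure that the reality constraints on $f$ and $g$ propagate consistently through the Riemann--Hilbert solution.
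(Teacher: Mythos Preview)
Your approach diverges substantially from the paper's, and the obstacle you flag at the end is in fact a genuine gap that your sketch does not close.

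The paper does \emph{not} lift the problem to the curve $\Omega_t = \eta_t(\mc{J})$. Instead it decomposes on the real interval $\mc{J}$ itself:
\[
\Xi_{\eta_t}[f](x) \;=\; \De_{\mc{J}}[f](x) \;+\; \Int{\mc{J}}{} \tau_t(x,y)\, f(y)\, \dd y,
\qquad
\tau_t(x,y) \;=\; \Re\!\Big(\tfrac{\eta_t'(y)}{\eta_t(x)-\eta_t(y)} - \tfrac{1}{x-y}\Big)\,\mf{s}_+(y)\,\mf{S}_t(y),
\]
where $\De_{\mc{J}}$ is the complex master operator of Proposition~\ref{Proposition inversion complex master operator} for the \emph{straight} segment $\mc{J}$ with the real measure $\xi_t$. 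The kernel $\tau_t$ is smooth (the singularity at $x=y$ cancels), so after composing with $\De_{\mc{J}}^{-1}$ one obtains $\mathsf{1}+\mathsf{L}_t$ with $\mathsf{L}_t$ trace class. Invertibility is then a Fredholm alternative: if $(\mathsf{1}+\mathsf{L}_t)f=0$ then $\Xi_{\eta_t}[f]$ is constant, and integrating by parts against the positive-definite logarithmic form (Lemma~\ref{logarithmicdistance}) forces $f=0$. The inverse is $(\mathsf{1}-\mathsf{R}_t)\circ\De_{\mc{J}}^{-1}$ with $\mathsf{R}_t$ given by the Fredholm resolvent series, and $\mc{K}_{\eta_t}$ is built from $\mc{K}_{\mc{J}}$ and $\mathsf{R}_t$. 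Uniform-in-$t$ bounds follow from the smooth dependence of $\tau_t$ on $t$.

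Your route, by contrast, leads to a boundary value problem on $\Omega_t$ in which only $\Re\big(r_+(F_+-F_-)\big)$ is prescribed by $g$, while the constraint that $f$ be real translates into a reality condition on $(F_++F_-)/(\eta_t' P_t)$. This is \emph{not} a scalar Riemann--Hilbert problem: it mixes the jump and the symmetric sum of boundary values and is closer to a Hilbert boundary value problem on a non-symmetric curve. Such problems do not reduce to the Plemelj machinery of Proposition~\ref{Proposition inversion complex master operator}; solving them typically requires either a $2\times 2$ matrix RHP or a conformal reduction, neither of which you set up. Your proposal asserts that ``the two-equation algebraic system \ldots uniquely determines $\wt f$'' but offers no mechanism for actually constructing $F$ from these mixed conditions, nor for extracting $\mc{W}^\infty_k$ bounds uniform in $t$. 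Moreover, in the case $\mc{J}=\mc{J}'$ the map $\eta_t$ is only assumed $\mc{C}^\infty$, so the pushforward density on $\Omega_t$ need not be the boundary value of anything holomorphic, and the very first step of your reduction (writing $\dd\nu_t = r_+ P_t\,\dd w$ with $P_t$ holomorphic) already fails.

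The paper's compact-perturbation argument sidesteps all of this: by isolating the curvature of $\eta_t$ into the smooth kernel $\tau_t$, the ``$\Re$'' disappears from the principal part and one is left with the explicitly invertible $\De_{\mc{J}}$ plus Fredholm theory.
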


\begin{proof}

It is convenient to recast the operator in a slightly different form that allows one to study its inversion more efficiently.
For that, we observe that for $x \not\in \mc{J}$, $x \in \e{Int}\big(  \Ga(\mc{J}) \big)$, by deforming the integration in the second term from $\mc{J}$
to $\Ga(\mc{J})$
\beq
\msc{W}_t\big( \eta_t(x) \big)  \, - \, \Int{ \mc{J} }{} \f{ \dd \xi_t (y)  }{  \eta_t(x) \, - \, \eta_t(y) } \, = \, \i\pi  \f{ \mf{r}(x) \mf{S}_t(x) }{ \eta_t^{\prime}(x)} \;.
\label{ecriture relation poiur fct E en dehors de J}
\enq
Hence,
\beq
\Xi_{\eta_t}[f] (x) \,  = \, \De_{\mc{J}}[f](x) \, + \, \Int{\mc{J}}{} \tau_t(x,y) f(y) \dd y
\enq
in which
$$\tau_t(x,y)  \, = \, \bigg\{ \Re \Big( \frac{ \eta^\prime_t(y) }{ \eta_t(x) - \eta_t(y) } \, - \,  \frac{ 1 }{ x - y }\Big) \bigg\} \cdot  \mf{r}_{+}(y) \mf{S}_t(y) \, . $$
where $\De_{\mc{J}}$ is the complex master operator studied in the previous sub-section and associated to the $t$-dependent measure $ \xi_t$
on the $t$-independent curve $\mc{J}$.

Observe that if $g \in \e{Im}\big[ \Xi_{\eta_t} \big]$, then
\beq
x \mapsto    g(x) \, - \,   \Int{\mc{J}}{} \tau_t(x,y) f(y) \dd y \in \e{Im}\big[ \De_{\mc{J}} \big]
\enq
and thus this function belongs to $\e{ker}\big[ \mc{K}_{\mc{J}}\big]$. As a consequences, there exists $\wt{g}$ such that
\beq
\De_{\mc{J}}[f](x) \, = \, \wt{g}(x) \, - \, \mc{K}_{\mc{J}}[ \wt{g} ] \, - \,  \Int{\mc{J}}{} \wt{\tau}_t(x,y) f(y) \dd y  \quad \e{with} \quad
\wt{\tau}_t(x,y) \, = \, \tau_t(x,y) \, - \, \mc{K}_{\mc{J}}[ \tau_t(*,y) ] \;.
\enq
This leads us to introduce the integral operator
\beq
\mathsf{T}_t : \mc{W}_k^{\infty}(\mc{J}^{\prime}) \longrightarrow  \ker_{ \mc{W}_k^{\infty}(\mc{J}^{\prime})   }  \big[ \mc{K}_{\mc{J}}  \big]  \,
\quad \e{with} \quad \mathsf{T}_t[f](x) \, = \,  \Int{\mc{J}}{} \wt{\tau}_t(x,y) f(y) \dd y \;.
\enq
Hence, acting with $\De_{\mc{J}}^{-1}$ yields
\beq
\big( \mathsf{1} \,  + \, \mathsf{L})[f]\big) \, = \, \De_{\mc{J}}^{-1}\big[  \wt{g}(x) \, - \, \mc{K}_{\mc{J}}[ \wt{g} ]  \, \big]
\qquad \e{with} \qquad
\mathsf{L}_t \, = \,  \De_{\mc{J}}^{-1}\circ \mathsf{T}_t : \mc{W}_k^{\infty}(\mc{J}^{\prime})  \longrightarrow \mc{C}^{\infty}(\mc{J}^{\prime}) \;.
\enq
Note that  $\mathsf{L}_t[f](x)\, = \, \Int{\mc{J} }{} \la_t(x,y) f(y) \dd y$ has its integral kernel given explicitly by
\beq
\la_t(x,y) \, = \, \f{ \mf{S}_t(y)  \mf{r}_{+}(y) }{ \mf{S}_t(x) } \Int{ \mc{J} }{} \f{ \tau_t(s,y)-\tau_t(x,y)  }{ \mf{r}_+(s) (s-x) } \cdot \f{ \dd s }{ \pi^2 } \, =  \, \check{\la}_t(x,y) r_+(y) \;.
\enq
There, $\check{\la}$ is smooth on $\mc{J}^{\prime} \times \mc{J}^{\prime}$ and its $\mc{W}^{\infty}_k$ norms may all be controlled uniformly in $t\in \intff{0}{1}$
It thus follows by a theorem of Gonzalez-Barrios and Dudley \cite{metricentropy} that the integral operator $\mathsf{L}_t : L^2\big( \mc{J}^{\prime} \big) \rightarrow L^2\big( \mc{J}^{\prime} \big) $
 with integral given by $\la_t$ is trace class.

The last observation allows us to invert $\mathsf{1} + \mathsf{L}_t$ in $L^2\big( \mc{J}^{\prime} \big) $ and then use the resulting formula to define an inverse on $\mc{W}^{\infty}_k\big( \mc{J}^{\prime} \big)$.
Because $\mathsf{L}_t$ is trace class, the Fredholm determinant $\det(\mathsf{1} + \mathsf{L}_t)$ exists
and will vanish only if there is an $f \in L^2\big( \mc{J}^{\prime} \big)  \setminus \{ 0 \}$ such that $(\mathsf{1} + \mathsf{L}_t)[f] = 0$.
Thus, assume such an $f$ exists. Because $\mathsf{L}_t$ sends $L^2\big( \mc{J}^{\prime} \big)$ functions to $\mc{C}^\infty ( \mc{J}^{\prime})$ functions, we have
$f = -\mathsf{L}_tf \in \mc{C}^\infty ( \mc{J}^{\prime})$. At this stage, we may just apply $\De_{\mc{J}}$ to both sides
which leads to  the equation $\Xi_{\eta_t}[f] = C_t$ where $C_t$ is a constant depending on $f$ and possibly on $t\in \intff{0}{1}$.
Integrating the \textit{lhs} of the equation by parts we find that for $x \in \e{Int}(\mc{J})$
\begin{align*}
\Int{\mc{J}}{} \ln|\eta_t(x) - \eta_t(y)|   G_t(y)  \, \dd y = C   \qquad \e{with} \qquad
   G_t(y)  \, =\, \big( f \mf{r}_+\mf{S}_t \big)^{\prime}(y)\, .
\end{align*}
Note that by by continuity, the equation also holds up to the endpoints, \textit{i.e.} for $x \in \mc{J}$.

Integrating both sides with respect to $G_t$ over the interval $\mc{J}$ we find
\begin{align*}
\Int{ (\eta_t(\mc{J}))^2 }{} \hspace{-3mm} \ln|z-w| \, \dd \zeta_t(z) \otimes \dd \zeta_t(w) = 0
\end{align*}
where $\dd \zeta_t$ is the pushforward of the signed measure $G_t(y)\dd y$ on $\mc{J}$ to the curve $\eta_t(\mc{J})$ under the map $\eta_t$.
However, by Proposition \ref{logdist}, this implies $\dd \zeta_t \equiv 0$ and so $ f(y) \mf{r}_+(y) \mf{S}_t(y)$ is constant on $\mc{J}$. Since this function vanishes
on $\Dp{}\mc{J}$, we infer that $f(y) = 0 \, \, \forall y \in \e{Int}(\mc{J})$ and thus on $\mc{J}$ by continuity.
This vanishing then implies that the constant $C_t$ also vanishes.  This, in turn, implies by \eqref{ecriture relation poiur fct E en dehors de J}
that $f(x) \i \pi \mf{r}(x) \mf{S}_t(x)(x) = 0$ for all $x \in \mc{J}^{\prime}\setminus \mc{J}$,  so that $f = 0$ on $\mc{J}$.
Therefore, $\det(\mathsf{1} + \mathsf{L}_t)\not=0$.

The inverse of $\mathsf{1}+\mathsf{L}_t$ may be written as  $\mathsf{1}-\mathsf{R}_t$,
where, according to Fredholm theory, $\mathsf{R}_t$ is an integral operator with kernel $R_t$ given by
\begin{align*}
R_t(x,y) = \frac{1}{ \det\big(\mathsf{1}+\mathsf{L}_t \big) } \sul{n \geq 0}{} \frac{1}{n!} \Int{\mc{J}}{}
\det\left( \begin{matrix}
\lambda(x,y) & \lambda(x,u_k) \\
\lambda(u_j,y) & \lambda(u_j, u_k)
\end{matrix} \right)_{j,k =1, \dots, n} \hspace{-6mm} \dd \mathbf{u} \;.
\end{align*}
The above series converges absolutely and uniformly by Hadamard's inequality.
Moreover, the integral equation satisfied by the resolvent kernel ensures that the latter is of the form $R_{t}(x,y) \, = \, \check{R}_{t}(x,y) \mf{r}_+(y)$
with $\check{R}_{t}$ smooth on $\mc{J}^{\prime}\times \mc{J}^{\prime}$, this uniformly in $t \in \intff{0}{1}$.
The above implies that $\| \mathsf{R}_t f \|_{ \mc{W}^{\infty}_k(\mc{J}^{\prime})} \, \leq \, C_j \| f \|_{\mc{W}^{\infty}_0(\mc{J}^{\prime})}$
for some constant $C_j>0$. Hence, we have just proven that any $g \in \e{Im}\big(\Xi_{\eta_t}\big)$ is of the form
$g  = \wt{g} \, - \, \mc{K}_{\eta_t} [\wt{g} ]$ with
\beq
\mc{K}_{\eta_t}[g] \overset{\mathrm{def}}{=} \mc{K}_{\mc{J}}[g] \,  -\,  \mc{K}_{\mc{J}}\Big[ \Int{a}{b} \tau_t(\cdot ,y)
\big\{ (1- \mathsf{R}_t) \circ \De_{\mc{J}}^{-1}\big[ g - \mc{K}_{\mc{J}}[g] \big]\big\}(y) \, \dd y \Big]
\label{definition form lineaire K3}
\enq
and  the solution $f$ to $\Xi_{\eta_t}[f]=g$ with $g \in \e{Im}\big(\Xi_{\eta_t}\big)$
reads
\beq
f \, = \,  (\mathsf{1}- \mathsf{R}_t) \circ \De_{\mc{J}}^{-1}\big[ g - \mc{K}_{\mc{J}}[g] \big] \, .
\label{ecriture explicite inverse real master operator}
\enq
 On then readily infers the continuity bound.

\end{proof}

\newpage

\section{Construction of the contour $\Gamma_{\e{eq}}$ of  Proposition \ref{contourexistence}}\label{curvecontruction}

In this section we prove Proposition \ref{contourexistence}. We shall start by considering the contour provided given by Theorem \ref{existence} of da Silva and Kuijlaars and
build on the hypothesis of one-cut regularity, \textit{c.f.} Definition \ref{regular}, so as to "bootstrap" it a more regular curve.
We shall proceed in two stages: we first construct a piecewise analytic curve that has all the properties we want, save for being $\mc{C}^\infty$.
Then we will "round the corners" to achieve a $\mc{C}^\infty$ curve.
\begin{lemma}[p. 7 of \cite{lirias1643667}]\label{selfintersection}
The support of the equilibrium measure of a contour solving the max-min energy problem (\ref{maxmin}) has no self-intersection points.
\end{lemma}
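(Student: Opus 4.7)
The plan is to proceed by contradiction: assume there exists $z_0 \in \mathrm{supp}\, \mu_{\Ga_{\e{eq}}}$ at which two distinct analytic arc-pieces of the support cross transversally. The starting observation comes from Theorem \ref{existence}: $\mathrm{supp}\, \mu_{\Ga_{\e{eq}}}$ is a finite union of critical trajectories of the quadratic differential $-R(z)\, \dd z^2$, with $R$ a monic polynomial of degree $2\kappa-2$. Away from zeros of $R$ the horizontal trajectories of $-R(z)\, \dd z^2$ are smooth simple arcs with a unique tangent direction, so the self-intersection must occur at a zero $z_0$ of $R$. By the local normal form for quadratic differentials, a zero of $R$ of order $m$ has exactly $m+2$ critical horizontal trajectories emanating from it at equal angles; a transverse self-intersection therefore forces $m\geq 2$.

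Next, I would combine the $S$-property \eqref{Scurve} with the Frostman conditions \eqref{Frostman1}--\eqref{Frostman2} to produce an explicit local deformation of $\Ga_{\e{eq}}$ near $z_0$. Concretely, in a small disk $\op{D}_{z_0,\eta}$, the support splits into $m+2$ arcs partitioning $\op{D}_{z_0,\eta}$ into sectors; these sectors can be "reconnected" in two topologically distinct ways that rewire the incoming arcs without crossing and without altering the behaviour of the contour outside $\op{D}_{z_0,\eta}$. Each reconnection yields a curve $\wt{\Ga} \in \mc{T}(\a,\a^\prime)$, admissible in the sense of Definition \ref{admissible}.

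The contradiction is then obtained from the max-min property: a perturbative analysis of the equilibrium energy under such reconnections, using the conformal structure dictated by $\sqrt{R(z)} \, \dd z$, shows that at least one of the two reconnections produces a contour $\wt{\Ga}$ for which $\op{I}_{\wt{\Ga}}[\mu_{\wt{\Ga}}] > \op{I}_{\Ga_{\e{eq}}}[\mu_{\Ga_{\e{eq}}}]$, contradicting that $\Ga_{\e{eq}}$ realises the supremum in \eqref{maxmin}. The key input here is that at a higher-order zero the two possible reconnections have effective potentials that differ in sign on the swept-out region, so maximality forces the one with non-negative effective potential and eliminates the crossing configuration.

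The main obstacle is exactly this last step: carrying out the perturbative comparison of the equilibrium energies for the two reconnections, which requires the local analysis of critical trajectories at a multiple zero together with a Hadamard-type first-variation formula for $\mu_{\wt{\Ga}}$. Since the argument is precisely the one given on p.\ 7 of \cite{lirias1643667} in the more general setting of non-crossing partitions, I would not reproduce the calculation but rather invoke their result, noting that the hypothesis on $V$ being one-cut regular (Definition \ref{regular}) guarantees that the polynomial $R$ has only simple zeros on $\Ga_{\e{eq}}$ at the endpoints of the support, so that in our setting the lemma is in fact a direct consequence of one-cut regularity.
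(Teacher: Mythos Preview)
Your approach is unnecessarily elaborate and, as you yourself note, leaves the crucial perturbative energy comparison incomplete. The paper's argument is far simpler and entirely elementary: if $\mathrm{supp}\,\mu_{\Ga_{\e{eq}}}$ has a self-intersection, it contains a closed loop $\msc{C}$. Removing $\msc{C}$ from $\Ga_{\e{eq}}$ yields a new admissible contour $\Ga_s = \overline{\Ga_{\e{eq}}\setminus\msc{C}} \in \mc{T}(\a,\a')$. Since $\mu_{\Ga_{\e{eq}}}[\msc{C}]>0$, the equilibrium measure $\mu_{\Ga_s}$ on $\Ga_s$ differs from $\mu_{\Ga_{\e{eq}}}$; but $\mu_{\Ga_s}$ is still a probability measure on $\Ga_{\e{eq}}$, so by uniqueness of the minimiser (Lemma~\ref{goodrate}) one has $\op{I}_{\Ga_{\e{eq}}}[\mu_{\Ga_s}] > \op{I}_{\Ga_{\e{eq}}}[\mu_{\Ga_{\e{eq}}}]$. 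Hence $\Ga_s$ has strictly higher infimum energy than $\Ga_{\e{eq}}$, contradicting the max-min property.

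The key insight you are missing is that no local structure of $R$ or of critical trajectories is needed: shrinking the contour can only raise the infimum energy (you minimise over fewer measures), and the rise is strict whenever the removed piece carried equilibrium mass. Your reconnection-and-perturbation scheme attempts to produce the same strict inequality through a first-variation argument at a multiple zero of $R$, which is both harder and, in your write-up, left to a citation.

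Your final remark --- that one-cut regularity makes the lemma immediate --- is true for the paper's downstream application, but the lemma is stated and proved in full generality without that hypothesis, and the paper's three-line argument requires neither one-cut regularity nor any information about $R$.
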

\begin{proof}
We proceed by contradiction. Let $\Gamma_{\e{eq}} \in \mathcal{T}(\alpha,\alpha^\prime)$ be a curve solving the max-min energy problem, let $\mu_{\Ga_{\e{eq}}}$ be its equilibrium measure,
and assume $\mathrm{supp} \, \mu_{\Ga_{\e{eq}}}$ has a closed loop $\msc{C}$.  Let $\Gamma_s = \overline{ \Gamma_{\e{eq}} \setminus \msc{C} } \in \mathcal{T}(\alpha,\alpha^\prime)$ be the curve obtained
from $\Gamma_{\e{eq}}$ by removing $\msc{C}$, and let $\mu_{\Ga_s}$ be the equilibrium measure on $\Gamma_s$. Since $\mu_{\Ga_{\e{eq}}}\big[ \msc{C} \big] > 0$,
we have $\mu_{\Ga_s} \neq \mu_{\Ga_{\e{eq}}}$. Since $\mathrm{supp}\, \mu_{\Ga_s} \subset \Gamma_{\e{eq}}$,
by uniqueness of the equilibrium measure ensured by Lemma \ref{goodrate}, $\op{I}_{\Ga_{\e{eq}}}[\mu_{\Ga_s}] > \op{I}_{\Ga_{\e{eq}}}[\mu_{\Ga_{\e{eq}}} ]$.
But this contradicts $\Gamma_{\e{eq}}$ being a solution of the max-min energy problem \eqref{maxmin}.
\end{proof}
We first start by defining the admissible contour $\Ga_{\e{eq}}$ of our interest in the vicinity of the support of $\mu_{\Ga_{\e{eq}}}$
by means of the curve $\gamma$ \eqref{gammadef} analytically extended   to $\intff{-\epsilon}{1+\epsilon}$, \textit{c.f.} Lemma \ref{Lemma extension analytique courve gamma}.
We remind that $\eps$ is chosen so that there also exists an $\epsilon^\prime > \epsilon > 0$ such that $\gamma$ admits an analytic extension to
$\intff{-\epsilon^\prime}{1+\epsilon^\prime}$. By \eqref{effectivepositive}, $\varphi_{\Ga_{\e{eq}};\mathrm{eff}}(z) > 0$
for all $z \in \gamma(\intff{-\epsilon^\prime}{1+\epsilon^\prime} \setminus \intff{0}{1})$.
Hence in particular $\varphi_{\Ga_{\e{eq}};\mathrm{eff}}(\gamma(-\epsilon^\prime)) = \varphi_{\Ga_{\e{eq}};\mathrm{eff}}(\gamma(1+\epsilon^\prime)) > 0$.
Here, we stress that this last equality follows from the very specific way in which $\ga$ was built in the neighbourhood of $\intff{0}{1}$.

\vspace{2mm}

Recall that $\ex{\frac{2\pi \i \alpha}{\kappa}}$ and $\ex{\frac{2\pi \i \alpha^\prime}{\kappa}}$ are respectively the incoming and outgoing directions of the contours
in $\mathcal{T}(\alpha,\alpha^\prime)$ (recall Definition \ref{admissible}).
Consider the (oriented) rays
$$ L_{\alpha} \, = \,  \ex{\frac{2\pi \i \alpha}{\kappa}} \intof{+\infty}{R} \qquad \e{and} \qquad  L_{\alpha^\prime} \, =\,  \ex{ \frac{2\pi \i \alpha^\prime}{\kappa}} \intfo{R}{+\infty}\;, $$
where $R > 0$ is chosen sufficiently large so that $\varphi_{\Ga_{\e{eq}};\mathrm{eff}}(z) > 0$ for all $z \in L_{\alpha} \cup L_{\alpha^\prime}$. This is possible because
$\varphi_{\Ga_{\e{eq}};\mathrm{eff}}(z) \sim \frac{1}{\kappa} |z|^\kappa$ as $|z| \to \infty$ along these rays.
\begin{proposition}\label{samecomponent}
Let
$$\Lambda = \{ z \in \mathbb{C} \, : \, \varphi_{\Ga_{\e{eq}};\mathrm{eff}}(z) > 0 \}\, .$$
If the potential $V$ is one-cut regular then the points
$R \, \ex{\frac{2\pi \i \alpha}{\kappa}}$ and $\gamma(-\epsilon^\prime)$ are connected by a continuous path contained in $\Lambda$,
and similarly $R \, \ex{\frac{2\pi \i \alpha^\prime}{\kappa}}$ and $\gamma(1+\epsilon^\prime)$ are connected by a continuous path contained in $\Lambda$.
\end{proposition}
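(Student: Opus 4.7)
The plan is to use any admissible max--min contour $\widetilde{\Gamma}$ provided by Theorem \ref{existence} as a guide, modify its off-support portion to coincide with $\gamma(\intff{-\eps^\prime}{0})$ near $\zeta_1$, and finally shift the remainder transversely so as to move from $\overline{\Lambda}$ into $\Lambda$. I focus on the case of $\gamma(-\eps^\prime)$ and $L_\alpha$; the case of $\gamma(1+\eps^\prime)$ and $L_{\alpha^\prime}$ is entirely symmetric under the exchange $\zeta_1 \leftrightarrow \zeta_2$.

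The first step is the local analysis at $\zeta_1$. Combining \eqref{ecriture cptmt local Theta} with the relation $\Phi_{\Ga_{\e{eq}};\mathrm{eff}} = \i \pi \Theta$ yields $\Phi_{\Ga_{\e{eq}};\mathrm{eff}}(z) = \pi [\varrho(z-\zeta_1)]^{3/2}\bigl(1+\e{o}(1)\bigr)$ near $\zeta_1$. Writing $w = \varrho(z-\zeta_1) = r \ex{\i\theta}$, the zero set of $\varphi_{\Ga_{\e{eq}};\mathrm{eff}}$ locally consists of three analytic rays at $\theta \in \{\pm \pi/3, \pi\}$ meeting at $\zeta_1$ at $120^{\circ}$ angles, and $\varphi_{\Ga_{\e{eq}};\mathrm{eff}}$ is positive on the unique local sector $\theta \in \intoo{-\pi/3}{\pi/3}$ and negative on the other two. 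The support of $\mu_{\Ga_{\e{eq}}}$ corresponds to $\theta = \pi$, while $\gamma(\intff{-\eps^\prime}{0})$ corresponds to $\theta = 0$ (to leading order) and thus lies strictly inside the positive sector---recovering in particular that $\gamma(-\eps^\prime) \in \Lambda$ by \eqref{effectivepositive}.

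Next, I would construct the connecting path. By Corollary \ref{deformation} together with \eqref{effectivepositive}, the off-support portion of $\widetilde{\Gamma}$ can be freely deformed within $\{\varphi_{\Ga_{\e{eq}};\mathrm{eff}} \geq 0\}$; in particular one may arrange that $\widetilde{\Gamma}$ coincides with $\gamma(\intff{-\eps^\prime}{0})$ on a neighbourhood of $\zeta_1$. Denote by $\mc{P}$ the portion of the resulting contour running from $\gamma(-\eps^\prime)$ to infinity in direction $\ex{2\pi \i \alpha/\kappa}$; by \eqref{Frostman1} it lies in $\overline{\Lambda}$. One-cut regularity ensures that the zero set $Z = \{\varphi_{\Ga_{\e{eq}};\mathrm{eff}} = 0\} \setminus \mathrm{supp}\,\mu_{\Ga_{\e{eq}}}$ is a locally finite union of analytic critical trajectories of $-R(z)\,\dd z^2$ emerging from the only two zeros $\zeta_1, \zeta_2$ of $R$, along which $\nabla \varphi_{\Ga_{\e{eq}};\mathrm{eff}}$ is non-vanishing. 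Furthermore, using $\varphi_{\Ga_{\e{eq}};\mathrm{eff}}(z) = \tfrac{1}{\kappa}|z|^{\kappa}\cos(\kappa \arg z) + \e{O}(|z|^{\kappa-1})$ at infinity, $Z$ is asymptotically confined to narrow angular strips around the $2\kappa$ rays $\arg z = (2k+1)\pi/(2\kappa)$, all disjoint from the direction $2\pi\alpha/\kappa$ of $L_\alpha$. Thus $\mc{P}$ meets $Z$ only along a finite union of smooth arcs and isolated points, and a small transverse shift on the $\Lambda$-side of each such component yields a continuous path $\wt{\mc{P}} \subset \Lambda$ from $\gamma(-\eps^\prime)$ to a point on $L_\alpha$.

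The principal technical obstacle is the consistent "$\Lambda$-side" choice in the transverse shift above. Since $\overline{\Lambda} \setminus \Lambda \subseteq Z$ and $Z$ is locally a finite union of smooth arcs with non-vanishing normal gradient of $\varphi_{\Ga_{\e{eq}};\mathrm{eff}}$, at each point of $\mc{P} \cap Z$ one picks the side on which this gradient points and thus on which $\varphi_{\Ga_{\e{eq}};\mathrm{eff}} > 0$; openness of $\Lambda$ then guarantees that the resulting $\mc{C}^0$-perturbation of $\mc{P}$ remains in $\Lambda$. The most delicate point is to verify that this local prescription assembles into a single continuous curve, which amounts to showing that $\mc{P} \cap Z$ has no accumulation point in any bounded region; this follows from the locally finite nature of $Z$ off $\{\zeta_1, \zeta_2\}$ (one-cut regularity) together with the explicit asymptotic form of $\varphi_{\Ga_{\e{eq}};\mathrm{eff}}$ at infinity controlling the unbounded part.
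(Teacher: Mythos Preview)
Your approach—constructing a path in $\overline{\Lambda}$ and then pushing it transversally into $\Lambda$ using that $\nabla\varphi_{\Ga_{\e{eq}};\mathrm{eff}}\neq 0$ along it—differs from the paper's, which instead counts connected components of $\Lambda$ meeting $\widetilde{\Gamma}$: Lemmas \ref{critical1} and \ref{Lemme vanishing of R at intersection CC and sup mu eq} force at most two such components, one touching each endpoint of the support, and a separate topological argument (using unboundedness of the components of $\{\varphi_{\Ga_{\e{eq}};\mathrm{eff}}<0\}$ together with the max-min property of $\widetilde{\Gamma}$) shows these two are distinct and receive one ray each. Your pushing mechanism is sound once a suitable path $\mc{P}\subset\overline{\Lambda}$ avoiding zeros of $R$ is in hand, but producing that path is precisely the nontrivial content you skip.

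Concretely, the sentence ``Denote by $\mc{P}$ the portion of the resulting contour running from $\gamma(-\eps^\prime)$ to infinity in direction $\ex{2\pi \i \alpha/\kappa}$'' presupposes that the off-support piece of the modified $\widetilde{\Gamma}$ attached at $\zeta_1$ actually leads to $L_\alpha$, rather than to $L_{\alpha'}$, or that reaching $L_\alpha$ does not force a passage through $\mathrm{supp}\,\mu_{\Gamma_{\e{eq}}}$---whose interior is \emph{not} in $\overline{\Lambda}$ by Lemma \ref{Lemme vanishing of R at intersection CC and sup mu eq}. An admissible contour is only a connected finite union of $C^1$ Jordan arcs and need not be simple, so none of this is automatic; ruling out the wrong attachment is exactly what the paper's final paragraph accomplishes via the max-min property. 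A secondary error: $R$ is a polynomial of degree $2\kappa-2$ and has $2\kappa-2$ zeros, not two—one-cut regularity only places the other $2\kappa-4$ off $\widetilde{\Gamma}$. What your argument actually needs (and what does hold) is $R\neq 0$ along $\mc{P}\subset\widetilde{\Gamma}\setminus\{\zeta_1,\zeta_2\}$, so that $\nabla\varphi_{\Ga_{\e{eq}};\mathrm{eff}}\neq 0$ on $\mc{P}\cap Z$; your global description of $Z$ is neither correct nor needed.
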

Before proving this we need to establish some preliminary results.
\begin{lemma}\label{sets}
\begin{align*}
\big\{  z \in \mathbb{C} \, : \, \vp_{\Ga_{\e{eq}};\mathrm{eff}}(z) \geq 0  \big\}  \, =  \, \mathrm{supp} \, \mu_{\Ga_{\e{eq}}} \cup  \overline{\Lambda}
\end{align*}
\end{lemma}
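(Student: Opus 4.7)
The plan is to prove the two inclusions separately, exploiting that $\varphi_{\Ga_{\e{eq}};\mathrm{eff}}$ is continuous on $\mathbb{C}$ (because $U[\mu_{\Ga_{\e{eq}}}]$ is continuous by Lemma \ref{continuousdensity}) and harmonic on $\mathbb{C}\setminus\mathrm{supp}\,\mu_{\Ga_{\e{eq}}}$ (because $\varphi=\Re V$ is harmonic and $U[\mu]$ is harmonic off the support of $\mu$).

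For the inclusion $\mathrm{supp}\,\mu_{\Ga_{\e{eq}}}\cup\overline{\Lambda}\subseteq\{\varphi_{\Ga_{\e{eq}};\mathrm{eff}}\geq 0\}$: since $\{\varphi_{\Ga_{\e{eq}};\mathrm{eff}}\geq 0\}$ is closed in $\mathbb{C}$ by continuity and manifestly contains $\Lambda$, it also contains $\overline{\Lambda}$. For the support, the Frostman-Euler-Lagrange conditions of Lemma \ref{frostmanlemma}, together with the remark following that lemma and the continuous density provided by Lemma \ref{continuousdensity}, give $\varphi_{\Ga_{\e{eq}};\mathrm{eff}}\equiv 0$ on $\mathrm{supp}\,\mu_{\Ga_{\e{eq}}}$, so this set is contained in $\{\varphi_{\Ga_{\e{eq}};\mathrm{eff}}\geq 0\}$.

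For the reverse inclusion $\{\varphi_{\Ga_{\e{eq}};\mathrm{eff}}\geq 0\}\subseteq\mathrm{supp}\,\mu_{\Ga_{\e{eq}}}\cup\overline{\Lambda}$, I would pick $z_0$ with $\varphi_{\Ga_{\e{eq}};\mathrm{eff}}(z_0)\geq 0$ and $z_0\notin\mathrm{supp}\,\mu_{\Ga_{\e{eq}}}$, and show $z_0\in\overline{\Lambda}$. If $\varphi_{\Ga_{\e{eq}};\mathrm{eff}}(z_0)>0$ we are done since then $z_0\in\Lambda$. Otherwise $\varphi_{\Ga_{\e{eq}};\mathrm{eff}}(z_0)=0$, and I argue by contradiction: if $z_0\notin\overline{\Lambda}$, there is an open neighbourhood $U$ of $z_0$ on which $\varphi_{\Ga_{\e{eq}};\mathrm{eff}}\leq 0$, and shrinking $U$ if necessary we may assume $U\cap\mathrm{supp}\,\mu_{\Ga_{\e{eq}}}=\emptyset$, so $\varphi_{\Ga_{\e{eq}};\mathrm{eff}}$ is harmonic on $U$ and attains its maximum value $0$ at the interior point $z_0$. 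The maximum principle forces $\varphi_{\Ga_{\e{eq}};\mathrm{eff}}\equiv 0$ on $U$.

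The final step, which I view as the key obstruction to exploiting harmonicity, is to upgrade this local vanishing to a global contradiction. Here the one-cut hypothesis \ref{potentialh2} enters crucially: by one-cut regularity combined with Lemma \ref{selfintersection}, $\mathrm{supp}\,\mu_{\Ga_{\e{eq}}}$ is a simple (hence non-self-intersecting) connected arc in $\mathbb{C}$, and any Jordan arc fails to separate $\mathbb{C}$, so $\mathbb{C}\setminus\mathrm{supp}\,\mu_{\Ga_{\e{eq}}}$ is open and connected. The function $\varphi_{\Ga_{\e{eq}};\mathrm{eff}}$ is harmonic on this connected open set and vanishes on the open subset $U$, so by the identity principle for harmonic functions it vanishes identically on $\mathbb{C}\setminus\mathrm{supp}\,\mu_{\Ga_{\e{eq}}}$. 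This contradicts the growth
\begin{equation*}
\varphi_{\Ga_{\e{eq}};\mathrm{eff}}(z)\;=\;\tfrac{1}{\kappa}|z|^{\kappa}\bigl(1+o(1)\bigr)\;+\;\e{O}(\ln|z|)\;\longrightarrow\;+\infty
\end{equation*}
as $|z|\to+\infty$, which follows from $V(z)=z^{\kappa}/\kappa+\e{O}(z^{\kappa-1})$ together with the compactly supported $U[\mu_{\Ga_{\e{eq}}}]$ decaying logarithmically. This contradiction yields $z_0\in\overline{\Lambda}$ and completes the proof.
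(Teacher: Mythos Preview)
Your proof is correct. Both inclusions are argued soundly, and your use of the identity principle on the connected open set $\mathbb{C}\setminus\mathrm{supp}\,\mu_{\Ga_{\e{eq}}}$ together with the growth at infinity cleanly rules out the degenerate possibility that $\varphi_{\Ga_{\e{eq}};\mathrm{eff}}$ vanishes on an open set.

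The paper's argument is more local and hence shorter. For the nontrivial inclusion, once one has $\varphi_{\Ga_{\e{eq}};\mathrm{eff}}(z_0)=0$ with $z_0\notin\mathrm{supp}\,\mu_{\Ga_{\e{eq}}}$, the paper simply observes that $\varphi_{\Ga_{\e{eq}};\mathrm{eff}}$ is the real part of a holomorphic function on a neighbourhood of $z_0$; by the open mapping theorem this real part must take strictly positive values on every neighbourhood of $z_0$, so $z_0\in\overline{\Lambda}$ immediately. This avoids your global detour through connectedness of the complement, the identity principle, and the behaviour at infinity. On the other hand, the paper's phrasing leaves implicit the non-constancy needed for open mapping, which you handle explicitly (and which also follows locally from $\Phi_{\Ga_{\e{eq}};\mathrm{eff}}'=\sqrt{R}$ with $R$ a nonzero polynomial). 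Your argument buys completeness at the cost of invoking more structure (the one-cut hypothesis and asymptotics), whereas the paper's argument is essentially a one-line appeal to the local behaviour of harmonic functions.
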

\begin{proof}
It is clear that \textit{lhs} $\supset$ \textit{rhs}, thus we need only show that \textit{lhs} $\subset$ \textit{rhs}. Let $z \in \mathbb{C}$ be a point such that
$\vp_{\Ga_{\e{eq}};\mathrm{eff}}(z) \geq 0$. Clearly if $\vp_{\Ga_{\e{eq}};\mathrm{eff}}(z) >0$ or if $z \in \mathrm{supp} \, \mu_{\Ga_{\e{eq}}}$
then $z \in \text{\textit{rhs}}$ so without loss of generality assume $\vp_{\Ga_{\e{eq}};\mathrm{eff}}(z) = 0$ and $z \not\in \mathrm{supp} \, \mu_{\Ga_{\e{eq}}}$.
However if $z \not\in \mathrm{supp} \, \mu_{\Ga_{\e{eq}}}$ then $\vp_{\Ga_{\e{eq}};\mathrm{eff}}$ is harmonic on a neighbourhood of $z$. Since harmonic functions
are real parts of holomorphic functions, on every such neighbourhood
there are some points for which $\vp_{\Ga_{\e{eq}};\mathrm{eff}}(z) \geq 0$. Hence $z \in \overline{\Lambda}$.
\end{proof}
\begin{lemma}\label{finitecomponents} The set $\Lambda$ has only finitely many connected components. In particular, it has at most $\kappa$ connected components.
\end{lemma}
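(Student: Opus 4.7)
The plan is to split the analysis of $\Lambda$ into its behaviour at infinity and its structure in any bounded region, and to show separately that (i) near infinity $\Lambda$ has exactly $\kappa$ unbounded ``horn'' components, and (ii) $\Lambda$ has no bounded components at all. Combining these two facts immediately gives the bound: every component of $\Lambda$ must intersect one of the $\kappa$ horns, and distinct components must meet disjoint unions of horns, yielding at most $\kappa$ components in total.

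For the behaviour at infinity, I would use $V(z) = \tfrac{z^\kappa}{\kappa}+\e{O}(z^{\kappa-1})$ together with the Frostman representation
$\vp_{\Ga_{\e{eq}};\e{eff}}(z) = \Re V(z) + U[\mu_{\Ga_{\e{eq}}}](z) - C_{\Ga_{\e{eq}}}$
and the fact that $\mu_{\Ga_{\e{eq}}}$ is compactly supported, so that $U[\mu_{\Ga_{\e{eq}}}](z) = -\ln|z|+\e{O}(|z|^{-1})$. This yields the expansion
\[
\vp_{\Ga_{\e{eq}};\e{eff}}(z) \; = \; \frac{|z|^{\kappa}}{\kappa}\cos(\kappa\arg z) + \e{O}\bigl(|z|^{\kappa-1}\bigr),
\qquad |z|\to +\infty.
\]
Hence there exists $R_0>0$ such that for $|z|\ge R_0$ the sign of $\vp_{\Ga_{\e{eq}};\e{eff}}$ agrees with that of $\cos(\kappa\arg z)$ away from an arbitrarily small neighbourhood of the $2\kappa$ rays where the cosine vanishes. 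Therefore $\Lambda\cap\{|z|>R_0\}$ has exactly $\kappa$ unbounded connected components, one contained in each of the sectors $S_k=\{|\arg z-2\pi k/\kappa|<\pi/(2\kappa)\}$ for $k=0,\dots,\kappa-1$.

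For the absence of bounded components, let $U$ be a putative bounded connected component of $\Lambda$. By continuity of $\vp_{\Ga_{\e{eq}};\e{eff}}$ and the definition of $\Lambda$, one has $\vp_{\Ga_{\e{eq}};\e{eff}}\equiv 0$ on $\partial U$ and $\vp_{\Ga_{\e{eq}};\e{eff}}>0$ on $U$. Now, Frostman's Lemma \ref{frostmanlemma} ensures $\vp_{\Ga_{\e{eq}};\e{eff}}=0$ quasi-everywhere on $\mathrm{supp}\,\mu_{\Ga_{\e{eq}}}$, and continuity of $\vp_{\Ga_{\e{eq}};\e{eff}}$ (which follows from continuity of the density of $\mu_{\Ga_{\e{eq}}}$ by Lemma \ref{continuousdensity}) promotes this to equality everywhere on the support. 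In particular $U\cap\mathrm{supp}\,\mu_{\Ga_{\e{eq}}}=\emptyset$ since $\vp_{\Ga_{\e{eq}};\e{eff}}>0$ on $U$. Consequently $\vp_{\Ga_{\e{eq}};\e{eff}}$ is harmonic on the open set $U$, and continuous on the compact closure $\overline{U}$. The maximum principle then gives
\[
\sup_{z\in \overline{U}}\vp_{\Ga_{\e{eq}};\e{eff}}(z) \; = \; \sup_{z\in \partial U}\vp_{\Ga_{\e{eq}};\e{eff}}(z) \; = \; 0,
\]
contradicting $\vp_{\Ga_{\e{eq}};\e{eff}}>0$ on $U$. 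Hence no bounded component exists.

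Combining the two steps, every connected component of $\Lambda$ is unbounded and must meet $\{|z|>R_0\}$ in at least one of the $\kappa$ sectorial horns identified above; since distinct components of $\Lambda$ are disjoint and the horns partition $\Lambda\cap\{|z|>R_0\}$ into $\kappa$ pieces, at most $\kappa$ components can exist. The main obstacle is not conceptual but notational: one must confirm that $\vp_{\Ga_{\e{eq}};\e{eff}}$ is actually continuous across $\mathrm{supp}\,\mu_{\Ga_{\e{eq}}}$ so that ``$\vp_{\Ga_{\e{eq}};\e{eff}}=0$ on the support'' holds literally (not just q.e.) and hence justifies $U\cap\mathrm{supp}\,\mu_{\Ga_{\e{eq}}}=\emptyset$ from $\vp_{\Ga_{\e{eq}};\e{eff}}>0$ on the open set $U$; this is taken care of by Lemma \ref{continuousdensity} and the Remark following Lemma \ref{frostmanlemma}.
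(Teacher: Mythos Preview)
Your proposal is correct and follows essentially the same approach as the paper: both arguments rule out bounded components via the maximum principle for the harmonic function $\varphi_{\Ga_{\e{eq}};\mathrm{eff}}$ on $\mathbb{C}\setminus\mathrm{supp}\,\mu_{\Ga_{\e{eq}}}$, and both bound the number of unbounded components by the $\kappa$ sectorial pieces of $\Lambda$ near infinity. The paper is terser, citing an external lemma for the $\kappa$-sector structure at infinity, whereas you derive it directly from the asymptotic $\varphi_{\Ga_{\e{eq}};\mathrm{eff}}(z)=\tfrac{|z|^\kappa}{\kappa}\cos(\kappa\arg z)+\e{O}(|z|^{\kappa-1})$; you are also more explicit about why $U\cap\mathrm{supp}\,\mu_{\Ga_{\e{eq}}}=\emptyset$, which the paper leaves implicit.
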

\begin{proof}
We first show that the connected components are unbounded. Suppose by way of contradiction that a connected component $C$ of $\varphi_{\Ga_{\e{eq}};\mathrm{eff}}^{-1}(]0,+\infty[)$ is bounded.
Then $\varphi_{\Ga_{\e{eq}};\mathrm{eff}}$ is maximised by some point in $\overline{C}$. However $\varphi_{\Ga_{\e{eq}};\mathrm{eff}} = 0$ at the boundary and $\varphi_{\Ga_{\e{eq}};\mathrm{eff}} > 0$
in the interior, and so $\varphi_{\Ga_{\e{eq}};\mathrm{eff}}$ has a local maximum, which is impossible because $\varphi_{\Ga_{\e{eq}};\mathrm{eff}}$ is harmonic away from
$\mathrm{supp}\, \mu_{\Ga_{\e{eq}}}$.  Thus the connected components must be unbounded. However for $R>0$ sufficiently large $\Lambda \setminus \op{D}_{0,R}$ has $\kappa$ connected components, where $\op{D}_{0,R} = \{ z \in \mathbb{C} \, : \, |z| < R \}$ (see Lemma 5.5 of \cite{lirias1643667}).

\end{proof}

\begin{lemma}\label{critical1}
Let $C$ and $C^\prime$ be two distinct connected components of $\Lambda$. If
$z \in ( \overline{C} \cap \overline{C^\prime} ) \setminus \mathrm{supp}\, \mu_{\Ga_{\e{eq}}} $ then $R(z) = 0$, where $R$ is the polynomial introduced in Theorem \ref{existence}.
\end{lemma}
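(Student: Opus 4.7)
The plan is to use the fact that $\vp_{\Ga_{\e{eq}};\mathrm{eff}}$ is harmonic on a neighbourhood of $z$ (since $z \notin \mathrm{supp}\,\mu_{\Ga_{\e{eq}}}$), represent it locally as the real part of a holomorphic antiderivative of $\sqrt{R}$, and show that the local structure of the zero set of such a real part forces $R(z)=0$ whenever two distinct components of $\Lambda$ meet at $z$.

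First I would establish the local analytic setup. Because $z \notin \mathrm{supp}\,\mu_{\Ga_{\e{eq}}}$, choose an open disk $U = \op{D}_{z,\eta}$ disjoint from $\mathrm{supp}\,\mu_{\Ga_{\e{eq}}}$. Even though $\Phi_{\Ga_{\e{eq}};\mathrm{eff}}$ may have cuts along $\Ga_{\e{eq}}\cap U$, its real part $\vp_{\Ga_{\e{eq}};\mathrm{eff}}$ is single-valued and harmonic on $U$ (the jump of $g_{\Ga_{\e{eq}}}[\mu_{\Ga_{\e{eq}}}]$ across $\Ga_{\e{eq}}\setminus \mathrm{supp}\,\mu_{\Ga_{\e{eq}}}$ is purely imaginary, since only $\Re \ln(z-w)=\ln|z-w|$ enters $\vp$). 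Since $z \in \partial C \cap \partial C^\prime$, continuity of $\vp_{\Ga_{\e{eq}};\mathrm{eff}}$ and $\Lambda = \{\vp_{\Ga_{\e{eq}};\mathrm{eff}}>0\}$ open yield $\vp_{\Ga_{\e{eq}};\mathrm{eff}}(z)=0$. On $U$ (shrunk if necessary to be simply connected) pick a holomorphic antiderivative $\Psi$ of $\sqrt{R}$, with the branch chosen so that $\Psi$ is well defined on $U$ (possible because $U\cap \mathrm{supp}\,\mu_{\Ga_{\e{eq}}}=\emptyset$ and the cuts of $\sqrt{R}$ lie along the support); then $\Re\Psi$ is a harmonic function on $U$ with the same gradient as $\vp_{\Ga_{\e{eq}};\mathrm{eff}}$, so they differ by a constant, which we may absorb by replacing $\Psi$ with $\Psi - \Psi(z)$. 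Thus $\vp_{\Ga_{\e{eq}};\mathrm{eff}} = \Re\Psi$ on $U$ and $\Psi(z)=0$.

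Next I would analyse the local zero set. Since $\sqrt{R}$ is non-constant ($\sqrt{R(z)} \sim z^{\kappa-1}$ at infinity), so is $\Psi$, and one may write
\begin{equation*}
\Psi(w) \, = \,  a\,(w-z)^{n} \, + \, \e{O}\big((w-z)^{n+1}\big), \qquad a\neq 0, \quad n\geq 1.
\end{equation*}
A direct polar-coordinate computation shows that, on a sufficiently small disk $\op{D}_{z,\eta^\prime}\subset U$, the set $\{w : \Re\Psi(w)>0\}\cap \op{D}_{z,\eta^\prime}$ has exactly $n$ connected components, each a "curvilinear sector" of opening approximately $\pi/n$ emanating from $z$, and $\op{D}_{z,\eta^\prime}\setminus\{z\}$ itself is connected.

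Finally, I would derive the contradiction in the case $n=1$. If $n=1$, the local positive set $\{\Re\Psi>0\}\cap\op{D}_{z,\eta^\prime}$ is a single connected open set, hence lies inside exactly one global connected component of $\Lambda$. But $z\in\partial C\cap\partial C^\prime$ with $C\neq C^\prime$ forces every $\op{D}_{z,\eta^\prime}$ to meet both $C$ and $C^\prime$ in nonempty open sets; these sets both lie in the single local positive region, hence in the same global component, contradicting $C\neq C^\prime$. Therefore $n\geq 2$, which means $\Psi^\prime(z)=\sqrt{R(z)}=0$ and hence $R(z)=0$. The only real subtlety in the argument is the first step: justifying that $\vp_{\Ga_{\e{eq}};\mathrm{eff}}$ really is harmonic and a single-valued real part of a holomorphic function on $U$, since $\Phi_{\Ga_{\e{eq}};\mathrm{eff}}$ itself may fail to be single-valued on $U$ when $\Ga_{\e{eq}}$ passes through $U$ outside the support — but this is handled by noting that only the imaginary part jumps and then taking the antiderivative of $\sqrt{R}$ directly.
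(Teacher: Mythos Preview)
Your proof is correct and follows essentially the same approach as the paper: both represent $\vp_{\Ga_{\e{eq}};\mathrm{eff}}$ locally as the real part of an antiderivative $\Psi$ of $\sqrt{R}$, and both use that if $\sqrt{R(z)}\neq 0$ then $\Psi$ is a local biholomorphism, so $\{\Re\Psi>0\}$ is locally connected (a single half-disk under $\Psi$), contradicting the presence of two components at $z$. Your version packages this via the order-of-vanishing count $n$ (showing $n\geq 2$), whereas the paper argues the $n=1$ case directly by contradiction; and you are more explicit about why $\vp_{\Ga_{\e{eq}};\mathrm{eff}}$ is single-valued and harmonic on $U$ even when $\Ga_{\e{eq}}$ crosses $U$ outside the support, a point the paper leaves implicit.
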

\begin{proof}
Firstly, we must have $\varphi_{\Ga_{\e{eq}}; \mathrm{eff}}(z) = 0$, since if $\varphi_{\Ga_{\e{eq}}; \mathrm{eff}}(z) > 0$ then $z$ would be in one of the connected components of $\Lambda$.
But then because these connected components are open, $z$ is bounded away from all other connected components and so could not be in the closure of more than one component.

Secondly,  on every neighbourhood $U$ of $z$,  $U \cap \Lambda$ is disconnected.  We claim this implies $R(z) = 0$. Suppose by way of contradiction that $R(z) \neq 0$.
Because $z \not\in \mathrm{supp}\, \mu_{\Ga_{\e{eq}}}$ the function
$$ \sqrt{R(z)} \, = \, 2\pi \i \, \mathsf{C}\big[ \mu_{\Ga_{\e{eq}}} \big] (z) \, + \,  V^\prime(z)$$
is analytic at $z$. Thus, so is also the map
$$ w \mapsto \Phi(w) \, = \,  \Int{z}{w}  \sqrt{R(\zeta)} \, \dd \zeta $$
and since  $\Phi^\prime(z) = \sqrt{R(z)} \neq 0$, it defines a biholomorphism from a neighbourhood $U_z$of $z$ to a neighbourhood of $0$.
Let $\delta > 0$ be sufficiently small so that $\op{D}_{0,\delta}$ is a subset of the image of $\Phi(U_z)$. The set $\op{D}_{0,\delta} \cap \{ \widetilde{z} \in \mathbb{C}\, : \, \Re \widetilde{z} > 0\}$
is connected, hence using that $\Re \Phi = \varphi_{\mathrm{eff}}$, one gets that
$$\Phi^{-1}\big( \op{D}_{0,\delta} \cap \{ \widetilde{z} \in \mathbb{C}\, : \, \Re \widetilde{z} > 0\} \big) \, = \, \Phi^{-1}\big( \op{D}_{0,\delta}\big)  \cap \Lambda$$
is also connected.  However this contradicts the fact that $U \cap \Lambda$ is disconnected for every neighbourhood $U$ of $z$.
\end{proof}
\begin{lemma}
Assume that the potential $V$ is one-cut regular and let $\zeta_1, \zeta_2 \in \mathbb{C}$ be the endpoints of the support of the equilibrium measure.
Then there is exactly one connected component $C$ of $\Lambda$ such that $\zeta_1 \in \overline{C}$; similarly there is exactly
one connected component $C^\prime$ of $\Lambda$ such that $\zeta_2 \in \overline{C^\prime}$.
\end{lemma}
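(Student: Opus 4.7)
The plan is to reduce the statement to a local calculation of the zero set of $\varphi_{\Gamma_{\mathrm{eq}};\mathrm{eff}}$ near $\zeta_1$, exploiting the one-cut regularity hypothesis which forces $R$ to have a simple zero at $\zeta_1$.

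First I would note that $\Phi_{\Gamma_{\mathrm{eq}};\mathrm{eff}} = i\pi\,\Theta$ on their common domain of definition, by direct comparison of the defining integrals. The local expansion \eqref{ecriture cptmt local Theta} of $\Theta$ near $\zeta_1$ then yields, in the biholomorphic coordinate $w = \varrho(z - \zeta_1)$,
\[
\Phi_{\Gamma_{\mathrm{eq}};\mathrm{eff}}(z) \;=\; \pi\, w^{3/2},
\]
where the $3/2$-power is taken with its cut along $\mathbb{R}^-$, the latter corresponding precisely to the portion of $\mathrm{supp}\,\mu_{\Gamma_{\mathrm{eq}}}$ locally emanating from $\zeta_1$.

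Writing $w = r e^{i\theta}$ with $\theta \in (-\pi, \pi)$, a direct computation gives
\[
\varphi_{\Gamma_{\mathrm{eq}};\mathrm{eff}}(z) \;=\; \Re \Phi_{\Gamma_{\mathrm{eq}};\mathrm{eff}}(z) \;=\; \pi\, r^{3/2}\cos\!\bigl(\tfrac{3\theta}{2}\bigr).
\]
This is strictly positive if and only if $\theta \in (-\pi/3,\pi/3)$, an open sector of angular width $2\pi/3$ emerging from $\zeta_1$ on the side opposite to the support. In particular, for $\eta' > 0$ small enough, $\Lambda \cap \op{D}_{\zeta_1,\eta'}$ is connected and coincides with the preimage of this single sector under $\varrho$.

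The uniqueness claim is then immediate: if $C, C'$ were two distinct connected components of $\Lambda$ both having $\zeta_1$ in their closures, then each would have to meet the connected set $\Lambda \cap \op{D}_{\zeta_1,\eta'}$ for every small $\eta'>0$, which would force $C = C'$. Existence is equally direct, since the local sector is nonempty and is contained in some connected component $C$ of $\Lambda$, which by construction satisfies $\zeta_1 \in \overline{C}$. The argument at $\zeta_2$ is verbatim, starting from the local expansion $\Theta(z) = \mathfrak{s}(z) - i[\widetilde{\varrho}(z - \zeta_2)]^{3/2}$ in \eqref{ecriture cptmt local Theta}: the real, piecewise-constant term $\mathfrak{s}(z)$ does not affect $\Im\Theta$ and therefore contributes nothing to $\varphi_{\Gamma_{\mathrm{eq}};\mathrm{eff}} = -\pi\Im\Theta$, so the local angular geometry of $\Lambda$ at $\zeta_2$ is identical to that at $\zeta_1$. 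No serious obstacle arises in this argument; the main point of care is keeping track of the branch of the $3/2$-power and confirming that the single positive sector sits on the far side of $\zeta_1$ from the support, both of which are fixed by the conventions already set up in \eqref{ecriture cptmt local Theta}.
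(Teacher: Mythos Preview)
Your proof is correct and follows the same route as the paper's, which simply asserts that the simple zeros of $R$ at $\zeta_1,\zeta_2$ imply the relevant local structure of $\varphi_{\Gamma_{\e{eq}};\mathrm{eff}}$ without spelling it out. You have filled in precisely those details: the $3/2$-power local model coming from \eqref{ecriture cptmt local Theta} and the resulting single positive sector of angular width $2\pi/3$, which is exactly what the paper's one-line proof is gesturing at.
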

\begin{proof}
This follows from the fact that the  polynomial $R(z) = (2\pi \i \, \mathsf{C}\big[ \mu_{\Ga_{\e{eq}}} \big] (z) \, + \,  V^\prime(z))^2$ has simple zeros at  $z = \zeta_1$ and $z = \zeta_2$, which implies a local structure of $\varphi_{\Ga_{\e{eq}}; \mathrm{eff}}$ around $\zeta_1$ and $\zeta_2$.
\end{proof}

\begin{lemma}
 \label{Lemme vanishing of R at intersection CC and sup mu eq}

 Let $C$ be a connected component of $\La$ and let $z \in \ov{C}\cap \mathrm{supp} \, \mu_{\Gamma_{\e{eq}}}$. Then $R(z)=0$.

\end{lemma}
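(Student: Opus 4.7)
\medskip

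\noindent\textbf{Plan.} I would argue by contradiction. Assume $R(z)\neq 0$. By the one-cut regularity hypothesis (Definition~\ref{regular}, condition (2)), the polynomial $R$ has zeros on $\Gamma_{\e{eq}}$ only at the endpoints $\zeta_{1},\zeta_{2}$, so $z$ must lie in the \emph{interior} of $\mathrm{supp}\,\mu_{\Gamma_{\e{eq}}}$. Then, by Theorem~\ref{existence} together with the canonical decomposition \eqref{ecriture decomposition canonique pour racine de R}, the equilibrium density
$$\rho(z)\;=\;\frac{|\sqrt{R(z)}_+|}{\pi}$$
with respect to arc length is strictly positive at $z$. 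The goal is to show that this forces $\varphi_{\mathrm{eff};\Gamma_{\e{eq}}}$ to be strictly negative on both sides of the support in a punctured neighbourhood of $z$, contradicting the assumption $z\in\overline{C}$ with $C\subset\Lambda=\{\varphi_{\mathrm{eff};\Gamma_{\e{eq}}}>0\}$.

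\smallskip

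\noindent The key step is the computation of the normal derivative of $\varphi_{\mathrm{eff};\Gamma_{\e{eq}}}$ on either side of the support at $z$. I would combine two ingredients:
\emph{(i)} the standard jump formula for the single-layer potential $U[\mu_{\Gamma_{\e{eq}}}]$ across its support: with $\vec{N}$ the unit normal to the (analytic) support arc through $z$ (which is smooth near $z$ since $R(z)\neq 0$),
$$
\partial_{\vec{N}}\,U[\mu_{\Gamma_{\e{eq}}}]\big|_{+}(z)\;-\;\partial_{\vec{N}}\,U[\mu_{\Gamma_{\e{eq}}}]\big|_{-}(z)\;=\;-2\pi\rho(z),
$$
while $\partial_{\vec{N}}\varphi$ is continuous across the arc since $\varphi$ is smooth; \emph{(ii)} the $S$-property \eqref{Scurve} at $z$, which, with the convention that $\partial_{n_\pm}$ are the derivatives along the outward normals from each side (i.e.\ $\partial_{n_+}=\partial_{\vec{N}}|_+$ and $\partial_{n_-}=-\partial_{\vec{N}}|_-$), asserts
$$
\partial_{\vec{N}}\varphi_{\mathrm{eff};\Gamma_{\e{eq}}}\big|_{+}(z)\;=\;-\,\partial_{\vec{N}}\varphi_{\mathrm{eff};\Gamma_{\e{eq}}}\big|_{-}(z).
$$
Adding the $S$-property to the jump formula immediately yields $\partial_{\vec{N}}\varphi_{\mathrm{eff};\Gamma_{\e{eq}}}|_{+}(z)=-\pi\rho(z)<0$ and $\partial_{\vec{N}}\varphi_{\mathrm{eff};\Gamma_{\e{eq}}}|_{-}(z)=+\pi\rho(z)>0$.

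\smallskip

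\noindent Combined with the Frostman equality $\varphi_{\mathrm{eff};\Gamma_{\e{eq}}}(z)=0$ and a first-order Taylor expansion along the two outward normal directions, these signs imply that $\varphi_{\mathrm{eff};\Gamma_{\e{eq}}}$ is \emph{strictly negative} on both sides of the support arc in a sufficiently small punctured neighbourhood of $z$. Therefore no point close to $z$ can belong to $\Lambda=\{\varphi_{\mathrm{eff};\Gamma_{\e{eq}}}>0\}$, and in particular $z\notin\overline{C}$ for any connected component $C$ of $\Lambda$, contradicting the hypothesis. Hence $R(z)=0$.

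\smallskip

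\noindent The main obstacle is the bookkeeping of the sign conventions: one must be careful to align the orientation convention of $\sqrt{R}_+$ (hence of the $+/-$ sides of $\mathrm{supp}\,\mu_{\Gamma_{\e{eq}}}$) with the convention used in the statement of the $S$-property \eqref{Scurve} and with the classical single-layer jump formula. Once the signs are settled consistently, the contradiction is a one-line consequence of the two identities.
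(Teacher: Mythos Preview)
Your argument is correct and is essentially the paper's proof: the paper also argues by contradiction that if $R(z)\neq 0$ then $\varphi_{\Gamma_{\e{eq}};\mathrm{eff}}<0$ on both sides of the support in a neighbourhood of $z$, contradicting $z\in\overline{C}$. The only difference is packaging---the paper works directly with the complex effective potential $\Phi_{\Gamma_{\e{eq}};\mathrm{eff}}$ and Taylor-expands it in the normal direction using that $\gamma'(t)\sqrt{R(\gamma(t))}_{\pm}\in i\mathbb{R}^+$, which is precisely the complexified statement that simultaneously encodes your single-layer jump formula and the $S$-property.

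One small point worth tightening: a first-order expansion along the normal at the single point $z$ gives negativity only along that normal ray, not in a full two-dimensional punctured neighbourhood. To close the argument you should either note that $\rho$ is continuous and strictly positive near $z$, so the outward normal derivative $-\pi\rho$ is uniformly negative along the support near $z$ and the same expansion applies at every nearby support point; or, as the paper does, observe that $\Phi_{\Gamma_{\e{eq}};\mathrm{eff};\pm}$ is a local biholomorphism (since its derivative $\sqrt{R(z)}_{\pm}\neq 0$) sending the support into $i\mathbb{R}$ and hence each side into a half-plane $\{\Re<0\}$.
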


\begin{proof}
We proceed by contradiction. Assume that $z \in \overline{C} \cap \mathrm{supp} \, \mu_{\Gamma_{\e{eq}}}$ but $R(z) \not= 0$.
Then, one may continue holomorphically  $\sqrt{R(w)}_{\pm}$ to the $\mp$ side of $\mu_{\Gamma_{\e{eq}}}$, and thus
$$ \big[\Phi_{\Ga_{\e{eq};\e{eff}}}\big]_{\pm}(z) \, = \,  \Int{\zeta_1}{z}  \sqrt{R(\zeta)}_{\pm} \, \dd \zeta $$
admit a holomorphic extension to an open neighbourhood $U_z$ of $z$. Given that $\big[\Phi_{\Ga_{\e{eq};\e{eff}}}\big]_{\pm}^{\prime}(z)= \sqrt{R(z)}_{\pm} \not=0$,
$\big[\Phi_{\Ga_{\e{eq};\e{eff}}}\big]_{\pm}(z)$ extend, up to shrinking $U_z$, to biholomorphisms on $U_{z}$.
If $z=\ga(t)$ with $t \in \intoo{0}{1}$, then upon taking $\om=\ga(t) \pm \i\eps \ga^{\prime}(t)$ with $\eps >0$ and small enough, one has
$$
\Phi_{\Ga_{\e{eq};\e{eff}}}(\om) \; = \; \big[ \Phi_{\Ga_{\e{eq};\e{eff}}} \big]_{\pm}\big( \ga(t) \big) \,  \pm  \,\i \eps \underbrace{ \ga^{\prime}(t)\sqrt{ R( \ga(t) ) }_{\pm} }_{ \in \i \R^+}
\, + \, \e{O}\big( \eps^2 \big) \;.
$$
Thus, upon shrinking $U_z$ if need be, one has that
\beq
\varphi_{\Gamma_{\e{eq}};\mathrm{eff}}(\om) \, = \, \Re\big(  \Phi_{\Ga_{\e{eq};\e{eff}}}(\om) \big)
\left\{  \ba{cc}
<0 & \;\e{on} \; U_z \cap \{ + \e{side}  \, \e{of}\; \ga(\intff{0}{1})  \}   \vspace{2mm} \\
 <0 & \; \e{on} \; U_z \cap \{ - \e{side}  \, \e{of}\; \ga(\intff{0}{1})  \}     \ea \right.  \;.
\enq
Hence, $z$ cannot be a boundary point of $C_i$ as there exist an open neighbourhood $U_z$ thereof such that
\beq
U_z \subset \ga(\intff{0}{1}) \cup \Big\{ z \in \Cx \, : \, \varphi_{\Gamma_{\e{eq}};\mathrm{eff}}(z) <0 \Big\} \;.
\nonumber
\enq

\end{proof}

\begin{proof}[Proof of Proposition \ref{samecomponent}]
An open connected set in $\mathbb{C}$ is also path connected,
so we need only show that $\gamma(-\epsilon^\prime)$ and $R \ex{ \frac{2\pi \i \alpha}{\kappa} }$ belong to the same connected component of $\Lambda$;
and similarly that $\gamma( 1 + \epsilon^\prime )$ and $R \ex{ \frac{2\pi \i \alpha^\prime}{\kappa}} $ belong to the same connected component.
\vspace{2mm}

By the assumption of one-cut regularity there exists a contour $\widetilde{\Gamma} \in \mathcal{T}(\alpha,\alpha^\prime)$ solving the max-min energy problem and for
which $R$ has simple zeros at  $\zeta_1,\zeta_2 \in \widetilde{\Gamma}$, the endpoints of the support of the equilibrium measure, and no other zeros on $\widetilde{\Gamma}$.  Furthermore, because of the way
$\varphi_{\widetilde{\Gamma};\mathrm{eff}}(z)=\varphi_{\Gamma_{\e{eq}};\mathrm{eff}}(z)$ scales as $|z| \to \infty$ (recall that the effective potential only depends on the support of the equilibrium measure),
we can without loss of generality assume that $L_\alpha , L_{\alpha^\prime} \subset \widetilde{\Gamma}$.  Since it solves the max-min energy problem,
we must have $\widetilde{\Gamma} \subset \{ z \in \mathbb{C} \, : \, \varphi_{\widetilde{\Gamma};\mathrm{eff}}(z) \geq 0 \} = \mathrm{supp} \, \mu_{\Gamma_{\e{eq}}} \cup  \overline{\Lambda}$
(by Lemma \ref{EulerLagrange} and Lemma \ref{sets}).

\vspace{2mm}
Let $C_1, \dots, C_m$ be the connected components of $\Lambda$ such that $\overline{C_i} \cap \widetilde{\Gamma} \neq \emptyset$ for $i=1,\dots, m$. Hence
$\widetilde{\Gamma} \subset \overline{C_1} \cup \dots \cup \overline{C_m} \cup \mathrm{supp} \, \mu_{\Gamma_{\e{eq}}}$.
If $\overline{C_i} \cap \mathrm{supp} \, \mu_{\Gamma_{\e{eq}}} \neq \emptyset$ then,
by Lemma \ref{Lemme vanishing of R at intersection CC and sup mu eq}, for all $z \in \overline{C_i} \cap \mathrm{supp} \, \mu_{\Gamma_{\e{eq}}}$, $R(z) = 0$.
Hence at most two such connected components, when closed, insect nontrivially with $\mathrm{supp} \, \mu_{\Gamma_{\e{eq}}}$. Let us call these components $C_i$ and $C_j$
(we will later see that $C_i \neq C_j$ but so far we haven't shown this).

\vspace{2mm}

We now claim that $m \leq 2$, i.e.  no more than two connected components, when closed, intersect non-trivially with $\widetilde{\Gamma}$.  To see this consider the closed set $$K =\widetilde{\Gamma} \cap \bigcup_{\substack{p=1 \\ p \neq i,j}}^m \overline{C_p}$$ $K \cap \mathrm{supp}\, \mu_{\Gamma_{\e{eq}}} = \emptyset$ by construction.
We also claim $K \cap \overline{C_i} = \emptyset$ and $K \cap \overline{C_j} = \emptyset$. This is because any point $z \in K \cap \overline{C_i}$, by Lemma \ref{critical1}, is a zero of $R$.
But the only zeros of $R$ on $\widetilde{\Gamma}$ are the endpoints
$\zeta_1, \zeta_2 \in \mathrm{supp}\, \mu_{\Gamma_{\e{eq}}}$ by the one-cut regularity assumption.
Hence $K$ is disjoint from $\overline{C_i} \cup  \overline{C_j} \cup \mathrm{supp}\, \mu_{\Gamma_{\e{eq}}}$. However $K$ and
$\widetilde{\Gamma} \cap \left( \overline{C_i} \cup  \overline{C_j} \cup \mathrm{supp}\, \mu_{\Gamma_{\e{eq}}} \right)$ are a pair of disjoint closed sets whose union is $\widetilde{\Gamma}$.
Since $\widetilde{\Gamma}$ is connected, $K = \emptyset$.

\vspace{2mm}

Next we claim that $C_i \neq C_j$. By construction $\zeta_1 \in \overline{C_i}$ and $\zeta_2 \in \overline{C_j}$ (or the other way round).
Thus $\gamma(-\epsilon^\prime) \in C_i$ and $\gamma(1+\epsilon^\prime) \in C_j$, since $\gamma(\intff{\epsilon^\prime}{1+\epsilon^\prime} \setminus \intff{0}{1})) \subset \Lambda$
by (\ref{effectivepositive}). Now suppose by way of contradiction $C_i = C_j$. Then there is a continuous path from $\gamma(-\epsilon^\prime)$ to $\gamma(1+\epsilon^\prime)$ contained entirely in $\Lambda$.
Using $\gamma$ to define a path from $\gamma(-\epsilon^\prime)$ to $\gamma(1+\epsilon^\prime)$, we thus have constructed a continuous path, without self-intersections,
lying in $\{ z \in \mathbb{C} \, : \, \varphi_{\Ga_{\e{eq}};\mathrm{eff}}(z) \geq 0 \}$. This path passes through $\mathrm{supp} \, \mu_{\Ga_{\e{eq}}}$, where $\varphi_{\Ga_{\e{eq}};\mathrm{eff}} < 0$ on both sides.
Hence this path encloses a connected component of $\{ z \in \mathbb{C} \, : \,\varphi_{\Ga_{\e{eq}};\mathrm{eff}}(z) < 0\}$.  However by the same arguments used in the proof of Lemma \ref{finitecomponents},
the connected components of $\{ z \in \mathbb{C} \, : \,\varphi_{\Ga_{\e{eq}};\mathrm{eff}}(z) < 0\}$ are unbounded, and so they cannot be any enclosed by any compact loop.
A similar argument can also be used to show that $\overline{C_i} \cap \overline{C_j} = \emptyset$.

\vspace{2mm}

Finally, we conclude the proof by showing that $R\ex{\frac{2\pi \i \alpha}{\kappa}}$ and $R \ex{\frac{2\pi \i \alpha^\prime}{\kappa}}$ belong to different connected components,
\textit{i.e.} $R \ex{\frac{2\pi \i \alpha}{\kappa}} \in C_i$ and $R\ex{\frac{2\pi \i \alpha^\prime}{\kappa}} \in C_j$ (or the other way round).
Let us think of $\widetilde{\Gamma}$ as parametrised by a piecewise $C^1$ function $\widetilde{\gamma} : \mathbb{R} \longrightarrow \mathbb{C}$.
Let $x_1$ be the smallest real number such that $\widetilde{\gamma}(x_1) = \zeta_1$ and $x_2$ the largest real number such that $\widetilde{\gamma}(x_2) = \zeta_2$,
where the path is oriented such that $x_1 < x_2$.  If $R \ex{\frac{2\pi \i \alpha^\prime}{\kappa}} \in C_i$ then $\gamma$ defines a path from $\zeta_2$ to
$R\ex{\frac{2\pi \i \alpha^\prime}{\kappa}}$, and so for some $x_3 > x_2$, $\gamma(x_3) = \zeta_1$. However by the same arguments used in the proof of Lemma \ref{selfintersection},
the energy of the curve can be increased by removing $\widetilde{\gamma}([x_1, x_3])$ and so $\widetilde{\gamma}$ cannot be a solution of the max-min energy problem.
\end{proof}
\begin{remark} In the above proof we are regularly using the basic topological principle that if $A$ and $B$ are closed sets and $\gamma : \intff{0}{1} \longrightarrow A \cup B$ a continuous path,
where $\gamma(0) \in A$ and $\gamma(0) \in B$, then $\gamma(\intff{0}{1}) \cap A \cap B \neq \emptyset$. This follows because if $\gamma(\intff{0}{1}) \cap A$ and $\gamma(\intff{0}{1}) \cap B$
were disjoint they would define a partition of $\gamma(\intff{0}{1})$ and hence show that $\gamma(\intff{0}{1})$ is disconnected, but $\gamma(\intff{0}{1})$ is connected by the continuity of $\gamma$.
\end{remark}

\begin{lemma}[Continuous paths may be replaced by piecewise linear paths]\label{straightlines}
Let $U \subset \mathbb{C}$ be an open, connected (and hence path connected) set. Then given any two points $a,b \in U$, there is a path from $a$ to $b$ contained in $U$
consisting of a finite sequence of straight line segments.
\end{lemma}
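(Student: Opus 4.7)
The plan is to argue by a standard connectedness/equivalence-class trick. Fix $a \in U$ and define
\[
U_a \;=\; \{\, z \in U \,:\, \text{there is a piecewise linear path in } U \text{ from } a \text{ to } z \,\}.
\]
The goal is to show $U_a = U$; applying this with $z = b$ yields the conclusion.

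First I would show that $U_a$ is open. Let $z \in U_a$ and choose $r > 0$ such that the open disk $\op{D}_{z,r}$ is contained in $U$ (possible because $U$ is open). For any $w \in \op{D}_{z,r}$, the straight line segment from $z$ to $w$ lies entirely in $\op{D}_{z,r} \subset U$. Concatenating a piecewise linear path from $a$ to $z$ (which exists since $z \in U_a$) with this segment produces a piecewise linear path in $U$ from $a$ to $w$, so $w \in U_a$. Hence $\op{D}_{z,r} \subset U_a$, proving that $U_a$ is open.

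Next I would show that $U \setminus U_a$ is also open by the same argument: if $z \in U \setminus U_a$ and $\op{D}_{z,r} \subset U$, then no point $w \in \op{D}_{z,r}$ can belong to $U_a$, since otherwise concatenating with the segment from $w$ to $z$ would give a piecewise linear path from $a$ to $z$, contradicting $z \notin U_a$. Therefore $\op{D}_{z,r} \subset U \setminus U_a$.

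Finally, $U = U_a \sqcup (U \setminus U_a)$ is a disjoint union of two open sets, and $U_a$ is nonempty (it contains $a$, via the trivial constant path). Since $U$ is connected, we must have $U \setminus U_a = \emptyset$, so $U_a = U$. In particular $b \in U_a$, giving the desired piecewise linear path from $a$ to $b$. There is no genuine obstacle here; the only thing to double-check is the base case that $a \in U_a$, which holds because the degenerate "path" consisting of the single point $a$ is a finite (empty) concatenation of straight line segments contained in $U$.
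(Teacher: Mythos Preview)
Your proof is correct and is in fact the standard textbook argument. The paper takes a genuinely different route: it first invokes path-connectedness to get a continuous path $\eta$ from $a$ to $b$, then uses compactness of $\chi = \eta([0,1])$ to find $\epsilon > 0$ with $\mathrm{dist}(\chi, \mathbb{C}\setminus U) > 2\epsilon$, extracts a finite subcover of $\chi$ by $\epsilon$-balls centred at points $z_1,\dots,z_n \in \chi$, and then draws straight segments between any pair of centres at distance less than $2\epsilon$; a final connectedness argument shows the resulting piecewise linear graph joins $a$ to $b$. Your open-and-closed argument is shorter and avoids both the preliminary continuous path and the compactness/covering step; the paper's approach, on the other hand, is more constructive in that it produces a polygonal path that actually shadows a given continuous path, which is closer in spirit to how the lemma is used later (replacing specific arcs of $\Lambda$ by straight segments).
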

\begin{proof}
Because $U$ is path connected, there is a continuous path $\eta : \intff{0}{1} \longrightarrow U$ such that $\eta(0) = a$ and $\eta(1)=b$. Let $\chi = \eta(\intff{0}{1})$,
which is clearly a compact set.  Next, there exists an $\epsilon > 0$ such that $\mathrm{dist}(\chi, \mathbb{C}\setminus U) > 2 \epsilon$.
Hence $\big\{ \op{D}_{\epsilon,z} \big\}_{z \in \chi}$ is an open cover of $\chi$. By compactness we may pass to a finite subcover labelled by
$z_1, \dots, z_n \in \chi$, where without loss of generality $z_1 = a$ and $z_n = b$.

Next, for any $|z_k - z_j| < 2\epsilon$, $k \neq j$, let us draw a straight line between these two points, and note that this straight line is entirely contained in
$\op{D}_{\epsilon,z_j} \cup \op{D}_{\epsilon,z_k} \subset U$.  Let $\widetilde{\chi}$ be the union of the set of such points $\{ z_1, \dots, z_n\}$ and the line segments connecting them.
We claim that $\widetilde{\chi}$ is a connected set, which if so would complete the proof.  Hence by way of contradiction let $A$ and $\widetilde{A}$ be two nonempty sets such that
$A \cup \widetilde{A}= \{ z_1, \dots, z_n\}$, $A \cap \widetilde{A} = \emptyset$, and such that $|z - \widetilde{z}| \geq 2\epsilon$ for all $z \in A$ and $\widetilde{z} \in \widetilde{A}$.  But then
\beq
U \, =  \, \bigcup_{z \in A} \op{D}_{\epsilon,z} \quad \e{and} \quad
\widetilde{U} \, =  \, \bigcup_{ z \in \widetilde{A} } \op{D}_{\epsilon,z}
\nonumber
\enq
are disjoint open sets which together form an open cover of $\chi$.  If both these sets were nonempty $\chi \cap U$ and $\chi \cap \widetilde{U}$ would be a partition of $\chi$.
However $\chi$ is connected and so either $A$ or $\widetilde{A}$ is empty.
\end{proof}
Hence we can construct a contour $\wt{\Gamma}$ consisting of $L_\alpha$, $L_{\alpha^\prime}$,  $\gamma(\intff{-\epsilon^\prime}{1+\epsilon^\prime})$
and a finite number of straight lines connecting $R \ex{\frac{2\pi \i \alpha}{\kappa} }$ to $\gamma(-\epsilon^\prime)$ and $R \ex{\frac{2\pi \i \alpha^\prime}{\kappa} }$ to $\gamma(1+\epsilon^\prime)$,
where these straight lines lie in $\Lambda$.  Hence $\wt{\Gamma} \in \mathcal{T}(\alpha,\alpha^\prime)$,  $\wt{\Gamma} \setminus \mathrm{supp} \, \mu_{\Ga_{\e{eq}}} \subset \Lambda$
and $\gamma(\intff{-\epsilon}{1+\epsilon}) \subset \wt{\Gamma}$.

\vspace{2mm}

We almost have the desired contour except for the "corners" where each analytic arc meets the next.
To fix this we "round" the corners in such a way to achieve a globally $\mc{C}^\infty$ curve.
For example, if we have two analytic arcs $\eta_1 : \intff{-1}{\delta} \longrightarrow \mathbb{C}$ and $\eta_2 : \intff{-\delta}{1} \longrightarrow \mathbb{C}$
(for some $\delta >0$),  such that $\eta_1(0) = \eta_2(0)$, we may join these together smoothly by $\widetilde{\eta} : \intff{-1}{1}  \longrightarrow \mathbb{C}$ where
$$\widetilde{\eta}(x) \,  = \,  \eta_1(x) +( \eta_2(x) - \eta_1(x) ) \, \th\big( \frac{x}{\delta}\big)$$
where $\th : \mathbb{R}\longrightarrow \intff{0}{1}$ is a $\mc{C}^\infty$ function such that $\th(x) = 0$ for $x < 1$ and $\th(x) =1 $ for $x > 1$.

\vspace{2mm}

By making the region of the "rounding" sufficiently small we can ensure the new $\mc{C}^\infty$ contour $\Gamma_{\e{eq}}$ obtain by this procedure has the properties that
$\Gamma_{\e{eq}} \setminus \mathrm{supp} \, \mu_{ \Gamma_{\e{eq}} } \subset \Lambda$ and $\gamma(\intff{-\epsilon}{1+\epsilon}) \subset \Gamma_{\e{eq}}$.
By Corollary \ref{deformation} our new contour $\Gamma$ has the same equilibrium measure as the contour solving the max-min energy problem,
hence our new contour also solves the max-min energy problem. This completes the proof of Proposition \ref{contourexistence}.
\begin{remark} The fact that we are in the off-critical regime is what allows us to make the contour $\mc{C}^\infty$.
If we were in the critical regime we would only generally expect to be able to make our contour piecewise $\mc{C}^\infty$.
\end{remark}

\section{Fredholm determinant}\label{fredholmsection}

In this section we prove the following important proposition.
\begin{proposition}\label{Fredholm}
Let $\{ \psi(x) \}_{x \in \mathbb{R}}$ be a family of complex-valued random variables, viewed as a random measurable function, with the following properties.
\begin{enumerate}[label=(\arabic*)]
\item $\overline{\psi(x)} = \psi(-x)$ and $\psi(0) = 0$. 
\item Any finite subset of $\{ \Re  \psi(x),  \Im  \psi(x)\}_{x > 0}$ is jointly Gaussian (not necessarily centred).
\item For every $k \geq 0$, $\mathbb{E}\big[ |\psi(x)|^k \big] \, \leq \, C(1+|x|^{b_k})$ for some $b_k \geq 0$.
\item $\mathbb{E}\Big[ \Int{\mathbb{R}}{} |\psi(x)|^2 \, \dd x \Big] \, <  \, +\infty$,
which implies that $\psi \in L^2(\mathbb{R})$ almost surely. 
\item $B(x,y) \overset{\mathrm{def}}{=} \mathrm{cov}\Big( \psi(x), \overline{\psi(y)} \Big)$ is such that $\Int{\mathbb{R}^2}{} |B(x,y)|^2 \, \dd x \, \dd y\,  <\,  +\infty$.
\end{enumerate}
Define $\mu \in L^2(\mathbb{R})$ by $\mu(x) \overset{\mathrm{def}}{=} \mathbb{E} \big[ \psi(x) \big] $ and let  $A: \mathbb{R}^2 \longrightarrow \mathbb{C}$ be a continuous function such that
\beq
\overline{A(x,y)}  \, = \,  A(y,x)  \, = \,  A(-x,-y)  \qquad \mathrm{and}   \qquad \Int{ \mathbb{R}^2 }{} |A(x,y)|^2 \, \dd x \, \dd y  \, < \,  +\infty \, ,
\nonumber
\enq
and let $\lambda \in L^2(\mathbb{R})$ be such that $\overline{\lambda(x)} = \lambda(-x)$

Then,  $\mathsf{A} : L^2(\mathbb{R}) \longrightarrow L^2(\mathbb{R})$, resp. $\mathsf{B} : L^2(\mathbb{R}) \longrightarrow L^2(\mathbb{R})$,
defined by
\beq
\mathsf{A}f(x) = \Int{\mathbb{R}}{} A(x,y) f(y) \, \dd y \;, \quad \mathrm{resp.} \quad \mathsf{B} f(x) = \Int{\mathbb{R}}{} B(x,y) f(y) \, \dd y
\nonumber
\enq
are self-adjoint Hilbert-Schmidt operators and one has

\begin{equation}\label{fredholmformula}
\begin{split}
&\mathbb{E}\Bigg[ \exp\bigg( \frac{\i}{2} \Int{\mathbb{R}^2}{} \overline{\psi(x)}A(x,y) \psi(y) \, \dd y + \i \Int{\mathbb{R}}{} \overline{\lambda(x)} \psi(x) \, \dd x\bigg)  \Bigg] \\
&= \frac{\exp( \frac{1}{2} \i \langle \mu,  \mathsf{A} (\mathsf{1} - \i \mathsf{BA})^{-1} \mu \rangle_{L^2(\mathbb{R})} +  \i \langle \lambda,
(\mathsf{1}-\i\mathsf{BA})^{-1}\mu \rangle_{L^2(\mathbb{R})} - \frac{1}{2}\langle \lambda,   (\mathsf{1}-\i \mathsf{BA})^{-1} \mathsf{B} \lambda\rangle_{L^2(\mathbb{R})})   }
                            {\sqrt{\det_{L^2(\mathbb{R})}(\mathsf{1}- \i \mathsf{B  A})}}
\end{split}
\end{equation}
\end{proposition}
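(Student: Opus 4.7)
The plan is to reduce the problem to a finite-dimensional Gaussian computation and then pass to the limit.

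First, I would verify the operator-theoretic prerequisites. The operator $\mathsf{A}$ is Hilbert-Schmidt since $A\in L^{2}(\mathbb{R}^{2})$, and likewise $\mathsf{B}$ by item (5); both are self-adjoint by the assumed kernel symmetries. Moreover $\mathsf{B}$ is positive semi-definite because $\langle f,\mathsf{B}f\rangle_{L^{2}(\mathbb{R})}=\mathbb{E}\bigl[|\langle f,\xi\rangle_{L^{2}(\mathbb{R})}|^{2}\bigr]\geq 0$ with $\xi=\psi-\mu$, so its square root $\mathsf{B}^{1/2}$ exists. Thus $\mathsf{BA}$ is trace-class as a product of two Hilbert-Schmidt operators, and the Fredholm determinant $\det_{L^{2}(\mathbb{R})}(\mathsf{1}-\i\mathsf{BA})$ makes sense. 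By Sylvester's identity it equals $\det_{L^{2}(\mathbb{R})}(\mathsf{1}-\i\mathsf{B}^{1/2}\mathsf{A}\mathsf{B}^{1/2})$; the inner operator is self-adjoint with real eigenvalues $\{\lambda_{j}\}$, so the determinant equals $\prod_{j}(1-\i\lambda_{j})\neq 0$, which ensures that every inverse appearing in (\ref{fredholmformula}) is well-defined.

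Second, a mean shift reduces the problem to the centered case. Writing $\psi=\mu+\xi$ with $\xi$ centered Gaussian of covariance $\mathsf{B}$, the exponent on the left-hand side of (\ref{fredholmformula}) decomposes as
\[
\tfrac{\i}{2}\langle\mu,\mathsf{A}\mu\rangle+\i\langle\lambda,\mu\rangle+\tfrac{\i}{2}\langle\xi,\mathsf{A}\xi\rangle+\i\langle\mathsf{A}\mu+\lambda,\xi\rangle,
\]
so that (\ref{fredholmformula}) follows from the centered identity
\[
\mathbb{E}\bigl[e^{\frac{\i}{2}\langle\xi,\mathsf{A}\xi\rangle+\i\langle\ell,\xi\rangle}\bigr]=\frac{1}{\sqrt{\det_{L^{2}(\mathbb{R})}(\mathsf{1}-\i\mathsf{BA})}}\,\exp\Bigl\{-\tfrac{1}{2}\langle\ell,(\mathsf{1}-\i\mathsf{BA})^{-1}\mathsf{B}\ell\rangle\Bigr\}
\]
specialised to $\ell=\mathsf{A}\mu+\lambda$, combined with the algebraic identity $\mathsf{A}(\mathsf{1}-\i\mathsf{BA})^{-1}=(\mathsf{1}-\i\mathsf{AB})^{-1}\mathsf{A}$ to regroup the cross terms in exactly the form advertised.

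Third, the centered identity would be established by finite-dimensional approximation. I would choose an orthonormal basis $(e_{k})_{k\geq 1}$ of $L^{2}(\mathbb{R})$ whose elements all satisfy the reality constraint $\overline{f(x)}=f(-x)$ (for instance, the Hermite functions $h_{n}$ for $n$ even and $\i h_{n}$ for $n$ odd). Set $\pi_{n}$ for the orthogonal projector onto $\mathrm{span}(e_{1},\dots,e_{n})$, $\mathsf{A}_{n}=\pi_{n}\mathsf{A}\pi_{n}$, $\mathsf{B}_{n}=\pi_{n}\mathsf{B}\pi_{n}$, $\xi^{(n)}=\pi_{n}\xi$, $\ell_{n}=\pi_{n}\ell$. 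The combined symmetries $\overline{A(x,y)}=A(y,x)$ and $A(-x,-y)=A(x,y)$ force the matrix entries $\langle e_{j},\mathsf{A}e_{k}\rangle$ to be real and symmetric in $(j,k)$; the same holds for $\mathsf{B}_{n}$, and the coordinates $\langle e_{k},\xi\rangle$ are real Gaussian. The finite-dimensional identity
\[
\mathbb{E}\bigl[e^{\frac{\i}{2}\langle\xi^{(n)},\mathsf{A}_{n}\xi^{(n)}\rangle+\i\langle\ell_{n},\xi^{(n)}\rangle}\bigr]=\frac{\exp\bigl(-\tfrac{1}{2}\langle\ell_{n},(\mathsf{1}-\i\mathsf{B}_{n}\mathsf{A}_{n})^{-1}\mathsf{B}_{n}\ell_{n}\rangle\bigr)}{\sqrt{\det(\mathsf{1}-\i\mathsf{B}_{n}\mathsf{A}_{n})}}
\]
is then a standard complex Gaussian integral: completing the square against the density of $\xi^{(n)}$ produces an integrand $\exp\{-\tfrac{1}{2}u^{T}(\mathsf{B}_{n}^{-1}-\i\mathsf{A}_{n})u+\mathrm{const}\}$ whose quadratic form has positive-definite real part $\mathsf{B}_{n}^{-1}$, giving absolute convergence and the claimed formula with the branch of the square root specified by the principal value $\prod_{j}(1-\i\lambda_{j}^{(n)})^{1/2}$ on the eigenvalues of $\mathsf{B}_{n}^{1/2}\mathsf{A}_{n}\mathsf{B}_{n}^{1/2}$.

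Finally, I would pass to the limit $n\to\infty$. The convergences $\mathsf{A}_{n}\to\mathsf{A}$ and $\mathsf{B}_{n}\to\mathsf{B}$ in Hilbert-Schmidt norm produce trace-norm convergence $\mathsf{B}_{n}\mathsf{A}_{n}\to\mathsf{BA}$, yielding continuity of the Fredholm determinant and operator-norm convergence $(\mathsf{1}-\i\mathsf{B}_{n}\mathsf{A}_{n})^{-1}\to(\mathsf{1}-\i\mathsf{BA})^{-1}$, which handles the right-hand side. For the left-hand side, dominated convergence applies since the integrand has modulus $1$, provided $\langle\xi^{(n)},\mathsf{A}_{n}\xi^{(n)}\rangle\to\langle\xi,\mathsf{A}\xi\rangle$ and $\langle\ell_{n},\xi^{(n)}\rangle\to\langle\ell,\xi\rangle$ in probability. \textbf{This last convergence is the principal technical obstacle}: it amounts to controlling a random quadratic functional of an infinite-dimensional Gaussian. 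It is handled by realising $\xi$ as an almost surely $L^{2}(\mathbb{R})$-valued random variable (thanks to item (4) and a Kolmogorov-type extension) so that $\xi^{(n)}\to\xi$ in $L^{2}(\mathbb{R})$ almost surely, and then estimating $\|\mathsf{A}_{n}\xi^{(n)}-\mathsf{A}\xi\|_{L^{2}(\mathbb{R})}\to 0$ via Hilbert-Schmidt continuity of $\mathsf{A}_{n}\to\mathsf{A}$; the moment bounds (3) together with the Gaussian hypothesis (2) guarantee uniform integrability and prevent mass from escaping to infinity. Combining these ingredients yields (\ref{fredholmformula}).
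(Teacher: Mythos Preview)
Your approach is correct and follows essentially the same two-stage scheme as the paper: a finite-dimensional Gaussian identity followed by Hilbert--Schmidt/trace-class continuity to pass to the limit. The operator-theoretic preliminaries, the Sylvester identity argument for non-vanishing of the determinant, and the treatment of the right-hand side are all the same in substance.

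There are two differences worth noting. First, you perform a mean shift $\psi=\mu+\xi$ to reduce to the centred case and then reassemble via the resolvent identity $\mathsf{A}(\mathsf{1}-\i\mathsf{BA})^{-1}=(\mathsf{1}-\i\mathsf{AB})^{-1}\mathsf{A}$; the paper instead keeps the mean throughout and proves the finite-dimensional lemma directly for non-centred Gaussians. Either is fine, and your bookkeeping via the choice of basis $\{h_{2k},\,\i h_{2k+1}\}$ that diagonalises the reality constraint is equivalent to the paper's combinations $\sigma_{\pm k}=\frac{1}{\sqrt{2}}(\varphi_{2k}\pm\varphi_{2k+1})$ followed by a unitary change to real coordinates inside the lemma.

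Second, and more usefully: what you flag as ``the principal technical obstacle'' --- convergence of the left-hand side --- is handled in the paper by a one-line estimate that avoids any appeal to Kolmogorov extensions, almost-sure convergence, or uniform integrability. Since the exponent is purely imaginary, $|e^{\i a}-e^{\i b}|\le |a-b|$, and one simply bounds
\[
\Bigl|\mathbb{E}\bigl[e^{\frac{\i}{2}\langle\psi,\mathsf{A}^{(n)}\psi\rangle+\i\langle\lambda^{(n)},\psi\rangle}\bigr]-\mathbb{E}\bigl[e^{\frac{\i}{2}\langle\psi,\mathsf{A}\psi\rangle+\i\langle\lambda,\psi\rangle}\bigr]\Bigr|
\le \tfrac{1}{2}\|\mathsf{A}-\mathsf{A}^{(n)}\|_{\mathrm{HS}}\,\mathbb{E}\bigl[\|\psi\|_{L^{2}}^{2}\bigr]+\|\lambda-\lambda^{(n)}\|_{L^{2}}\sqrt{\mathbb{E}\bigl[\|\psi\|_{L^{2}}^{2}\bigr]},
\]
which tends to zero directly by hypothesis~(4). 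This replaces your dominated-convergence/convergence-in-probability argument and is the only point where your write-up is more elaborate than necessary.
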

We prove Proposition \ref{Fredholm} in two stages.  We first prove the result in the finite rank case and then extend by continuity.

\begin{lemma}\label{finitefredholm}
Denote $\mc{S}_n \, = \, \intn{-n}{n} \setminus \{0\}$ and let
\begin{itemize}
\item $\bs{\xi} = \{ \xi_k \}_{k \in\mc{S}_n }\in \Cx^{2n}$ satisfy $\overline{\xi_j} =  \xi_{-j}$ with $\left\{ \Re \xi_k, \Im \xi_k \right\}_{k=1}^n$ a Gaussian random variable;
\item $\bs{\lambda} = ( \lambda_k )_{k \in \mc{S}_n}$ be such that $\overline{\lambda_k} = \lambda_{-k}$;
\item $A=  \e{Mat}\big( A_{ij}   \, : \, i,j \in \mc{S}_n  \big) $ be such that  $A_{ij}=A_{-j,-i}$.
\end{itemize}

\noindent Then, one has
\begin{multline}
\mathbb{E}\Bigg[ \exp\bigg( \frac{\i}{2}  \big< \bs{\xi}, A \bs{\xi} \big>  + \i \big< \bs{\lambda}, \bs{\xi} \big> \bigg) \Bigg] \\
= \frac{\exp( \frac{1}{2} \i \langle \bs{\mu},  A (\mathsf{1} - \i BA)^{-1} \bs{\mu} \rangle +  \i \langle \bs{\lambda},  (\mathsf{1} - \i B A )^{-1}\bs{\mu} \rangle
- \frac{1}{2}\langle \bs{\lambda},   (\mathsf{1} - \i BA)^{-1} B \bs{\lambda} \rangle)  }
                                            {  \sqrt{ \det(\mathsf{1}- \i B  A)} }
\end{multline}
where $B  = \e{Mat}\big( \mathrm{cov}(\xi_i, \overline{\xi_j})   \, : \, i,j \in \mc{S}_n  \big)  $, $\bs{\mu}  = (\mathbb{E} [ \xi_i ] )_{i \in \mc{S}_n }$
and $\langle \bs{x}, \bs{y} \rangle = \sum_{ i \in \mc{S}_n } \overline{x_i} y_i$. Finally, the quantity in the denominator corresponds to the determination of the square roots of the map
$z \mapsto \det(\mathsf{1}-  z \i B  A)$ in $\Re(z)>0$ equal to $1$ at $z=0$.
\end{lemma}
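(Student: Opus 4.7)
My plan is to reduce the identity to the standard real-Gaussian characteristic function formula via the $\R$-linear parametrisation implied by $\ov{\xi_j} = \xi_{-j}$. Writing $\xi_{\pm k} = X_k \pm \i Y_k$ for $k \in \intn{1}{n}$, the vector $\bs{z} = (X_1, Y_1, \dots, X_n, Y_n)^T \in \R^{2n}$ is a real Gaussian with some mean $\bs{m}$ and covariance matrix $\Sigma$, and $\bs{\xi} = T\bs{z}$ for the invertible matrix $T \in \Cx^{2n \times 2n}$ built out of the $2 \times 2$ blocks $\bigl(\begin{smallmatrix} 1 & -\i \\ 1 & \i \end{smallmatrix}\bigr)$, after ordering the $\xi$-indices so that $\xi_{-k}$ precedes $\xi_k$. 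The block structure gives $T^{\ast} T = 2\,\mathsf{1}$, and if $J_{ij} = \delta_{i,-j}$ is the index-reversal permutation, one has $\ov{T} = JT$ and $T^T = T^{\ast} J$.

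First I would recast the exponent in real variables as
\begin{equation*}
\tfrac{\i}{2} \langle \bs{\xi}, A \bs{\xi} \rangle + \i \langle \bs{\lambda}, \bs{\xi} \rangle \; = \; \tfrac{\i}{2} \bs{z}^T M \bs{z} + \i \bs{\ell}^T \bs{z}, \qquad M := T^{\ast} A T, \quad \bs{\ell} := T^T \ov{\bs{\lambda}}.
\end{equation*}
The symmetry $A_{ij} = A_{-j,-i}$ is equivalent to $J A^T J = A$, which combined with $T^T = T^{\ast} J$ forces $M^T = M$; similarly $\ov{\lambda_k} = \lambda_{-k}$ forces $\bs{\ell} \in \R^{2n}$. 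Then I would invoke the standard multivariate Gaussian characteristic formula: for $\bs{z} \sim \mathcal{N}(\bs{m}, \Sigma)$, complex-symmetric $M$ and real $\bs{\ell}$,
\begin{equation*}
\mathbb{E}\bigl[ \ex{ \tfrac{\i}{2} \bs{z}^T M \bs{z} + \i \bs{\ell}^T \bs{z} } \bigr] \; = \; \frac{ \exp\bigl( \tfrac{\i}{2} \bs{m}^T M (\mathsf{1} - \i \Sigma M)^{-1} \bs{m} + \i \bs{\ell}^T (\mathsf{1} - \i \Sigma M)^{-1} \bs{m} - \tfrac{1}{2} \bs{\ell}^T (\mathsf{1} - \i \Sigma M)^{-1} \Sigma \bs{\ell} \bigr) }{ \sqrt{ \det(\mathsf{1} - \i \Sigma M) } },
\end{equation*}
itself obtained by a completing-the-square argument, starting from matrices $M$ with positive-definite imaginary part (where the integral converges absolutely) and analytically continuing in $M$ with the prescribed branch of the square root fixed by $\sqrt{1}=1$.

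The final step is to translate the right-hand side back into the complex notation of the statement. Since $\bs{\xi} = T\bs{z}$ and $B_{ij} = \mathrm{cov}(\xi_i, \xi_{-j})$, the block relations $T^T = T^{\ast} J$ and $\ov{T} = JT$ yield $B = T \Sigma T^{\ast}$ and $\bs{\mu} = T \bs{m}$. A direct calculation then gives the conjugation identity $T (\Sigma M) T^{-1} = B A$, whence $\det(\mathsf{1} - \i \Sigma M) = \det(\mathsf{1} - \i B A)$ with matching branches since $T$ does not depend on any deformation parameter used in the analytic continuation. Substituting $\bs{m} = \tfrac{1}{2} T^{\ast} \bs{\mu}$ and $\bs{\ell} = T^T \ov{\bs{\lambda}}$ in each of the three scalar factors in the exponent and simplifying using $T^{\ast} T = 2\,\mathsf{1}$ produces the three corresponding pairings $\langle \bs{\mu}, A(\mathsf{1} - \i BA)^{-1}\bs{\mu}\rangle$, $\langle \bs{\lambda}, (\mathsf{1} - \i BA)^{-1}\bs{\mu}\rangle$ and $\langle \bs{\lambda}, (\mathsf{1} - \i BA)^{-1} B \bs{\lambda}\rangle$ asserted in the statement.

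I expect the main obstacle to be the bookkeeping in this final step: the interplay between the index involution $k \leftrightarrow -k$, complex conjugation, the change-of-basis matrix $T$, and the convention $B_{ij} = \mathrm{cov}(\xi_i, \ov{\xi_j})$ all have to be aligned so that the asymmetric-looking complex pairings with $A$, $B$ and $\bs{\lambda}$ in the statement reproduce the manifestly real expressions coming out of the Gaussian integral. Once the two compatibility identities $B = T \Sigma T^{\ast}$ and $T\Sigma M = BA\,T$ are verified, the remainder is purely algebraic.
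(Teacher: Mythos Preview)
Your approach is essentially the same as the paper's: both pass to real Gaussian variables via a linear change of basis, evaluate by completing the square, and translate back; the paper uses a unitary $J$ with $\bs{X}=J\bs{\xi}$ while you use $T$ with $T^{\ast}T=2\,\mathsf{1}$ and $\bs{\xi}=T\bs{z}$, which is the same map up to normalisation and ordering. Your write-up is in fact somewhat more explicit than the paper's about why $M$ is complex-symmetric and $\bs{\ell}$ real, and about the analytic continuation needed for general $M$, points the paper leaves implicit.
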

\begin{proof}
To show this we transform to real variables. Let
\beq
X_k \, = \,  \frac{\xi_k + \xi_{-k}}{\sqrt{2}} \;\; \e{for} \; \; k > 0  \quad \e{and} \quad
X_k  \, = \,  \frac{\xi_{-k} - \xi_{k}}{\sqrt{2} \i }  \;\; \e{for} \; \; k < 0\; ,
\enq
which can be encapsulated  in terms of the matrix transformation $\bs{X} = J \bs{\xi}$ with $J$ unitary in respect to $<.,.>$.
Upon introducing $\widetilde{B} = J B J^\dagger$ , one gets $\mathrm{cov}(X_i,X_j) = \widetilde{B}_{ij}$.
Similarly let $\widetilde{A} = J A J^\dagger$,  $\widetilde{\bs{\lambda}} = J \bs{\lambda} \in \mathbb{R}^{2n}$ and $\widetilde{\bs{\mu}} = \mathbb{E}[ \bs{X} ] = J \bs{\mu} \in \mathbb{R}^{2n}$.
Then the expectation of interest reduces to the evaluation of the integral
\begin{align*}
\frac{\sqrt{\det(\widetilde{B}^{-1})}}{(2\pi )^n} \Int{ \mathbb{R}^{2n} }{ }
\ex{ - \frac{1}{2}(\bs{x}-\widetilde{\bs{\mu}})^\mathsf{T} \widetilde{B}^{-1}(\bs{x}-\widetilde{\bs{\mu}}) + \frac{1}{2} \i \bs{x}^\mathsf{T} \widetilde{A} \bs{x} +
\i \widetilde{ \bs{\lambda} }^\mathsf{T} \bs{x} } \, \dd \bs{x}
\end{align*}
A short calculation involving completing the square then yields the desired formula.
To conclude, one observes that since $B$ is positive definite, $\sqrt{B} A \sqrt{B}$ is a Hermitian matrix and thus has real eigenvalues $\la_a$, $a=1, \dots 2n$ counted with multiplicities. Hence,
\beq
\det(\mathsf{1}-  z \i B  A) \, = \, \det(\mathsf{1}-  z \i \sqrt{B} A \sqrt{B}) \, = \, \pl{a=1}{2n} \big( 1 - \i z \la_a \big) \, \not = \, 0
\enq
if $\Re(z)>0$. The function thus admits a determination of its square root in that simply connected domain which, moreover, may be chosen to be equal to $1$
at $z=0$. The non-vanishing of this determinant also implies the invertibility, in that domain, of the matrix $\mathsf{1}-  z \i B  A$.

\end{proof}

The last step of the proof consists in demonstrating that we can approximate $\mathsf{A}$ by a finite rank operator.
Let $\varphi_k(x) = H_k(x) e^{-\frac{1}{2}x^2}$ where $H_k$ is the $k$th normalised Hermite polynomial, so that
\beq
\big< \varphi_k , \varphi_j \big>_{L^2(\R)}  \, = \, \Int{\mathbb{R}}{}\overline{\varphi_k(x)}\varphi_j(x)\, \dd x \, = \,  \delta_{jk}\, .
\enq
Note that $\varphi_k$ is an even function for even $k$ and odd for odd $k$, hence $\langle \varphi_k , \psi \rangle_{L^2(\mathbb{R})}$ is purely real for $k$ even and purely imaginary for $k$ odd. Let
\beq
\sigma_k  \, = \,   \frac{1}{\sqrt{2}}\varphi_{2k} +  \frac{1}{\sqrt{2}} \varphi_{2k+1} \quad  \e{and} \quad  \sigma_{-k} \, = \,   \frac{1}{\sqrt{2}}\varphi_{2k} -  \frac{1}{\sqrt{2}} \varphi_{2k+1}
\quad \e{for} \quad  k\in [\![1, n]\!] \, .
\nonumber
\enq
Then given $\mc{S}_n$ as in Lemma \ref{finitefredholm} and $\sigma_k^\ast$ being the one-form canonically associated with $\sg_k$ by means of the $L^{2}(\R)$ scalar product, we
define the orthogonal projection $\mathsf{P}_n = \sum_{k\in \mc{S}_n} \sigma_k \otimes \sigma_k^\ast$.

\begin{lemma}
The real and imaginary parts of $\xi_k = \langle \sigma_k, \psi \rangle_{L^2(\mathbb{R})}$ are jointly Gaussian with
\beq
\mathbb{E} [\xi_k] = \langle \sigma_k, \mu\rangle_{L^2(\mathbb{R})}  \qquad  \mathrm{and}  \qquad \mathrm{cov}\, (\xi_i, \overline{\xi_j}) =\langle \sigma_i ,  \mathsf{B}  \sigma_j \rangle_{L^2(\mathbb{R})} \; .
\nonumber
\enq
\end{lemma}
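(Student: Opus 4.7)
\medskip

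\noindent\textbf{Proof proposal.} The plan is to reduce the statement to a question about finite-dimensional Gaussian vectors by approximating the integrals defining $\xi_k$ by Riemann sums, and then passing to the limit. The mean and covariance formulas will then follow from elementary linearity calculations.

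\emph{First step: well-definedness.} Since $\sigma_k \in L^2(\R)$ and, by property (4), $\psi \in L^2(\R)$ almost surely, the pairing $\xi_k = \langle \sigma_k,\psi\rangle_{L^2(\R)}=\int_\R\overline{\sigma_k(x)}\psi(x)\,\dd x$ is well defined almost surely. Moreover, property (3) with the Cauchy--Schwarz inequality ensures that $\xi_k\in L^2(\Omega,\mbb P)$, so we may freely interchange expectation and integration.

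\emph{Second step: Gaussianity via approximation.} Since $\sigma_k$ is a real-valued Hermite function and $\psi\in L^2(\R)$ a.s., for each truncation level $R>0$ the restriction of $\sigma_k\psi$ to $[-R,R]$ is Riemann-integrable almost surely, and $\int_{|x|>R}|\sigma_k(x)\psi(x)|\,\dd x\to 0$ in $L^2(\mbb P)$ as $R\to+\infty$ thanks to properties~(3)--(4). On $[-R,R]$, approximate the integral by Riemann sums $\xi_k^{(N)}=\tfrac{2R}{N}\sum_{\ell=1}^{N}\sigma_k(x_\ell^{(N)})\psi(x_\ell^{(N)})$; using $\psi(-x)=\overline{\psi(x)}$ and $\psi(0)=0$, each such sum is a $\Cx$-linear combination of a finite subset of $\{\Re\psi(x),\Im\psi(x)\}_{x>0}$, hence by property~(2) the real and imaginary parts of any finite collection $(\xi_{k_1}^{(N)},\dots,\xi_{k_p}^{(N)})$ are jointly real Gaussian. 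The $L^2(\mbb P)$ convergence of these Riemann sums to $\xi_{k_j}$, combined with the well-known fact that the class of jointly Gaussian vectors is closed under $L^2$ convergence (because means and covariances converge and the characteristic function is determined by them), shows that $(\Re\xi_{k_j},\Im\xi_{k_j})_{j=1}^{p}$ is itself jointly Gaussian. Since $p$ and $k_1,\dots,k_p$ are arbitrary, the family $\{\Re\xi_k,\Im\xi_k\}_{k\in\mc S_n}$ is jointly Gaussian.

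\emph{Third step: mean and covariance.} By Fubini (justified via property~(3)),
\[
\mbb E[\xi_k] \;=\; \int_\R \overline{\sigma_k(x)}\,\mbb E[\psi(x)]\,\dd x \;=\; \langle\sigma_k,\mu\rangle_{L^2(\R)}.
\]
For the covariance, writing $\psi^\circ=\psi-\mu$,
\[
\mathrm{cov}(\xi_i,\overline{\xi_j})\;=\;\mbb E\bigl[\langle\sigma_i,\psi^\circ\rangle_{L^2(\R)}\,\overline{\langle\sigma_j,\psi^\circ\rangle}_{L^2(\R)}\bigr]\;=\;\int_{\R^2}\overline{\sigma_i(x)}\,\sigma_j(y)\,\mbb E\bigl[\psi^\circ(x)\overline{\psi^\circ(y)}\bigr]\,\dd x\,\dd y,
\]
which, by the very definition of $B(x,y)$ and of $\mathsf B$, equals $\langle\sigma_i,\mathsf B\sigma_j\rangle_{L^2(\R)}$. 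The exchange of expectation and double integral is justified by the assumption $\int_{\R^2}|B(x,y)|^2\,\dd x\,\dd y<+\infty$ together with Cauchy--Schwarz on $L^2(\R^2)$.

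\emph{Main obstacle.} The only non-cosmetic point is step two: one must verify that the joint Gaussianity of a finite-dimensional family is genuinely inherited from Riemann-sum approximations. This is standard but does rely on the $L^2(\mbb P)$ convergence controlled by properties~(3)--(4), and on the elementary but essential fact that a family of real random variables whose finite linear combinations are Gaussian limits of Gaussian random variables remains jointly Gaussian. Everything else is a bookkeeping exercise of Fubini and linearity.
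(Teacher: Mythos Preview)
Your Step~3 (mean and covariance via Fubini) is correct and essentially identical to what the paper does. The issue lies in Step~2.

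The Riemann-sum approximation is not justified under the stated hypotheses. You claim that $\sigma_k\psi$ restricted to $[-R,R]$ is Riemann-integrable almost surely, but $\psi$ is only assumed to be a \emph{measurable} function (property~(4) gives $\psi\in L^2(\R)$ a.s., nothing more), and measurable $L^2$ functions need not be Riemann-integrable. More concretely, for your Riemann sums $\xi_k^{(N)}$ to converge to $\xi_k$ in $L^2(\mathbb P)$, you would need
\[
\mathbb E\Big[\big|\xi_k^{(N)}-\xi_k\big|^2\Big]\longrightarrow 0,
\]
and expanding this expression produces Riemann sums of the two-variable kernel $\sigma_k(x)\overline{\sigma_k(y)}\big(B(x,y)+\mu(x)\overline{\mu(y)}\big)$. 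For these to converge to the corresponding double integral you would need $B$ (and $\mu$) to be at least Riemann-integrable, but property~(5) only gives $B\in L^2(\R^2)$. Properties~(3)--(4) do not fill this gap.

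The paper takes a different and more robust route: it fixes a real linear combination $\sum_k\overline{\alpha_k}\xi_k=\langle f,\psi\rangle_{L^2(\R)}$ with $f=\sum_k\alpha_k\sigma_k$ a Schwartz function, and computes all the moments
\[
\mathbb E\big[\langle f,\psi\rangle^m\big]=\int_{\R^m}\overline{f(x_1)}\cdots\overline{f(x_m)}\,\mathbb E\big[\psi(x_1)\cdots\psi(x_m)\big]\,\dd\mathbf x,
\]
where the exchange of $\mathbb E$ and $\int$ is justified by the Schwartz decay of $f$ against the polynomial bound~(3). Inside the integral, Wick's theorem (valid because $\{\psi(x_j)\}$ is a finite Gaussian family by~(2)) expresses the expectation in terms of $B$ and $\mu$, and reversing the steps shows the moments coincide with those of a Gaussian with the claimed mean and variance. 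This moment approach never needs to approximate the integral by pointwise evaluations of $\psi$, and so requires no regularity of $B$ beyond what is assumed. Your overall strategy (approximate, use closure of Gaussianity under $L^2$ limits) is sound in spirit, but the specific approximation you chose does not go through; the moment method is the natural substitute.
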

\begin{proof}
Let $\alpha = (\alpha_k)_{k \in  \mc{S}_n }$ with $\overline{\alpha_k} = \alpha_{-k}$. We will show that the moments of $ \sum_{k \in \mc{S}_n} \overline{\alpha_k}\xi_k$
coincide with those of a Gaussian random variable and hence $\{ \Re \xi_k, \Im \xi_k\}_{k=1}^n$ are jointly Gaussian.
Thus, let $f =  \sum_{k \in \mc{S}_n} \alpha_k \sigma_k$ , so that $\langle f, \psi \rangle_{L^2(\mathbb{R})}=\sum_{k \in \mc{S}_n} \overline{\alpha_k}\xi_k$.
Then, for $m \in \mathbb{N}$,
\begin{align*}
\mathbb{E} \bigg[\Big( \sum_{k \in \mc{S}_n} \overline{\alpha_k}\xi_k \Big)^m \bigg] \, =\,
\mathbb{E} \Big[ \langle f, \psi \rangle_{L^2(\mathbb{R})}^m \Big]
\, =  \, \mathbb{E}\bigg[ \Int{\mathbb{R}^m}{} \overline{f(x_1)} \cdots \overline{f(x_m)} \psi(x_1) \cdots \psi(x_n) \, \dd\mathbf{x} \bigg]
\end{align*}
Because $f$ is a Schwarz function and $\mathbb{E}\big[ |\psi(x)|^m \big]$ grows at most algebraically as $|x| \to \infty$,
we may exchange the expectation with the integral sign. Then applying Wick's theorem and reversing our steps we find that
$ \sul{k\in \mc{S}_n}{} \overline{\alpha_k}\xi_k $ has the moments of a Gaussian random variable with the predicted variance, and hence is Gaussian.
\end{proof}

\vspace{2mm}

\begin{proof}[Proof of \ref{Fredholm}]
Let $\mathsf{A}^{(n)} = \mathsf{P}_n \mathsf{A} \mathsf{P}_n$, $\mathsf{B}^{(n)} = \mathsf{P}_n \mathsf{B} \mathsf{P}_n$, $\lambda^{(n)} = \mathsf{P}_n \lambda$.
By Lemma \ref{finitefredholm}, since the associated finite-size matrices $A^{(n)}, B^{(n)}$ and vectors $\bs{\la}^{(n)}, \bs{\mu}^{(n)}$ enjoy the hypotheses of that lemma, we have
\begin{equation}\label{finiterank}
\begin{split}
&\mathbb{E} \bigg[ \exp( \frac{\i}{2} \langle \psi, \mathsf{A}^{(n)} \psi \rangle_{L^2(\mathbb{R})} + \i\langle \lambda^{(n)} , \psi \rangle_{L^2(\mathbb{R})}) \bigg] \\
&= \frac{ \ex{ \frac{1}{2} \i \langle \mu,  \mathsf{A}^{(n)} (\mathsf{1} - i \mathsf{B}^{(n)} \mathsf{A}^{(n)} )^{-1} \mu \rangle_{L^2(\mathbb{R})}
+  \i \langle \lambda^{(n)} ,  (\mathsf{1}-\i\mathsf{B}^{(n)} \mathsf{A}^{(n)} )^{-1} \mu \rangle_{L^2(\mathbb{R})}
- \frac{1}{2}\langle \lambda^{(n)},   (\mathsf{1}-\i \mathsf{B}^{(n)} \mathsf{A}^{(n)} )^{-1} \mathsf{B}^{(n)} \lambda^{(n)} \rangle_{L^2(\mathbb{R})} } }
            {  \sqrt{\det_{L^2(\mathbb{R})}(\mathsf{1}- \i\mathsf{B}^{(n)} \mathsf{A}^{(n)})}  }
\end{split}
\end{equation}
Observing that, by construction, $\ov{B(x,y)} = B(y,x)=B(-x,-y)$, one infers from the properties of the integral kernels $A(x,y)$ and $B(x,y)$
readily ensure that $\mathsf{A}$ and $\mathsf{B}$ are Hilbert-Schmidt and thus $\mathsf{B  A}$ is trace class.
This ensures that the Fredholm determinant $\det_{L^2(\mathbb{R})}(\mathsf{1}- \i \mathsf{B  A})$  is well-defined.  Furthermore  $\mathsf{B} \geq 0$ and so has a well-defined square root $\sqrt{\mathsf{B} }$.
Then by Sylvester's identity
\beq
\det_{L^2(\mathbb{R})}(\mathsf{1}- z \i \mathsf{B  A}) = \det_{L^2(\mathbb{R})}(\mathsf{1}- z \i \sqrt{\mathsf{B}}  \mathsf{A}\sqrt{\mathsf{B}} ) \, = \,
\pl{a=1}{\infty} \big( 1-\i z \la_a) \neq 0 \, , \quad \e{since} \quad
\sqrt{\mathsf{B}}  \mathsf{A}\sqrt{\mathsf{B}}
\nonumber
\enq
is a compact self-adjoint operator with eigenvalues $\la_a$, $a \in \mathbb{N}$, repeated with their multiplicities.
This ensures that $(\mathsf{1}   -  \i\mathsf{BA})^{-1}$ is well-defined and that the holomorphic map $ z \mapsto \det_{L^2(\mathbb{R})}(\mathsf{1}- z \i \mathsf{B  A})$
on $\Re(z)>0$ admits a determination of its square root which equals $1$ when $z=0$.

Now, one can readily check that $\mathsf{A}^{(n)} \to \mathsf{A}$ and $\mathsf{B}^{(n)} \to \mathsf{B}$ in the Hilbert-Schmidt norm $\norm{\cdot}_{\mathrm{HS}}$,
and thus $\mathsf{B}^{(n)}\mathsf{A}^{(n)} \to \mathsf{B}\mathsf{A}$ in the trace class norm. It is direct to see that all $n$-dependent quantities in the
\textit{rhs} of (\ref{finiterank}) converge to the same expression with $n$ removed, \textit{i.e.} we may replace
$\mathsf{A}^{(n)}, \mathsf{B}^{(n)}, \lambda^{(n)}$ with $\mathsf{A}, \mathsf{B}, \lambda$ respectively throughout. On the left hand side of (\ref{finiterank}) we observe that
\begin{align*}
&\left| \mathbb{E}  \bigg[ \exp( \frac{\i}{2} \langle \psi, \mathsf{A}^{(n)} \psi \rangle_{L^2(\mathbb{R})} + \i \langle \lambda^{(n)} , \psi \rangle_{L^2(\mathbb{R})}) \bigg] \,- \,
\mathbb{E} \bigg[\exp( \frac{\i}{2} \langle \psi, \mathsf{A} \psi \rangle_{L^2(\mathbb{R})} + \i\langle \lambda , \psi \rangle_{L^2(\mathbb{R})}) \bigg] \right| \\
&\leq \frac{1}{2} \| \mathsf{A} - \mathsf{A}^{(n)}\|_{\mathrm{HS}}  \, \mathbb{E} \Big[ \| \psi\|_{L^2(\mathbb{R})}^2 \Big]
+ \| \lambda - \lambda^{(n)}\|_{L^2(\mathbb{R})} \sqrt{\mathbb{E} \Big[ \| \psi\|_{L^2(\mathbb{R})}^2\Big] }  \underset{n \to +\infty}{\longrightarrow} 0
\end{align*}
since $\mathbb{E} \Big[ \| \psi\|_{L^2(\mathbb{R})}^2\Big] < +\infty$ by hypothesis.
\end{proof}

\newpage

\section{Asymptotic expansion of the Selberg integral}\label{selbergexpansion}

In this section we prove Theorem \ref{selbergexpansionthm}. We begin with the exact formula for the partition function $\mathcal{Z}_{N,\Gamma_0}[V_0]$.
\begin{proposition}[Selberg-Mehta integral, Ch. 17 of \cite{mehta}] Let $V_0$ be the quadratic potential given in \eqref{definition potentiel V0}.
 Then, one has the closed formula
\beq
\mathcal{Z}_{N,\Gamma_0}[V_0]  \, = \,
 (2 \pi)^\frac{N}{2} \left( \frac{\zeta_2 - \zeta_1}{\sqrt{8N\beta}}\right)^{N\beta\left( \frac{1}{\beta}- \frac{1}{2} \right)+ \frac{\beta N^2}{2}}
 \ex{-\beta N^2 V\left(\frac{\zeta_1+\zeta_2}{2}\right)} \Big[ \Big(\frac{\beta}{2}\Big)! \Big]^{-N} \pl{j=1}{N} \Big( \frac{\beta j}{2}\Big)! \;.
\enq
\end{proposition}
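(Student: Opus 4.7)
The plan is to reduce the contour integral along $\Gamma_0$ to the classical Selberg--Mehta integral on $\mathbb{R}$ via two successive linear changes of variables, after which the stated closed form will drop out immediately.

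First, I would parametrise the straight line through $\zeta_1$ and $\zeta_2$ by $z = c + \omega t$, $t \in \mathbb{R}$, with $c = (\zeta_1+\zeta_2)/2$ and $\omega = (\zeta_2-\zeta_1)/|\zeta_2-\zeta_1|$. Under this substitution, $z_a-z_b = \omega(t_a-t_b)$ and, key to convergence, $V_0(z)-V(c) = \tfrac{4\omega^2}{(\zeta_2-\zeta_1)^2} t^2 = \tfrac{4 t^2}{|\zeta_2-\zeta_1|^2}$ is real and positive on $\Gamma_0$ (this is in fact the statement that $\Gamma_0$ is the steepest descent line for $V_0$, consistent with the role of $V_0$ as the quadratic reference potential). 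Pulling the $\omega$ factors out of the Vandermonde and the arc-length measure and the constant out of the exponential yields a global prefactor
\[
\omega^{\beta N(N-1)/2 + N}\, \exp\bigl(-\beta N^2 V(c)\bigr)
\]
times a Gaussian $\beta$-ensemble integral on $\mathbb{R}^N$ with weight $\exp\bigl(-4 N \beta\, t_k^2/|\zeta_2-\zeta_1|^2\bigr)$.

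Next, I would rescale $t_k = \alpha s_k$ with $\alpha = |\zeta_2-\zeta_1|/\sqrt{8 N \beta}$ so as to bring the Gaussian weight into its standard form $\exp(-s_k^2/2)$. This generates an additional factor $\alpha^{\beta N(N-1)/2 + N}$, which combines with the $\omega$ factor into
\[
\Bigl(\tfrac{\zeta_2-\zeta_1}{\sqrt{8 N \beta}}\Bigr)^{\beta N(N-1)/2 + N}
=\Bigl(\tfrac{\zeta_2-\zeta_1}{\sqrt{8 N \beta}}\Bigr)^{\beta N^2/2 + N\beta(1/\beta - 1/2)},
\]
matching the exponent appearing in the stated identity. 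Since $\beta \in 2\mathbb{N}^*$, one has $(s_a-s_b)^\beta = |s_a-s_b|^\beta$, so the remaining integral is exactly the classical Mehta evaluation
\[
\int_{\mathbb{R}^N}\prod_{a<b}|s_a-s_b|^\beta \prod_k e^{-s_k^2/2}\, d\mathbf{s}
= (2\pi)^{N/2}\prod_{j=1}^N\frac{\Gamma(1+j\beta/2)}{\Gamma(1+\beta/2)}.
\]
Rewriting $\Gamma(1+x) = x!$ and collecting all prefactors yields the announced closed-form expression.

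There is no genuine obstacle here: the argument is purely a bookkeeping exercise, the only subtle point being to check that $V_0$ restricted to $\Gamma_0$ is real and positive, so that after the linear change of variables the integrand becomes non-negative and the Mehta integral formula (stated for real Gaussian weights) may legitimately be invoked. The Mehta evaluation itself is classical (see Ch.~17 of \cite{mehta}) and may be taken as input.
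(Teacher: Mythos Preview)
Your derivation is correct. The paper does not actually prove this proposition; it simply states the formula with a citation to Chapter~17 of Mehta and then proceeds to use it. Your argument---parametrising $\Gamma_0$ by $z=c+\omega t$, observing that $V_0(z)-V(c)=4t^2/|\zeta_2-\zeta_1|^2$ is real and nonnegative (so the integral becomes a genuine Gaussian integral on $\mathbb{R}^N$), rescaling to standard form, and invoking the classical Mehta evaluation---is precisely the computation one would perform to justify the quoted formula from the standard reference. The bookkeeping of the exponent $\beta N(N-1)/2+N=\beta N^2/2+N\beta(1/\beta-1/2)$ and the combination $\omega\alpha=(\zeta_2-\zeta_1)/\sqrt{8N\beta}$ is correct. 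One tiny phrasing point: it is $V_0-V(c)$, not $V_0$ itself, that is real and nonnegative on $\Gamma_0$ (since $V(c)$ may be complex), but your computation handles this correctly.
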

The terms on the right hand side all have straightforward asymptotic expansions except for $\pl{j=1}{N} \Big( \frac{\beta j}{2}\Big)!$. For this we need the following lemma which relates it to a Barnes $G$-function.
\begin{lemma} Let $\mf{t},n \in \mathbb{N}$ be integers greater or equal to $1$. Then
\beq
\prod_{j=1}^{n-1} (\mf{t} j)! \,  = \, \mf{t}^{- \frac{n(\mf{t}-1)}{2}}  \bigg( \pl{j=1}{\mf{t} n-1 } j!  \bigg)^{\frac{1}{\mf{t}}} \cdot
\pl{p=1}{\mf{t}-1} \frac{\Gamma(\frac{p}{\mf{t}})^{1-\frac{p}{\mf{t}}}}{\Gamma(\frac{p}{\mf{t}}+n)^{1-\frac{p}{\mf{t}}}}  \;.
\nonumber
\enq
\end{lemma}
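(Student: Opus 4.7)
The plan is to prove this purely algebraic identity by induction on $n\ge 1$, with the Gauss multiplication theorem
$$\Gamma(\mf{t}z)\,=\,(2\pi)^{(1-\mf{t})/2}\,\mf{t}^{\mf{t}z-1/2}\prod_{k=0}^{\mf{t}-1}\Gamma\!\left(z+\tfrac{k}{\mf{t}}\right)$$
providing the essential input in the inductive step. This choice is natural because the identity involves precisely the characters one meets when decomposing $\Gamma(\mf{t}z)$ into $\mf{t}$ arithmetically shifted $\Gamma$-factors.

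For the base case $n=1$, the LHS is an empty product. On the RHS the ratios telescope via $\Gamma(p/\mf{t}+1)=(p/\mf{t})\Gamma(p/\mf{t})$, so
$\prod_{p=1}^{\mf{t}-1}\Gamma(p/\mf{t})^{1-p/\mf{t}}/\Gamma(p/\mf{t}+1)^{1-p/\mf{t}}=\mf{t}^{(\mf{t}-1)/2}\prod_{p=1}^{\mf{t}-1}p^{p/\mf{t}-1}$, using $\sum_{p=1}^{\mf{t}-1}(1-p/\mf{t})=(\mf{t}-1)/2$. The $\mf{t}^{\pm(\mf{t}-1)/2}$ factors cancel, leaving $(\prod_{j=1}^{\mf{t}-1}j!)^{1/\mf{t}}\prod_{p=1}^{\mf{t}-1}p^{p/\mf{t}-1}$. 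The combinatorial identity $\prod_{j=1}^{\mf{t}-1}j!=\prod_{k=1}^{\mf{t}-1}k^{\mf{t}-k}$ (obtained by counting in how many $j!$'s a given $k$ appears) then gives $\left(\prod_{j=1}^{\mf{t}-1}j!\right)^{1/\mf{t}}=\prod_{k=1}^{\mf{t}-1}k^{1-k/\mf{t}}$, and the remaining product collapses to $1$.

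For the inductive step, denote the RHS by $R(n)$. The LHS ratio is simply $(\mf{t}n)!$, while
$$\frac{R(n+1)}{R(n)}\,=\,\mf{t}^{-(\mf{t}-1)/2}\left(\prod_{r=0}^{\mf{t}-1}(\mf{t}n+r)!\right)^{\!1/\mf{t}}\prod_{p=1}^{\mf{t}-1}(n+p/\mf{t})^{-(1-p/\mf{t})},$$
where I used $\Gamma(p/\mf{t}+n+1)=(n+p/\mf{t})\Gamma(p/\mf{t}+n)$. Thus the induction reduces to the identity $(\mf{t}n)!$ equals this expression. One side, namely $(\mf{t}n)!=\mf{t}n\cdot\Gamma(\mf{t}n)$, is computed by applying Gauss at $z=n$, giving $(\mf{t}n)!=(2\pi)^{(1-\mf{t})/2}\mf{t}^{\mf{t}n+1/2}\,n!\prod_{k=1}^{\mf{t}-1}\Gamma(n+k/\mf{t})$. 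The other side is computed by applying Gauss at $z=n+(r+1)/\mf{t}$ for each $r\in\{0,\dots,\mf{t}-1\}$, yielding $(\mf{t}n+r)!=(2\pi)^{(1-\mf{t})/2}\mf{t}^{\mf{t}n+r+1/2}\prod_{k=0}^{\mf{t}-1}\Gamma(n+(r+k+1)/\mf{t})$; taking the product over $r$ and $k$ produces arguments $n+s/\mf{t}$ with multiplicity $\min(s,2\mf{t}-s)$ for $s\in\{1,\dots,2\mf{t}-1\}$, which split into the clean contribution $\Gamma(n+1)^{\mf{t}}\prod_{s=1}^{\mf{t}-1}\Gamma(n+s/\mf{t})^{\mf{t}}$ once the ``upper'' $\Gamma(n+1+s'/\mf{t})$ are reduced via $\Gamma(z+1)=z\Gamma(z)$, at the cost of exactly the factors $(n+p/\mf{t})^{\mf{t}-p}$ that are then absorbed by the remaining product in $R(n+1)/R(n)$ after the $1/\mf{t}$-th root.

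The only real technical obstacle is the bookkeeping: one must carefully track powers of $\mf{t}$ (the relevant sum is $\sum_{r=0}^{\mf{t}-1}(\mf{t}n+r+1/2)=\mf{t}^2n+\mf{t}^2/2$, so after the $1/\mf{t}$-th root and combining with the explicit $\mf{t}^{-(\mf{t}-1)/2}$ prefactor one finds $\mf{t}^{\mf{t}n+1/2}$, matching the LHS), the factor of $(2\pi)^{(1-\mf{t})/2}$, and the multiplicities of the shifted $\Gamma$ values. Verifying that each of these matches on both sides then completes the induction and hence the lemma.
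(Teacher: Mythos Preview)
Your inductive proof via the Gauss multiplication theorem is correct; the base case and the inductive step both check out after the bookkeeping you outline. However, the paper's argument is considerably more direct and avoids the multiplication theorem altogether. The paper simply groups the full product $\prod_{j=1}^{\mf{t}n-1}j!$ according to residues modulo $\mf{t}$, writing each index as $j=\mf{t}k+\ell$ with $0\le k\le n-1$ and $0\le\ell\le\mf{t}-1$; then $(\mf{t}k+\ell)!=(\mf{t}k)!\prod_{p=1}^{\ell}(\mf{t}k+p)$ immediately exhibits the factor $\big(\prod_{j=1}^{n-1}(\mf{t}j)!\big)^{\mf{t}}$, and the leftover product $\prod_{p=1}^{\mf{t}-1}\big(\prod_{k=0}^{n-1}(\mf{t}k+p)\big)^{\mf{t}-p}$ is recognised as $\prod_{p=1}^{\mf{t}-1}\big(\mf{t}^{n}\Gamma(n+p/\mf{t})/\Gamma(p/\mf{t})\big)^{\mf{t}-p}$ via the elementary Pochhammer identity. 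Solving for $H(n)$ gives the result in three lines, with no induction and no special-function identities beyond $\Gamma(z+1)=z\Gamma(z)$. Your approach, while valid, trades this one-step combinatorial regrouping for a recursion whose verification essentially re-derives the Gauss formula's content at each step; the paper's method is both shorter and more transparent about why the $\Gamma$-ratios and the power of $\mf{t}$ take the precise form they do.
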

\begin{proof} Let $H(n) :=  \prod_{j=1}^{n-1} (\mf{t} j)!$ and consider the product
\begin{align*}
\prod_{j=1}^{\mf{t} n-1 } j! =  \prod_{k=0}^{n-1} \prod_{\ell = 0}^{\mf{t} - 1} ( \mf{t} k + \ell)!
\, = \, \big[  H(n) \big]^\mf{t} \cdot \prod_{k=0}^{n-1} \prod_{\ell = 0}^{\mf{t} - 1} \prod_{p = 1}^{\ell} (\mf{t} k + p)  \, = \,\big[  H(n) \big]^\mf{t} \cdot
\prod_{p=1}^{\mf{t}-1} \Big( \prod_{k=0}^{n-1} (\mf{t} k + p) \Big)^{\mf{t}-p} \;.
\end{align*}
Finally, we observe that $ \pl{k=0}{n-1} (\mf{t} k + p) \, = \,  \mf{t}^n \frac{\Gamma(\frac{p}{\mf{t}}+n)}{\Gamma(\frac{p}{\mf{t}})}\, . $
\end{proof}
From this we have

\begin{align*}
\ln \mathcal{Z}_{N,\Gamma_0}[V_0]  \, &= \,
\frac{N}{2} \ln (2 \pi) + \Big[ N\beta\Big( \frac{1}{\beta}- \frac{1}{2} \Big)+ \frac{\beta N^2}{2} \Big] \Big( \ln(\zeta_2 - \zeta_1) - \frac{1}{2}\ln(8\beta) - \frac{1}{2}\ln N \Big)  \\
&\quad -\beta N^2 V\Big(\frac{\zeta_1+\zeta_2}{2}\Big)
- N \ln \Big[ \Big(\frac{\beta}{2}\Big)! \Big]  + \ln\Big[  \Big( \frac{\beta N}{2}\Big)! \Big] \\
&\quad  - \frac{N}{2} \Big[ \frac{\beta}{2}-1 \Big] \ln \frac{\beta}{2} + \frac{2}{\beta}\ln\Big[ \prod_{j=1}^{\frac{\beta  N }{2}-1}j! \Big] + \sum_{p=1}^{\frac{\beta}{2}-1} \Big( 1- \frac{2p}{\beta} \Big) \Big\{ \Gamma(\frac{2p}{\beta}) - \Gamma(\frac{2p}{\beta}+N) \Big\} \;.
\end{align*}
Using Stirling's approximation and the asymptotic expansion of the Barnes $G$-function we obtain the $N \to +\infty$ expansion,
\begin{align*}
\ln \mathcal{Z}_{N,\Gamma_0}[V_0]  \, &= \,
\frac{N}{2} \ln (2 \pi) + \Big[ N\beta\Big( \frac{1}{\beta}- \frac{1}{2} \Big)+ \frac{\beta N^2}{2} \Big] \Big( \ln(\zeta_2 - \zeta_1) - \frac{1}{2}\ln(8\beta) - \frac{1}{2}\ln N \Big)  \\
&\quad -\beta N^2 V\Big(\frac{\zeta_1+\zeta_2}{2}\Big)
- N \ln \Big[ \Big(\frac{\beta}{2}\Big)! \Big]  + \frac{\beta N}{2} \ln N+ \frac{\beta N}{2} \ln \frac{\beta}{2} - \frac{\beta N}{2} +\frac{1}{2}\ln N \\
&\quad  - \sum_{p=1}^{\frac{\beta}{2}-1} \Big( 1- \frac{2p}{\beta} \Big) \Big\{ \Big( N+\frac{2p}{\beta}-1 \Big) \ln N-  N +\frac{1}{2}\ln N \Big\} - \frac{N}{2} \Big[ \frac{\beta}{2}-1 \Big] \ln  \frac{\beta}{2} \\
&\quad + \frac{2}{\beta}\Big\{ \frac{\beta^2 N^2}{8}\ln N + \frac{\beta^2N^2}{8}\ln \frac{\beta}{2} - \frac{3\beta^2 N^2}{16} + \frac{\beta N}{4} \ln(2\pi) - \frac{1}{12}\ln N \Big\} + \e{O}(1) \;.
\end{align*}
We remark that logarithm of the Barnes $G$-function and the logarithm of the Gamma function both admit, from the $\e{O}(1)$ scale and beyond, a full asymptotic expansion in powers of $\frac{1}{N}$. Hence also the above $\e{O}(1)$ error could be expanded in a full asymptotic power series in $\frac{1}{N}$. Finally, collecting terms at each scale, and observing that
\begin{align*}
\mc{I}_{  \Ga_{0} }[\mu_{ \Ga_{0} }] &= -\ln(\zeta_2 - \zeta_1) + \ln (4) + 2 V\Big( \frac{\zeta_1 + \zeta_2}{2} \Big) + \frac{3}{4} \\
\mathrm{Ent}[\mu_{ \Ga_{0} }] &= -\Int{\Gamma_0}{} \ln \Big\{ \frac{ \dd  \mu_{\Gamma_0}(z)}{\dd z } \Big\} \cdot  \dd\mu_{\Gamma_0}(z) \, = \, -\frac{1}{2} + \ln  \frac{\pi}{2} +\ln(\zeta_2 - \zeta_1)  \, ,
\end{align*}
we obtain Theorem \ref{selbergexpansionthm}.

\bigskip
\noindent
{\bf Acknowledgments:} A.G., K.K.K.  and A. L. gratefully acknowledge the support of the ERC Project LDRAM: ERC-2019-ADG Project 884584.

\bibliography{./ComplexPotentialRef.bib}{}

\begin{thebibliography}{10}

\bibitem{AGZ}
Greg~W. Anderson, Alice Guionnet, and Ofer Zeitouni.
\newblock {\em An introduction to random matrices}, volume 118 of {\em
  Cambridge Studies in Advanced Mathematics}.
\newblock Cambridge University Press, Cambridge, 2010.

\bibitem{BabujianKarowskiExactFFSineGordonBootsstrapII}
H.~Babujian and M.~Karowski.
\newblock {Exact form factors in integrable quantum field theories: the
  sine-Gordon model (II).}
\newblock {\em Nucl. Phys. B}, 620:407--455, 2002.

\bibitem{10.1063/5.0086911}
A.~Barhoumi, P.~Bleher, A.~Dea{\~n}o, and M.~Yattselev.
\newblock {Investigation of the two-cut phase region in the complex cubic
  ensemble of random matrices}.
\newblock {\em J. Math. Phys.}, 63(6):063303, 06 2022.

\bibitem{Bekerman:2015aa}
F.~Bekerman, A.~Figalli, and A.~Guionnet.
\newblock Transport maps for $\beta$-matrix models and universality.
\newblock {\em Commun. Math. Phys.}, 338(2):589--619, 2015.

\bibitem{10.1214/18-EJP209}
F.~Bekerman, T.~Lebl{\'e}, and S.~Serfaty.
\newblock {CLT for Fluctuations of $\beta $-ensembles with general potential}.
\newblock {\em Electron. J. Probab.}, 23(none):1 -- 31, 2018.

\bibitem{BAG97}
G.~Ben~Arous and A.~Guionnet.
\newblock Large deviations for {W}igner's law and {V}oiculescu's
  non-commutative entropy.
\newblock {\em Probab. Theory Related Fields}, 108(4):517--542, 1997.

\bibitem{BAZ98}
G.~{Ben Arous} and O.~Zeitouni.
\newblock Large deviations from the circular law.
\newblock {\em {ESAIM} Probab. Statist.}, 2:123--134, 1998.

\bibitem{BernardLeclairDiffEqnForPIII}
D.~Bernard and A.~Leclair.
\newblock {Differential equations for sine-Gordon correlation functions at the
  free fermion point.}
\newblock {\em Nucl. Phys. B}, 426:534--558, 1994.

\bibitem{Bertola:2011aa}
M.~Bertola.
\newblock Boutroux curves with external field: equilibrium measures without a
  variational problem.
\newblock {\em Anal. Math. Phys.}, 1(2):167--211, 2011.

\bibitem{Bertola2022}
M.~Bertola, P.~Bleher, R.~Gharakhloo, K.~T.~R. McLaughlin, and A.~Tovbis.
\newblock {\em Openness of Regular Regimes of Complex Random Matrix Models},
  pages 151--192.
\newblock Springer International Publishing, Cham, 2022.

\bibitem{Bertola2011AsymptoticsOO}
M.~Bertola and A.~Tovbis.
\newblock Asymptotics of orthogonal polynomials with complex varying quartic
  weight: Global structure, critical point behavior and the first painlev{\'e}
  equation.
\newblock {\em Constr. Approx.}, 41:529--587, 2011.

\bibitem{8177967}
P.~Bleher and A.~Dea{\~n}o.
\newblock Topological expansion in the cubic random matrix model.
\newblock {\em Int. Math. Res. Not.}, 2013(12):2699--2755, 2013.

\bibitem{https://doi.org/10.1002/cpa.22164}
P.~Bleher, R.~Gharakhloo, and K.~T-R McLaughlin.
\newblock Phase diagram and topological expansion in the complex quartic random
  matrix model.
\newblock {\em Commun. Pure Appl. Math.}, 77(2):1405--1485, 2024.

\bibitem{Borot:2013aa}
G.~Borot and A.~Guionnet.
\newblock Asymptotic expansion of $\beta$ matrix models in the one-cut regime.
\newblock {\em Commun. Math. Phys.}, 317(2):447--483, 2013.

\bibitem{BG24}
G.~Borot and A.~Guionnet.
\newblock Asymptotic expansion of {{\(\beta\)}} matrix models in the multi-cut
  regime.
\newblock {\em Forum Math. Sigma}, 12:93, 2024.
\newblock Id/No e13.

\bibitem{10.1093/imrn/rnu260}
G.~Borot, A.~Guionnet, and K.~K. Kozlowski.
\newblock {Large-N Asymptotic Expansion for Mean Field Models with Coulomb Gas
  Interaction}.
\newblock {\em Int. Math. Res. Not.}, 2015(20):10451--10524, 01 2015.

\bibitem{KozBorotGuionnetLargeNBehMulIntOfToyModelSoVType}
Ga{\"e}tan Borot, Alice Guionnet, and Karol~K. Kozlowski.
\newblock {\em {Asymptotic expansion of a partition function related to the
  sinh-model.}}
\newblock Mathematical Physics Studies, 222 pages. Springer, 2016.

\bibitem{19133cf7-3d88-3b93-9dd8-2d2fe21932d0}
U.~Cegrell, S.~Kolodziej, and N.~Levenberg.
\newblock Two problems on potential theory for unbounded sets.
\newblock {\em Math. Scand.}, 83(2):265--276, 1998.

\bibitem{deBruijn}
Nicolaas~Govert de~Bruijn.
\newblock {\em Asymptotic Methods in Analysis}.
\newblock North-Holland Publishing Co., 1958.

\bibitem{DeiftKriechMcLaughVenakZhouOrthogonalPlyExponWeights}
P.~Deift, T.~Kriecherbauer, K.~T.-R. McLaughlin, S.~Venakides, and X.~Zhou.
\newblock {Strong asymptotics of orthogonal polynomials with respect to
  exponential weights.}
\newblock {\em Comm. Pure Appl. Math}, 52:1491--1552, 1999.

\bibitem{DeiftOrthPlyAndRandomMatrixRHP}
Percy Deift.
\newblock {\em {Orthogonal polynomials and random matrices: a Riemann-Hilbert
  approach.}}
\newblock Courant Lecture Notes 3. New-York University, 1999.

\bibitem{countingsurfaces}
Bertrand Eynard.
\newblock {\em Counting Surfaces}, volume~70 of {\em Progress in Mathematical
  Physics}.
\newblock Birkha{\"u}ser, 2016.

\bibitem{metricentropy}
J.~M. Gonzalez-Barrios and R.~M. Dudley.
\newblock Metric entropy conditions for an operator to be of trace class.
\newblock {\em Proc. Am. Math. Soc.}, 118(1):175--180, 1993.

\bibitem{Gu19}
Alice Guionnet.
\newblock {\em Asymptotics of random matrices and related models. {The} uses of
  {Dyson}-{Schwinger} equations}, volume 130 of {\em CBMS Reg. Conf. Ser.
  Math.}
\newblock Providence, RI: American Mathematical Society (AMS), published for
  the Conference Board of the Mathematical Sciences (CBMS), 2019.

\bibitem{KozKitMailSlaTerXXZsgZsgZAsymptotics}
N.~Kitanine, K.~K. Kozlowski, J.-M. Maillet, N.~A. Slavnov, and V.~Terras.
\newblock {Algebraic Bethe Ansatz approach to the asymptotics behavior of
  correlation functions.}
\newblock {\em J. Stat. Mech: Th. and Exp.}, 04:P04003, 2009.

\bibitem{KozConvergenceFFSeriesSinhGordon2ptFcts}
K.~K. Kozlowski.
\newblock {On convergence of form factor expansions in the infinite volume
  quantum Sinh-Gordon model in 1+1 dimensions.}
\newblock {\em Invent. Math.}, 233:725--827, 2023.

\bibitem{lirias1643667}
A.~B.~J. Kuijlaars and G.~Lima Ferreira~da Silva.
\newblock S-curves in polynomial external fields, 2015-03-01.

\bibitem{Mclaughlin}
A.~B.~J. Kuijlaars and K.~T.-R. McLaughlin.
\newblock Generic behavior of the density of states in random matrix theory and
  equilibrium problems in the presence of real analytic external fields.
\newblock {\em Commun. Pure Appl. Math.}, 53(6):736--785, 2000.

\bibitem{Maida:2014aa}
M.~Ma{\"\i}da and {\'E}.~Maurel-Segala.
\newblock Free transport-entropy inequalities for non-convex potentials and
  application to concentration for random matrices.
\newblock {\em Probab. Theory Relat. Fields}, 159(1):329--356, 2014.

\bibitem{mehta}
M.~L. Mehta.
\newblock {\em Random Matrices}.
\newblock Elsevier, 2004.

\bibitem{saff}
Edward~B. Saff and Vilmos Totik.
\newblock {\em Logarithmic Potentials with External Fields}, volume 316 of {\em
  Grundlehren der mathematischen Wissenschaften}.
\newblock Springer-Verlag, 1997.

\end{thebibliography}
\bibliographystyle{plain}

\end{document}